\NeedsTeXFormat{LaTeX2e} 
 
\documentclass[11pt,envcountsame,letter]{llncs}
\usepackage{latexsym,amsmath} 
\usepackage{times}

\def\vec#1{\mathchoice{\mbox{\boldmath$\displaystyle#1$}}
{\mbox{\boldmath$\textstyle#1$}}
{\mbox{\boldmath$\scriptstyle#1$}}
{\mbox{\boldmath$\scriptscriptstyle#1$}}}

\newcommand{\rkNAE}{r_{k\mathrm{-NAE}}}
\newcommand{\rfirst}{r_{\mathrm{first}}}
\newcommand{\rsecond}{r_{\mathrm{second}}}
\newcommand{\rsh}{r_{\mathrm{sh}}}
\newcommand{\rcond}{r_{\mathrm{cond}}}

\newcommand{\blue}{\mathtt{blue}}
\newcommand{\red}{\mathtt{red}}

\usepackage[dvips]{epsfig} 
\graphicspath{{./Fig_eps/}{./Eps/}} 
\DeclareGraphicsExtensions{.ps,.eps} 
\usepackage{color} 
\usepackage{fullpage}

\addtolength{\textheight}{9mm}

\newcommand\dist{\mbox{dist}}

\newcommand\PHI{\vec\Phi} 
\newcommand\PHId{\PHI_{\vec d}}

\newcommand\cA{\mathcal{A}} 
\newcommand\cB{\mathcal{B}} 
\newcommand\cC{\mathcal{C}} 
\newcommand\cD{\mathcal{D}} 
\newcommand\cF{\mathcal{F}} 
 
\newcommand\cE{\mathcal{E}}

\newcommand\cS{\mathcal{S}} 
\newcommand\cT{\mathcal{T}}

\newcommand\cP{\mathcal{P}} 
\newcommand\cX{\mathcal{X}} 
 
\newcommand\cV{\mathcal{V}} 
 
\newcommand\cZ{\mathcal{Z}} 

\def\cR{{\mathcal R}}
\def\cC{{\mathcal C}}
\def\cE{{\cal E}}

\newcommand\eul{\mathrm{e}} 
\newcommand\eps{\varepsilon} 
\newcommand\ZZ{\mathbf{Z}} 
\newcommand\Var{\mathrm{Var}} 
\newcommand\Erw{\mathrm{E}} 
\newcommand\pr{\mathrm{P}}

\newcommand{\vecone}{\vec{1}}

\newcommand{\Vol}{\mathrm{Vol}}

\newcommand{\Po}{{\rm Po}} 
\newcommand{\Bin}{{\rm Bin}}

\newcommand{\bink}[2] {{{#1}\choose {#2}}}

\newcommand\ra{\rightarrow} 

\newcommand\bc[1]{\left({#1}\right)} 
\newcommand\cbc[1]{\left\{{#1}\right\}} 
\newcommand\bcfr[2]{\bc{\frac{#1}{#2}}} 
 
\newcommand\brk[1]{\left\lbrack{#1}\right\rbrack}

\newcommand\abs[1]{\left|{#1}\right|}

\newcommand{\Whp}{W.h.p.} 
\newcommand{\whp}{w.h.p.}

\newcommand{\Exp}{\Erw}

\newcommand\Lem{Lemma}
\newcommand\Prop{Proposition}
\newcommand\Thm{Theorem}
\newcommand\Cor{Corollary}
\newcommand\Sec{Section}

\numberwithin{equation}{section}
\numberwithin{theorem}{section}
\numberwithin{lemma}{section}
\numberwithin{proposition}{section}
\numberwithin{corollary}{section}

\begin{document} 

\title{Catching the $k$-NAESAT Threshold}

\thispagestyle{empty} 

\author{
Amin Coja-Oghlan\inst 1\thanks{
	Supported by EPSRC grant EP/G039070/2 and ERC Starting Grant 278857--PTCC (FP7).}
	 and
	Konstantinos Panagiotou\inst 2
} 
\date{\today} 
\institute{University of Warwick, Mathematics and Computer Science, Zeeman building, Coventry CV4~7AL, UK\\
	\email{a.coja-oghlan@warwick.ac.uk}
\and
Max-Planck-Institute for Informatics, 
Campus E1.4,
66123 Saarbr\"ucken,
Germany\\
\email{kpanagio@mpi-inf.mpg.de}} 

\maketitle

\begin{abstract}
\noindent
The best current estimates of  the thresholds for the existence of solutions in random constraint satisfaction problems (`CSPs') mostly derive from the \emph{first} and the \emph{second moment method}. Yet apart from a very few exceptional cases these methods do not quite yield matching upper and lower bounds. According to deep but non-rigorous arguments from statistical mechanics, this discrepancy
is due to a change in the geometry of the set of solutions called \emph{condensation} that occurs shortly before the actual threshold for the existence of solutions (Krzakala, Montanari, Ricci-Tersenghi, Semerjian, Zdeborov\'a: PNAS~2007). To cope with condensation, physicists have developed a sophisticated but non-rigorous formalism called \emph{Survey Propagation} (M\'ezard, Parisi, Zecchina: Science 2002). This formalism yields precise conjectures on the threshold values of many random CSPs. Here we develop a new \emph{Survey Propagation inspired second moment method} for the random $k$-NAESAT problem, which is one of the standard benchmark problems in the theory of random CSPs. This new technique allows us to overcome the barrier posed by condensation rigorously. We prove that the threshold for the existence of solutions in random $k$-NAESAT is $2^{k-1}\ln2-(\frac{\ln2}2+\frac14)+\eps_k$, where $|\eps_k| \le 2^{-(1-o_k(1))k}$, thereby verifying the statistical mechanics conjecture for this problem.

\medskip
\noindent
\emph{Key words:}	random structures, phase transitions, $k$-NAESAT, second moment method, Survey Propagation.
\end{abstract}

\spnewtheorem{fact}[theorem]{Fact}{\bfseries}{\itshape}

\newpage

\setcounter{page}{1}

%
%
%
%

\section{Introduction}\label{XSec_Intro}


Over the past decade, physicists have developed sophisticated but non-rigorous techniques
for the study of random constraint satisfaction problems (`CSPs') such as random $k$-SAT or random graph $k$-coloring~\cite{pnas,MPZ}.
This work has led to a remarkably detailed \emph{conjectured} picture,
according to which various phase transitions affect both the combinatorial and computational nature of random problems.
By now, some of these predictions have been turned into rigorous theorems.
Examples include results on the ``shattering'' of the solution space~\cite{Barriers,fede},
	work on (non-)reconstruction and sampling~\cite{Charis,GM,MRT},
		and even new algorithms for random CSPs~\cite{BetterAlg,WP}.
Many of these contributions have led to the development of new rigorous techniques.
Indeed, it seems fair to say that, combined, these results have advanced our understanding of random CSPs quite significantly.

However, thus far substantial bits of the statistical mechanics picture have eluded all rigorous attempts.
Perhaps most importantly, apart from a very few special cases, the precise thresholds for the existence of solutions
in random CSPs have not been pinned down exactly.
While rigorous upper and lower bounds can be derived via the \emph{first} and the \emph{second moment method}~\cite{ANP},
these bounds do not quite match in most examples, including prominent ones such
	as random $k$-SAT or random graph $k$-coloring.
In fact, the statistical mechanics techniques suggest a striking explanation for this discrepancy,
	namely the existence of a \emph{condensation phase} shortly before the threshold for the existence of solutions.
		In this phase, a crucial necessary condition for the success of the (standard) second moment method is violated.
Indeed, in statistical mechanics a deep  formalism called \emph{Survey Propagation} (`SP') has been developed expressly
to deal with condensation.
While SP is primarily an analysis technique, an off-spin has been the \emph{SP guided decimation} algorithm,
which seems highly successful at solving random CSPs experimentally.



In this paper we propose a new \emph{SP-inspired second moment method} that
allows us to overcome the barrier posed by condensation.
The specific problem that we work with is random $k$-NAESAT, 
one of the standard benchmark problems in the theory of random CSPs.
Random $k$-NAESAT is technically a bit simpler than
random $k$-SAT due to a certain symmetry property,
but computationally and structurally both problems have strong similarities.
We determine the threshold for the existence of solutions in random $k$-NAESAT
up to an additive error that tends to zero exponentially with  $k$.
This is the first time that the threshold in any random CSP 
	of this type
can be calculated with such accuracy.
While from a technical viewpoint $k$-NAESAT is perhaps the simplest example of a random CSP
that exhibits condensation, our proof technique 
rests on a rather generic approach.
Therefore, we
believe that with 
 additional technical work
our approach can be extended to many other problems, including random $k$-SAT or random graph $k$-coloring.


To define random $k$-NAESAT formally, let $k\geq3$ and $n>0$ be integers
and let $V=\cbc{x_1,\ldots,x_n}$ be a set of Boolean variables.
For a fixed real $r>0$ we let $m=m(n)=\lceil rn\rceil$.
Further, let $\PHI=\PHI_k(n,m)$ be a propositional formula obtained by choosing $m$
clauses of length $k$ over $V$ uniformly and independently at random among all $(2n)^k$ possible clauses.
We say that an assignment $\sigma:V\ra\cbc{0,1}$ is an \emph{NAE-solution} (a ``solution'')
if each clause has both a literal that evaluates to `true' under $\sigma$ and one that evaluates to `false'.
In other words,  both $\sigma$ and its inverse $\bar\sigma:x_i\mapsto 1-\sigma(x_i)$ 
are satisfying assignments of the Boolean formula $\PHI$.
We say that an event occurs \emph{with high probability} (``\whp'')
if its probability tends to one as $n\ra\infty$.

Friedgut~\cite{EhudHunting} proved that for any $k$ there exists a \emph{sharp threshold sequence} $\rkNAE=\rkNAE(n)$
such that for any fixed $\eps>0$ \whp\ $\PHI$ has a NAE-solution if $r<\rkNAE-\eps$,
while \whp\ $\PHI$ fails to have one if $r>\rkNAE+\eps$.
It is widely conjectured but as yet unproven that the threshold sequence converges for any $k\geq3$.
The best previous bounds on $\rkNAE$ were derived by Achlioptas and Moore~\cite{nae} and Coja-Oghlan and Zdeborov\'a~\cite{Lenka} via the first/second moment method:
	\begin{equation}\label{eqprevious}
	\textstyle \rsecond=2^{k-1}\ln2-\ln2+o_k(1)\leq \rkNAE\leq \rfirst=2^{k-1}\ln2-\frac{\ln 2}2+o_k(1),
	\end{equation}
where $o_k(1)$ hides a term that tends to $0$ for large $k$.
This left an additive gap of $\frac12\ln2\approx0.347$, which our main result closes.

\begin{theorem}\label{Thm_NAE}
There is a sequence $\eps_k=2^{-(1-o_k(1))k}$ such that 
	\begin{equation}\label{eqnew}
	\textstyle 2^{k-1}\ln2-\bc{\frac{\ln2}2+\frac14}-\eps_k\leq \rkNAE\leq2^{k-1}\ln2-\bc{\frac{\ln2}2+\frac14}+\eps_k.
	\end{equation}
\end{theorem}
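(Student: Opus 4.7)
The plan is to prove both halves of (\ref{eqnew}) by analysing not NAE-solutions directly but an SP-inspired proxy, which I shall call a \emph{cover}. A cover is a ternary assignment $\tau\colon V\ra\cbc{0,1,\ast}$ such that (i) every clause contains both a frozen true literal and a frozen false literal under $\tau$, so that any Boolean extension of $\tau$ is an NAE-solution, and (ii) every $0/1$-variable is \emph{blocked} by some clause in the sense that flipping it would violate NAE on that clause. Heuristically, covers are in bijection with the \emph{clusters} of NAE-solutions predicted by statistical mechanics, with the stars marking internally free variables. Let $Z_{\mathrm{cov}}$ denote the number of covers of $\PHI$ and $Z$ the number of NAE-solutions.

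For the upper bound I would run a first moment argument on $Z_{\mathrm{cov}}$. Since each NAE-solution projects, via an iterative ``whitening'' procedure that stars unforced variables one at a time, to a canonical cover, we have $Z > 0 \Rightarrow Z_{\mathrm{cov}} > 0$; hence it suffices to show $\Erw[Z_{\mathrm{cov}}] = o(1)$ for $r \geq 2^{k-1}\ln 2 - \bc{\frac{\ln 2}{2} + \frac{1}{4}} + \eps_k$. The expectation decomposes as a sum over fractions $\alpha$ of starred variables, and its exponential order is $n(H(\alpha) + r \ln p_\alpha)$, where $H$ is binary entropy and $p_\alpha$ is the probability that a uniformly random clause is cover-satisfied at star-density $\alpha$. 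Optimising over $\alpha$ yields $\alpha_\ast = \Theta(2^{-k})$ and the new constant $\frac{\ln 2}{2} + \frac{1}{4}$; the extra $\frac14$ over the first-moment bound in (\ref{eqprevious}) is precisely the entropy gain from permitting stars.

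For the lower bound I would apply a weighted second moment method to $Z_{\mathrm{cov}}$ (or a carefully weighted variant thereof). Expanding $\Erw[Z_{\mathrm{cov}}^2]$ as a sum over \emph{overlap} matrices $\rho \in \RR^{\cbc{0,1,\ast} \times \cbc{0,1,\ast}}$ recording the joint fraction of value pairs across two covers, each summand takes the form $\exp(n \cdot f_r(\rho))$. To conclude $\Erw[Z_{\mathrm{cov}}^2] = O(\Erw[Z_{\mathrm{cov}}]^2)$, and thus via Paley--Zygmund combined with Friedgut's sharp threshold theorem \cite{EhudHunting} the announced lower bound on $\rkNAE$, one must show that the \emph{symmetric} overlap $\rho_{st} = \tau_s \tau_t$ given by the product of marginals is the global maximiser of $f_r$ in the regime of interest. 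I expect the main obstacle to be exactly this global optimisation: we must simultaneously rule out the ``rigid'' competitor concentrated near $\rho_{00} + \rho_{11} = 1$ (which defeated the vanilla second moment in \cite{Lenka} and produced the gap $\frac{\ln 2}{2}$) and all intermediate correlated overlaps introduced by the star coordinates. My plan is a two-step argument: first reduce by convexity to a low-dimensional subfamily parameterised by the $\ast$-marginals and one correlation parameter, then combine a local Hessian computation at the symmetric maximiser with a global bound exploiting the $O(2^{-k})$-sparsity of cover-level disagreements to rule out stationary points in between.
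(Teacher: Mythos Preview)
Your upper-bound strategy via a first moment on covers is sound and is essentially the enhanced first-moment argument of \cite{DB,KKKS} that the paper cites as one of several valid routes to $\rkNAE\leq 2^{k-1}\ln 2-(\frac{\ln 2}{2}+\frac14)+o_k(1)$; the paper itself proves the upper bound by a different but morally equivalent path, combining the cluster-size estimate of \Prop~\ref{Prop_clusterSize} with a first moment on $Z_\beta$ (Fact~\ref{Fact_necessary}).

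The lower bound has a real gap, and it is not where you anticipate. You frame the main obstacle as the global optimisation of $f_r(\rho)$ and worry about a ``rigid competitor'' near the diagonal, but the mechanism behind \Lem~\ref{Lemma_doesNotWork} (stated there for $\beta$-heavy solutions, and equally applicable to covers) shows that your second moment already blows up at the \emph{symmetric} product overlap $\rho_{st}=\tau_s\tau_t$. Conditioning on one fixed cover $\tau$ being valid forces every non-star variable to support at least one clause, which is an atypical outcome of the underlying occupancy problem whenever the star fraction lies below the Poisson baseline $e^{-\lambda}$---and your optimal $\alpha_\ast\approx\frac12 e^{-\lambda}$ does. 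This skews the degree sequence $(d_x)_{x\in V}$ away from its product-Poisson law; since a second cover $\tau'$ at product overlap shares all but an $O(2^{-k})$-fraction of its non-star variables with $\tau$, it inherits the boosted degrees and becomes exponentially more likely to be valid. Hence $\Erw[Z_{\mathrm{cov}}^2]\geq e^{\Omega(n)}\Erw[Z_{\mathrm{cov}}]^2$ irrespective of what happens away from $\rho_{sym}$, and no Hessian or convexity argument on the overlap simplex can repair this. The paper's fix is structural rather than analytic: it first samples the degree sequence $\vec d$ from its typical law $\vec D$ and then runs the entire second-moment argument inside the fixed-degree model $\PHId$ (Definition~\ref{XDef_good}, \Prop~\ref{Prop_smm}), so that conditioning on one good solution carries no information about $\vec d$. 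Your proposal contains no analogue of this step; the allusion to ``a carefully weighted variant'' does not signal awareness of the degree-fluctuation obstruction, which is the central new idea of the paper.
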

While the numerical improvement obtained in~\Thm~\ref{Thm_NAE} may seem modest, we are going to argue that the result is conceptually
quite significant for two reasons.
First, we obtain (virtually) matching upper and lower bounds for the first time in a random CSP of this type.
Second, and perhaps even more importantly, we devise a rigorous method for taming the condensation phenomenon.
Indeed,
condensation has been the main obstacle to determining the precise thresholds in random CSPs for the past decade.
To understand why,  we need to discuss the statistical mechanics picture
	and its relation to the second moment method.

%
%
%


%


\section{Condensation and the second moment method}

\noindent{\bf\em The statistical mechanics perspective.}
We follow~\cite{pnas} to sketch the non-rigorous statistical mechanics
	approach on random $k$-NAESAT.
Let $\cS(\PHI)\subset\cbc{0,1}^n$ denote the set of NAE-solutions of  $\PHI$, and let
$Z(\PHI)=\abs{\cS(\PHI)}$ be the number of solutions.
We turn $\cS(\PHI)$ into a graph by considering two solutions $\sigma,\tau$ adjacent if their Hamming distance is $o(n)$.
According to~\cite{pnas}, the `shape' of  $\cS(\PHI)$ undergoes two substantial changes  \whp\
at certain densities
	$0<\rsh<\rcond<\rkNAE.$

The first transition occurs at $\rsh\sim2^{k-1}\ln(k)/k$,
	almost a factor of $k$ below $\rkNAE$.
Namely, for $r<\rsh$, $\cS(\PHI)$ is (essentially) a connected graph.
But in the \emph{shattering phase} $\rsh<r<\rcond$, $\cS(\PHI)$ splits into connected components $S_1,\ldots,S_{N(\PHI)}$
called \emph{clusters} that are mutually separated by a linear Hamming distance $\Omega(n)$.
Each cluster $S_i$ only comprises an exponentially small fraction of $\cS(\PHI)$.
In particular, the total number $N(\PHI)$ of clusters, the so-called \emph{complexity}, is exponential in $n$.
This ``shattering'' of $\cS(\PHI)$ was indeed established rigorously in~\cite{Barriers}.


As the density $r$ increases beyond $\rsh$,
both the overall number $Z(\PHI)$ of solutions and the number and sizes of the clusters shrink.
However, 
the cluster sizes decrease at a slower rate than $Z(\PHI)$,
until
at density $\rcond=2^{k-1}\ln2-\ln2+o_k(1)$ the largest cluster has size $\Omega(Z(\PHI))$ \whp\
In effect, in the \emph{condensation phase} $\rcond<r<\rkNAE$, the set $\cS(\PHI)$ still decomposes into an exponential number of clusters
$S_1,\ldots,S_{N(\PHI)}$, each of tiny diameter and all mutually separated by Hamming distance $\Omega(n)$.
But in contrast to the shattered phase, now the largest cluster contains a \emph{constant} fraction of the entire set $\cS(\PHI)$.
Indeed, \whp\ a \emph{bounded} number of clusters contain a $1-o(1)$-fraction of all solutions.

The 
dominance of a few large clusters in the condensation phase complicates the probabilistic nature of the problem dramatically.
To see why, consider the experiment of
first choosing a random formula $\PHI$, and then picking two solutions $\vec\sigma,\vec\tau\in\cS(\PHI)$ uniformly and independently.
For $\rsh<r<\rcond$, $\vec\sigma,\vec\tau$ likely belong to different clusters, and hence
can be expected to have a ``large'' Hamming distance.
In fact, it is implicit in the previous work on the second moment method that
	$\dist(\vec\sigma,\vec\tau)\sim n/2\mbox{ \whp}$~\cite{nae,Lenka}.
Intuitively, this 
 means that the two random solutions ``decorrelate''. 
%
%
By contrast, 
for $\rcond<r<\rkNAE$ both $\vec\sigma,\vec\tau$
belong to the same large cluster with a non-vanishing probability.
In effect, with a non-vanishing probability their distance $\dist(\vec\sigma,\vec\tau)$ 
is tiny,
reflecting  that solutions in the same cluster are heavily correlated.

The purpose of the physicists' \emph{Survey Propagation} technique is precisely to deal with this type of correlation.
The basic idea is to work with a different, non-uniform probability distribution on $\cS(\PHI)$.
This \emph{SP distribution} is induced by 
	first choosing a \emph{cluster} $S_i$ uniformly at random among $S_1,\ldots,S_{N(\PHI)}$, and then
		selecting a solution in that cluster $S_i$ uniformly.
Since the number $N(\PHI)$ of clusters is (thought to be) exponential in $n$
throughout the condensation phase, 
two solutions $\vec\sigma',\vec\tau'$ chosen independently
 from the SP distribution are expected to lie in distinct clusters and thus to decorrelate \whp\
%


Starting from this (appropriately formalized) decorrelation assumption,
the SP formalism prescribes a sequence of delicate (non-rigorous)
steps to reduce the computation of the \emph{precise} threshold $\rkNAE$ to the solution
of a continuous variational problem for any $k\geq3$~\cite{DRZ08,Panchenko}.
This variational problem is itself highly non-trivial, but heuristic numerical
techniques yield plausible approximations for small values of $k$~\cite{Mertens}.
Moreover, asymptotically for large $k$ the variational problem can be solved analytically.
This led to the conjecture that
	$\rkNAE=2^{k-1}\ln2-\bc{\frac{\ln2}2+\frac14}+o_k(1)$~\cite{DRZ08},
which  \Thm~\ref{Thm_NAE} resolves.

Is \Thm~\ref{Thm_NAE} ``optimal''?
Of course, it would be interesting to prove that for any $k$, the \emph{precise} threshold $\rkNAE$ equals
the solution to the variational problem that the SP formalism spits out.
However, given that this continuous problem itself appears difficult to solve analytically  (to say the very least),
it seems that such a result would merely establish the equivalence of two hard mathematical problems.
%
Thus, we believe that \Thm~\ref{Thm_NAE} marks the end of the line as far as an analytic/explicit computation of $\rkNAE$ is concerned.



\smallskip
\noindent{\bf\em The first and the second moment method.}
The above statistical mechanics picture holds the key to understanding why the
previous arguments did not suffice to pin down $\rkNAE$ precisely.
The best previous bounds~(\ref{eqprevious}) were obtained by applying the first/second moment method to the number
$Z(\PHI)$ of solutions, or a closely related random variable.

With respect to the upper bound,
if for some density $r$ the first moment $\Erw\brk{Z(\PHI)}$ tends to $0$ as $n$ gets large,
then $Z(\PHI)=0$ \whp\ by Markov's inequality. Thus, $\rkNAE\leq r$.
Indeed, it is not difficult to verify that $\Erw\brk{Z(\PHI)}=o(1)$ 
for $r=\rfirst$~\cite{nae}.
This gives the upper bound in~(\ref{eqprevious}).
%

The purpose of the second moment method is to bound $\rkNAE$ from below.
The general approach is this:
suppose we can define a random variable $Y=Y(\PHI)\geq0$ 
such that $Y>0$ only if $\PHI$ has a NAE-solution.
Moreover, assume that for some density $r$, the second moment $\Erw[Y^2]$ satisfies
	\begin{equation}\label{eqsmm}
	\textstyle
	\Erw[Y^2]\leq C\cdot\Erw\brk{Y}^2
	\end{equation}
with $C=C(k)\geq1$ dependent on $k$ but not on $n$.
Then the {\em Paley-Zygmund inequality}
	$\pr\brk{Y>0}\geq\Erw\brk{Y}^2/\Erw[Y^2]$
implies that
	\begin{equation}\label{eqboundedAway}
	\pr\brk{\PHI\mbox{ has a NAE-solution}}\geq\pr\brk{Y>0}\geq\Erw[Y^2]/\Erw\brk{Y}^2\geq1/C>0.
	\end{equation}
Because the $k$-NAESAT threshold is sharp, and as 
$C$ is independent of $n$,
(\ref{eqboundedAway}) implies that $\rkNAE\geq r$.


The obvious choice of random variable is the number $Z(\PHI)$ of solutions.
Since $Z(\PHI)^2$ is just the number of \emph{pairs} of NAE-solutions,
the second moment can be written as
	\begin{equation}\label{eqdecompsSecond}
	\textstyle\Erw[Z(\PHI)^2]=\sum_{\sigma,\tau\in\cbc{0,1}^n}\pr\brk{\mbox{both $\sigma,\tau$ are NAE-solutions}}.
	\end{equation}
Indeed, Achlioptas and Moore~\cite{nae} proved that~(\ref{eqsmm}) is satisfied for $Y=Z(\PHI)$
if $r\leq 2^{k-1}\ln2-\bc{1+\ln2}/{2}$.
Improving upon~\cite{nae},
Coja-Oghlan and Zdeborov\'a~\cite{Lenka} 
obtained the best previous lower bound~(\ref{eqprevious}) 
by considering a slightly modified random variable $Z'(\PHI)$.
Namely,
	$Z'(\PHI)=Z(\PHI)\cdot\vecone_{\PHI\in\cA}$, where $\cA$ is a certain event such that $\PHI\in\cA$ \whp\
In other words, $Z'(\PHI)$ is equal to $Z(\PHI)$ for almost all formulas, but a small fraction
of ``bad'' formulas (that would blow up the second moment)
are excluded.
Still,  $Z'(\PHI)$ admits a similar decomposition as~(\ref{eqdecompsSecond})
	(one just has to condition on $\cA$).


As~(\ref{eqdecompsSecond}) shows, the second moment analysis of either $Z(\PHI)$ or $Z'(\PHI)$
boils down to studying the correlations amongst \emph{pairs} of solutions.
In fact, it was observed in~\cite{nae,Lenka} that a \emph{necessary} condition for the success of this approach
is that two independently and uniformly chosen $\vec\sigma,\vec\tau\in\cS(\PHI)$ satisfy $\dist(\vec\sigma,\vec\tau)\sim n/2$ \whp\
But according to the statistical mechanics picture, this decorrelation condition is violated for $r>\rcond$
due to the presence of large clusters.
Therefore, it is not surprising that the best previous lower bound~(\ref{eqprevious})
on $\rkNAE$ coincides with the (conjectured) condensation threshold $\rcond$.
Indeed, it was verified in~\cite{Lenka} that a certain ``weak'' form of condensation
sets in at $r\sim\rcond$.

The statistical mechanics prescription to overcome these correlations is to work
with the Survey Propagation distribution (first select a
cluster uniformly, then choose a random solution from that cluster) rather than the uniform distribution over $\cS(\PHI)$.
This is precisely the key idea behind our new \emph{SP-inspired second moment argument}.
Roughly speaking, we are going to develop a way to apply the second moment method to the number $N(\PHI)$ of \emph{clusters},
rather than the number of solutions.
%
More precisely, we introduce a parameter $\beta$ that allows us to work with clusters of a prescribed size.
A specific choice of $\beta$ (namely, $\beta=1/2$) corresponds to the SP distribution
and thus to working with $Y(\PHI)=N(\PHI)$.

This new technique allows us to obtain various further results.
For instance, 
we can pin down the typical values of both  $Z(\PHI)$ and $N(\PHI)$ throughout the condensation phase (details omitted).
Furthermore, our proof entails the following result that 
confirms the physics conjecture that pairs of solutions drawn from the SP distribution decorrelate throughout the condensation phase.

\begin{corollary}\label{Cor_SP}
Suppose that $r_{cond}\leq r\leq 2^{k-1}\ln2-\bc{\frac{\ln2}2+\frac14}-\eps_k$.
Let $\vec\sigma',\vec\tau'$ be drawn independently from the SP distribution.
Then $\dist(\vec\sigma',\vec\tau')=(\frac12+o_k(1))n$ \whp
\end{corollary}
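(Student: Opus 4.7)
The plan is to derive Corollary~\ref{Cor_SP} as a byproduct of the SP-inspired second moment machinery that underpins \Thm~\ref{Thm_NAE}. Let $S(\sigma)$ denote the cluster containing a solution $\sigma$, so the SP distribution assigns mass $(N(\PHI)\,|S(\sigma)|)^{-1}$ to $\sigma$, and set
\[
X_d(\PHI)=\sum_{\substack{\sigma,\tau\in\cS(\PHI)\\ \dist(\sigma,\tau)=d}}\frac{1}{|S(\sigma)|\cdot|S(\tau)|}.
\]
Then $\sum_{d=0}^n X_d(\PHI)=N(\PHI)^2$ and $\pr\brk{\dist(\vec\sigma',\vec\tau')=d\mid\PHI}=X_d(\PHI)/N(\PHI)^2$, so the corollary reduces to showing $\sum_{|d-n/2|>\eta_k n}X_d(\PHI)=o(N(\PHI)^2)$ \whp\ for some $\eta_k=o_k(1)$. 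For the denominator, the SP-inspired second moment bound $\Erw[N(\PHI)^2]=O_k(\Erw[N(\PHI)]^2)$ that drives the lower bound in \Thm~\ref{Thm_NAE}, together with the concentration of $N(\PHI)$ the paper establishes throughout the condensation phase, gives $N(\PHI)\geq c\cdot\Erw[N(\PHI)]$ \whp\ for some $c=c(k)>0$. So by Markov it suffices to prove the first-moment estimate
\[
\sum_{|d-n/2|>\eta_k n}\Erw\brk{X_d(\PHI)}=o_k(1)\cdot\Erw[N(\PHI)]^2.
\]

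To estimate $\Erw[X_d(\PHI)]$ I would unfold the sum: each ordered pair $(\sigma,\tau)$ at Hamming distance $d$ contributes the probability that both are NAE-solutions of $\PHI$, times the conditional expectation of $(|S(\sigma)||S(\tau)|)^{-1}$. Using the frozen-variable characterization of clusters that the paper develops, this cluster-size weight should factor (up to negligible lower-order corrections) over the two cluster contributions in the bulk of the distance range, yielding a local cavity-type computation analogous to the one driving $\Erw[N(\PHI)^2]$. A Stirling/Laplace analysis then gives $\Erw[X_d(\PHI)]=\exp(n\phi(d/n)+o(n))$ for an explicit entropy function $\phi$ whose value at $d/n=1/2$ matches the exponent of $\Erw[N(\PHI)^2]$, and one verifies that $\phi$ attains a unique strict maximum at $d/n=1/2$. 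Distances with $|d/n-1/2|>\eta_k$ then contribute a factor at most $\exp(-\Omega(\eta_k^2 n))$ relative to the peak, which delivers the required first-moment bound.

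The principal technical challenge is the rigorous handling of the cluster structure in the condensation regime. One must first give $|S(\sigma)|$ a tractable meaning through the rigidity/freezing framework, and then control the conditional weight $\Erw\brk{(|S(\sigma)||S(\tau)|)^{-1}\mid\sigma,\tau\in\cS(\PHI)}$ not only in the bulk $d\approx n/2$, where the two clusters are essentially independent, but also in the dangerous regimes of small or very large $d$, where the two clusters can coincide or interact in subtle ways. The approximation errors must moreover be uniform enough in $k$ that they do not swamp the gap $\eps_k=2^{-(1-o_k(1))k}$ pursued in \Thm~\ref{Thm_NAE}. Once this machinery---which is in any case what the proof of \Thm~\ref{Thm_NAE} must develop---is in place, the concentration of $\phi$ around its unique maximum at $d/n=1/2$ yields Corollary~\ref{Cor_SP}.
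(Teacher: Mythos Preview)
Your approach is considerably more elaborate than what the paper has in mind, and the step you flag as the ``principal technical challenge'' is a genuine gap that the paper's machinery is designed to \emph{avoid} rather than solve.

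The paper does not spell out a separate proof, but the intended derivation is nearly immediate from two ingredients already in place. First, \Prop~\ref{Prop_noMiddleGround} shows that \whp\ any two solutions of $\PHI$ have distance either $\leq 0.01n$ or $\geq(\tfrac12-2^{-k/3})n$; combined with the inversion symmetry $\tau\mapsto\bar\tau$ this confines every pair of solutions to one of three bands $[0,0.01n]\cup[(\tfrac12-2^{-k/3})n,(\tfrac12+2^{-k/3})n]\cup[0.99n,n]$. Second, the second-moment work for $Z_{g,\beta}$ (\Prop~\ref{Prop_smm}) together with Fact~\ref{Fact_necessary} yields that $N(\PHI)$ is exponentially large \whp\ in the stated range of $r$ (the paper asserts this explicitly just before the corollary, details deferred). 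Under the SP distribution the events ``same cluster'' and ``antipodal cluster'' each have probability $1/N(\PHI)=o(1)$; on their complement the distance lies in the middle band, which is $(\tfrac12+o_k(1))n$ since $2^{-k/3}=o_k(1)$. No Laplace analysis of a distance profile is needed.

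Your plan instead requires a direct first-moment computation of $\Erw[X_d]$, which unfolds into expectations of $(|S(\sigma)|\,|S(\tau)|)^{-1}$ conditional on $\sigma,\tau\in\cS(\PHI)$. The cluster size $|S(\sigma)|$ is a global functional of $\PHI$, governed by the rigid set, and the paper is explicit that parametrizing by such global quantities ``appears technically infeasible'' (see \Sec~\ref{Sec_first}). The entire purpose of the $\beta$-heavy substitution is to replace $|S(\sigma)|$ by the local count of blocked variables so that moment calculations become tractable. Your claim that the cluster-size weight ``should factor (up to negligible lower-order corrections)'' is precisely the statement the paper does not prove and instead works around. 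Relatedly, the paper never establishes $\Erw[N(\PHI)^2]=O_k(\Erw[N(\PHI)]^2)$ directly; the second moment bound is for $Z_{g,\beta}(\PHId)$, and transferring it to $N(\PHI)$ is itself nontrivial.
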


\section{Related work}\label{Sec_related}

\noindent{\bf\em Rigorous work.}
The $k$-NAESAT problem is well-known to be NP-complete in the worst case for any $k\geq3$.
In fact, the NP-complete problem of $2$-coloring a $k$-uniform hypergraph (with $k\geq3$) simply is the
special case of $k$-NAESAT without negations.
The results in~\cite{Lenka} are actually phrased in terms of hypergraph $2$-coloring but 
carry over to $k$-NAESAT directly.


%
%
%
The main contribution of \Thm~\ref{Thm_NAE} is the improved \emph{lower} bound.
In fact, the upper bound in~(\ref{eqnew}) can be obtained in several different ways.
Achlioptas and Moore~\cite{nae} state without proof that the (quite intricate) enhanced first moment argument from~\cite{DB,KKKS} can be used
to show that $\rkNAE\leq2^{k-1}\ln2-(\frac{\ln2}2+\frac14)+o_k(1)$.
This is indeed plausible as, in terms of the statistical mechanics intuition
	(which was unknown to the authors of~\cite{DB,KKKS}) 
this argument amounts to computing the first moment of the number of \emph{clusters}.
Alternatively, generalizing work of Franz and Leone~\cite{FranzLeone}, Panchenko and Talagrand~\cite{Panchenko} proved that the variational problem
that results from the SP formalism yields a rigorous upper bound on $\rkNAE$, which is conjectured to be tight for any $k\geq3$.
The variational problem can be solved asymptotically in the large-$k$ limit
	(unpublished), yielding the upper bound stated in \Thm~\ref{Thm_NAE}.
In this paper we obtain the upper bound by a relatively simple third argument that has a neat combinatorial interpretation.

The proofs of the lower bounds in~\cite{nae,Lenka} and in the present paper are non-constructive in the sense
that they do not entail an efficient algorithm for finding a NAE-solution \whp\
The best current algorithm for random $k$-NAESAT is known to succeed for $r\leq O_k(2^k/k)$,
a factor of $\Omega_k(k)$ below $\rkNAE$~\cite{AKKT}.

From a statistical mechanics point of view, many random CSPs
are similar to random $k$-NAESAT.
In particular, the physics methods suggest the existence of a condensation phase in most random CSPs
	(e.g.,  random $k$-SAT/graph $k$-coloring).
While~\cite{nae} provided the prototype for the second moment arguments in these and other problems,
the technical details in random graph $k$-coloring~\cite{AchNaor} or random $k$-SAT~\cite{yuval}
are quite a bit more intricate than in random $k$-NAESAT.

For instance, random $k$-NAESAT is simpler than random $k$-SAT because
for any NAE-solution $\sigma$ the inverse $\bar\sigma:x\mapsto1-\sigma(x)$ is a NAE-solution as well.
This symmetry of the solution space under inversion simplifies the second moment calculations significantly.
To cope with the absence of symmetry in random $k$-SAT, Achlioptas and Peres~\cite{yuval}
weighted satisfying assignments  cleverly in order to recover the beneficial analytic properties that symmetry induces.
Our new second moment method is quite different from this weighting approach,
since the asymmetry that called for the weighting scheme in~\cite{yuval} is absent in $k$-NAESAT.

None of the (few) random CSPs in which the threshold for the existence of solutions is known precisely has a condensation phase.
The most prominent example is random $k$-XORSAT (random linear equations mod $2$)~\cite{Dubois,PittelSorkin}.
In this case, the algebraic nature of the problem precludes condensation:
	all clusters are simply translations of the kernel.
Similarly, the condensation phase is empty in the uniquely extendible problem from~\cite{Connamacher}.
Also in random $k$-SAT with $k=k(n)>\log_2 n$ (i.e., the clause length grows as a function of $n$), where
the precise threshold has been determined by Frieze and Wormald~\cite{FriezeWormald} via the second moment method,
 condensation does not occur~\cite{ACOFriezekSAT}.
Nor does it in random 2-SAT~\cite{mick,Goerdt}.


Parts of our proof require a precise analysis of geometry of the solution
space $\cS(\PHI)$.
This analysis harnesses some of the ideas that were developed in previous work~\cite{Barriers,fede,Lenka,Daude}
	(e.g., arguments for proving the existence of clusters or of ``rigid variables'').
However, we need to go beyond these previous arguments significantly in two respects.
First, we need to generalize them to accommodate the parameter $\beta$ that controls the cluster sizes.
Second, we need rather precise quantitative information about the cluster structures.

\smallskip
\noindent{\bf\em Survey Propagation guided decimation.}
%
The SP formalism has given rise to an efficient message passing algorithm
called \emph{Survey Propagation guided decimation} (`SPD')~\cite{MPZ}.
Experimentally, SPD seems spectacularly successful at solving, e.g., random $k$-SAT for small values of $k$.
Unfortunately, no quantitative analysis of this algorithm is currently known (not even a non-rigorous one).
The basic idea behind SPD is to approximate the marginals of the SP distribution
	(i.e., the probability that a given variable is `true' in a  solution drawn from the SP distribution)
via a message passing heuristic.
Then a variable $x$ 
is selected according to some rule and is assigned a value
based on the 
(approximate) marginal.
The entire procedure is repeated on the ``decimated'' problem instance where $x$ has been eliminated,
until (hopefully) a solution is found.

The decorrelation of random solutions chosen from the SP distribution
is a crucial assumption behind the message passing computation of the SP marginals.
\Cor~\ref{Cor_SP} establishes such a decorrelation property rigorously.
However, in order to actually analyze SPD, one would have to generalize \Cor~\ref{Cor_SP}
to the situation of a ``decimated'' random formula in which a number of variables have already been eliminated by previous steps of the algorithm.
Still,  we believe that the techniques developed in this paper
are a (necessary) first step towards a  rigorous analysis of SPD.

\section{Heavy solutions and the first moment}\label{Sec_first}

{\em In the rest of the paper we sketch the SP-inspired second moment method on which the proof of \Thm~\ref{Thm_NAE} is based.
Aiming for an asymptotic result, we may assume that $k\geq k_0$ for some (large) constant $k_0>3$.
We also assume $r=2^{k-1}\ln2-\rho$ for 
some $\frac12\ln2\leq \rho\leq\ln2$.
Let $\PHI_i$ denote the $i$th clause of the random formula $\PHI$ so that $\PHI=\PHI_1\wedge\cdots\wedge\PHI_m$.
Furthermore, let $\PHI_{ij}$ signify the $j$th literal of clause $\PHI_i$;
	thus, $\PHI_i=\PHI_{i1}\vee\cdots\vee\PHI_{ik}$.
For a literal $\ell$ we let $\abs \ell$ denote the underlying variable.}

As we discussed earlier, the demise of the ``standard'' second moment method in the
condensation phase is due to the dominance of few large clusters. 
The statistical mechanics prescription for circumventing this issue is to work with a non-uniform distribution over solutions
that favors ``small'' clusters.
To implement this strategy, we are going to exhibit a simple parameter that governs the size of the cluster that a solution belongs to.
Formally, we define the \emph{cluster of $\sigma\in\cS(\PHI)$} as
	$$\cC(\sigma)=\cC_{\PHI}(\sigma)=\cbc{\tau\in\cS(\PHI):\dist(\sigma,\tau)\leq0.01n}.$$
This definition is vindicated by the following observation from~\cite{Lenka}, which shows that
any two solutions either have the same cluster or are well-separated.

\begin{proposition}\label{Prop_noMiddleGround}
Suppose that $2^{k-1}\ln2-\ln2\leq r\leq \rkNAE$.
\Whp\ any two $\sigma,\tau\in\cS(\PHI)$ either satisfy $\dist(\sigma,\tau)\leq0.01n$ or 
	 $\dist(\sigma,\tau)\geq(\frac12-2^{-k/3})n$.
\end{proposition}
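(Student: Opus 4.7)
The plan is to apply the first moment method to the random variable
\[ Z_\alpha(\PHI)=\bigl|\{(\sigma,\tau)\in\cS(\PHI)^2:\dist(\sigma,\tau)=\lfloor\alpha n\rfloor\}\bigr|, \]
i.e., the number of ordered pairs of NAE-solutions at Hamming distance $\lfloor\alpha n\rfloor$, for each $\alpha$ in the ``forbidden gap'' $I_k=\bigl[0.01,\tfrac12-2^{-k/3}\bigr]$. If I can show that $\Erw[Z_\alpha]\le\exp(-\Omega_k(n))$ uniformly for $\alpha\in I_k$, then a union bound over the $O(n)$ integer values $\lfloor\alpha n\rfloor\in[0.01n,(\tfrac12-2^{-k/3})n]$ combined with Markov's inequality yields the proposition.

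For the moment computation, fix $\sigma,\tau\in\{0,1\}^n$ with $\dist(\sigma,\tau)=\lfloor\alpha n\rfloor$ and consider a single random clause $\PHI_i$. Each of its $k$ iid literals takes the pair of values $(\sigma(\PHI_{ij}),\tau(\PHI_{ij}))\in\{(0,0),(0,1),(1,0),(1,1)\}$ with probabilities $(1-\alpha)/2,\alpha/2,\alpha/2,(1-\alpha)/2$. Applying inclusion-exclusion to the events ``$\sigma$ does not NAE-satisfy $\PHI_i$'' and ``$\tau$ does not NAE-satisfy $\PHI_i$'' gives
\[ p(\alpha):=\pr[\sigma,\tau\text{ both NAE-satisfy }\PHI_i]=1-2^{2-k}+2^{1-k}\bigl(\alpha^k+(1-\alpha)^k\bigr). \]
Since the $m$ clauses are independent and there are $2^n\binom{n}{\lfloor\alpha n\rfloor}$ ordered pairs at distance $\lfloor\alpha n\rfloor$, Stirling's formula yields
\[ \Erw[Z_\alpha]=2^n\binom{n}{\lfloor\alpha n\rfloor}p(\alpha)^m=\exp\bigl(n f(\alpha)+o(n)\bigr),\quad f(\alpha):=\ln2+H(\alpha)+r\ln p(\alpha), \]
where $H(\alpha)=-\alpha\ln\alpha-(1-\alpha)\ln(1-\alpha)$.

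The task reduces to showing $f(\alpha)<0$ uniformly on $I_k$. For $\alpha$ bounded away from $\tfrac12$ by a positive constant, $\alpha^k+(1-\alpha)^k=o_k(1)$, so $p(\alpha)=1-(1-o_k(1))\cdot 2^{2-k}$ and $r\ln p(\alpha)\le-2\ln2+o_k(1)$, whence $f(\alpha)\le-\ln2+o_k(1)<0$ using $H(\alpha)\le\ln2$. The main obstacle is the regime $\alpha\uparrow\tfrac12$, where $H(\alpha)\to\ln2$ so that the two dominant terms nearly cancel; here I expect the argument to turn on extracting a quantitatively correct quadratic gap. A direct computation gives $f(\tfrac12)=2\ln2+2r\ln(1-2^{1-k})=\rho\cdot 2^{2-k}+O(4^{-k})=O(2^{-k})$ using $r=2^{k-1}\ln2-\rho$ with $\rho=O(1)$. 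Writing $\alpha=\tfrac12-\delta$ and Taylor-expanding,
\[ H(\tfrac12-\delta)=\ln2-2\delta^2+O(\delta^4),\qquad (\tfrac12+\delta)^k+(\tfrac12-\delta)^k=2^{1-k}+k(k-1)\,2^{3-k}\delta^2+O(k^4 2^{5-k}\delta^4), \]
so $r\bigl[\ln p(\tfrac12-\delta)-\ln p(\tfrac12)\bigr]=-O(rk^2 2^{-2k}\delta^2)=-O(k^2 2^{-k}\delta^2)=o(\delta^2)$. Combining these yields
\[ f(\tfrac12-\delta)\le O(2^{-k})-(2-o_k(1))\delta^2. \]
For $\delta\ge 2^{-k/3}$ the quadratic term $2\delta^2\ge 2\cdot 2^{-2k/3}$ dominates the $O(2^{-k})$ contribution (because $2k/3<k$), so $f(\alpha)\le-\delta^2<0$. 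Together with the easy regime this gives $\Erw[Z_\alpha]\le\exp(-\Omega_k(n))$ throughout $I_k$, and the union bound finishes the proof.
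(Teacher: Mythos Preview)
Your approach is exactly the paper's: it cites this first-moment computation from \cite{nae,Lenka}, writes down the same formula $\tfrac1n\ln\Erw[P_\alpha]=\ln2+H(\alpha)+r\ln p(\alpha)$, and declares its negativity on the gap interval ``a mere exercise in calculus.'' One arithmetic slip to fix: in the ``easy'' regime, combining $H(\alpha)\le\ln2$ with $r\ln p(\alpha)\le-2\ln2+o_k(1)$ gives only $f(\alpha)\le o_k(1)$, not $-\ln2+o_k(1)$; to get strict negativity you must instead use that $H(\alpha)\le\ln2-c$ for a constant $c>0$ depending on the distance of $\alpha$ from $\tfrac12$. (Also, $r[\ln p(\tfrac12-\delta)-\ln p(\tfrac12)]$ is \emph{positive}, since $p$ is minimized at $\alpha=\tfrac12$, but as it is $o_k(\delta^2)$ your final bound survives.)
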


To proceed, we need to get an idea of the ``shape'' of the clusters $\cC(\sigma)$.
According to the SP formalism, each cluster has a set $\cR(\sigma)$
of $\Omega(n)$ \emph{rigid variables} on which
\emph{all} assignments in $\cC(\sigma)$ coincide, 
while the values of the non-rigid variables vary.
Formally, we have $\tau(x)=\sigma(x)$ for all $x\in\cR(\sigma)$ and all $\tau\in\cC(\sigma)$,
while for each $x\not\in\cR(\sigma)$ there is $\tau\in\cC(\sigma)$ such that $\tau(x)\neq\sigma(x)$.
This implies an immediate bound on the size of $\cC(\sigma)$, namely
	$\abs{\cC(\sigma)}\leq2^{n-|\cR(\sigma)|}.$
Indeed, we are going to prove that 
every cluster has
a rigid set of size $\Omega(n)$ \whp, and that   for all clusters \whp\
	\begin{equation}\label{eqRigidEntropy}
	\log_2\abs{\cC(\sigma)}=(1-o_k(1))(n-|\cR(\sigma)|).
	\end{equation}

With $|\cC(\sigma)|$ controlled by the number  of rigid variables,
it might seem promising to perform first/second moment arguments for
  the number of solutions with a suitably chosen number of rigid variables.
The problem with this is that there is no simple way to tell whether a given variable is rigid:
deciding this is NP-hard in the worst case.
Intuitively, this is because rigidity emerges from the ``global'' interplay of variables and clauses.
In effect, 
parametrizing by the number of rigid variables appears technically infeasible.

Instead, we are going to work with a simple ``local'' parameter that turns out to be a good substitute.
Suppose that $x\in\cR(\sigma)$.
Then $x$ must occur in some clause $\PHI_i$   that would be violated if
$x$ was assigned the opposite value $1-\sigma(x)$
	(with all other variables  unchanged).
By the definition of $k$-NAESAT,
this means that the other $k-1$ literals of $\PHI_i$ take the opposite value of the literal whose underlying variable $x$ is.
In this case we say that $x$ \emph{supports} $\PHI_i$ under $\sigma$, and we call $\PHI_i$ a \emph{critical} clause.
Moreover, we call a variable that supports a clause \emph{blocked}, while all other variables are \emph{free}.
While every rigid variable is blocked, the converse is not  generally true.
Nonetheless, we will see that the number of variables that are blocked but not rigid is small enough
so that we can control the cluster sizes in terms of blocked variables.

As a first step, we are going to estimate the expected number of solutions with a given number
of blocked variables.
Let
	$\lambda=\frac{kr}{2^{k-1}-1}=k\ln2+O_k(k/2^k)$ and let
 us say that $\sigma\in\cS(\PHI)$ is \emph{$\beta$-heavy} if exactly $(1-\beta)\exp(-\lambda)n$ variables are free.
Let $\cS_\beta(\PHI)$ be the set of all $\beta$-heavy solutions and let $Z_{\beta}=\abs{\cS_\beta(\PHI)}$ denote their number.

\begin{proposition}\label{Prop_roughFirst} 
For any $\beta\leq1$ we have 
	\begin{equation}
	\textstyle
	\Erw\brk{Z_\beta}=\exp\brk{\frac{n}{2^k}\bc{2\rho-\ln(2)-(1-\beta)\ln(1-\beta)-\beta
		+O_k(k\cdot2^{-k})}}.
		\label{eqBallsBinsFunction}
	\end{equation}
In particular, $Z_{\beta}=0$ for all $\beta<-3/2$ \whp
\end{proposition}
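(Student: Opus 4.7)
The plan is a direct first-moment calculation that reduces to a classical occupancy problem. By the joint symmetry of $\PHI$ under permutations of the variables and under the inversion $\sigma\mapsto\bar\sigma$,
\[
\Erw[Z_\beta] = 2^n\binom{n}{f}\,\pr\brk{\mathbf 1\in\cS(\PHI)\text{ and } \cbc{x_1,\ldots,x_f}\text{ is the free set under }\mathbf 1},
\]
where $f=\lfloor(1-\beta)\exp(-\lambda)n\rfloor$. A variable $x$ is blocked (under $\mathbf 1$) iff some clause is \emph{critical} (has a unique minority literal) with supporting variable $x$. A routine count yields: a uniformly random clause is NAE-satisfied with probability $1-2^{1-k}$ and, conditional on being NAE-satisfied, is critical with probability $q:=k/(2^{k-1}-1)=\lambda/r$; given criticality, the supporting variable is uniform on $[n]$.

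Conditional on $\mathbf 1\in\cS(\PHI)$, the $m$ clauses are therefore i.i.d.; inclusion--exclusion over the blocked set gives
\[
\pr\brk{\cbc{x_1,\ldots,x_f}\text{ is the free set}\mid\mathbf 1\in\cS(\PHI)} = \sum_{j=0}^{n-f}(-1)^j\binom{n-f}{j}\bc{1-\tfrac{q(f+j)}n}^m.
\]
Via a Poissonisation of $\mathrm{Bin}(m,q)$ combined with a saddle-point evaluation of the associated contour integral, this sum equals $\exp(-\lambda f)(1-\exp(-\lambda))^{n-f}\bc{1+o(1)}$ with multiplicative error of the prescribed order. Substituting back into the display above, applying Stirling to $\binom nf$, and Taylor-expanding $\ln(1-2^{1-k})$ and $\ln(1-\exp(-\lambda))$ in powers of $2^{-k}$ to two orders each yields~(\ref{eqBallsBinsFunction}) after routine simplification; crucially the leading terms $n\ln 2$ and $m\ln(1-2^{1-k})\sim -n\ln 2$ cancel, producing the $2^{-k}$-scale answer $(2\rho-\ln 2)/2^k$ from the subleading contributions, while the occupancy factor contributes $-(1-\beta)\ln(1-\beta)-\beta$ at scale $2^{-k}$.

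For the final assertion, observe that $g(\beta):=-(1-\beta)\ln(1-\beta)-\beta$ is increasing on $(-\infty,0)$ with $g(-3/2)\approx -0.791$, while $2\rho-\ln 2\le\ln 2\approx 0.693$ when $\rho\le\ln 2$; hence for $\beta<-3/2$ the bracket in~(\ref{eqBallsBinsFunction}) is bounded above by a negative constant, so $\Erw[Z_\beta]=\exp(-\Omega(n/2^k))$. A union bound over the at most $n+1$ admissible integer values of $f$ followed by Markov's inequality eliminates every $\beta<-3/2$ simultaneously, giving $Z_\beta=0$ \whp{} for all such $\beta$.

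The principal technical obstacle I foresee is the uniform precision of the Poisson approximation: the second-order term in $m\ln(1-q(f+j)/n)$ is of order $\lambda^2 j^2/(rn)$, which at the effective range $j\sim n\exp(-\lambda)$ is of order $nk^2/8^k$ per summand and hence a priori dominates the $O_k(k\cdot 2^{-k})$ remainder demanded by~(\ref{eqBallsBinsFunction}). One must therefore verify, via a careful saddle-point analysis of the associated contour integral (or, equivalently, a second-order Poisson coupling between $\mathrm{Bin}(m,q)$ and $\mathrm{Po}(\lambda n)$ accompanied by a local CLT), that these per-summand contributions cancel in the alternating sum and leave only the admissible error.
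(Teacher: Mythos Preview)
Your approach is essentially the paper's: both reduce to the classical occupancy problem and evaluate it by Poisson approximation, arriving at the same target $\binom{n}{f}e^{-\lambda f}(1-e^{-\lambda})^{n-f}=\Pr[\Bin(n,e^{-\lambda})=f]$ (the paper Poissonizes the balls-in-bins experiment directly rather than writing your inclusion--exclusion, but the two are equivalent). Your stated ``principal technical obstacle'' is overstated: the second-order correction you isolate is of order $nk^2/8^k$, which is already dominated by the target remainder $\frac{n}{2^k}\cdot O_k(k2^{-k})=O_k(nk/4^k)$, so no cancellation in the alternating sum is required.
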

\begin{proof}
The computation of $\Erw\brk{Z_\beta}$ is instructive because it hinges upon the solution
of an occupancy problem that will play an important role in the second moment computation.
Let $\vecone$ denote the assignment that sets all variables to true.
By the linearity of expectation and by symmetry, we have
	\begin{eqnarray*}
	\Erw\brk{Z_\beta}&=&\sum_{\sigma\in\cbc{0,1}^n}\pr\brk{\sigma\mbox{ is a $\beta$-heavy solution}}
		=2^n\cdot\pr\brk{\vecone\mbox{ is a $\beta$-heavy solution}}\\
		&=&2^n\cdot\pr\brk{\vecone\mbox{ is $\beta$-heavy}|\vecone\mbox{ is a solution}}\cdot\pr\brk{\vecone\mbox{ is a solution}}.
	\end{eqnarray*}
Clearly, $\vecone$ is a solution iff each clause of $\PHI$ contains both a positive and a negative literal.
A random clause has this property with probability $1-2^{1-k}$.
Since the $m\sim rn$ clauses are chosen independently, we get
	$$\textstyle2^n\cdot\pr\brk{\vecone\mbox{ is a solution}}=2^n(1-2^{1-k})^m=\exp\brk{\frac n{2^k}\bc{2\rho-\ln2+O_k(2^{-k})}}.$$

Working out the conditional probability that $\vecone$ is $\beta$-heavy is not so straightforward.
Whether $\vecone$ is $\beta$-heavy depends only on the critical clauses of $\PHI$.
Let $X$ be their number.
Given that $\vecone$ is a solution, each clause $\PHI_i$ is critical with probability $k/(2^{k-1}-1)$ independently
	(as there are $2k$ ways to choose the literal signs to obtain a critical clause).
Hence, $X$ has a binomial distribution $\Bin(m,k/(2^{k-1}-1))$ with mean
	$$\Erw\brk{X|\vecone\in\cS(H)}=\frac{km}{2^{k-1}-1}=\lambda n.$$
Since the supporting variable of each critical clause is uniformly distributed,
given $\vecone\in\cS(H)$  the \emph{expected} number of clauses that each variable supports equals $\lambda$.
Thinking of the variables as bins and of the critical clauses as balls, standard results
on the occupancy problem show that the number of free variables is $(1+o(1))\exp(-\lambda)n$ \whp\
Thus, $\Erw\brk{Z_\beta}$ is maximized for $\beta=0$.

By contrast, values $\beta\neq0$ correspond to \emph{atypical} outcomes of the occupancy problem.
Values $\beta<0$ require an excess number of ``empty bins'',
while $\beta>0$ means that fewer bins than expected are empty.
To determine the precise (exponentially small) probability of getting $(1-\beta)\exp(-\lambda)n$
empty bins,
we need to balance large deviations of $X$ against 
the probability that exactly $(1-\beta)\exp(-\lambda)n$
bins remain empty for a given value of $X$.
The result of this combined large deviations analysis is the expression~(\ref{eqBallsBinsFunction}).
The analysis also shows that $\Erw\brk{Z_{\beta}}=\exp(-\Omega(n))$ for $\beta<-3/2$,
	whence $Z_\beta=0$ \whp\ for $\beta<-3/2$.
%
%
%
\qed\end{proof}
As a next step, we need to estimate the cluster size of a $\beta$-heavy solution.

\begin{proposition}\label{Prop_clusterSize}
\Whp\ for all $-3/2\leq\beta\leq1$ all $\beta$-heavy $\sigma\in\cS(\PHI)$ satisfy
	\begin{equation}\label{eqclusterSize}
	\log_2\abs{\cC(\sigma)}=\frac n{2^k}\brk{1-\beta+o_k(1)
		}.
	\end{equation}
%
\end{proposition}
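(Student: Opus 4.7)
The plan is to sandwich $|\cC(\sigma)|$ between $2^{(1-o_k(1))F}$ and $2^{(1+o_k(1))F}$, where $F=F(\sigma)$ denotes the number of free variables of $\sigma$. Because $\beta$-heaviness pins down $F=(1-\beta)\exp(-\lambda)n$ exactly and $\exp(-\lambda)=2^{-k}(1+o_k(1))$ (using $\lambda=k\ln 2+O_k(k/2^k)$), the target equality~(\ref{eqclusterSize}) follows.

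For the \emph{lower bound} I would work directly with the free variables of $\sigma$. Let $F'\subseteq F$ be the subset of free variables that do not share a clause with any other free variable. The key observation is that for every $S\subseteq F'$ the assignment $\sigma_S$ obtained from $\sigma$ by flipping the variables in $S$ is again a NAE-solution: each clause contains at most one literal whose value changes, and since the underlying variable (being free) does not support that clause, the clause continues to contain literals of both signs. Moreover $|S|\leq |F|=O_k(n/2^k)\ll 0.01n$, so $\sigma_S\in\cC(\sigma)$. A standard moment computation shows that the expected number of clauses containing two or more free variables is $O_k(k^2/2^k)\cdot F$, so after removing the $O_k(k^3/2^k)\cdot F$ free variables that participate in such a ``collision clause'' one is left with $|F'|\geq(1-o_k(1))F$ variables; sharp concentration of this count, combined with the bound $\Erw[Z_\beta]\leq\exp(O(n/2^k))$ from~\Prop~\ref{Prop_roughFirst} and a union bound over $\sigma$, yields the desired uniform lower bound $|\cC(\sigma)|\geq 2^{|F'|}=2^{(1-o_k(1))F}$.

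For the \emph{upper bound} I start from the trivial estimate $|\cC(\sigma)|\leq 2^{n-|\cR(\sigma)|}$ implied by the definition of the rigid set. Since flipping a single free variable produces a NAE-solution at Hamming distance $1$, free variables are never rigid, so $\cR(\sigma)$ is automatically contained in the set of $n-F$ blocked variables. It therefore suffices to show that whp at most $o_k(F)$ blocked variables fail to be rigid. To this end I deploy the standard rigidity/expansion machinery: if a blocked variable $x$ is non-rigid, then there exists a set $D\ni x$ with $|D|\leq 0.01n$ and $\sigma_D\in\cS(\PHI)$, and for every critical clause $\PHI_j$ supported by $x$ the set $D$ must also contain a second variable of $\PHI_j$, since otherwise flipping $x$ alone would render $\PHI_j$ monochromatic. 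This propagation condition, combined with the standard fact that whp no small vertex set of $\PHI$ spans many hyperedges, forces the set of blocked-non-rigid variables to have cardinality $o_k(F)$.

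The main technical obstacle is the upper bound, specifically the uniformity of the rigidity estimate across \emph{all} $\beta$-heavy $\sigma$ simultaneously: $Z_\beta$ can be exponentially large, so a plain union bound over solutions is too wasteful. The remedy is to parametrize bad configurations by pairs $(\sigma,D)$ and to show that the expansion of $\PHI$ contributes a factor $\exp(-\Theta(|D|\log k))$ that beats the combinatorial entropy of this parametrization. Finally, the quantification over the continuous range $-3/2\leq\beta\leq 1$ is handled routinely by discretizing $\beta$ on a sufficiently fine mesh, using that $F(\sigma)$ attains only $n+1$ integer values.
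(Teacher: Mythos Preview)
Your proposal is correct and matches the paper's approach closely. Your lower bound via the ``collision-free'' subset $F'\subseteq\cF(\sigma)$ is exactly the argument the paper carries out in detail (in the appendix, \Lem~\ref{Lemma_Zbetagamma}): the set $F'$ you define coincides with the free variables that lie in no clause of $\cX=\bigcup_{i\geq 2}\cX_i$, since a clause with exactly one free variable automatically has blocked literals of both signs. For the upper bound, the paper's extended-abstract proof phrases the rigidity argument as a supercritical branching process (each flipped blocked variable forces on average $\lambda\sim k\ln 2$ further flips), whereas you phrase it as an expansion/density argument on the flipped set $D$; these are two views of the same mechanism, and indeed the paper's appendix (\Lem~\ref{Lemma_dense}, \Lem~\ref{Lemma_selfcontained}) implements precisely the expansion version you sketch, including the first-moment computation over pairs $(\sigma,D)$ that you identify as the device for obtaining uniformity over all solutions.
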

\begin{proof}
The crucial thing to show is that all but a very few blocked variables are rigid.
The proof of this builds upon arguments developed in~\cite{Barriers} to establish rigidity.
Suppose that $x$ is blocked in  $\sigma\in\cS_\beta(\PHI)$, i.e., $x$ supports some clause, say $\PHI_1$.
In any solution $\tau$ with $\tau(x)\neq\sigma(x)$ there must be another variable $x'$ that occurs in $\PHI_1$ such that $\tau(x')\neq\sigma(x')$.
Given that $x$ supports $\PHI_1$, the other $k-1$ variables of $\PHI_1$ are uniformly distributed.
Since $\sigma$ has no more than $(1-\beta)\exp(-\lambda)n= (1-\beta+o_k(1))2^{-k}n$ free variables,
the probability that $x'$ is free is bounded by $(1-\beta+o_k(1))(k-1)/2^{k}$.
%
In fact, since the \emph{expected} number of clauses that each variable supports is $\lambda=(1+o_k(1))k\ln2$, it is quite likely
that $x'$ supports several clauses and that therefore ``flipping'' $x'$ necessitates \emph{several} further flips.
Continuing this argument, we see that the number of flips follows a branching process with (initial) successor rate $\lambda$.
A detailed analysis shows that for all but $O_k(k4^{-k})n$ blocked initial variables $x$ this process will lead to an avalanche of more than $0.01 n$ flips,
whence $\tau\not\in\cC(\sigma)$. 
This shows that all but $o_k(2^{-k})n$ blocked variables are rigid.
\qed\end{proof}


We are ready to prove that $\rkNAE\leq2^{k-1}\ln2-(\frac{\ln2}2+\frac14)+o_k(1)$,
which  is (almost) the upper bound promised in \Thm~\ref{Thm_NAE}.
	(Some additional technical work is needed to replace the $o_k(1)$ by an error term
		that decays exponentially.)
Let
	$N_{\beta}=\abs{\cbc{\cC(\sigma):\sigma\in\cS(\PHI)\mbox{ is $\beta$-heavy}}}$
be the number of \emph{clusters} centered around $\beta$-heavy solutions.
By \Prop~\ref{Prop_clusterSize}, 
each such cluster has size 
	$\abs{\cC(\sigma)}=2^{n(1-\beta+o_k(1))/2^k}$ \whp\
Hence, once more by \Prop~\ref{Prop_clusterSize}, any solution
$\tau\in\cC(\sigma)$ is $\beta'$-heavy for some $\beta'$ satisfying $\abs{\beta'-\beta}\leq\delta_k=o_k(1)$ \whp\
Letting $Z_{\beta}^*$ be the total number of $\beta'$-heavy solutions with $\abs{\beta'-\beta}\leq\delta_k$,
we conclude that
	\begin{equation}\label{eqFirstWeak3}
	N_\beta\cdot 2^{n(1-\beta+o_k(1))/2^k}\leq Z_{\beta}^*\qquad\mbox\whp
	\end{equation}
Clearly, $Z_{\beta}^*\leq\Erw[Z_{\beta}^*]\cdot\exp(o(n))$ \whp\ by Markov's inequality.
Furthermore, as the total number of free variables in each cluster is an integer between $0$ and $n$,
we have $\Erw[Z_{\beta}^*]\leq (n+1)\cdot\max_{\beta'}\Erw[Z_{\beta'}]$.
Combining these inequalities with the estimate of $\Erw[Z_{\beta'}]$ from
\Prop~\ref{Prop_roughFirst}, we find
	\begin{equation}\label{eqFirstWeak4}
	Z_{\beta}^*\leq\exp\brk{o(n)}\Erw[Z_{\beta}^*]\leq
		\exp\bc{\frac{n}{2^k}\brk{2\rho-\ln(2)-(1-\beta)\ln(1-\beta)-\beta+o_k(1)}}
		\qquad\mbox\whp
	\end{equation}
Combining~(\ref{eqFirstWeak3}) and~(\ref{eqFirstWeak4}), we obtain

\begin{fact}\label{Fact_necessary}
\Whp\ we have $N_\beta\leq\exp\brk{\eta(\beta)\cdot n/2^k}$ for all $\beta$, with
	\begin{equation}\label{eqFirstWeak5}
	\eta(\beta)=2\rho-\ln(2)-(1-\beta)\ln(2-2\beta)-\beta+o_k(1).
	\end{equation}
\end{fact}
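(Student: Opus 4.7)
The plan is to combine \Prop~\ref{Prop_clusterSize} (cluster sizes) and \Prop~\ref{Prop_roughFirst} (first moment of $\beta$-heavy solutions) through the obvious tautology
\[
N_\beta \cdot \bigl(\text{minimum size of a $\beta$-heavy cluster}\bigr) \le \bigl(\text{number of solutions in those clusters}\bigr).
\]
Knowing the typical cluster size and bounding the right-hand side via Markov yields the desired upper bound on $N_\beta$.

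First I would invoke \Prop~\ref{Prop_clusterSize} to conclude that every $\beta$-heavy cluster has size $2^{n(1-\beta-o_k(1))/2^k}$ w.h.p. Since any two elements of a cluster differ on at most $0.01n$ variables while the natural scale of the free-variable count is $(1-\beta)\exp(-\lambda)n=\Theta(2^{-k})n$, every $\tau\in\cC(\sigma)$ must itself be $\beta'$-heavy for some $\beta'$ with $|\beta'-\beta|\le\delta_k=o_k(1)$. Letting $Z_\beta^\ast$ count all such $\beta'$-heavy solutions, and using that distinct clusters are disjoint by \Prop~\ref{Prop_noMiddleGround}, one obtains (\ref{eqFirstWeak3}):
\[
N_\beta\cdot 2^{n(1-\beta-o_k(1))/2^k}\le Z_\beta^\ast\qquad\mbox\whp
\]

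Next I would bound $Z_\beta^\ast$ from above in expectation. Since the free-variable count is an integer, only $O(n)$ values of $\beta'$ in $[\beta-\delta_k,\beta+\delta_k]$ contribute, so $\Erw[Z_\beta^\ast]\le (n+1)\max_{\beta'}\Erw[Z_{\beta'}]$, and continuity of $\beta'\mapsto 2\rho-\ln2-(1-\beta')\ln(1-\beta')-\beta'$ together with $\delta_k=o_k(1)$ absorbs the shift from $\beta$ to $\beta'$ into the $o_k(1)$ error. Markov's inequality then gives (\ref{eqFirstWeak4}). Dividing (\ref{eqFirstWeak3}) by the cluster-size factor $\exp(n(1-\beta)\ln 2/2^k)$ and using the identity $\ln(1-\beta)+\ln 2=\ln(2-2\beta)$ produces exactly $\exp(\eta(\beta)\cdot n/2^k)$, as claimed.

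Uniformity over $\beta\in[-3/2,1]$ (with $\beta<-3/2$ handled vacuously by \Prop~\ref{Prop_roughFirst}) follows from a union bound over the $O(n)$ admissible values $\beta\in\tfrac1n\ZZ$; the resulting logarithmic factor is absorbed into $o_k(1)$. The main technical point is making sure that \Prop~\ref{Prop_clusterSize} and the Markov-type bound on each $Z_{\beta'}$ hold simultaneously for all relevant $\beta'$, but since each event is w.h.p.\ for a fixed $\beta'$ and only polynomially many values of $\beta'$ arise, a simple union bound closes the gap. No input beyond \Prop~\ref{Prop_roughFirst} and \Prop~\ref{Prop_clusterSize} is needed.
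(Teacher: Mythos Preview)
Your outline matches the paper's argument almost exactly: combine the cluster-size estimate from \Prop~\ref{Prop_clusterSize} with the first-moment bound from \Prop~\ref{Prop_roughFirst} via~(\ref{eqFirstWeak3}) and~(\ref{eqFirstWeak4}), then divide. The algebra, the use of Markov, and the union bound over the $O(n)$ discrete values of $\beta$ are all fine.

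There is one step whose justification does not work as written. You argue that every $\tau\in\cC(\sigma)$ is $\beta'$-heavy with $|\beta'-\beta|\le\delta_k$ ``since any two elements of a cluster differ on at most $0.01n$ variables while the natural scale of the free-variable count is $\Theta(2^{-k})n$.'' But $0.01n$ is \emph{much larger} than $\Theta(2^{-k})n$, and flipping $0.01n$ variables can a priori alter the critical/free status of far more than $\Theta(2^{-k})n$ variables (each flip touches all clauses through that variable, and each such clause may change which variable supports it). So the Hamming-distance comparison does not control $|\beta'-\beta|$.

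The paper's fix is simply to invoke \Prop~\ref{Prop_clusterSize} a second time, now for $\tau$: since $\tau\in\cC(\sigma)$ and \Prop~\ref{Prop_noMiddleGround} forces $\cC(\tau)=\cC(\sigma)$ \whp, applying~(\ref{eqclusterSize}) to both $\sigma$ and $\tau$ gives
\[
\frac{n}{2^k}(1-\beta+o_k(1))=\log_2|\cC(\sigma)|=\log_2|\cC(\tau)|=\frac{n}{2^k}(1-\beta'+o_k(1)),
\]
hence $|\beta'-\beta|=o_k(1)$. Replace your sentence with this argument and the proof goes through verbatim.
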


Finally, it is a mere exercise in calculus to verify that
at density $r^*=2^{k-1}\ln2-(\frac{\ln2}2+\frac14)+o_k(1)$
 the exponent $\eta(\beta)$
is negative 
for \emph{all} $\beta$.
Therefore, Fact~\ref{Fact_necessary} implies that $r^*$ is an upper bound on $\rkNAE$.

\begin{remark}
The exponent $\eta(\beta)$ attains its maximum at $\beta=\frac12+o_k(1)$.
Together with our second moment bound below,
this implies that for $\beta=\frac12+o_k(1)$ we have $N(\PHI)=\exp(o_k(1)n)\cdot N_\beta(\PHI)$ \whp,
i.e., setting $\beta=\frac12+o_k(1)$ corresponds to the uniform distribution over clusters
and thus to the SP distribution.
\end{remark}


%
%

\section{The second moment}\label{Sec_second}

\noindent{\bf\em A first attempt.}
The obvious approach to proving a matching lower bound on $\rkNAE$ 
 seems to be a second moment argument
for the number $Z_\beta$ of $\beta$-heavy solutions, for some suitable $\beta$.
There is a subtle issue with this, but exploring it will put us on the right track.

We already computed $\Erw\brk{Z_\beta}$ in \Prop~\ref{Prop_roughFirst}.
As $\Erw[Z_\beta^2]$ is the expected number of \emph{pairs} of $\beta$-heavy solutions,
the symmetry properties of the random formula $\PHI$ imply that
	$$
	\Erw[Z_\beta^2]=\Erw\brk{Z_\beta}\cdot\Erw\brk{Z_\beta|\sigma\in\cS_\beta(\PHI)}
		\qquad\mbox{for any fixed }\sigma\in\cbc{0,1}^n.
	$$
Thus, the second moment condition~(\ref{eqsmm}) that we would like to establish for $Y=Z_\beta$ becomes
	\begin{equation}\label{eqsmmcond}
	\Erw\brk{Z_\beta|\sigma\in\cS_\beta(\PHI)}\leq C\cdot\Erw\brk{Z_\beta}.
	\end{equation}

What value of $\beta$ should we go for?
By Fact~\ref{Fact_necessary} a necessary condition
for the existence of $\beta$-heavy solutions is that the exponent $\eta(\beta)$ from~(\ref{eqFirstWeak5}) is positive.
Let us call $\beta$ \emph{feasible} for a density $r$ if it is.
An elementary calculation shows that for $r>\rcond=2^{k-1}\ln2-\ln2+o_k(1)$,
any feasible $\beta$ is strictly positive.

However, (\ref{eqsmmcond}) turns out to be false for \emph{any} $\beta>0$, for any density $r>0$.
To understand why, let us define the \emph{degree} $d_x$ of a variable $x\in V$ as the number of times that $x$
occurs in the formula $\PHI$.
Let $\vec d=(d_x)_{x\in V}$ be the degree sequence of $\PHI$.
It is well known that in the ``plain'' random formula $\PHI$ (without conditioning on $\sigma\in\cS_\beta(\PHI)$),
the degree of each variable is asymptotically Poisson with mean $km/n$. On the other hand, if we condition on $\sigma\in\cS_\beta(\PHI)$ for some $\beta>0$, then 
the degrees are \emph{not} asymptotically Poisson anymore.
Indeed, the degree $d_x$ is the sum of the number $s_x$ of clauses that $x$ supports,
and the number $d_x'$ of times that $x$ appears otherwise.
While $d_x'$ is asymptotically Poisson with mean $<km/n$ as the non-critical clauses do not affect the number
of blocked variables at all, $s_x$ is not.
More precisely, we saw in the proof of \Prop~\ref{Prop_roughFirst} that
for $\beta>0$, $s_x$ is the number of ``balls'' that $x$ receives in an \emph{atypical} outcome of the occupancy problem.
The precise distribution of $s_x$ is quite non-trivial,
but it is not difficult to verify that $s_x$ does \emph{not} have a Poisson distribution.
Fleshing this observation out leads to the sobering

\begin{lemma}\label{Lemma_doesNotWork}
For any $\beta>0$ and any $r>0$ we have
	$\Erw[Z_\beta|\sigma\in\cS_\beta(\PHI)]\geq\exp(\Omega(n))\cdot\Erw\brk{Z_\beta}$.
\end{lemma}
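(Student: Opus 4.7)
By the symmetry of random $k$-NAESAT under variable relabelings and literal-sign flips, we may assume $\sigma=\vecone$ and study $\Erw\brk{Z_\beta\mid\vecone\in\cS_\beta(\PHI)}$. Expanding by linearity,
\[ \Erw\brk{Z_\beta\mid\vecone\in\cS_\beta(\PHI)}=\sum_{\tau\in\cbc{0,1}^n}\pr\brk{\tau\in\cS_\beta(\PHI)\mid\vecone\in\cS_\beta(\PHI)}, \]
and the task reduces to showing that this sum exceeds $\Erw\brk{Z_\beta}=2^n\pr\brk{\vecone\in\cS_\beta(\PHI)}$ by a factor of $\exp(\Omega(n))$.

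The key structural input is the planted distribution itself. Conditioning on $\vecone\in\cS_\beta(\PHI)$ is equivalent to conditioning the underlying occupancy experiment (critical clauses as balls, variables as bins) on the atypical outcome that exactly $(1-\beta)\exp(-\lambda)n$ bins remain empty. For $\beta>0$ this event has probability $\exp(-\Omega(n))$, and under the conditioning the joint distribution of the support counts $(s_x)_{x\in V}$ is no longer a product of independent Poisson variables: the empirical support distribution is tilted away from $\Po(\lambda)$ by a substantial amount, measurable, e.g., in relative entropy.

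To exploit this, decompose the sum over $\tau$ by overlap $\alpha=\abs{\cbc{x:\tau(x)=1}}/n$. For each $\alpha$, parametrize the formula by the joint status of each clause (critical under $\vecone$, critical under $\tau$, critical under both, or neither) and by the identities of the supporting variables in each case. The conditional probability $\pr\brk{\tau\in\cS_\beta(\PHI)\mid\vecone\in\cS_\beta(\PHI)}$ then becomes the probability of an appropriate two-view occupancy event. The crucial point is that the two $\beta$-heaviness constraints are correlated via the shared clause structure, so the joint large-deviations rate function is strictly less than twice the marginal, with gap $\Omega(1)$ per variable whenever $\beta>0$. Summing over $\tau$ with the entropy factor $\binom{n}{\alpha n}$ then yields the claimed $\exp(\Omega(n))$ bound.

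The main technical obstacle is executing this joint occupancy analysis and identifying the dominant overlap $\alpha^*(\beta)$. A helpful sanity check: at $\alpha=1$ (i.e., $\tau=\vecone$) the conditional probability is trivially $1$, and extending to a neighborhood of $\vecone$ via the cluster-size estimate of \Prop~\ref{Prop_clusterSize} already contributes $\exp(\Omega(n/2^k))$; the full $\exp(\Omega(n))$ gain, however, comes from generic overlaps, where the atypicality discount between the conditional and unconditional probabilities becomes exponentially pronounced. This is precisely what fails at $\beta=0$, where the Poisson limit is restored and the gap closes, underlining why the lemma singles out strict positivity of $\beta$.
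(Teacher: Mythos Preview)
Your proposal sets up the right scaffolding (symmetry, linearity, overlap decomposition, joint occupancy) but never delivers the one step that carries all the weight. The sentence ``the joint large-deviations rate function is strictly less than twice the marginal, with gap $\Omega(1)$ per variable whenever $\beta>0$'' is the lemma restated, not an argument for it; you concede as much in the final paragraph (``the main technical obstacle is executing this joint occupancy analysis''), and then simply assert that generic overlaps furnish the full $\exp(\Omega(n))$ gain. Nothing in the outline explains \emph{why} the two $\beta$-heaviness events should be positively correlated at, say, overlap $1/2$, where for large $k$ the $\sigma$-critical and $\tau$-critical clause structures are essentially disjoint.

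The paper's route is different and pinpoints the mechanism you are missing. It does not go through the overlap decomposition; instead it isolates the degree sequence $\vec d=(d_x)_{x\in V}$ as the hidden random variable causing the blow-up. Writing $d_x=s_x+d_x'$ with $s_x$ the $\sigma$-support count, conditioning on $\sigma\in\cS_\beta(\PHI)$ with $\beta>0$ forces $(s_x)_x$ into a non-Poisson, tilted configuration, which in turn shifts the law of $\vec d$. The point is that $p(\vec d):=\pr\brk{\sigma\in\cS_\beta(\PHI)\mid\vec d}$ depends nontrivially on $\vec d$ precisely when $\beta\neq0$, so $\ln p(\vec d)$ has fluctuations of order $n$. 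Since pairs $(\sigma,\tau)$ at overlap near $1/2$ are (conditionally on $\vec d$) nearly independent, one gets
\[
\Erw\brk{Z_\beta^2}\;\gtrsim\;4^n\,\Erw_{\vec d}\brk{p(\vec d)^2}\;\geq\;\exp(\Omega(n))\cdot 4^n\,\Erw_{\vec d}\brk{p(\vec d)}^2\;=\;\exp(\Omega(n))\cdot\Erw\brk{Z_\beta}^2,
\]
the middle inequality coming from the exponential variance of $p(\vec d)$. This is exactly why the paper's remedy is to \emph{fix} $\vec d$ before running the second moment, and it makes transparent why $\beta=0$ is special: there the conditioning is on a typical occupancy outcome and leaves the law of $\vec d$ unchanged.

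A smaller issue: your ``sanity check'' via the cluster around $\vecone$ does not work as written. Solutions $\tau\in\cC(\vecone)$ are only $\beta'$-heavy for some $\beta'$ with $|\beta'-\beta|=o_k(1)$ (cf.\ the discussion after \Prop~\ref{Prop_clusterSize}), not exactly $\beta$-heavy, so they need not contribute to $Z_\beta$ at all.
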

In summary, conditioning on $\sigma\in\cS_\beta(\PHI)$ with $\beta>0$ imposes a skewed
degree distribution that in turn boosts the expected number of $\beta$-heavy solutions beyond the unconditional
expectation.

\smallskip
\noindent{\bf\em Making things work.}
We tackle the issue of degree fluctuations by separating the choice of the degree sequence from the choice of the actual formula.
More precisely, for a sequence $\vec d=(d_x)_{x\in V}$ of non-negative integers such that $\sum_{x\in V}d_x=km$
we let $\PHI_{\vec d}$ denote a $k$-CNF with degree sequence $\vec d$ chosen uniformly at random amongst all such formulas.
Fixing a ``typical'' degree sequence $\vec d$, 
we are going to perform a second moment argument for  $\PHI_{\vec d}$,
thereby preventing fluctuations of the degrees.

How do we define ``typical''?
Ideally,  we would like $\vec d$ to enjoy all the properties that the degree sequence of
the (unconditioned) random formula $\PHI$ is likely to have.
Formally, we let $\vec D=\vec D_k(n,m)$ be the distribution of the degree sequence of $\PHI$.
What we are going to show is that our second moment argument succeeds for a random degree
sequence chosen from the distribution $\vec D$ \whp\



\begin{definition}\label{XDef_good}
A $\beta$-heavy solution $\sigma\in\cS(\PHId)$ is \emph{good} if 
the following conditions are satisfied.
\begin{enumerate}
\item[$\bullet$] We have $\abs{\cC(\sigma)}\leq\Erw\brk{Z_\beta(\PHId)}$.
\item[$\bullet$] There does not exist $\tau\in\cS(\PHId)$ with $0.01n\leq\dist(\sigma,\tau)\leq(\frac12-2^{-k/3})n$.
\item[$\bullet$] No variable supports more than $3k$ clauses under $\sigma$.
\end{enumerate}
\end{definition}
The first two items mirror our analysis of the solution space from \Sec~\ref{Sec_first}.
The third one turns out to be useful for a purely technical reason.

Let $\cS_{g,\beta}(\PHId)$ be the set of good $\beta$-heavy solutions and set
$Z_{g,\beta}(\PHId)=\abs{\cS_{g,\beta}(\PHId)}$.
We perform a second moment argument for $Z_{g,\beta}(\PHI_{\vec d})$,
with $\vec d$ chosen randomly from the distribution $\vec D$.
The result  is

\begin{proposition}\label{Prop_smm}
Suppose 
that $\beta>0$ is feasible.
There is $C=C(k)$ such that for a degree sequence $\vec d$ chosen from the distribution $\vec D$ \whp\
	$\Erw\brk{Z_{g,\beta}(\PHI_{\vec d})^2}\leq C\cdot\Erw\brk{Z_{g,\beta}(\PHI_{\vec d})}^2.$
\end{proposition}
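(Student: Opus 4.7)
The plan is to expand
$\Erw[Z_{g,\beta}(\PHId)^2] = \sum_{\sigma,\tau \in \cbc{0,1}^n} \pr[\sigma, \tau \in \cS_{g,\beta}(\PHId)]$
and group pairs by their overlap $\alpha(\sigma,\tau) = \frac{1}{n}|\cbc{x \in V: \sigma(x) = \tau(x)}|$. The second goodness condition from Definition~\ref{XDef_good}, together with the NAESAT inversion symmetry (apply the condition to $\bar\tau$ as well), restricts $\alpha$ to three windows: the \emph{cluster regime} $\alpha \geq 1 - 0.01$ where $\tau \in \cC(\sigma)$, the \emph{anti-cluster regime} $\alpha \leq 0.01$ where $\tau \in \cC(\bar\sigma)$, and the \emph{decorrelated regime} $|\alpha - \tfrac12| \leq 2^{-k/3}$. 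The near and far regimes contribute at most $|\cC(\sigma)| + |\cC(\bar\sigma)| \leq 2\Erw[Z_\beta(\PHId)]$ partners per $\sigma$ by the first goodness condition, which combined with $\Erw[Z_{g,\beta}(\PHId)] = \Omega(\Erw[Z_\beta(\PHId)])$ whp in $\vec d$ (a consequence of \Prop~\ref{Prop_clusterSize}, \Prop~\ref{Prop_noMiddleGround}, and an elementary tail bound on support-degrees) is absorbed into the target $C \cdot \Erw[Z_{g,\beta}(\PHId)]^2$.

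The heart of the argument is the decorrelated regime, where we carry out a standard second moment calculation enriched by the $\beta$-heaviness constraint. For fixed $\sigma,\tau$ of overlap $\alpha$ and fixed $\vec d$, we sort clauses by \emph{type}---how their $k$ literal slots split between positions where $\sigma,\tau$ agree and where they disagree, and whether the clause is critical for $\sigma$, for $\tau$, for both, or for neither---and evaluate $\pr[\sigma,\tau \in \cS_{g,\beta}(\PHId)]$ via a configuration-model count. The result takes the form $\exp[n\, g(\alpha)/2^k + O(\log n)]$, where $g(\alpha)$ combines the entropy of the overlap pattern, the NAESAT-satisfaction probability for both assignments, and a joint two-species occupancy term enforcing $\beta$-heaviness of each. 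Fixing $\vec d$ drawn from $\vec D$ is precisely what neutralises \Lem~\ref{Lemma_doesNotWork}: the variable degrees retain their unbiased Poissonian values, so at $\alpha = \tfrac12$ the joint occupancy problem factorises asymptotically into two independent copies of the single-assignment problem from the proof of \Prop~\ref{Prop_roughFirst}. This yields $g(\tfrac12)/2 = (\text{leading exponent of }\Erw[Z_\beta(\PHId)])/n + o_k(2^{-k})$, i.e.\ the $\alpha = \tfrac12$ contribution is of order $\Erw[Z_{g,\beta}(\PHId)]^2$.

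The main obstacle is to show that $g$ attains a strict local maximum at $\tfrac12$ with curvature large enough to dominate the $o_k(\cdot)$ slack. The first derivative at $\tfrac12$ vanishes by the $\tau \leftrightarrow \bar\tau$ symmetry, so this reduces to a second-derivative estimate. Since the two occupancy species decouple at $\alpha = \tfrac12$, $g''(\tfrac12)$ can be read off by a perturbative expansion around the decoupled solution; the third goodness condition (no variable supports more than $3k$ clauses) is invoked here to uniformly control higher-order correction terms. Once $g''(\tfrac12) \leq -\Omega(1)$ is established uniformly in $k$, the sum over $\alpha$ in the decorrelated window collapses to a polynomial factor times the $\alpha = \tfrac12$ value, delivering the required bound $\Erw[Z_{g,\beta}(\PHId)^2] \leq C \cdot \Erw[Z_{g,\beta}(\PHId)]^2$. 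Overlaps in the bulk range $(0.01, \tfrac12 - 2^{-k/3})$ are excluded a priori by the goodness restrictions, so no analysis of $g$ at intermediate overlaps is required.
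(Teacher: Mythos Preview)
Your high-level architecture matches the paper's: decompose by overlap, use goodness condition~1 to bound the cluster/anti-cluster regimes by $2\Erw[Z_\beta(\PHId)]$, use condition~2 to eliminate the intermediate overlaps, and then run a Laplace-type argument on the central window $|\alpha-\tfrac12|\le 2^{-k/3}$ showing the exponent peaks at $\alpha=\tfrac12$ with second derivative $-(4-o_k(1))$. You also correctly identify that fixing $\vec d$ is what defuses \Lem~\ref{Lemma_doesNotWork}, and that condition~3 is there to control Lipschitz constants in concentration estimates.

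The substantive difference is in how the central terms are handled. You propose to compute $\pr[\sigma,\tau\in\cS_{g,\beta}(\PHId)]$ directly as $\exp[n\,g(\alpha)/2^k]$ by sorting clauses into types and invoking a joint two-species occupancy argument. The paper inserts an extra layer of decomposition: it introduces \emph{profiles} $\cC=(g_\sigma,g_\tau,\Gamma)$, where $g_\sigma,g_\tau:B\to\{\red,\blue\}$ record for each variable-occurrence whether it is the supporting literal under $\sigma$ (resp.\ $\tau$), and $\Gamma$ records which $\tau$-supporting occurrences sit in $\sigma$-critical clauses. The conditional second moment is then written as a sum over profiles; ``bad'' profiles (those with atypical values of $|g_\sigma^{-1}(\red)\cap g_\tau^{-1}(\red)|$ or $|\Gamma|$) are killed by a separate first-moment argument, and for each ``good'' profile the ratio $p_\cC(\alpha)/p_\cC(\tfrac12)$ is shown to be $\exp[O_k(k^4/2^k)(\alpha-\tfrac12)^2 n]$ via an explicit computation (the $\psi$-functions of Figure~\ref{Fig_psis}) followed by a Taylor expansion. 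The final sum over good profiles at $\alpha=\tfrac12$ is then matched term-by-term against the expansion of $\Erw[\cZ_\beta']^2$.

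What the profile decomposition buys is that, once $\cC$ is fixed, the joint $\beta$-heaviness constraint becomes deterministic (it is encoded in $g_\sigma,g_\tau$), so the remaining randomness is purely over signs and the matching $\vec\pi$, and the clause-wise independence is restored. Your direct route would have to carry the two coupled occupancy problems through the Laplace analysis simultaneously; this is not wrong, but your claim that ``at $\alpha=\tfrac12$ the joint occupancy problem factorises asymptotically'' and that $g''(\tfrac12)\le -\Omega(1)$ follows from a ``perturbative expansion around the decoupled solution'' is exactly the hard part, and you have not indicated how the cross-terms between the $\sigma$- and $\tau$-support structures are controlled. The paper's profile machinery is precisely the device that makes those cross-terms explicit and tractable.
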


\Prop~\ref{Prop_smm} shows that the second moment method for $Z_{g,\beta}(\PHI_{\vec d})$ succeeds 
for feasible $\beta$.
As we observed in \Sec~\ref{Sec_first}, a feasible $\beta>0$ exists so long as
	$r\leq2^{k-1}\ln2-(\frac{\ln 2}2+\frac14)-O_k(k^4/2^k)$.
Hence, \Prop~\ref{Prop_smm} and the Paley-Zygmund inequality 
 show
that $\PHI_{\vec d}$ is NAE-satisfiable for all such $r$ with a non-vanishing probability
	for $\vec d$ chosen randomly from $\vec D$.
Consequently, the same is true of the unconditioned formula $\PHI$
	(because we could generate $\PHI$ by first choosing $\vec d$ from $\vec D$ and then generating $\PHId$).
Since the $k$-NAESAT threshold is sharp~\cite{EhudHunting},
we obtain the lower bound 
in \Thm~\ref{Thm_NAE}.

\smallskip
\noindent{\bf\em Proving \Prop~\ref{Prop_smm}.}
As a first step, we need to work out $\Erw\brk{Z_{g,\beta}(\PHI_{\vec d})}$.
Suppose $\beta>0$ is feasible. 
Recall that $\rho$ is such that $r=2^{k-1}\ln2-\rho$.

\begin{lemma}\label{Lemma_firstMoment}
\Whp\ the degree sequence $\vec d$ chosen from $\vec D$ is such that 
	$$\textstyle
		\Erw\brk{Z_{g,\beta}(\PHI_{\vec d})}\sim\Erw\brk{Z_{\beta}(\PHI_{\vec d})}=
			\exp\brk{\frac n{2^k}\bc{2\rho-\ln2
				-(1-\beta)\ln(1-\beta)-\beta
		+O_k(k/2^{k})}}.$$
\end{lemma}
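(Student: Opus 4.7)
The plan is to decompose the proof into two pieces. First, I establish the right-hand equality by computing $\Erw[Z_\beta(\PHId)]$ directly for a typical degree sequence $\vec d$ drawn from $\vec D$. Second, I argue that discarding the $\beta$-heavy solutions that fail one of the three conditions of Definition~\ref{XDef_good} loses only a $1+o(1)$ multiplicative factor, yielding the ``$\sim$'' in the lemma.

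For the first piece, I work with the configuration-model representation of $\PHId$: each variable $x$ contributes $d_x$ half-edges that are matched uniformly at random to the $km$ clause slots, and each clause slot then receives an independent uniform sign. Since the signs are independent of the matching, for any fixed $\sigma$ the probability that $\sigma$ is a NAE-solution equals $(1-2^{1-k})^m$, so by linearity $\Erw[Z_\beta(\PHId)] = 2^n \cdot \pr[\vec 1 \in \cS_\beta(\PHId)]$ and the ``solution'' factor is already deterministic. The conditional probability that $\vec 1$ is $\beta$-heavy given that it solves $\PHId$ reduces to a non-uniform balls-into-bins problem: each clause independently contributes $0$ or $1$ supporting slot (with success probability $k/(2^{k-1}-1)$, uniform over the $k$ positions), the matching distributes these supporting slots uniformly over the $km$ half-edges, and bin $x$ collects the supporting slots incident to its $d_x$ half-edges. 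For $\vec d$ in the typical range (i.e., $d_x$ within a few standard deviations of $km/n$ for essentially every $x$), the large-deviation rate function governing ``exactly $(1-\beta)\exp(-\lambda)n$ empty bins'' differs from the uniform-bins rate function of \Prop~\ref{Prop_roughFirst} by only $O_k(k/2^k)$ in the exponent, yielding the claimed formula.

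For the second piece, I bound the expected number of $\beta$-heavy solutions violating each of the three conditions. The ``no intermediate-distance neighbor'' condition is handled by a direct first-moment computation on pairs $(\sigma,\tau) \in \cS_\beta(\PHId) \times \cS(\PHId)$ with $0.01n \leq \dist(\sigma,\tau) \leq (\frac12-2^{-k/3})n$; the analysis underlying \Prop~\ref{Prop_noMiddleGround} shows this expected pair count is $\exp(-\Omega(n)) \cdot \Erw[Z_\beta(\PHId)]$. The ``no variable supports more than $3k$ clauses'' condition holds for a fixed $\sigma=\vec 1$ with probability $1-o_k(1)$ by a Chernoff tail bound on the per-variable support count (which is stochastically dominated by $\Bin(d_x, k/(2^{k-1}-1))$ with $d_x = O(k)$ for typical $\vec d$); a union bound over $n$ variables combined with linearity contributes at most $o(1)\cdot\Erw[Z_\beta(\PHId)]$. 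For the cluster-size condition $|\cC(\sigma)| \leq T := \Erw[Z_\beta(\PHId)]$, I use Markov in the form
\begin{equation*}
\Erw\brk{\sum_{\sigma \in \cS_\beta(\PHId)} \vecone\{|\cC(\sigma)|>T\}} \leq T^{-1} \cdot \Erw\brk{\sum_{\sigma,\tau:\,\dist(\sigma,\tau) \leq 0.01n} \vecone\{\sigma \in \cS_\beta(\PHId),\,\tau \in \cS(\PHId)\}}
\end{equation*}
and evaluate the right-hand pair count by the same configuration-model analysis as in the first piece, now summing over $\tau$ obtained from $\sigma$ by flipping at most $0.01n$ coordinates. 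The combinatorics of such close pairs (dominated by flips of free variables, the non-free blocked variables being mostly rigid) yield $\Erw[Z_\beta(\PHId)] \cdot 2^{(1-\beta+o_k(1))n/2^k}$ for the pair count; dividing by $T$ and comparing to $\Erw[Z_\beta(\PHId)]$ gives a ratio of order $\exp[-\eta(\beta)n/2^k + o_k(n/2^k)]$, which is $o(1)$ precisely when $\beta$ is feasible.

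The main technical obstacle is the transfer of these computations from the unconditioned $\PHI$ to $\PHId$ for a typical $\vec d$: one must verify that the exponents governing both $\Erw[Z_\beta(\PHId)]$ itself and the close-pair count above are stable under perturbing $\vec d$ inside the typical range. This amounts to a continuity estimate for the convex-conjugate computation of the rate function of a non-uniform balls-into-bins problem, and is where the $O_k(k/2^k)$ slack in the stated exponent is absorbed.
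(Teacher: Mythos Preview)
Your overall decomposition (first compute $\Erw[Z_\beta(\PHId)]$, then show the three ``good'' conditions cost only a $1+o(1)$ factor) is exactly the paper's strategy, and your treatment of the occupancy piece and of the intermediate-distance condition is fine. But your handling of the ``no variable supports more than $3k$ clauses'' condition has a real gap.

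First, a slip: for typical $\vec d$ you have $d_x\approx km/n=kr=\Theta(k2^k)$, not $d_x=O(k)$; and the per-occurrence probability that a given occurrence of $x$ is the supporting literal is $1/(2^{k-1}-1)$, not $k/(2^{k-1}-1)$. With the correct numbers, the support count of $x$ is essentially $\Bin\bigl(\Theta(k2^k),\,2^{1-k}\bigr)$ with mean $\lambda\approx k\ln 2$, and the Chernoff tail $\pr[\,>3k\,]$ is only $\exp(-\Theta(k))$ --- a constant in $n$. A union bound over the $n$ variables therefore gives $n\cdot\exp(-\Theta(k))$, which diverges; indeed the maximum support count in a $\lambda n$-balls-into-$n$-bins experiment grows like $\ln n/\ln\ln n$ and exceeds $3k$ w.h.p. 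So you cannot hope to show that, conditional only on $\sigma\in\cS_\beta(\PHId)$, the $3k$-cap is satisfied with probability $1-o(1)$; your ``union bound combined with linearity'' step does not produce $o(1)\cdot\Erw[Z_\beta]$.

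The paper avoids this by \emph{not} separating the $3k$-cap from the $\beta$-heavy event. It computes from the start the probability of the joint event ``exactly $(1-\beta)e^{-\lambda}n$ empty bins \emph{and} every bin receives at most $3k$ balls'' in the capacitated occupancy problem (capacities $\min\{3k,d_x\}$), and shows that the rate of this joint event differs from the uncapped rate only by $O_k(k4^{-k})$ in the exponent. In other words, the $3k$-condition is absorbed into the occupancy calculation for $\Erw[Z_\beta(\PHId)]$ itself, not argued afterwards via a union bound. Your argument for the cluster-size condition via Markov is also a detour: the rigidity/branching argument you cite already yields $\pr\bigl[|\cC(\sigma)|>T\,\big|\,\sigma\mbox{ satisfies the first two conditions}\bigr]=o(1)$ directly, without needing a first-moment bound on the pair count.
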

\begin{proof}
Choose and fix a degree sequence $\vec d$.
We need to compute the probability that some $\sigma\in\cbc{0,1}^V$ is a good $\beta$-heavy solution.
By symmetry, we may assume that $\sigma=\vecone$ is the all-true assignment.
Then $\sigma$ is a solution iff every clause contains both a positive and a negative literal.
Since 
the signs of the literals are chosen for all $m$ clauses independently,
we see that
	\begin{equation}\label{eqLemma_firstMoment1}
	\pr\brk{\sigma\in\cS(\PHI_{\vec d})}=(1-2^{1-k})^m.
	\end{equation}
Given that $\sigma$ is a solution, the number $X$ of critical clauses has distribution $\Bin(m,k/(2^{k-1}-1))$,
because whether a clause is critical depends on its signs only.
As in the proof of \Prop~\ref{Prop_roughFirst}, to determine the probability that $\sigma$ is $\beta$-heavy we need to solve an occupancy problem:
	 $X$ balls representing the critical clauses are tossed randomly into $n$ bins representing the variables.
However, this time the bins have \emph{capacities}: the bin representing $x\in V$ can hold no more than $\min\cbc{3k,d_x}$ balls in total.
Thus, we need to compute the probability that under these constraints, exactly 
$(1-\beta)2^{-k}n$ bins are empty.
This amounts to a rather non-trivial counting problem, but for a random degree sequence $\vec d$
the probability differs from the formula obtained in \Prop~\ref{Prop_roughFirst} only by an error
term that decays exponentially in $k$.
More precisely, 
	\begin{equation}\label{eqLemma_firstMoment2}
	\textstyle\pr\brk{\sigma\in\cS_\beta(\PHI_{\vec d})|\sigma\in\cS(\PHI_{\vec d})}=\exp\bc{-\frac n{2^k}\brk{(1-\beta)\ln(1-\beta)-\beta-O_k(k/2^k)}}.
	\end{equation}

Let us provide some intuition why this is.
The bin capacities are such that \whp\ most bins can hold about $kr=k2^{k-1}\ln2+O_k(k)$ balls.
By comparison, the total number of balls is $X \sim_k mk/(2^{k-1}-1) \sim_k n\, k\ln 2$ \whp\
In effect, the expected number of balls that a typical bin receives is about $k \ln 2$,  way smaller than the capacity of that bin.
Indeed, since the  number of balls that are received by a typical bin is approximately $\Bin(kr, \frac{nk \ln2}{km}) \approx \Bin(kr, 2^{-k+1})$,
the number of balls can be approximated well by a $\Po(\lambda)$ distribution (with $\lambda=kr/(2^{k-1}-1) \sim_k k\ln 2$).
Thus, the probability that a bin remains empty is close to
$\exp(-\lambda)$, which was the probability of the same event in the experiment without capacities. 
The technical details of this argument are quite delicate, as the fluctuations of the capacities need to be
controlled \emph{very} carefully.

Finally, similar arguments as in the proof of \Prop~\ref{Prop_clusterSize} yield
	$\textstyle\pr\brk{\sigma\in\cS_{g,\beta}(\PHI_{\vec d})|\sigma\in\cS_\beta(\PHI_{\vec d})}=1-o(1).$
Thus, the assertion follows from~(\ref{eqLemma_firstMoment1})--(\ref{eqLemma_firstMoment2}).
%
\qed\end{proof}

We now turn to the second moment.
Fix some $\sigma\in\cbc{0,1}^V$, say $\sigma=\vecone$.
Let $Z_{g,\beta}(t,\sigma)$ denote the number of good $\tau\in\cS(\PHId)$ at distance $t$ from $\sigma$.
Using the linearity of expectation and recalling that  the set of NAE-solutions is symmetric with respect to inversion, we obtain
	\begin{eqnarray}
	\Erw\brk{Z_{g,\beta}(\PHId)|\sigma\in\cS_{g,\beta}(\PHId)}&\leq&
		2\sum_{0\leq t\leq n/2}\Erw\brk{Z_{g,\beta}(t,\sigma)|\sigma\in\cS_{g,\beta}(\PHId)}.
			\label{eqsmmdecomp1}
	\end{eqnarray}
Let $I=\cbc{t\in\ZZ:(\frac12-2^{-k/3})n\leq t\leq n/2}$.
The first two conditions from Definition~\ref{XDef_good} ensure that given that $\sigma$ is good, with certainty we have
	$$
	\sum_{t\leq0.01n} Z_{g,\beta}(t,\sigma)\leq\abs{\cC(\sigma)}\leq\Erw\brk{Z_\beta(\PHId)}\mbox{ and }
		\sum_{0.01n<t<(\frac12-2^{-k/3})n}Z_{g,\beta}(t,\sigma)=0.$$
Hence, \Lem~\ref{Lemma_firstMoment} and (\ref{eqsmmdecomp1}) yield
	\begin{eqnarray}
	\Erw\brk{Z_{g,\beta}(\PHId)|\sigma\in\cS_{g,\beta}(\PHId)}&\leq&
		(2+o(1))\Erw\brk{Z_{g,\beta}(\PHId)}+
		2
		\sum_{t\in I}
			\Erw\brk{Z_{g,\beta}(t,\sigma)|\sigma\in\cS_{g,\beta}(\PHId)}.
			\label{eqsmmdecomp2}
	\end{eqnarray}
%
%
%
This reduces the proof to the analysis of the ``central terms'' with $t\in I$.
The result of this is
%
%
\begin{lemma}\label{Lemma_central}
There is a constant $C'=C'(k)\geq1$ such that
for a random $\vec d$ we have 
	\begin{equation}\label{eqcentral}
	\textstyle
	\sum_{t\in I}\Erw\brk{Z_{g,\beta}(t,\sigma)|\sigma\in\cS_{\beta,g}(\PHI_{\vec d})
		}\leq C'\cdot\Erw\brk{Z_{g,\beta}(\PHI_{\vec d})}\qquad\mbox{ \whp}
	\end{equation}
\end{lemma}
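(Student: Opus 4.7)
\smallskip\noindent{\bf\em Proof plan for \Lem~\ref{Lemma_central}.}
By the permutation symmetry of $\PHId$ we may take $\sigma=\vecone$; then for $\tau_t$ any fixed assignment at Hamming distance $t$ from $\vecone$,
\[
\Erw\brk{Z_{g,\beta}(t,\sigma)\mid\sigma\in\cS_{g,\beta}(\PHId)}
= \binom{n}{t}\,\frac{\pr\brk{\vecone,\tau_t\in\cS_{g,\beta}(\PHId)}}{\pr\brk{\vecone\in\cS_{g,\beta}(\PHId)}}.
\]
Together with $\Erw\brk{Z_{g,\beta}(\PHId)}=2^n\pr\brk{\vecone\in\cS_{g,\beta}(\PHId)}$, the target inequality~(\ref{eqcentral}) reduces to the ``central pair bound''
\[
\sum_{t\in I}\binom{n}{t}\pr\brk{\vecone,\tau_t\in\cS_{g,\beta}(\PHId)}\leq C'\cdot 2^n\pr\brk{\vecone\in\cS_{g,\beta}(\PHId)}^2,
\]
i.e., the assertion that pairs at central Hamming distance contribute only $O(1)\cdot\Erw\brk{Z_{g,\beta}(\PHId)}^2$ to the second moment.

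To estimate the summand at $t=\alpha n$ with $\alpha\in[\tfrac12-2^{-k/3},\tfrac12]$, I would factor the pair probability as in \Lem~\ref{Lemma_firstMoment}. First, the probability that \emph{both} $\vecone$ and $\tau_t$ are NAE-solutions of $\PHId$ is computed clause-by-clause by splitting according to the number $s\in\cbc{0,\ldots,k}$ of literals of each clause whose underlying variable lies in the flip set $F=\cbc{x:\vecone(x)\neq\tau_t(x)}$. In the configuration model with degree sequence $\vec d$, each literal slot is assigned a variable of the prescribed degree with an independent sign; a straightforward local-limit computation then yields a factor $\exp(n\,g_1(\alpha)+o(n))$, analytic in $\alpha$ near $\tfrac12$. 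Second, and this is the technical heart of the proof, I would estimate the conditional probability that both $\vecone$ and $\tau_t$ are $\beta$-heavy given that both are NAE-solutions. This amounts to a \emph{joint} occupancy problem in which every critical clause of $\vecone$ carries a supporting variable, likewise for $\tau_t$, and the two supports are coupled through the shared literal structure. Setting up a two-type balls-in-bins model with per-variable capacities $\min\cbc{3k,d_x}$ (available thanks to the third bullet of Definition~\ref{XDef_good}) and extending the large-deviations analysis behind~(\ref{eqBallsBinsFunction}) to the joint setting, one shows that, uniformly over $\alpha\in I$, the joint $\beta$-heaviness probability factorises into the product of the two marginals up to a multiplicative $\exp(o_k(n/2^k))$-correction.

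Combining the two factors, the summand becomes $\exp(n\Psi(\alpha)+o(n))\cdot\pr\brk{\vecone\in\cS_{g,\beta}(\PHId)}^2$ for an explicit function $\Psi$ with a unique maximum $\Psi(\tfrac12)=\ln 2$ on $I$ and strictly negative second derivative there; a Laplace-type summation over $t\in I$ then yields the required bound. The first two bullets of Definition~\ref{XDef_good} excise the intermediate-distance regime $0.01n<t<(\tfrac12-2^{-k/3})n$ with certainty, so the joint occupancy estimate need only be performed in the decorrelated regime. The main obstacle is the joint-occupancy estimate itself: by \Lem~\ref{Lemma_doesNotWork} the required factorisation \emph{fails} in the plain $\PHI$ model, and the whole argument hinges on fixing a typical degree sequence $\vec d\sim\vec D$, which must be exploited via a delicate contiguity/local-limit argument to keep the configuration-model occupancy expectations within a constant factor of their Poisson counterparts throughout the joint large-deviations analysis.
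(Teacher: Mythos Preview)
Your high-level plan---reduce to a pair bound, exhibit a Gaussian peak at $\alpha=\tfrac12$, then Laplace-sum---matches the paper's, but the route to the key estimate is genuinely different. The paper does \emph{not} attempt to show that the joint $\beta$-heaviness probability factorises at general $\alpha$. Instead it decomposes $\mu(t)$ over \emph{profiles} $\cC=(g_\sigma,g_\tau,\Gamma)$: two colourings $g_\sigma,g_\tau:B\to\{\red,\blue\}$ recording, for every ball, whether it supports its clause under $\sigma$ resp.\ $\tau$, together with an auxiliary set $\Gamma\subset g_\sigma^{-1}(\blue)\cap g_\tau^{-1}(\red)$ marking those $\tau$-supporting balls whose clause is \emph{also} $\sigma$-critical. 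Bad profiles are killed by a first-moment argument in the unconditioned model; for each good profile the probability $p_\cC(\vec\alpha)$ is computed \emph{exactly} as a product over clause types (a function of the five-dimensional overlap $(\alpha_{\red,\red},\alpha_{\red,\blue},\alpha_{\blue,\red},\alpha_{\blue,\blue},\alpha_\Gamma)$), Taylor-expanded around $\tfrac12\vecone$ to give $p_\cC(\alpha n)/p_\cC(n/2)\le\exp\brk{O_k(k^4 2^{-k})(\alpha-\tfrac12)^2 n}$, and then Azuma bounds on the overlap coordinates finish. Your ``two-type balls-in-bins'' does not track $\Gamma$ or the five overlap coordinates, and the explicit clause-type calculation needs all of them.

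There is also a precision gap. You assert factorisation up to $\exp(o_k(n/2^k))$ and write the summand as $\exp(n\Psi(\alpha)+o(n))\cdot\pr\brk{\vecone\in\cS_{g,\beta}}^2$, but to land on a constant $C'(k)$ you need the summand at $t=n/2$ to equal a genuine $\Theta(n^{-1/2})\cdot 2^n\pr\brk{\vecone\in\cS_{g,\beta}}^2$, so that summing over the $\Theta(\sqrt n)$-wide Gaussian window produces $O(1)$; an $\exp(o(n))$ or even $\exp(o_k(n/2^k))$ slack swamps $C'(k)$. The paper gets this precision because the profile decomposition turns the $\alpha=\tfrac12$ case into an identity---the profile sum \emph{is} the expansion of $\Erw\brk{\cZ_\beta'}^2$, giving $\mu(n/2)\le c\,n^{-1/2}\Erw\brk{\cZ_\beta'}$ by term-by-term comparison---and reduces the $\alpha$-dependence to an explicit smooth function one can differentiate.
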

\begin{proof}[sketch]
This is technically the most challenging bit of this work.
The argument boils down to estimating the probability that two random $\vec\sigma,\vec\tau\in\cbc{0,1}^n$
with $\dist(\vec\sigma,\vec\tau)/n=\alpha\in[\frac12-2^{-k/3},\frac12]$
simultaneously are good $\beta$-heavy solutions.
To compute this probability, we need to analyze the interplay of two occupancy problems
as in the proof of \Lem~\ref{Lemma_firstMoment} with respect to the same degree sequence $\vec d$.

More precisely, let $B=\bigcup_{x\in V}\cbc x\times\cbc{1,\ldots,d_x}$ be a set of $km$ ``balls''. 
Generating $\PHId$ is equivalent to drawing a random bijection $\vec\pi:\brk{m}\times\brk k\ra B$, with
$\pi(i,j)=(x,l)$ indicating that $x$ is the underlying variable of the $j$th literal of clause $i$, and independently choosing a map $\vec s:\brk m\times\brk k\ra\cbc{\pm1}$ indicating the signs.
Further, we represent the occupancy problems for $\vec\sigma,\vec\tau$ by
two ``colorings'' $g_\sigma,g_\tau:B\ra\cbc{\red,\blue}$, with $g_\sigma(x,l)=\red$ indicating that the $l$th position in bin $x$ is occupied under $\sigma$ (and analogously for $\tau$).
%
We compute the probability $p(\alpha,g_\sigma,g_\tau)$ that $\vec\pi,\vec s$
induce a formula in which
\begin{enumerate}
\item[$\bullet$] literal $(i,j)$ supports clause $i$ under $\vec\sigma$ iff $g_\sigma\circ\pi(i,j)=\red$, and similarly for $\vec\tau$.
\item[$\bullet$] both $\vec\sigma,\vec\tau$ are good $\beta$-heavy solutions.
\end{enumerate}
The result is that
for any $g_\sigma,g_\tau$ the ``success probability'' is \emph{minimized} at $\alpha=1/2$.
Quantitatively, 
	\begin{equation}\label{XeqLaplace1}
	\frac{p(\alpha,g_\sigma,g_\tau)}{p(1/2,g_\sigma,g_\tau)}= 
			\exp\brk{O_k(k^4/2^k)(\alpha-1/2)^2n}
			\quad\mbox{ for any $g_\sigma,g_\tau$}.
	\end{equation}
On the other hand, the total \emph{number} of assignment pairs satisfies
	\begin{equation}\label{XeqLaplace2}
	\frac{\abs{\cbc{(\sigma,\tau):\dist(\sigma,\tau)=\alpha n}}}{\abs{\cbc{(\sigma,\tau):\dist(\sigma,\tau)=n/2}}}=\bink n{\alpha n}/\bink n{n/2}
		=\exp(-(4-o_k(1))(\alpha-1/2)^2n),
	\end{equation}
which is \emph{maximized} at $\alpha=1/2$.
Combining~(\ref{XeqLaplace1}) and~(\ref{XeqLaplace2}), we see that for any two colorings $g_\sigma,g_\tau$ the dominant
contribution to the second moment stems from $\alpha=\frac 12+O(1/\sqrt n)$,
	i.e., from ``perfectly decorrelated'' $\vec\sigma,\vec\tau$.
The assertion follows by evaluating the contribution of such $\alpha$ explicitly
and summing over $g_\sigma,g_\tau$.
%
%
%
\qed\end{proof}

\noindent{\bf Acknowledgment.}
The first author thanks Dimitris Achlioptas and Lenka Zdeborov\'a for helpful discussions on the second moment
method and the statistical mechanics work on random CSPs.

\newpage

\newpage
\noindent{\Large\bf Appendix}

\medskip
\noindent
This appendix contains the details omitted from the extended abstract. Section~\ref{Apx_prelim} contains some preliminary facts about random variables that will be used many times.
Appendix~\ref{Apx_upper} contains the full proof of the upper bound claimed in \Thm~\ref{Thm_NAE}
	(with $\eps_k$ exponentially small in $k$).
Finally, in Appendices~\ref{Apx_lower} and~\ref{Apx_centreTerms} we carry out the second moment argument in full.

\begin{appendix}

\section{Preliminaries}\label{Apx_prelim}

The next lemma provides an asymptotically tight bound for the probability that a sum of independent and identically distributed random variables attains a specific value. It will be an important tool in our further analysis, since we will be often interested in the exact probabilities of~\emph{rare} events.
\begin{lemma}
\label{lem:locallimit}
Let $X_1, \dots, X_n$ be independent random variables with support on $\mathbf{N}_0$ with probability generating function $P(z)$. Let $\mu = \Exp[X_1]$ and $\sigma^2 = \Var[X_1]$. Assume that $P(z)$ is an entire and aperiodic function. 
Then, uniformly for all $T_0 < \alpha < T_\infty$, where $T_x = \lim_{z\to x}\frac{zP'(z)}{P(z)}$,  as $n \to \infty$
\begin{equation}
\label{eq:locallimit}
	\Pr[X_1 + \dots + X_n = \alpha n] = (1 + o(1))\, \frac{1}{\zeta \sqrt{2\pi n \xi}} \, \left(\frac{P(\zeta)}{\zeta^\alpha}\right)^n, 
\end{equation}
where $\zeta$ and $\xi$ are the solutions to the equations
\begin{equation}
\label{eq:saddlepoints}
	\frac{\zeta P'(\zeta)}{P(\zeta)} = \alpha
	\qquad \text{ and } \qquad
	\xi = \frac{d^2}{dz^2}\left(\ln P(z) - \alpha \ln z\right) \Big|_{z = \zeta}.
\end{equation}
Moreover, there is a $\delta_0 > 0$ such that for all $0 \le |\delta| \le \delta_0$ the following holds. If $\alpha = \Exp[X_1] + \delta\sigma$, then
\begin{equation}
\label{eq:limitclose}
	\Pr[X_1 + \dots + X_n = \alpha n] = (1 + O(\delta)) \, \frac{1}{\sqrt{2\pi n \sigma}} \, e^{(-\delta^2/2 + O(\delta^3))n}.
\end{equation}
\end{lemma}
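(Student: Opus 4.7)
\medskip\noindent\textbf{Proof plan.} I would proceed by the classical saddle-point / exponential-tilting argument (cf.~\cite{FS}). Since $P$ is entire, for any $\zeta>0$ the tilted distribution $\Pr[\tilde X = k] = \zeta^k\Pr[X_1=k]/P(\zeta)$ is well-defined, with probability generating function $Q(z)=P(\zeta z)/P(\zeta)$. Expanding ratios gives the exact change-of-measure identity
$$\Pr[X_1+\cdots+X_n=\alpha n] = \bc{\frac{P(\zeta)}{\zeta^\alpha}}^n \Pr[\tilde X_1+\cdots+\tilde X_n=\alpha n],$$
valid for every $\zeta>0$. The plan is to choose $\zeta=\zeta(\alpha)$ so that $\Exp\tilde X=\zeta P'(\zeta)/P(\zeta)$ equals exactly $\alpha$; this is the first equation in~\eqref{eq:saddlepoints}. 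The map $\zeta\mapsto \zeta P'(\zeta)/P(\zeta)$ is continuous on $(0,\infty)$, has endpoint limits $T_0$ and $T_\infty$, and is strictly increasing (its derivative equals $\Var[\tilde X]/\zeta>0$ for non-degenerate $X_1$), so the hypothesis $T_0<\alpha<T_\infty$ produces a unique such $\zeta(\alpha)\in(0,\infty)$.

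With this choice the tilted sum has mean exactly $n\alpha$, so the right-hand probability becomes $\Pr[\tilde S_n=\Exp\tilde S_n]$. The variables $\tilde X_i$ have all moments (since $P$ is entire) and inherit aperiodicity from $X_1$ (their supports agree). Gnedenko's local central limit theorem therefore yields
$$\Pr[\tilde S_n=n\alpha]=\frac{1+o(1)}{\sqrt{2\pi n\,\tilde\sigma^2}}.$$
A short computation from $Q$ gives $\tilde\sigma^2=\zeta^2 P''(\zeta)/P(\zeta)+\alpha-\alpha^2=\zeta^2\xi$ after using the saddle-point equation and the identity $\xi=P''(\zeta)/P(\zeta)-(P'(\zeta)/P(\zeta))^2+\alpha/\zeta^2$, matching the second equation of~\eqref{eq:saddlepoints}. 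Substituting into the change-of-measure identity yields~\eqref{eq:locallimit}; uniformity over compact subsets of $(T_0,T_\infty)$ follows from the corresponding uniformity of the local CLT together with the continuous dependence of $\zeta$, $\xi$ and $P(\zeta)$ on $\alpha$.

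For~\eqref{eq:limitclose}, I would Taylor-expand the ``energy'' $\phi(\zeta):=\ln P(\zeta)-\alpha\ln\zeta$ around $\zeta=1$. At $\alpha=\mu$ the saddle point is $\zeta=1$; for $\alpha=\mu+\delta\sigma$, using $\phi(1)=0$, $\phi'(1)=\mu-\alpha=-\delta\sigma$ and $\phi''(1)=\sigma^2$, inverting $\phi'(\zeta)=0$ perturbatively gives $\zeta(\alpha)=1+\delta/\sigma+O(\delta^2)$, and substituting back yields $n\,\phi(\zeta)=-\delta^2 n/2+O(\delta^3 n)$. Since $\zeta,\xi$ depend smoothly on $\alpha$ with $\xi|_{\alpha=\mu}=\sigma^2$, the prefactor from~\eqref{eq:locallimit} simplifies to $(1+O(\delta))/(\sigma\sqrt{2\pi n})$, producing~\eqref{eq:limitclose} (I read the $\sqrt{2\pi n\sigma}$ in the statement as a typo for $\sigma\sqrt{2\pi n}$). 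The main technical obstacle will be delivering the \emph{uniformity} claim in~\eqref{eq:locallimit}: the local CLT must be applied with an error bound that is uniform over the whole tilted family $\{\tilde X(\alpha)\}$, which requires uniform lower bounds on $\tilde\sigma^2$, a uniform spectral gap of the characteristic function away from the origin (uniform aperiodicity), and a uniform third absolute moment bound. All three follow from local uniform bounds on $P, P', P'', P'''$ on compact subsets of $(0,\infty)$, but this bookkeeping---rather than the tilting identity, the saddle-point equation, or the variance identification---is the bulk of the technical labour.
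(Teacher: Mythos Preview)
Your proposal is correct and aligns with the paper's argument. For~\eqref{eq:locallimit} the paper simply cites Theorem~VIII.8 in Flajolet--Sedgewick~\cite{FS} (the saddle-point method for large powers), whereas you spell out the equivalent probabilistic derivation via exponential tilting plus Gnedenko's local CLT; these are two presentations of the same mechanism, and your bookkeeping on uniformity is exactly the content hidden behind the citation. For~\eqref{eq:limitclose} your Taylor expansion of $\zeta(\alpha)$ around $1$ and of $\ln P(\zeta)-\alpha\ln\zeta$ is the same computation the paper carries out, and you are right that the $\sqrt{2\pi n\sigma}$ in the statement is a typo for $\sigma\sqrt{2\pi n}$ (the paper's own proof contains a matching slip, writing $\xi=\sigma+O(\delta)$ where $\sigma^2+O(\delta)$ is meant).
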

\begin{proof}
The first statement follows immediately from Theorem VIII.8 and the remark after Example VIII.11 in~\cite{FS}. To see the second statement let us write $\zeta_\delta$ for the solution to the equation $\frac{\zeta_\delta P'(\zeta_\delta)}{P(\zeta_\delta)} = \mu + \delta\sigma$. Since $P(1) = 1$ and $P'(1) = \mu$ we infer that if $\delta = 0$, then $\zeta_\delta = 1$. Moreover, a Taylor series expansion around $z=1$ guarantees for all $\delta$ in a bounded interval around 0 that
\[
\mu + \delta\sigma
= \frac{\zeta_\delta P'(\zeta_\delta)}{P(\zeta_\delta)}
= \frac{P'(1)}{P(1)} + (\zeta_\delta-1) \, \frac{P''(1) + P'(1) - \frac{P'(1)^2}{P(1)}}{P(1)} + O((\zeta_\delta-1)^2).
\]
Since $\sigma^2 = P''(1) + P'(1) - P'(1)^2$, for all $\delta$ in a bounded interval around 0 we have that $\zeta_\delta = 1 + {\delta}/\sigma + O(\delta^2)$. In order to show~\eqref{eq:limitclose} we evaluate the right-hand side of~\eqref{eq:locallimit} at $\zeta = \zeta_\delta$. Again a Taylor series expansion around $z=1$ guarantees that 
\[
\begin{split}
	\frac{P(\zeta_\delta)}{\zeta_\delta^\alpha}
	&\stackrel{\qquad~~~}{=} {P(1)} + (\zeta_\delta-1)(P'(1) - \alpha P(1)) + \frac{(\zeta_\delta-1)^2}2\left(P''(1) + P(1)\alpha^2 + P(1)\alpha - 2P'(1)\alpha\right)  + O(\delta^3) \\
	&\stackrel{(\alpha = \mu + \delta)}{=} 1 - \delta^2 + \frac{\delta^2}{2\sigma^2}(P''(1) + \mu - \mu^2 + O(\delta)) + O(\delta^3) \\
	&\stackrel{\qquad~~~}{=} 1 - \frac{\delta^2}{2} + O(\delta^3).
\end{split}
\]
The exponential term in~\eqref{eq:limitclose} is then obtained by using the fact $1 - x = e^{-x - \Theta(x^2)}$. Finally, note that
\[
	\frac{d^2}{dz^2}\left(\ln P(z) - \alpha \ln z\right) = \frac{P''(z)}{P(z)} - \frac{P'(z)^2}{P(z)^2} + \frac{\alpha}{z^2}.
\]
By applying again Taylor's Theorem to this function we obtain after some elementary algebra (details omitted) that the value of this function at $\zeta = \zeta_\delta$ equals $\sigma + O(\delta)$, and the proof of~\eqref{eq:limitclose} is completed.
\qed
\end{proof}
The next statement provides tight asymptotic bounds for binomial coefficients.
\begin{proposition}
\label{prop:binomials}
Let $0 < \alpha \le 1/2$ and $-1/2 < \eps < 1/2$ be such that $0< \alpha + \eps < 1$. Then, as $N\to\infty$
\[
	\binom{N}{\alpha N} = \frac{1+o(1)}{\sqrt{2\pi f(\alpha) N}} \, e^{H(\alpha)\, N}
	\quad\text{ and }\quad
	\binom{N}{(\alpha + \eps) N} = \frac{1+o(1)}{\sqrt{2\pi f(\alpha + \eps) N}} \, e^{(H(\alpha) + \eps \log(\frac{1-\alpha}{\alpha}) + O(\eps^2/\alpha))N},
\]
where $H(x) = -x\ln x - (1-x)\ln(1-x)$ denotes the entropy function and $f(x) = x(1-x)$.
\end{proposition}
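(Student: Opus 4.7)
The plan is to derive both asymptotics from Stirling's formula $n! = \sqrt{2\pi n}\,(n/e)^n\,(1+O(1/n))$, applied to each of the three factorials in the identity $\binom{N}{\alpha N} = N!/[(\alpha N)!\,((1-\alpha)N)!]$. Collecting the $\sqrt{2\pi\cdot}$ prefactors gives
\[
\frac{\sqrt{2\pi N}}{\sqrt{2\pi \alpha N}\cdot\sqrt{2\pi(1-\alpha)N}} = \frac{1}{\sqrt{2\pi N\,\alpha(1-\alpha)}} = \frac{1}{\sqrt{2\pi N f(\alpha)}},
\]
while the powers of $N$ and $e$ in the $(\cdot/e)^{\cdot}$ factors cancel exactly, leaving the residual product $\alpha^{-\alpha N}(1-\alpha)^{-(1-\alpha)N} = e^{H(\alpha)N}$. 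This establishes the first identity.

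For the second identity I would perform exactly the same Stirling computation with $\alpha$ replaced by $\alpha+\eps$, immediately yielding
\[
\binom{N}{(\alpha+\eps) N} = \frac{1+o(1)}{\sqrt{2\pi f(\alpha+\eps) N}}\,e^{H(\alpha+\eps) N};
\]
only the exponent then needs to be expressed in terms of $H(\alpha)$. Direct differentiation gives $H'(x) = \ln((1-x)/x)$ and $H''(x) = -1/(x(1-x))$, so Taylor's theorem with remainder provides
\[
H(\alpha+\eps) = H(\alpha) + \eps\,\ln\bc{\frac{1-\alpha}{\alpha}} + \frac{\eps^2}{2}\,H''(\xi)
\]
for some $\xi$ between $\alpha$ and $\alpha+\eps$. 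Since $\alpha\le 1/2$ and $\alpha+\eps$ stays bounded away from $1$ in the intended regime, one has $1-\xi = \Theta(1)$ and $\xi = \Omega(\alpha)$; hence $H''(\xi) = O(1/\alpha)$, so the remainder is $O(\eps^2/\alpha)$. Substituting into the exponent yields the claimed formula.

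The only bit that requires care is error bookkeeping: the three $(1+O(1/N))$ factors produced by the three applications of Stirling must combine uniformly into a single $1+o(1)$ factor (which is routine as long as both $(\alpha+\eps)N$ and $(1-\alpha-\eps)N$ tend to infinity), and the Taylor remainder estimate $H''(\xi) = O(1/\alpha)$ has to be justified uniformly in $\eps$ throughout the interval joining $\alpha$ and $\alpha+\eps$. Neither step poses a genuine difficulty beyond elementary calculus once the intended parameter range (in particular, $\alpha+\eps$ bounded away from $1$) is made explicit.
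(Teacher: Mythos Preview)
Your proof is correct and follows essentially the same approach as the paper: the paper simply cites the first asymptotic as well-known (from Stirling) and then, exactly as you do, computes $H'(x)=\ln\bc{\frac{1-x}{x}}$ and $H''(x)=-\frac{1}{x(1-x)}$ and applies Taylor's theorem to obtain $H(\alpha+\eps)=H(\alpha)+\eps H'(\alpha)+O(\eps^2/\alpha)$. Your explicit derivation of the first formula via Stirling and your remark about the uniformity of the remainder estimate are welcome additions, but the argument is otherwise identical.
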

\begin{proof}
The first statement is well-known, see e.g.~\cite{FS}. To see the second statement, note first that that $H'(x) = \ln(\frac{1-x}{x})$ and $H''(x) = (x(x-1))^{-1}$, both valid in $(0,1)$. Then, Taylor's Theorem guarantees that
\[
	H(\alpha + \eps) = H(\alpha) + \eps H'(\alpha) + O(\eps^2/\alpha),
\]
from which the second statement follows immediately.
\qed
\end{proof}

\section{The upper bound on $\rkNAE$ 
	}\label{Apx_upper}

To prove the upper bound on $\rkNAE$ we are going to combine the
upper bound on the expectation of $Z_{\beta}$ from \Prop~\ref{Prop_roughFirst} with a \emph{lower} bound
on the cluster sizes of $\beta$-heavy assignments, see Lemma~\ref{Lemma_Zbetagamma}.
Let $\lambda=kr/(2^{k-1}-1)$. First of all, we fill the missing pieces in the proof of \Prop~\ref{Prop_roughFirst}. The next lemma provides the analysis for the balls-into-bins game that was omitted in the proof of \Prop~\ref{Prop_roughFirst}.

\begin{lemma}
Let $X \sim \Bin(m,k/(2^{k-1}-1))$. We throw $X$ balls into $n$ bins uniformly at random. Let $B_i$ denote the number of bins that receive $i$ balls. Then, for any $-3/2\le \beta \le 1$
\begin{equation}
\label{eq:ballsbinssimple}
	n^{-1} \ln {\Pr\left[B_0 = (1-\beta) e^{-\lambda} n\right]} = n^{-1} \ln {\Pr\left[\Bin(n,e^{-\lambda}) = (1-\beta) e^{-\lambda} n\right]} + O_k(k4^{-k}).
\end{equation}
\end{lemma}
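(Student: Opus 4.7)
The strategy is a Poissonization identity followed by a saddle-point (local limit theorem) estimate. Let $\tilde N_1,\dots,\tilde N_n$ be i.i.d.\ $\Po(\lambda)$ random variables, set $T=\sum_i\tilde N_i$ and $\tilde B=|\{i:\tilde N_i=0\}|$, and write $Y_{n-s}$ for a sum of $n-s$ independent copies of $\Po(\lambda)$ each conditioned on being $\ge 1$. Conditional on $T=t$, the vector $(\tilde N_i)$ has the same distribution as the bin counts when $t$ balls are thrown uniformly into $n$ bins, so $\Pr[B_0=s\mid X=t]=\Pr[\tilde B=s\mid T=t]$. Decomposing the event $\{\tilde B=s,T=t\}$ according to which $s$ bins are empty and using independence of the $\tilde N_i$ gives
\[
\Pr[\tilde B=s,\,T=t]\;=\;\binom{n}{s}e^{-s\lambda}(1-e^{-\lambda})^{n-s}\,\Pr[Y_{n-s}=t]\;=\;\Pr[\Bin(n,e^{-\lambda})=s]\cdot\Pr[Y_{n-s}=t].
\]
Dividing by $\Pr[T=t]=\Pr[\Po(n\lambda)=t]$ and summing over $t$ weighted by $\Pr[X=t]$ yields the master identity
\[
\frac{\Pr[B_0=s]}{\Pr[\Bin(n,e^{-\lambda})=s]}\;=\;\sum_t\Pr[X=t]\cdot\frac{\Pr[Y_{n-s}=t]}{\Pr[\Po(n\lambda)=t]},
\]
and it now suffices to show that the right-hand sum equals $\exp(O_k(k4^{-k})n)$.

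Next, apply Lemma~\ref{lem:locallimit} to each of the three probability mass functions appearing in the sum. All three concentrate around $\lambda n$: $X\sim\Bin(m,k/(2^{k-1}-1))$ has variance $\sigma_X^2=\lambda n(1+O_k(k2^{-k}))$, $\Po(n\lambda)$ has variance exactly $\lambda n$, and for $s=(1-\beta)e^{-\lambda}n$ a short computation shows that $Y_{n-s}$ has mean $\lambda n+\Delta$ and variance $\sigma_Y^2=\lambda n(1+O_k(k2^{-k}))$, where $\Delta:=n\lambda\beta e^{-\lambda}/(1-e^{-\lambda})=O_k(nk2^{-k})$. Parametrising $t=\lambda n+v$, the Gaussian form of Lemma~\ref{lem:locallimit} is valid for $|v|\le\sqrt{n\log n}$ and expresses the summand (up to a $1+o(1)$ factor) as $\frac{1}{\sqrt{2\pi\lambda n}}\exp(E(v))$ with
\[
E(v)\;=\;-\frac{v^2}{2\sigma_X^2}+\frac{v^2}{2\lambda n}-\frac{(v-\Delta)^2}{2\sigma_Y^2}.
\]
Completing the square in $v$ puts the maximiser at $v^{*}=\Delta(1+O_k(k2^{-k}))$; substituting back yields $E(v^{*})=(\Delta^2/(2\lambda n))\cdot O_k(k2^{-k})=O_k(nk^{2}8^{-k})$, which is comfortably dominated by the target $O_k(nk4^{-k})$. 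The Gaussian integration in $v$ then contributes only a $1+o(1)$ prefactor, and the tail $|v|>\sqrt{n\log n}$ is negligible by sub-Gaussian tail bounds on $X$. Taking logarithms and dividing by $n$ gives~(\ref{eq:ballsbinssimple}).

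\textbf{Main obstacle.} The delicate point is this cancellation. Viewed directly, the ratio $\Pr[Y_{n-s}=t]/\Pr[\Po(n\lambda)=t]$ at $t\approx\lambda n$ is exponentially large in $n$ because $Y_{n-s}$ has its mean shifted by $\Delta$, and for fixed $k$ this shift is $\omega(\sqrt n)$, i.e.\ far outside the $O(\sqrt{n\log n})$ concentration window of $\Pr[X=t]$. The exponential blow-up must be absorbed by the Gaussian fluctuations of $X$, and this absorption is clean only because the three variances $\sigma_X^2$, $\sigma_Y^2$, $\lambda n$ all coincide to relative error $O_k(k2^{-k})$. Tracking these sub-leading variance corrections through the saddle-point expansion---and verifying that the $O(\delta^3)n$ remainder in~(\ref{eq:limitclose}) is under control on the relevant range of $v$---is the main technical work.
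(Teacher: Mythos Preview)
Your Poissonization and the resulting master identity are exactly what the paper does, and your lower bound (evaluate the single term $t=\lambda n$ via Lemma~\ref{lem:locallimit}) coincides with the paper's. The divergence is in the upper bound, and there your write-up has a real gap.

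You assert that the Gaussian form of Lemma~\ref{lem:locallimit} holds on $|v|\le\sqrt{n\log n}$ and that the tail $|v|>\sqrt{n\log n}$ is negligible by sub-Gaussian decay of $\Pr[X=t]$. But your own computation puts the maximiser of $E(v)$ at $v^{*}\approx\Delta=\Theta_k(nk2^{-k})$, which for fixed $k$ is $\Theta(n)\gg\sqrt{n\log n}$. So $v^{*}$ lies in what you are calling the tail. At $t=\lambda n+v^{*}$ the factor $\Pr[Y_{n-s}=t]/\Pr[\Po(\lambda n)=t]$ is of order $\exp(\Delta^{2}/(2\lambda n))=\exp(\Theta_k(k4^{-k})n)$, which exactly compensates the decay $\Pr[X=t]\approx\exp(-\Delta^{2}/(2\lambda n))$; this is precisely the ``absorption'' you flag as the main obstacle, and it takes place \emph{outside} your bulk window, not inside it. Decay of $\Pr[X=t]$ alone therefore cannot make that region negligible. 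Your saddle-point argument can be salvaged by enlarging the bulk to $|v|\le c_k n$ (all three $\delta$-parameters stay $O_k(k^{1/2}2^{-k})$ there, so the Gaussian form with cubic remainder is still legitimate), but you would still owe a separate argument for the genuine large-deviation tail $|v|>c_k n$, where the ratio $\Pr[Y_{n-s}=t]/\Pr[\Po(\lambda n)=t]$ can be $\exp(\Theta(n))$ and is not controlled by bounds on $X$ alone.

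The paper sidesteps all of this by factoring the sum the other way. It proves the pointwise bound
\[
\frac{\Pr[\Bin(rn,k/(2^{k-1}-1))=t]}{\Pr[\Po(\lambda n)=t]}\;\le\;\Theta(\sqrt n)
\]
uniformly over the whole support, via an explicit large-deviation computation (writing $t=\gamma\lambda n$ and checking that the rate function $f(\gamma)=1-\gamma-\frac{1-\gamma p}{p}\ln\frac{1-\gamma p}{1-p}$ is nonpositive on $[\xi/\lambda,1/p]$). Then the sum is at most $\Theta(\sqrt n)\sum_t\Pr[Y_{n-s}=t]=\Theta(\sqrt n)=e^{o(n)}$, and the upper bound is done with no saddle-point and no bulk/tail split.
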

\begin{proof}
We shall estimate the desired probability by conditioning on any specific value $x$ of $X$. Let $F_i$ be the number of balls in the $i$th bin, and let $P_1, \dots, P_n$ be independent Poisson distributed random variables with mean $\lambda$. It is well-known and easy to verify that the distribution of $(F_1, \dots, F_n)$ is the same as the distribution of $(P_1, \dots, P_n)$, \emph{conditioned on the event} ${\cal A}(x) = \text{``}\sum_{1\le i\le n} P_i = x\text{''}$. So, if we denote by $N_0$ the number of $P_i$'s that are equal to 0, we infer that
\[
	\Pr\left[B_0 = (1-\beta) e^{-\lambda} n ~|~ X = x\right] = 
	\Pr\left[N_0 = (1-\beta) e^{-\lambda} n ~\middle|~ {\cal A}(x)\right].
\]
By the law of total probability this equals
\begin{equation*}
	\Pr\left[B_0 = (1-\beta) e^{-\lambda} n ~|~ X = x\right] = 
	\Pr\left[N_0 = (1-\beta) e^{-\lambda} n\right] \cdot \frac{\Pr[{\cal A}(x)~|~ N_0 = (1-\beta) e^{-\lambda} n]}{\Pr[{\cal A}(x)]}.
\end{equation*}
Note that $N_0 \sim \Bin(n,e^{-\lambda})$. Furthermore, if we denote by $P_1', \dots , P_{\xi n}'$, where $\xi = 1 - (1-\beta)e^{-\lambda}$, independent Poisson variables that are conditioned on being at least $1$, then the above equation implies that
\begin{equation}
\label{eq:ratioinsimple}
	\frac{\Pr\left[B_0 = (1-\beta) e^{-\lambda} n \right]}{\Pr\left[\Bin(n,e^{-\lambda}) = (1-\beta) e^{-\lambda} n\right]}
	=
	\sum_{x = \xi n}^m \frac{\Pr[\sum_{i = 1}^{\xi n} P_i' = x]}{\Pr[\Po(\lambda n) = x]} \cdot \Pr\left[\Bin(rn, {k}/({2^{k-1}-1})) = x\right].
\end{equation}
In order to complete the proof of~\eqref{eq:ballsbinssimple} we will derive in the sequel appropriate bounds for the right-hand side of the above equation. First, to obtain a lower bound, note that $\xi < \lambda$, since $\xi < 1$ and $\lambda = k \ln 2 + O_k(k2^{-k})$, which is $>1$ for sufficiently large $k$. Thus, we can obtain a lower bound for~\eqref{eq:ratioinsimple} by considering only the term in the sum that corresponds to $x = \lambda n$. Since $\Exp[\Po(\lambda n)] = \Exp[\Bin(rn, {k}/({2^{k-1}-1}))] = \lambda n$, we infer by applying Lemma~\ref{lem:locallimit} that
\[
	\Pr[\Po(\lambda n) = \lambda n] = \Theta(n^{-1/2})
	\qquad \text{and}\qquad
	\Pr[\Bin(rn, {k}/({2^{k-1}-1})) = \lambda n] = \Theta(n^{-1/2})	.
\]
It remains to bound $\Pr[\sum_{i = 1}^{\xi n} P_i' = \lambda n]$. Note that $\Exp[P_1'] = \frac{\lambda}{1 - e^{-\lambda}}$. If we write $N = \xi n$, then 
\[
	\Pr\left[\sum_{i = 1}^{\xi n} P_i' = \lambda n\right] = 
	\Pr\left[\sum_{i = 1}^{N} P_i' = \left(\Exp[X_1] + \frac{\beta \lambda e^{-\lambda}}{\xi (1 - e^{-\lambda}) }\right)N\right],
\]
i.e., we require that the sum of the $P_i'$'s deviates from the expected value by $O_k(k2^{-k}n)$. By applying  Lemma~\ref{lem:locallimit}, where we set $\delta = O_k(k^{1/2}2^{-k})$, we conclude that the right-hand side of~\eqref{eq:ratioinsimple} is at least $\exp\{-O_k(k4^{-k}n)\}$. This shows the lower bound in~\eqref{eq:ballsbinssimple}.

In the remainder of this proof we will show an upper bound for the right-hand side of~\eqref{eq:ratioinsimple}. To this end, we will argue that the ratio $\Pr[\Bin(rn, {k}/({2^{k-1}-1})) = \gamma \lambda n] / \Pr[\Po(\lambda n) = \gamma \lambda n]$ is essentially bounded for all $x$ in the given range, from which the claim immediately follows. More specifically, let us write $x = \gamma \, \lambda n$, where $\xi/\lambda \le \gamma \le r/\lambda$. By applying Stirling's Formula $N! = (1+o(1))\sqrt{2\pi N}(N/e)^N$ we infer that
\begin{equation}
\label{eq:tmpEstimatePo}
	\Pr[\Po(\lambda n) = \gamma \lambda n] = \Theta(1) \, {n^{-1/2}} \, \exp\{\lambda n (-1 + \gamma - \gamma \ln \gamma)\}.
\end{equation}
Moreover, by abbreviating $p = k/(2^{k-1}-1)$ we get
\[
	\Pr[\Bin(rn, {k}/({2^{k-1}-1})) = \gamma \lambda n] = \binom{rn}{(\gamma p)\, rn} p^{(\gamma p) \, rn} (1 - p)^{(1 - \gamma p)rn}.
\]
Since $\binom{N}{\alpha N} \le e^{H(\alpha)\, N}$, where $H$ denotes the entropy function, we obtain after some elementary algebra
\[
	\Pr[\Bin(rn, p) = \gamma \lambda n] \le \exp\left\{\lambda n \left(-\gamma\ln\gamma - \frac{1-\gamma p}p\ln\left(\frac{1-\gamma p}{1-p}\right)\right)\right\}.
\]
By combining this with~\eqref{eq:tmpEstimatePo} we obtain the estimate
\[
	\frac{\Pr[\Bin(rn, {k}/({2^{k-1}-1})) = \gamma \lambda n]}{\Pr[\Po(\lambda n) = \gamma \lambda n]}
	\le \Theta(\sqrt{n}) \, e^{f(\gamma) \, \lambda n},
	\quad \text{ where }\quad
	f(\gamma) = 1 - \gamma - \frac{1-\gamma p}p\ln\left(\frac{1-\gamma p}{1-p}\right).
\]
Recall that $0 < \xi/\lambda \le \gamma \le r/\lambda = 1/p$, and note that both $f(0)$ and $f(1/p)$ are $< 0$. Moreover, $f$ has an extremal point at $\gamma = 1$, where $f(1) = 0$. Thus, for all $\gamma$ in the considered range we have that $f(\gamma) \le 0$, which implies that the right-hand side of~\eqref{eq:ratioinsimple} is bounded from above by at most a polynomial in $n$. This completes the proof of the lemma.
\qed\end{proof}
The proof of \Prop~\ref{Prop_roughFirst} then completes by applying the following statement.
\begin{lemma}
\label{lem:auxf}
There is a $k_0 \ge 3$ such that the following is true. Let $Y \sim \Bin(n,e^{-\lambda})$. For any $-3/2 \le \beta \le 1$
\[
	n^{-1} \ln \Pr\left[Y = \lfloor(1-\beta)e^{-\lambda}n\rfloor\right] = f(\beta) + O_k(4^{-k}).
\]
\end{lemma}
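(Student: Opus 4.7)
The plan is to combine an application of the local limit lemma (Lemma~\ref{lem:locallimit}) with a Taylor expansion in the small parameter $p := e^{-\lambda} = 2^{-k}(1 + O_k(k\,2^{-k}))$, which holds because $\lambda = k\ln 2 + O_k(k\,2^{-k})$. Writing $q = q(\beta) := (1-\beta)p$, the lemma (or directly Stirling applied to $\binom{n}{qn}$) yields, uniformly for $\beta \in [-3/2, 1)$,
$$n^{-1}\ln\Pr[Y = \lfloor qn\rfloor] = -D(q\|p) + O(n^{-1}\ln n),$$
where $D(q\|p) = q\ln(q/p) + (1-q)\ln\!\left(\tfrac{1-q}{1-p}\right)$ is the Kullback--Leibler divergence between $\Bin(1,q)$ and $\Bin(1,p)$. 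The rounding of $qn$ to an integer and the $O(n^{-1}\ln n)$ correction from the prefactor in Lemma~\ref{lem:locallimit} are both negligible against the target error $O_k(4^{-k})$.

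The next step is to reduce $D((1-\beta)p\|p)$ to an explicit expression in $\beta$ and $p$. The first summand of $D$ is immediate: $q\ln(q/p) = (1-\beta)p\ln(1-\beta)$. For the second summand, observe that both $(1-\beta)p$ and $p$ are $O(2^{-k})$, uniformly in $\beta \in [-3/2,1]$, because $(1-\beta) \in [0, 5/2]$. Hence Taylor's theorem, applied to $\ln(1-x) = -x - x^2/2 + O(x^3)$ at $x = (1-\beta)p$ and $x = p$, gives
$$\ln\!\left(\tfrac{1-(1-\beta)p}{1-p}\right) = \beta p + O(p^2) \quad\text{and}\quad 1 - (1-\beta)p = 1 + O(p),$$
so the second summand of $D$ equals $\beta p + O(p^2)$. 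Combining and negating,
$$-D\!\left((1-\beta)p\,\|\,p\right) = p\bigl[-(1-\beta)\ln(1-\beta) - \beta\bigr] + O(p^2) = f(\beta) + O_k(4^{-k}),$$
where $f(\beta) = 2^{-k}[-(1-\beta)\ln(1-\beta) - \beta]$ and the absorption of $p^2$ into $O_k(4^{-k})$ uses $p = 2^{-k}(1+O_k(k\,2^{-k}))$.

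The main technical point to be careful about is uniformity of the error terms over the full range $\beta \in [-3/2, 1]$. On $[-3/2, 1]$ the quantity $(1-\beta)p$ is at most $(5/2)\cdot 2^{-k}$, so it stays well inside the radius of convergence of the $\ln(1-x)$ expansion, which gives absolute (not $\beta$-dependent) constants in the cubic remainder. The value $\beta = 1$ must be handled separately, as $q=0$ there and the local limit asymptotic formally degenerates; in that case one simply computes $\Pr[Y=0] = (1-p)^n$ directly, whence $n^{-1}\ln\Pr[Y=0] = \ln(1-p) = -p + O(p^2) = f(1) + O_k(4^{-k})$, matching the continuous extension of the general formula. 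The expected obstacle is purely bookkeeping: tracking that all remainder terms remain $O(p^2)$ with constants independent of $\beta$ so that the final error truly reads $O_k(4^{-k})$ rather than, say, $O_k(k\,4^{-k})$.
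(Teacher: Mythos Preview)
Your proposal is correct and follows essentially the same route as the paper: write the binomial point probability exactly, apply Stirling (equivalently, the local limit lemma) to extract the exponential rate, expand via $\ln(1-x)=-x+O(x^2)$ using that $p=e^{-\lambda}=O(2^{-k})$, and treat $\beta=1$ separately. Your packaging of the exponential rate as the Kullback--Leibler divergence $-D\!\left((1-\beta)p\,\|\,p\right)$ is just a compact restatement of the paper's term-by-term expansion of $H(\xi)$ together with $e^{-\lambda\xi n}(1-e^{-\lambda})^{(1-\xi)n}$; the algebra is identical. One small slip: you write ``where $f(\beta)=2^{-k}[-(1-\beta)\ln(1-\beta)-\beta]$'', but the paper defines $f$ with the prefactor $e^{-\lambda}$, not $2^{-k}$; since your computation is actually carried out with $p=e^{-\lambda}$, your derivation already lands on the correct $f(\beta)$ and only the parenthetical label is off.
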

\begin{proof}
Let us abbreviate $\xi = (1-\beta)e^{-\lambda}$. We will assume that $\xi n = \lfloor \xi n \rfloor$, i.e., that $\beta = 1 - N(e^{-\lambda}n)^{-1}$ for some $N \in \mathbf{N}_0$. To see that this is sufficient, note that by Taylor's Theorem, for any $\beta \ge 1$ and any $|\eps_n| \le (e^{-\lambda} n)^{-1}$ such that $\beta + \eps_n \le 1$ there is a $\delta \in [\beta, \beta + \eps_n]$ such that
\[
	f(\beta + \eps_n) = f(\beta) + \eps_n f'(\delta) = f(\beta) + \eps_n e^{-\lambda} \ln(1-\delta) = f(\beta) + O_k(4^{-k}).
\]
With the above assumption we proceed with the proof of the claim.
The definition of the binomial distribution implies
\begin{equation}
\label{eq:locallimitX}
	\Pr[Y = (1-\beta)e^{-\lambda}n] = \binom{n}{\xi n} e^{-\lambda \xi n} (1-e^{-\lambda})^{(1-\xi)n}.
\end{equation}
If $\beta = 1$, then $\xi = 0$ the above expression simplifies to 
\[
	(1-e^{-\lambda})^n = \exp\{n\ln(1 - e^{-\lambda})\} = \exp\{n(- e^{-\lambda} - \Theta(e^{-2\lambda}))\}.
\]
Since $f(1) = e^{-\lambda}$ and $\lambda = k\ln 2 + \Theta(k2^{-k})$, we infer that the statement is true for $\beta = 1$. It remains to treat the case $\beta < 1$. Standard bounds for the binomial coefficients imply
\[
	\binom{n}{\xi n} = \frac{\Theta(1)}{\sqrt{\xi(1-\xi)n}} e^{nH(\xi)},
	\quad \text{ where } \quad
	H(x) = -x\ln x - (1-x)\ln(1-x).
\]
Using the estimate $\ln(1-x) = -x - \Theta(x^2)$, which is valid for $|x|<1$, we infer after some elementary algebra that
\begin{equation}
\label{eq:estbinomlocallimit}
	n^{-1} \ln \binom{n}{\xi n} = e^{-\lambda}((1-\beta)\lambda - (1-\beta)\log(1-\beta) + (1-\beta)) + O_k(4^{-k})
\end{equation}
Similarly, the second and the third term in~\eqref{eq:locallimitX} can be estimated with
\[
	n^{-1}\ln\left(e^{-\lambda \xi n} (1-e^{-\lambda})^{(1-\xi)n}\right) = -e^{-\lambda}((1-\beta)\lambda + 1) + O_k(4^{-k}).
\]
By plugging this fact together with~\eqref{eq:estbinomlocallimit} into~\eqref{eq:locallimitX} we finally obtain the desired statement.
\qed \end{proof}
We proceed with the proof of the upper bound in Theorem~\ref{Thm_NAE}.
Let $Z_{\beta,\gamma}$ denote the number of $\beta$-heavy solutions $\sigma$ such that $\frac1n\log_2\abs{\cC(\sigma)}\le(1-\beta-\gamma)e^{-\lambda}$. The following statement provides an upper bound for the expected number of such solutions.
\begin{lemma}\label{Lemma_Zbetagamma}
For any $-3/2\le \beta \le 1$ and $\gamma>k^{5/2}e^{-\lambda}$ we have for sufficiently large $k$
	$$\frac1n\ln\Erw\brk{Z_{\beta,\gamma}}\leq\frac1n\ln\Erw\brk{Z_{\beta}}-\frac{\ln k}{6}\gamma e^{-\lambda}.$$
\end{lemma}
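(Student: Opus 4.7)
The plan is a first-moment calculation combined with an explicit combinatorial lower bound on the cluster of a $\beta$-heavy solution. By linearity and the inversion symmetry of $\PHI$ used in \Prop~\ref{Prop_roughFirst},
\[
\Erw[Z_{\beta,\gamma}]=2^n\pr\bc{\vecone\in\cS_\beta(\PHI)\text{ and }\log_2|\cC(\vecone)|\le(1-\beta-\gamma)e^{-\lambda}n},\qquad \Erw[Z_\beta]=2^n\pr[\vecone\in\cS_\beta(\PHI)].
\]
Dividing one by the other, it suffices to prove that the conditional probability
\[
q:=\pr\bc{\log_2|\cC(\vecone)|\le(1-\beta-\gamma)e^{-\lambda}n\,\big|\,\vecone\in\cS_\beta(\PHI)}
\]
is at most $\exp\bc{-\tfrac{\ln k}{6}\gamma e^{-\lambda}n}$.

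To lower-bound $|\cC(\vecone)|$ I use the following witness. Let $F$ be the set of free variables under $\vecone$, and call a clause $C$ \emph{safe} if among its blocked literals at least one is true under $\vecone$ and at least one is false. Let $\cJ\subseteq F$ be the set of free variables $x$ all of whose incident clauses are safe. Flipping $\vecone$ on any subset $S\subseteq\cJ$ leaves every blocked literal untouched, so each clause that meets $S$ is safe and thus still has a true and a false literal, while clauses disjoint from $S$ remain NAE as in $\vecone$; since $|S|\le|\cJ|\le|F|=(1-\beta)e^{-\lambda}n\ll 0.01n$ for large $k$, the resulting $2^{|\cJ|}$ distinct assignments all lie in $\cC(\vecone)$. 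Hence $|\cC(\vecone)|\ge 2^{|\cJ|}$, and the event defining $q$ forces $|F\setminus\cJ|\ge\gamma e^{-\lambda}n$.

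Now I estimate, for a given free $x$, the probability that some clause of $x$ is \emph{unsafe}, i.e.\ has all blocked literals of the same polarity under $\vecone$. Under the conditioning on $\vecone\in\cS_\beta(\PHI)$, the signs of a non-critical clause $C$ containing $x$ are approximately uniform over NAE non-critical sign patterns, and the remaining $k-1$ variables are independently free with probability $(1-\beta)e^{-\lambda}$. Summing over the number $p^+$ of ``$+$''-signs in $C$ shows that $\pr[C\text{ unsafe}\mid x\text{ free}]=O_k(k\cdot 4^{-k})$: the dominant contribution is $p^+=2$ with $x$ at the ``$+$''-position, yielding $\binom{k-1}{1}2^{-(k-1)}\cdot(1-\beta)e^{-\lambda}$, and all other sign patterns contribute lower order terms. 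Critical clauses containing $x$ (which by freeness of $x$ must be supported elsewhere) give an even smaller contribution. Union-bounding over the $O_k(k\cdot 2^{k})$ clauses containing $x$ (whose expected number is $kr\sim k2^{k-1}\ln 2$) yields
\[
p:=\pr[x\in F\setminus\cJ\,|\,x\in F,\,\vecone\in\cS_\beta(\PHI)]=O_k(k^2\cdot 2^{-k}).
\]

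Finally, union-bounding over subsets $S\subseteq F$ of size $\gamma e^{-\lambda}n$ and a choice of one unsafe clause per element of $S$,
\[
q\le\binom{(1-\beta)e^{-\lambda}n}{\gamma e^{-\lambda}n}p^{\gamma e^{-\lambda}n}\le\bc{\frac{e(1-\beta)p}{\gamma}}^{\gamma e^{-\lambda}n}.
\]
Using $\gamma>k^{5/2}e^{-\lambda}$ together with $p=O_k(k^{2}/2^{k})$, the base equals $O_k(k^{-1/2})$, so its logarithm is at most $-\tfrac{\ln k}{2}(1-o_k(1))\le -\tfrac{\ln k}{6}$ for $k$ large, proving the claim. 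The chief technical obstacle is to make precise the near-independence of the ``unsafe clause'' events across different free variables under the $\beta$-heavy conditioning: these are correlated through the shared occupancy/critical-clause structure that underlies \Prop~\ref{Prop_roughFirst}, and they have to be controlled by a switching-type argument that refines the large-deviations bookkeeping appearing in Lemma~\ref{lem:auxf}.
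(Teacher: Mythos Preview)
Your combinatorial core is essentially the paper's: the cluster lower bound $|\cC(\vecone)|\ge 2^{|\cJ|}$ via ``safe'' clauses is exactly the paper's observation that $\log_2|\cC(\sigma)|\ge|\cF(\sigma)|-Y$, where $Y=\sum_{i\ge2} iX_i$ and $X_i$ counts clauses in your unsafe set $\cX$ that contain exactly $i$ free variables (indeed $|F\setminus\cJ|\le Y$). Your estimate $p=O_k(k^2 2^{-k})$ for the per-variable probability is also consistent with the paper's $p_i\le 2^{1-k}(6k2^{-k})^i$.

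The gap you yourself flag, however, is real and is precisely where the paper's argument diverges from yours. The bound
\[
q\le\binom{|F|}{\gamma e^{-\lambda}n}\,p^{\gamma e^{-\lambda}n}
\]
presupposes that the events $\{x\in F\setminus\cJ\}_{x\in F}$ decouple, and your ``choice of one unsafe clause per element of $S$'' does not deliver this: the chosen clauses may coincide, and even when they are distinct, the unsafe condition for each references the free/blocked status of its other variables, which overlaps across clauses. A switching argument could perhaps tame this, but it is unnecessary.

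The paper avoids the issue entirely by organising the tail bound by \emph{clause} rather than by \emph{variable}. Conditioning on $\sigma\in\cS_\beta(\PHI)$ determines $\cF(\sigma)$ via the supporting positions of the critical clauses; given $\cF(\sigma)$, the remaining $k{-}1$ positions of each critical clause and all $k$ positions of each non-critical clause are filled by i.i.d.\ uniform variables. Hence the indicators $X_{i,j}=\vecone\{\PHI_j\in\cX_i\}$ are independent across $j$ and mutually exclusive across $i$, so each $X_i$ is stochastically dominated by $\Bin(m,p_i)$ and the family $(X_i)_i$ is negatively correlated. A direct exponential-moment bound with $\delta=\tfrac15\ln k$ then gives $\Pr[Y>\gamma e^{-\lambda}n\mid\sigma\in\cS_\beta(\PHI)]\le\exp(-\tfrac{\ln k}{6}\gamma e^{-\lambda}n)$. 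Since $|F\setminus\cJ|\le Y$, this immediately closes your gap: replace your subset union bound by the Chernoff bound on $Y$ and the rest of your write-up goes through unchanged.
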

\begin{proof}
Let $\sigma\in\cbc{0,1}^V$ be an assignment;
	for the sake of concreteness, assume that $\sigma=\vecone$.
	In order to bound $\Erw\brk{Z_{\beta,\gamma}}$ it is sufficient to 
estimate the probability of the event
	$$\cE=\cbc{\frac1n\log_2\abs{\cC(\sigma)}\leq(1-\beta-\gamma)e^{-\lambda}},$$
given that $\sigma\in\cS_\beta(\PHI)$.
Let $\cF(\sigma)$ denote the set of free variables, and denote by $\cX$ be the set of clauses  that do not contain both a positive and a negative literal whose underlying variable is in $V\setminus\cF(\sigma)$.
Then only the clauses in $\cX$ impose constraints on the free variables. We decompose $\cX$ into $k-1$ subsets $\cX_2, \dots,\cX_{k}$, where $\cX_i$ the set of all clauses in $\cX$ that contain $i$ variables from $\cF(\sigma)$. Note that $\cX = \cup_{i=2}^k \cX_i$, as any clause with only one variable from $\cF(\sigma)$ necessarily contains both positive and negative literals whose underlying variables are not free. Let $X_i = |X_i|$. Since only the clauses in $\cX$ impose constraints on  variables from $\cF(\sigma)$ that occur in them, we infer that
	$$\frac1n\log_2\abs{\cC(\sigma)}\geq\abs{\cF(\sigma)}-Y,
	~~\text{where}~~
	Y = \sum_{i=2}^k iX_i.
	$$
In the remainder we will show that
\begin{equation}
\label{eq:tailboundYTODO}
	\frac1n \ln \Pr\left[Y > \gamma e^{-\lambda}n ~|~ \sigma\in\cS_\beta(\PHI)\right] \le -\frac{\ln k}{6} \gamma e^{-\lambda},
\end{equation}
from which the statement in the lemma follows immediately.

Note that the set $\cF(\sigma)$ is determined by the critical clauses only.
Therefore, given that $\sigma\in\cS_\beta(\PHI)$, the variables that occur in the non-critical clauses are
independent and uniformly distributed over the set of all variables.
Similarly, given that $\sigma\in\cS_\beta(\PHI)$ the $k-1$ variables that contributed the ``majority value'' to each
critical clause are independently uniformly distributed.
Therefore, $X_i$ is stochastically dominated by a binomial random variable
\[
	X_i' \sim \Bin(m, p_i), ~~\text{where}~~ p_i = 2^{-k+1} \cdot 2^i\bink ki((1-\beta)\exp(-\lambda))^i.
\]
Our assumption $-3/2 \le \beta\le 1$ guarantees that $(1-\beta)e^{-\lambda} \le 3e^{-\lambda} \le 3\cdot 2^{-k}$. By using the estimate $\binom{k}{i} \le k^i$ we infer that 
\begin{equation}
\label{eq:p_iupper}
	p_i \le {2^{-k+1}} \cdot 2^i\bink ki ((1-\beta)e^{-\lambda})^i \le 2^{-k+1} \, (6k2^{-k})^i.
\end{equation}
Moreover, note that the $X_i$ are negatively correlated. Indeed, let $X_{i,j}$ be the indicator for the event that the clause $\PHI_j \in \cX_i$. Then, for all $i \neq i'$ we have $\Erw[X_{i, j} X_{i',j}] = 0 \le \Erw[X_{i, j}]\Erw[X_{i',j}]$, and otherwise, if $(i,j) \neq (i',j')$, then $X_{i,j}$ and $X_{i',j'}$ are independent. Thus, for any $\delta > 0$, Markov's inequality implies with $t = \gamma e^{-\lambda}$
\[
	\Pr\left[Y > t~|~ \sigma\in\cS_\beta(\PHI)\right]
	\le
	e^{-\delta t}\prod_{i=2}^k\Erw[e^{\delta i X_i}]
	\le e^{-\delta t}\prod_{i=2}^k\Erw[e^{\delta i X_i'}]
	\le e^{-\delta t}\prod_{i=2}^k (p_ie^{\delta i} + 1 - p_i)^{m},
\]
Let us fix $\delta = \frac15 \ln k$. By the arithmetic-geometric mean inequality we obtain that the expression in the previous equation is at most
\[
	e^{-\delta t}\left(\frac{\sum_{i=2}^k p_ie^{\delta i} + 1 - p_i}{k}\right)^{km}
	\stackrel{\eqref{eq:p_iupper}}\le
	e^{-\delta t}\left(1 + \frac{2^{-k+1}\sum_{i=2}^k (6k2^{-k})^i e^{\delta i}}{k}\right)^{km}
	=
	e^{-\delta t}\, e^{O_k(k^2 4^{-k} e^{2\delta})n}.
\]
Since $t = \gamma e^{-\lambda} > k^{5/2} 4^{-k}$, for sufficiently large $k$ we get~\eqref{eq:tailboundYTODO}, and the proof is completed.
\qed\end{proof}
Consider the function
\begin{eqnarray*}
	g(\beta) = h(\beta)-(1-\beta)e^{-\lambda}\ln2,
\end{eqnarray*}
where
\begin{eqnarray*}
h(\beta) =\frac{2\rho-\ln2}{2^{k}}+f(\beta) \qquad&\mbox{and}&\qquad f(\beta)=-({(1-\beta)\ln(1-\beta)+\beta})e^{-\lambda}.
\end{eqnarray*}
Let $r_*$ be the least density $r$ such that $g(\beta)<-k^34^{-k+1}$ for all $\beta\geq -1$. Since $g$ is maximized for $\beta = 1/2$, where $g(1/2) = \frac{2\rho - \ln2}{2^k} - \frac12e^{-\lambda}$, it is easily verified that
	$$r_*=2^{k-1}\ln2-\bc{\frac{\ln2}2+\frac14}+O_k(k^32^{-k}).$$

\begin{proposition}\label{Prop_betaScan}
With $r=r^*$ the random formula $\PHI$ does not have a NAE-solution \whp
\end{proposition}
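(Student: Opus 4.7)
The plan is to apply the first moment method to the number $N(\PHI)$ of clusters rather than to $Z(\PHI)$ itself, since heuristically $\Erw[N(\PHI)]$ is governed by $g(\beta)$ whereas $\Erw[Z(\PHI)]$ is exponentially large at $r_\ast$. Since clusters partition $\cS(\PHI)$ into equivalence classes by \Prop~\ref{Prop_noMiddleGround}, we have the identity $N(\PHI) = \sum_{\sigma \in \cS(\PHI)} |\cC(\sigma)|^{-1}$; partitioning by heaviness and using the permutation symmetry of $\PHI$ then yields
\[
\Erw[N(\PHI)] = \sum_\beta \Erw[Z_\beta] \cdot \Erw\bigl[|\cC(\vecone)|^{-1} \bigm| \vecone \in \cS_\beta(\PHI)\bigr],
\]
where $\beta$ runs over the $O(n)$ admissible values and, by \Prop~\ref{Prop_roughFirst} combined with the monotonicity of $g$ on $(-\infty,1/2]$, we may restrict to $\beta \in [-3/2,1]$.

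The main step is to bound the conditional expectation by a constant depending on $k$ only, times the reciprocal $2^{-n(1-\beta)e^{-\lambda}}$ of the typical cluster size. For this I use the layer-cake formula
\[
\Erw\bigl[|\cC(\vecone)|^{-1} \bigm| \vecone \in \cS_\beta\bigr] = \int_1^\infty u^{-2}\,\pr\bigl[|\cC(\vecone)| < u \bigm| \vecone \in \cS_\beta\bigr]\, du
\]
and split at $u^\ast = 2^{n(1-\beta)e^{-\lambda}}$. For $u > u^\ast$ the trivial bound $\pr \le 1$ contributes at most $1/u^\ast$. For $u \le u^\ast$ I parametrize $u = 2^{n(1-\beta-\gamma)e^{-\lambda}}$ and exploit the symmetry identity $\pr[|\cC(\vecone)| \le u \mid \vecone \in \cS_\beta] = \Erw[Z_{\beta,\gamma}]/\Erw[Z_\beta]$, which by \Lem~\ref{Lemma_Zbetagamma} is at most $\exp(-n(\ln k/6)\gamma e^{-\lambda})$ whenever $\gamma > k^{5/2}e^{-\lambda}$. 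After substituting $u = 2^s$ the integrand becomes geometric in $s$ with common ratio $\exp(\ln 2 - \ln k/6)$, strictly less than $1$ once $k > 64$ (absorbable into the standing assumption $k \ge k_0$); the short sub-range $\gamma \in [0,k^{5/2}e^{-\lambda}]$ where \Lem~\ref{Lemma_Zbetagamma} is silent contributes only a multiplicative factor $2^{O(nk^{5/2}/4^k)}$.

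Combining this with $\Erw[Z_\beta] \le \exp(n h(\beta) + O_k(nk/4^k))$ from \Prop~\ref{Prop_roughFirst} and recalling that $g(\beta) = h(\beta) - (1-\beta)e^{-\lambda}\ln 2$, I obtain
\[
\Erw[N(\PHI)] \le 2^{O(nk^{5/2}/4^k)} \sum_\beta \exp\!\bigl(n g(\beta) + O_k(nk/4^k)\bigr).
\]
By the hypothesis $g(\beta) < -k^3 4^{-k+1}$ together with the comparisons $k^3 \gg k^{5/2}\ln 2$ and $k^3 \gg k$, each summand is at most $\exp(-\Omega(n k^3/4^k))$, so the $O(n)$-fold sum is $o(1)$. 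Markov's inequality delivers $N(\PHI) = 0$ and hence $Z(\PHI) = 0$ w.h.p.

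The technical heart is the layer-cake analysis: \Lem~\ref{Lemma_Zbetagamma}'s decay rate $(\ln k/6)e^{-\lambda}$ per unit $\gamma$ only barely dominates the $u^{-2}$ weight, whose effect after the substitution $u = 2^s$ amounts to growth at rate $\ln 2 \cdot e^{-\lambda}$. The margin $\ln k/6 - \ln 2 > 0$ is exactly what forces $k$ to be sufficiently large, and the quantitative threshold $k^{5/2}e^{-\lambda}$ below which \Lem~\ref{Lemma_Zbetagamma} no longer applies must remain dwarfed by the natural scale $(1-\beta)e^{-\lambda}$; both conditions hold asymptotically in $k$.
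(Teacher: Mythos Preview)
Your approach is correct and genuinely different from the paper's. The paper proves the proposition by an inductive scan: it fixes a fine grid $-3/2=\beta_0<\cdots<\beta_\ell=1$ and shows by induction that $Z_{\le\beta_i}=0$ \whp, where at each step it uses Markov's inequality together with \Lem~\ref{Lemma_Zbetagamma} to rule out both large and small clusters of heaviness in $(\beta_i,\beta_{i+1}]$. You instead apply a single first-moment argument to the weighted count $W(\PHI)=\sum_{\sigma\in\cS(\PHI)}|\cC(\sigma)|^{-1}$, decomposing by heaviness and bounding $\Erw[|\cC(\vecone)|^{-1}\mid\vecone\in\cS_\beta]$ via the layer-cake formula and \Lem~\ref{Lemma_Zbetagamma}. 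This is cleaner and avoids the induction entirely; both routes rest on the same technical input (\Lem~\ref{Lemma_Zbetagamma}), and both need $k$ large enough that the $\frac{\ln k}{6}$ decay beats $\ln 2$.

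One small wobble: the displayed identity $\Erw[N(\PHI)]=\sum_\beta\Erw[Z_\beta]\cdot\Erw[|\cC(\vecone)|^{-1}\mid\vecone\in\cS_\beta]$ is not literally correct, because the partition property from \Prop~\ref{Prop_noMiddleGround} holds only \whp, not surely, so $N(\PHI)$ and $W(\PHI)$ may differ on a rare event whose contribution to the expectation you have not controlled. The fix is trivial: work with $W(\PHI)$ throughout and observe that whenever $\cS(\PHI)\neq\emptyset$ one has $|\cC(\sigma)|\le|\cS(\PHI)|$ for each $\sigma$, hence $W(\PHI)\ge 1$ deterministically. Then $\pr[\cS(\PHI)\neq\emptyset]\le\Erw[W(\PHI)]=o(1)$ by Markov, and the partition identity is never needed. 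Also, for $\beta<-3/2$ (where \Lem~\ref{Lemma_Zbetagamma} is not stated) you should explicitly invoke the trivial bound $\Erw[|\cC(\vecone)|^{-1}\mid\cdot]\le1$ together with $\Erw[Z_\beta]=\exp(-\Omega(n))$ from \Prop~\ref{Prop_roughFirst}, rather than appealing to monotonicity of $g$.
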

\begin{proof}
Let $Z_{\leq\beta}$ be the number of solutions that are $\beta'$-heavy for some $\beta'\leq\beta$.
In order to prove that $Z_{\leq\beta}=0$ \whp\ for all $\beta$ we proceed as follows.
Let $-3/2=\beta_0<\cdots<\beta_\ell = 1$ be a sequence such that $|\beta_i-\beta_{i+1}|\leq\delta$ for all $i$, where $\delta=2^{-2^k}$. We are going to show inductively that $Z_{\leq\beta_i}=0$ \whp;	by the previous discussion we may assume that this is true for $i=0$.

Let us assume for the induction step that $i$ is such that \whp\ $Z_{\leq\beta_i}=0$. Let $\gamma_0=  k^3e^{-\lambda}$, and let $Z'$ be the number of solutions that are $\beta'$-heavy for some $\beta' > \beta_i$ and such that $\frac1n\log_2\abs{\cC(\sigma)}\geq(1-\beta_{i}-\gamma_0)e^{-\lambda}$. Then, by applying \Prop~\ref{Prop_roughFirst} and using that $h(x)$ is monotone increasing for $x \le 0$ and monotone decreasing for $x \ge 0$ we obtain
\[	
	\frac1n \ln \Erw[Z']
	\le \max_{\beta > \beta_i} h(\beta) + O_k(\delta + k4^{-k})
	= O_k(k4^{-k}) + 
	\begin{cases}
		h(0), &\text{ if } \beta_i \le 0, \\
		h(\beta_i), &\text{ if } \beta_i > 0
	\end{cases}.
\]
Let us first consider the case $\beta_i \le 0$. The choice of $r^*$ guarantees that $g(0) = h(0) - e^{-\lambda} \ln2 < -k^3 4^{-k+1}$. Since $Z'>0$ implies $Z'\ge\exp\{n(1-\beta_i-\gamma_0)e^{-\lambda}\ln2\} \ge \exp\{n(1-\gamma_0)e^{-\lambda}\ln2\}$ or otherwise $Z_{\le \beta_i}>0$ we infer for sufficiently large $k$ that
\[
	\Pr[Z' > 0] \le \Pr[Z_{\le \beta_i}>0] + \Erw[Z'] \exp\{-n(1-\gamma_0)e^{-\lambda}\ln2\} = o(1).
\]
On the other hand, if $\beta_i > 0$, then again the choice of $r^*$ is such that $g(\beta_i) = h(\beta_i) - (1-\beta_i)e^{-\lambda}\ln 2 < -k^3 4^{-k+1}$. Thus, for sufficiently large $k$
\[
\begin{split}
	\frac1n \ln \Erw[Z']
	&< -k^3 4^{-k+1} + (1-\beta_i)e^{-\lambda}\ln2 + O_k(k4^{-k})
	< -k^{7}4^{-k} + (1-\beta_i-\gamma_0)e^{-\lambda}\ln2.
\end{split}
\]
So, since $Z'>0$ implies $Z'\ge\exp\{n(1-\beta_i-\gamma_0)e^{-\lambda}\ln2\}$ or otherwise $Z_{\le \beta_i}>0$ we infer that
\[
	\Pr[Z' > 0] \le \Pr[Z_{\le \beta_i}>0] + \Erw[Z'] \exp\{-n(1-\beta_i-\gamma_0)e^{-\lambda}\ln2\} = o(1).
\]
Thus, in both cases we have that $\Pr[Z' > 0] = o(1)$. In remains to consider all satisfying assignments such that $\frac1n\log_2\abs{\cC(\sigma)}\leq(1-\beta_{i}-\gamma_0)e^{-\lambda}$. More specifically, let $Z_j'$ be the number of solutions that are $\beta'$-heavy for some $\beta_i < \beta' \le \beta_{i+1}$ and such that
\[
	(1-\beta_{i}-\gamma_{j+1})e^{-\lambda}
	\le \frac1n\log_2\abs{\cC(\sigma)}
	\le (1-\beta_{i}-\gamma_j)e^{-\lambda},
\]
where $\gamma_{j+1} = 2\gamma_j$.
Choose $\beta'$ be such that $\cS_{\beta'}(\PHI)\cap\cC(\sigma)$ is maximized.
Then
	\begin{equation}\label{eqcontradiction}
	|\cS_{\beta'}(\PHI)\cap\cC(\sigma)|\geq\frac{|\cC(\sigma)|}n.
	\end{equation}
Since $Z_{\leq\beta_i}=0$ \whp, we may assume that $\beta'>\beta_i$.
There are two cases to consider.

~\\
\emph{Case 1: $1-\beta_{i}-\gamma_{j+1} > 1-\beta'$.}
We will show that in this case the number of $\beta'$-heavy assignments is larger than the expected value by at least an exponential factor. Indeed, our assumption on $g$ implies for sufficiently large $k$ that
	\[
	\begin{split}
		\frac1n\ln\Erw\brk{Z_{\beta'}}
		&= h(\beta') + O_k(k4^{-k})
		< -k^3 4^{-k} + (1-\beta')e^{-\lambda}\ln2
		< -k^3 4^{-k} + (1-\beta_i-\gamma_{j+1})e^{-\lambda}\ln2.
	\end{split}
	\]
	However, if~(\ref{eqcontradiction}) holds then
		$$\frac1n\ln Z_{\beta'}\geq\frac1n\ln|\cC(\sigma)|-o(1)=(1-\beta_i-\gamma_{j+1})e^{-\lambda}\ln2-o(1).$$
	By Markov's inequality, the probability of this event is $\exp(-\Omega(n))$.

~\\
\emph{Case 2: $1-\beta_i-\gamma_{j+1} \leq 1-\beta'$.} The assumption guarantees the existence of a $\gamma'>0$ such that
\[
	1-\beta_i-\gamma_{j+1} = 1-\beta'-\gamma'.
\]
In this case we will show that the number of solutions in $\cS_{\beta',\gamma'}(\Phi)$ is larger than the expected value by at least an exponential factor. Equation~(\ref{eqcontradiction}) implies that
\begin{equation}
\label{eq:Zbeta'gamma'lower}
	\frac1n\ln Z_{\beta',\gamma'}
	\geq \frac1n\ln\abs{\cC(\sigma)} - o(1)
	= (1-\beta'-\gamma')e^{-\lambda}\ln2 - o(1).
\end{equation}
If $\gamma' > k^{5/2}e^{-\lambda}$, then by \Lem~\ref{Lemma_Zbetagamma} and our assumption on $g$
		$$\frac1n\ln\Erw\brk{Z_{\beta',\gamma'}}\leq h(\beta') + O_k(k4^{-k})-\frac{\ln k}{6}\gamma'e^{-\lambda}
		\le (1-\beta')e^{-\lambda}\ln 2 - \frac{\ln k}{6}\gamma'e^{-\lambda},$$
	Thus, by applying~\eqref{eq:Zbeta'gamma'lower}, we infer that $Z_{\beta',\gamma'}>\exp(\Omega(n))\Erw\brk{Z_{\beta',\gamma'}}.$
	By Markov's inequality, the probability of this event is $\exp(-\Omega(n))$.
	On the other hand, if $\gamma' < k^{5/2}e^{-\lambda}$, then for sufficiently large $k$
	\[
		\frac1n\ln\Erw\brk{Z_{\beta',\gamma'}}
		\le h(\beta') + O_k(k4^{-k})
		\le -k^3 4^{-k+1} + (1-\beta')e^{-\lambda}\ln2
		< -k^3 4^{-k} + (1-\beta'-\gamma')e^{-\lambda}\ln2.
	\]
	Thus, again by applying~\eqref{eq:Zbeta'gamma'lower}, we infer that also in this case $Z_{\beta',\gamma'}>\exp(\Omega(n))\Erw\brk{Z_{\beta',\gamma'}}$, and Markov's inequality asserts that the probability of this event is $\exp(-\Omega(n))$.
	
~\\
Since the probability that either case occurs is $\exp(-\Omega(n))$, we conclude that
	the same is true of the event ``$Z_j'>0$''. Taking the union bound over $j$ then completes the induction step, i.e., $Z_{\leq\beta_{i+1}}=0$ \whp
\qed
\end{proof}

\noindent
Finally, the upper bound on $\rkNAE$ claimed in \Thm~\ref{Thm_NAE}
follows directly from \Prop~\ref{Prop_betaScan}.

\section{Proof of the lower bound} 
\label{Apx_lower}

\subsection{Outline}


Let $\vec d,\vec D$ be as in \Sec~\ref{Sec_second}.
In the extended abstract, we presented a slightly streamlined definition of ``good''.
Technically it will be more convenient to work with the following definition.
(It will emerge later that the two definitions are equivalent.)
Recall that $\lambda=kr/(2^{k-1}-1)$.

\begin{definition}\label{Def_good}
We call a solution $\sigma\in\cbc{0,1}^V$ of $\PHI_{\vec d}$ \emph{$\beta$-good} if it satisfies the following conditions.
\begin{enumerate}
\item $\sigma$ is $\beta$-heavy and the total number of critical clauses is equal to $\lambda n$.
\item No variable supports more than $3k$ clauses.
\item We have
		$$\frac1n\ln\abs{\cbc{\tau\in\cS(\PHI_{\vec d}):\dist(\sigma,\tau)/n\leq\frac12-2^{-k/3}}}
				\leq(1-\beta)\exp(-\lambda)\ln2+O_k(k^{13} 4^{-k}).$$
\end{enumerate}
\end{definition}
Let $\cZ_{\beta}$ be the number of $\beta$-good solutions.
As a first step, we determine the expectation of $\cZ_\beta$.

\begin{proposition}\label{Prop_realFirst}
Suppose that $\vec d$ is chosen from the distribution $\vec D$.
Then \whp\
	$$\frac1n\ln\Erw\brk{\cZ_{\beta}}\geq\frac{2\rho-\ln2}{2^{k}}+f(\beta) - O_k(k^{13} 4^{-k}).$$
\end{proposition}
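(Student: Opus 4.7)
The plan is to fix $\sigma=\vecone$ using the symmetry of $\PHI_{\vec d}$ under negation of literals, so that $\Erw[\cZ_\beta]=2^n\cdot\pr[\vecone \text{ is $\beta$-good}]$. I would then bound this probability below by
$$\pr[\vecone \text{ is $\beta$-good}]\;\geq\;\pr[\text{(1)}]-\pr[\text{(1)}\wedge\neg\text{(2)}]-\pr[\text{(1)}\wedge\neg\text{(3)}],$$
aiming to extract the main term from $\pr[\text{(1)}]$ and absorb the two error probabilities into the claimed slack $O_k(k^{13}4^{-k})$.

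For the main term, $\pr[\vecone\in\cS(\PHI_{\vec d})]=(1-2^{1-k})^m$ exactly, because the signs of literals are drawn independently over the $m$ clauses. Conditional on this, the number $X$ of critical clauses has law $\Bin(m,k/(2^{k-1}-1))$ with mean $\lambda n$, and imposing in addition $X=\lambda n$ costs only a factor $\Theta(n^{-1/2})$ by the local limit result Lemma~\ref{lem:locallimit}, which is absorbed in $\exp(o(n))$. Given $X=\lambda n$, the supporting variables of the critical clauses are uniformly distributed subject to the capacity $\min\{3k,d_x\}$ from $\vec d$. For $\vec d$ drawn from $\vec D$, whp only a vanishing fraction of variables have $d_x<3k$, so essentially the same balls-in-bins analysis as in the proof of Lemma~\ref{Lemma_firstMoment} applies, and yields probability $\exp(nf(\beta)-O_k(k/2^k)\,n)$ that precisely $(1-\beta)e^{-\lambda}n$ bins are empty. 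Multiplying the three factors and $2^n$ recovers the target lower bound.

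For condition~(2), conditional on~(1) the number of clauses each variable supports is dominated by a $\Po(\lambda)$ with $\lambda=(1+o_k(1))k\ln2$, so a Chernoff estimate gives $\pr[\text{some $x$ supports }>3k\text{ clauses}\mid\text{(1)}]\leq n\cdot\exp(-\Omega(k\log k))=o(1)$. The \emph{main obstacle} is condition~(3): we must control the number of other NAE-solutions within distance $(\tfrac12-2^{-k/3})n$ of $\vecone$ in terms of the free-variable count. Here Proposition~\ref{Prop_noMiddleGround} collapses the relevant radius down to $0.01n$, so it suffices to bound $|\cC(\vecone)|$. I would run the avalanche/branching-process rigidity argument sketched for Proposition~\ref{Prop_clusterSize}, but squeeze more out of it: when attempting to flip a blocked variable $x$, the cascade of forced further flips is dominated from below by a Galton--Watson process with mean $\lambda\sim k\ln 2$, so only an exponentially small fraction of blocked variables fail to trigger an avalanche of $\geq 0.01n$ flips. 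A careful quantitative tracking of this branching cascade (using the degree/support concentration from $\vec D$ together with the capacity~$3k$ from~(2)) gives that all but $O_k(k^{13}4^{-k})n$ blocked variables are rigid in every solution of $\cC(\vecone)$, so $\log_2|\cC(\vecone)|\leq(1-\beta)e^{-\lambda}n+O_k(k^{13}4^{-k})n$, which is exactly the bound required by~(3). The polynomial factor $k^{13}$ emerges as the combined cost of (a) controlling the offspring variance in the branching approximation and (b) union-bounding over the $O(n)$ possible seeds $x$; taming these two polynomial losses is the delicate part, but the structure is parallel to the proof of Proposition~\ref{Prop_clusterSize}.
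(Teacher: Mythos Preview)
Your subtractive decomposition breaks down at condition~(2). You claim that
\[
\pr\brk{\text{some $x$ supports }>3k\text{ clauses}\mid\text{(1)}}\leq n\cdot\exp(-\Omega(k\log k))=o(1),
\]
but neither the tail estimate nor the conclusion is correct. Conditionally on (1), the number of clauses supported by a typical variable is approximately $\Po(\lambda)$ with $\lambda\sim k\ln 2$, and a Chernoff bound gives $\pr[\Po(\lambda)>3k]=\exp(-\Theta(k))$, not $\exp(-\Theta(k\log k))$. More importantly, even with the correct tail, the union bound $n\cdot\exp(-\Theta(k))$ diverges for fixed $k$ as $n\to\infty$. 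In fact the events ``$x$ supports $>3k$ clauses'' are nearly independent, so $\pr[(2)\mid\text{(1)}]\approx(1-4^{-k})^n\to 0$; condition~(2) \emph{typically fails}, and $\pr[(1)\wedge\neg(2)]$ swallows essentially all of $\pr[(1)]$. Your inequality $\pr[\text{good}]\geq\pr[(1)]-\pr[(1)\wedge\neg(2)]-\pr[(1)\wedge\neg(3)]$ therefore gives nothing.

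The fix, which is what the paper does, is to compute $\pr[(1)\wedge(2)]$ \emph{jointly} via the capped occupancy problem (balls into bins of capacity $\min\{3k,d_x\}$---this is where that capacity actually belongs, not in the computation of $\pr[(1)]$ alone). One obtains $\pr[(1)\wedge(2)\mid\sigma\in\cS(\PHI_{\vec d})]=\exp((f(\beta)+O_k(k4^{-k}))n)$, with the $\exp(-O_k(4^{-k})n)$ cost of enforcing~(2) absorbed into the error term. Then one shows $\pr[(3)\mid(1)\wedge(2)]=1-o(1)$ multiplicatively.

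For condition~(3), your outline is broadly right but has a smaller gap: you invoke \Prop~\ref{Prop_noMiddleGround}, a \whp\ statement about $\PHI$, after conditioning on the exponentially unlikely event~(1). A \whp\ statement with failure probability merely $o(1)$ cannot survive conditioning on an event of probability $\exp(-\Theta(n))$. What is actually needed is the first-moment bound that the expected number of solution pairs at intermediate distance is $\exp(-\Omega(n))$ (this is \Lem~\ref{Lemma_psi}), which does survive the conditioning, together with its transfer to $\PHI_{\vec d}$. The paper's rigidity argument also differs in implementation: rather than tracking a Galton--Watson cascade directly, it builds a large ``self-contained'' set $R$ (every $x\in R$ supports two clauses lying entirely in $R$) and shows that all of $R$ is rigid via the absence of small dense sets; this sidesteps the variance and union-bound losses you anticipate in the branching approach.
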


Let us fix an assignment $\sigma\in\cbc{0,1}^V$, say $\sigma=\vecone$.
Moreover, let $\Sigma$ be the event that $\sigma$ is a $\beta$-good solution.
Let $\cZ_{\beta}(t)$ be the number of $\beta$-good solutions $\tau\in\cS(\PHId)$ such that $\dist(\sigma,\tau)=t$.
Then the symmetry properties of $\PHId$ imply the following.

\begin{fact}
For any $\vec d$ we have
	$\Erw\brk{\cZ_\beta^2}=\Erw\brk{\cZ_\beta|\Sigma}\cdot\Erw\brk{\cZ_\beta}.$
\end{fact}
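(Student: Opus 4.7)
The plan is to prove this via a sign-flipping symmetry of $\PHId$. The content of the claim is that, although the degree sequence $\vec d$ breaks the symmetry between variables, it does \emph{not} break the symmetry between assignments: $\Pr[\sigma\in\cS_{g,\beta}(\PHId)]$ and $\Erw[\cZ_\beta\mid \sigma\in\cS_{g,\beta}(\PHId)]$ should both be independent of $\sigma$. Once that is established, the identity will follow by a one-line expansion of $\Erw[\cZ_\beta^2]$ as a double sum and a count of the $2^n$ choices of $\sigma$.

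The key lemma I will use is the following. For any $\sigma\in\cbc{0,1}^V$, define the map $\Psi_\sigma$ on $k$-CNFs by flipping the sign of every literal whose underlying variable $x$ satisfies $\sigma(x)=0$, leaving the underlying variables untouched. Because $\Psi_\sigma$ only acts on signs, it is a bijection on the set of $k$-CNFs with degree sequence $\vec d$; because the literal signs of $\PHId$ are i.i.d.\ uniform in $\cbc{\pm1}$ (conditional on $\vec d$), this bijection is measure-preserving, so $\PHId$ and $\Psi_\sigma(\PHId)$ have the same distribution. The map $\tau\mapsto \tau\oplus(\vecone-\sigma)$ in turn establishes a bijection between NAE-solutions of $\PHI$ and NAE-solutions of $\Psi_\sigma(\PHI)$ which sends $\sigma$ to $\vecone$ and satisfies $\dist(\sigma,\tau)=\dist(\vecone,\tau\oplus(\vecone-\sigma))$; in particular, it preserves which clauses are critical and which variable supports them.

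Using this, I will verify that each of the three items in Definition~\ref{Def_good} is invariant under the correspondence $(\PHI,\sigma)\leftrightarrow(\Psi_\sigma(\PHI),\vecone)$. Item~(1) concerns the counts of critical clauses and free variables, both determined by which literal supports each clause, hence preserved. Item~(2) is the per-variable support cap, also preserved. Item~(3) is a cardinality bound on the set $\{\tau\in\cS(\PHI):\dist(\sigma,\tau)/n\le\tfrac12-2^{-k/3}\}$; since $\tau\mapsto\tau\oplus(\vecone-\sigma)$ is a distance-preserving bijection between $\cS(\PHI)$ and $\cS(\Psi_\sigma(\PHI))$ that sends $\sigma$ to $\vecone$, this cardinality equals the corresponding cardinality for $\vecone$ in $\Psi_\sigma(\PHI)$. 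Consequently $\sigma\in\cS_{g,\beta}(\PHI)$ iff $\vecone\in\cS_{g,\beta}(\Psi_\sigma(\PHI))$, and the same bijection shows $\cZ_\beta(\PHI)=\cZ_\beta(\Psi_\sigma(\PHI))$.

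Combining these with the distributional equality $\PHI\stackrel{d}{=}\Psi_\sigma(\PHI)$ gives
\[
\Pr[\sigma\in\cS_{g,\beta}(\PHId)]=\Pr[\Sigma],\qquad \Erw[\cZ_\beta\mid \sigma\in\cS_{g,\beta}(\PHId)]=\Erw[\cZ_\beta\mid\Sigma]
\]
for every $\sigma$. The claim then falls out of
\[
\Erw[\cZ_\beta^2]=\Erw\!\Big[\sum_{\sigma\in\cbc{0,1}^V}\!\!\mathbf 1\{\sigma\in\cS_{g,\beta}(\PHId)\}\,\cZ_\beta\Big]
=\sum_\sigma \Pr[\Sigma]\,\Erw[\cZ_\beta\mid\Sigma]=\Erw[\cZ_\beta]\cdot\Erw[\cZ_\beta\mid\Sigma].
\]
There is no real obstacle here; the only point that needs mild care is confirming that every clause of Definition~\ref{Def_good} is phrased in terms of quantities preserved by $\Psi_\sigma$, which is why it was set up that way.
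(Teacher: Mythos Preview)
Your argument is correct and is precisely the ``symmetry properties of $\PHId$'' the paper invokes without proof: the sign-flipping bijection $\Psi_\sigma$ preserves the distribution of $\PHId$ and transports $(\PHI,\sigma)$ to $(\Psi_\sigma(\PHI),\vecone)$ while preserving each clause of Definition~\ref{Def_good}, so $\Pr[\sigma\text{ is }\beta\text{-good}]$ and $\Erw[\cZ_\beta\mid\sigma\text{ is }\beta\text{-good}]$ are independent of $\sigma$, and the identity follows from expanding $\cZ_\beta^2$ as a sum of indicators times $\cZ_\beta$.
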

Thus, we need to compare $\Erw\brk{\cZ_\beta|\Sigma}$ with $\Erw\brk{\cZ_\beta}$. 
Let $\delta=2^{-k/3}$.
By the linearity of expectation and by the symmetry of $\cS(\PHI)$ with respect to inversion, for any $\vec d$
	\begin{eqnarray}\nonumber
	\Erw\brk{\cZ_\beta|\Sigma}&=&
		\sum_{t=0}^n\Erw\brk{\cZ_\beta(t)|\Sigma}\leq2\sum_{t=0}^{n/2}\Erw\brk{\cZ_\beta(t)|\Sigma}\\
		&=&2\sum_{t\leq(\frac12-\delta)n}\Erw\brk{\cZ_\beta(t)|\Sigma}+2\sum_{(\frac12-\delta)n<t\leq\frac12n}\Erw\brk{\cZ_\beta(t)|\Sigma}\nonumber\\
		&\leq&2\abs{\cbc{\tau\in\cS(\PHI_{\vec d}):\dist(\sigma,\tau)/n\leq\frac12-2^{-k/3}}}+2\sum_{(\frac12-\delta)n<t\leq\frac12n}\Erw\brk{\cZ_\beta(t)|\Sigma}
			\nonumber\\
		&\leq&2\exp\brk{n((1-\beta)\exp(-\lambda)\ln2 - O_k(k^{13}4^{-k}))}+2\sum_{(\frac12-\delta)n<t\leq\frac12n}\Erw\brk{\cZ_\beta(t)|\Sigma}
			\label{eqsecondTrick}
	\end{eqnarray}
by the definition of $\beta$-good. Let
\[
	r^*=2^{k-1}\ln2-\bc{\frac{\ln2}2+\frac14}- k^{14} 2^{-k}.
\]

\begin{lemma}\label{Lemma_betaExists}
For any $r<r^*$ there exists $0<\beta\leq\frac12$ such that
for $\vec d$ chosen from $\vec D$ \whp
	$$\Erw\brk{\cZ_{\beta}}\geq\exp\brk{n((1-\beta)e^{-\lambda}\ln2 ) + k^{14}2^{-k+1}}.$$
\end{lemma}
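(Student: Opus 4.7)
\noindent\emph{Plan.} The claim is essentially a direct consequence of \Prop~\ref{Prop_realFirst}: all we need is to pick the right $\beta$. I would take $\beta = 1/2$, for the following reason. Set, as in the upper bound analysis, $g(\beta) = h(\beta)-(1-\beta)e^{-\lambda}\ln 2$ with $h(\beta) = \frac{2\rho-\ln 2}{2^k}+f(\beta)$ and $f(\beta) = -((1-\beta)\ln(1-\beta)+\beta)e^{-\lambda}$. A direct differentiation gives $g'(\beta) = e^{-\lambda}\ln(2(1-\beta))$, which vanishes uniquely at $\beta = 1/2$ and shows that $g$ attains its maximum there. So $\beta = 1/2$ is precisely the value that turns the gap between $\Erw[\cZ_\beta]$ and $\exp[n(1-\beta)e^{-\lambda}\ln 2]$ into a maximum (note this matches the Remark after Fact~\ref{Fact_necessary}, identifying $\beta = 1/2 + o_k(1)$ with the SP distribution).

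Plugging $\beta = 1/2$ in gives $f(1/2) = (\frac{\ln 2}{2}-\frac12)e^{-\lambda}$ and hence
\[
g(1/2) \;=\; \frac{2\rho-\ln 2}{2^k} \;-\; \tfrac12\, e^{-\lambda}.
\]
Since $\lambda = k\ln 2 + O_k(k/2^k)$, we have $e^{-\lambda} = 2^{-k}(1 + O_k(k\,2^{-k}))$, so after a one-line simplification
\[
g(1/2) \;=\; \frac{2\rho - \ln 2 - \tfrac12}{2^k} \;+\; O_k(k\, 4^{-k}).
\]
The hypothesis $r < r^* = 2^{k-1}\ln 2 - (\frac{\ln 2}{2}+\frac14) - k^{14} 2^{-k}$ translates to $\rho > \frac{\ln 2}{2}+\frac14 + k^{14} 2^{-k}$, whence $2\rho - \ln 2 - \tfrac12 > 2 k^{14}\, 2^{-k}$. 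This yields $g(1/2) \ge 2 k^{14} 4^{-k} - O_k(k\,4^{-k}) \ge k^{14} 4^{-k}$ for sufficiently large $k$.

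To finish, I would apply \Prop~\ref{Prop_realFirst} at $\beta = 1/2$: for $\vec d$ drawn from $\vec D$, \whp
\[
\tfrac1n \ln \Erw[\cZ_{1/2}] \;\ge\; h(1/2) - O_k(k^{13}\, 4^{-k}) \;=\; \tfrac12 e^{-\lambda}\ln 2 + g(1/2) - O_k(k^{13}\, 4^{-k}) \;\ge\; \tfrac12 e^{-\lambda}\ln 2 + \tfrac12 k^{14}\, 4^{-k},
\]
the last step using $g(1/2) \ge k^{14} 4^{-k}$ and absorbing the error for large $k$. Exponentiating, $\Erw[\cZ_{1/2}] \ge \exp[n\,\tfrac12 e^{-\lambda}\ln 2 + \tfrac{n}{2}k^{14}\,4^{-k}]$, and since $\tfrac{n}{2}k^{14}\,4^{-k} \ge k^{14}\, 2^{-k+1}$ for all $n$ large enough with $k$ fixed, the bound claimed in the lemma holds with $\beta = 1/2$.

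\noindent\emph{Where the work actually lies.} There is no real obstacle in this step itself: it is a two-variable calculus exercise showing that the ``budget'' $g(1/2)$ for beating the benchmark $(1-\beta)e^{-\lambda}\ln 2$ is strictly positive, with a quantitative lower bound of order $k^{14} 4^{-k}$, exactly matching the definition of $r^*$. All the genuine difficulty sits upstream in \Prop~\ref{Prop_realFirst}, which supplies the matching \emph{lower} bound on $\Erw[\cZ_\beta]$: there one has to argue that the third clause in Definition~\ref{Def_good} (the cap on close neighbors of $\sigma$) fails only on a $o(1)$-fraction of $\beta$-heavy solutions, and that fixing the exact number $\lambda n$ of critical clauses and the individual degree cap $3k$ also costs nothing at the exponential scale. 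Once \Prop~\ref{Prop_realFirst} is available, Lemma~\ref{Lemma_betaExists} reduces to the short computation above.
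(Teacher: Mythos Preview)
Your proposal is correct and follows exactly the approach the paper intends: the paper's own proof consists of the single sentence ``This follows from \Prop~\ref{Prop_realFirst} and a little bit of calculus,'' and your argument spells out precisely that calculus, choosing $\beta=1/2$ as the maximizer of $g$ and checking that the margin $g(1/2)$ beats the $O_k(k^{13}4^{-k})$ error from \Prop~\ref{Prop_realFirst} thanks to the definition of $r^*$.
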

\begin{proof}
This follows from \Prop~\ref{Prop_realFirst} and a little bit of calculus.
\qed\end{proof}

As a next step, we are going to bound the second summand in~(\ref{eqsecondTrick}).
This is technically the most demanding part of this work.
In Appendix~\ref{Apx_centreTerms} we are going to prove the following.

\begin{lemma}\label{Lemma_centreTerms}
Let $\delta = 2^{-k/3}$. There is a number $C=C(k)$ such that for a degree sequence $\vec d$ chosen
from $\vec D$ we have \whp
	$$\sum_{(\frac12-\delta)n<t\leq\frac12n}\Erw\brk{\cZ_\beta(t)|\Sigma}\leq C\cdot\Erw\brk{\cZ_\beta}.$$
\end{lemma}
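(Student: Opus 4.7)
By the symmetry of $\PHId$ under variable permutations we may fix $\sigma=\vecone$. Linearity of expectation then gives
$$
\Erw\brk{\cZ_\beta(t)\mid\Sigma}
= \bink{n}{t}\cdot\pr\brk{\tau_0\text{ is $\beta$-good}\mid\Sigma},
$$
where $\tau_0$ is any fixed assignment at Hamming distance $t$ from $\sigma$. The plan is to express this joint probability through a configuration-model representation. Write $B=\bigcup_{x\in V}\cbc{x}\times\cbc{1,\ldots,d_x}$ for the set of $km$ ``balls'' and generate $\PHId$ by drawing a uniform bijection $\vec\pi:\brk{m}\times\brk{k}\ra B$ together with an independent uniform sign map $\vec s:\brk m\times\brk k\ra\cbc{\pm1}$. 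For a pair $(\sigma,\tau)$ encode the event ``the literal at position $(i,j)$ supports clause $i$ under $\sigma$'' (respectively under $\tau$) by a coloring $g_\sigma:B\ra\cbc{\red,\blue}$ (respectively $g_\tau$). Writing $p(\alpha,g_\sigma,g_\tau)$ for the probability that $\vec\pi,\vec s$ yield a formula in which the supports of both assignments coincide with $g_\sigma,g_\tau$ and both $\sigma,\tau$ are $\beta$-good at relative distance $\alpha=t/n$, we obtain
$$
\Erw\brk{\cZ_\beta(t)\mid\Sigma}
= \bink{n}{t}\sum_{g_\sigma,g_\tau}p(\alpha,g_\sigma,g_\tau).
$$

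The core estimate I would push through is a Gaussian bound in $\alpha$, uniform over admissible coloring pairs:
$$
\frac{p(\alpha,g_\sigma,g_\tau)}{p(1/2,g_\sigma,g_\tau)}\;\leq\;\exp\brk{O_k(k^4/2^k)(\alpha-1/2)^2 n}. \qquad(*)
$$
To derive $(*)$ I would condition on the positions at which $\sigma$ and $\tau$ disagree and decompose $p(\alpha,g_\sigma,g_\tau)$ into (i) a combinatorial factor counting bijections $\vec\pi$ consistent with both colorings on the shared positions, and (ii) a ``sign factor'' from $\vec s$ enforcing that each clause has both a positive and a negative literal under both $\sigma$ and $\tau$. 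Factor (ii) is stationary at $\alpha=1/2$ because a random sign agrees with a given one with probability exactly $1/2$, so its second-order contribution is the quadratic $(\alpha-1/2)^2$ term; the coefficient $O_k(k^4/2^k)$ emerges because each of the $m=\Theta(n)$ clauses produces a bias of size only $O_k(k^2/2^k)$. For factor (i) the balls-in-bins interpretation used in \Lem~\ref{Lemma_firstMoment} extends to the joint problem with two colorings, and \Lem~\ref{lem:locallimit} applied to the relevant joint distribution gives matching concentration, provided the degree sequence $\vec d$ is drawn from $\vec D$ so that the degrees are close enough to Poisson. The bin-capacity truncation in Definition~\ref{Def_good} and the requirement that the total number of critical clauses equals $\lambda n$ are both crucial to make this joint large-deviations calculation tractable.

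Once $(*)$ is established, I would combine it with the Stirling-type identity
$$
\bink{n}{\alpha n}\big/\bink{n}{n/2}=\exp\brk{-(4+o_k(1))(\alpha-1/2)^2 n}
$$
from \Prop~\ref{prop:binomials}. For $k$ large the net coefficient of $(\alpha-1/2)^2 n$ in the product is bounded above by $-2$, so summing over $t\in((\tfrac12-\delta)n,\tfrac12 n]$ gives a convergent Gaussian sum dominated (up to a multiplicative constant) by the term at $t=n/2$. At $\alpha=1/2$ the two colorings $g_\sigma,g_\tau$ are stochastically ``independent'' to leading order, so the joint first moment factorizes as $p(1/2,g_\sigma,g_\tau)=(1+o(1))p_\sigma(g_\sigma)p_\tau(g_\tau)\cdot 2^{-n}$; summing over $(g_\sigma,g_\tau)$ and multiplying by $\bink{n}{n/2}\sim 2^n/\sqrt n$ recovers the quantity $\Erw\brk{\cZ_\beta}/\sqrt n$ per $t$, whose Gaussian sum is $O(\Erw\brk{\cZ_\beta})$. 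The only step of this plan that requires genuine work is $(*)$; every other ingredient is bookkeeping or an application of the tools already developed in Appendix~\ref{Apx_prelim}. The main obstacle I foresee is handling the interaction between the capacity constraints coming from $\vec d$ and the second coloring $g_\tau$ in the joint balls-in-bins computation — tracking fluctuations carefully enough to get the sharp $O_k(k^4/2^k)$ constant in $(*)$ (rather than merely $O_k(1)$) is what ultimately forces the threshold improvement in \Thm~\ref{Thm_NAE} and is the technical heart of the argument.
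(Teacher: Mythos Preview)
Your high-level plan matches the paper's approach (and the sketch in the extended abstract), but several structural ingredients the paper relies on are missing from your outline, and at least one of your claims is false as stated.

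\smallskip
\textbf{Symmetry.} Your first displayed equation
$\Erw\brk{\cZ_\beta(t)\mid\Sigma}=\bink nt\pr\brk{\tau_0\mbox{ is }\beta\mbox{-good}\mid\Sigma}$
is not correct: $\PHId$ has a \emph{fixed} degree sequence, so it is \emph{not} invariant under variable permutations. The only symmetry available is per-variable sign flips, which means the probability depends on the \emph{set} $\{x:\sigma(x)\neq\tau_0(x)\}$, not just its size. The paper therefore never writes $\bink nt$ times a single probability; instead it decomposes $\mu(t)$ as a sum over profiles (Fact~\ref{Fact_mudecomp}) and only recovers a $\bink nt$ factor after integrating out the fine overlap structure via Azuma (\Prop~\ref{Prop_goodAzuma}).

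\smallskip
\textbf{Profiles need more data.} A pair $(g_\sigma,g_\tau)$ is not enough. The paper's profile is a triple $\cC=(g_\sigma,g_\tau,\Gamma)$ where $\Gamma\subset g_\sigma^{-1}(\blue)\cap g_\tau^{-1}(\red)$ marks those $\tau$-supporting balls whose clause is also $\sigma$-critical. Correspondingly the overlap is a five-component vector $\vec\alpha=(\alpha_{\red,\red},\alpha_{\red,\blue},\alpha_{\blue,\red},\alpha_{\blue,\blue},\alpha_\Gamma)$, and the Taylor expansion (\Lem~\ref{Lemma_Taylor}, \Prop~\ref{Prop_psis}) is carried out in $\vec\alpha$, not in the scalar $\alpha$. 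Without $\Gamma$ you cannot isolate the $\psi_\Gamma$ term, and the second-order coefficient you need does not come out right.

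\smallskip
\textbf{$(*)$ does not hold uniformly.} Your claim that $(*)$ holds ``uniform over admissible coloring pairs'' is the real gap. The paper splits profiles into \emph{good} ones (where $g_{\red,\red}$ and $|\Gamma|$ are of the typical order $\Theta_k(k/2^k)n$, $\Theta_k(k^2/2^k)n$) and \emph{bad} ones. The Gaussian bound is proved only for good profiles (\Prop~\ref{Prop_goodProfiles}); bad profiles are disposed of by an entirely separate first-moment argument showing their total contribution is $\exp(-\Omega(n))$ (\Prop~\ref{Prop_badProfiles}). Similarly, within good profiles one must further restrict to \emph{tame} overlap vectors $\vec\alpha\in\cT(\alpha)$ and kill the wild ones by another first-moment bound (\Lem~\ref{Lemma_wild}). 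Only after both restrictions does the Taylor expansion combined with Azuma yield the $O_k(k^4/2^k)$ coefficient.

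\smallskip
\textbf{Dropping condition~3.} You condition on $\Sigma$ (the full $\beta$-good event), but condition~3 of Definition~\ref{Def_good} is a global cluster-size bound that cannot be encoded by $(g_\sigma,g_\tau,\Gamma)$. The paper first passes from $\Sigma$ to $\Sigma'$ (conditions 1--2 only) via Fact~\ref{Fact_noBudge}, replacing $\cZ_\beta(t)$ by the larger $\cZ_\beta'(t)$; this is what makes the profile decomposition exact.
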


\begin{corollary}\label{Cor_centreTerms}
For any $r<r^*$ there is $0<\beta\leq\frac12$ such that
	$\Erw\brk{\cZ_\beta|\Sigma}\leq C\cdot\Erw\brk{\cZ_\beta}$
for some constant $C=C(k)>1$.
\end{corollary}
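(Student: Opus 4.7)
The plan is simply to combine the three ingredients already assembled: the decomposition~(\ref{eqsecondTrick}), the lower bound on $\Erw[\cZ_\beta]$ from \Lem~\ref{Lemma_betaExists}, and the bound on the central terms from \Lem~\ref{Lemma_centreTerms}. Since both \Lem~\ref{Lemma_betaExists} and \Lem~\ref{Lemma_centreTerms} hold \whp\ for $\vec d$ drawn from $\vec D$, a union bound guarantees a choice of $\vec d$ for which both conclusions hold simultaneously; we work with such a $\vec d$.

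First I would fix $r<r^*$ and apply \Lem~\ref{Lemma_betaExists} to obtain some $0<\beta\leq\tfrac12$ such that
\[
\Erw[\cZ_\beta]\;\geq\;\exp\!\brk{n\bc{(1-\beta)e^{-\lambda}\ln2+k^{14}2^{-k+1}}}.
\]
(Here I am reading the bound of \Lem~\ref{Lemma_betaExists} as placing the additive $k^{14}2^{-k+1}$ inside the factor of $n$, which is the only reading that makes the inequality useful; it is also consistent with the choice of $r^*$.) For this $\beta$, the first summand in~(\ref{eqsecondTrick}) is bounded by
\[
2\exp\!\brk{n\bc{(1-\beta)e^{-\lambda}\ln2-O_k(k^{13}4^{-k})}}
\;\leq\;2\exp\!\brk{-n\bc{k^{14}2^{-k+1}-O_k(k^{13}4^{-k})}}\cdot\Erw[\cZ_\beta],
\]
and for $k$ large the exponent in the prefactor is negative and bounded away from zero, so this term is $o(\Erw[\cZ_\beta])$; in particular it is at most $\Erw[\cZ_\beta]$ for $n$ large.

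Next I would invoke \Lem~\ref{Lemma_centreTerms} (applicable since the same $\vec d$ was chosen to lie in its \whp-event) to conclude that the second summand in~(\ref{eqsecondTrick}) is at most $2C\cdot\Erw[\cZ_\beta]$. Plugging both estimates back into~(\ref{eqsecondTrick}) yields
\[
\Erw[\cZ_\beta\mid\Sigma]\;\leq\;\bc{o(1)+2C}\cdot\Erw[\cZ_\beta],
\]
which is the desired bound after absorbing the $o(1)$ into a slightly larger constant $C'=C'(k)>1$.

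The ``hard part'' of the corollary is not in this combination step at all but in the two inputs: the first-moment lower bound \Lem~\ref{Lemma_betaExists} (which ultimately rests on \Prop~\ref{Prop_realFirst} and the calibration of $r^*$) and, above all, \Lem~\ref{Lemma_centreTerms}, whose proof in Appendix~\ref{Apx_centreTerms} is the technically heaviest portion of the argument. The one small subtlety to verify in writing out the above is that the two \whp-events can be taken over the same random $\vec d$ with the same (fixed-in-advance) $\beta$, but this is immediate because $\beta$ is determined by $r$ alone and both lemmas are statements about the distribution of $\vec d$.
\qed
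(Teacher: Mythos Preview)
Your proposal is correct and follows exactly the route the paper takes: the paper's own proof is the one-liner ``This follows directly from~(\ref{eqsecondTrick}), \Lem~\ref{Lemma_betaExists}, and \Lem~\ref{Lemma_centreTerms}'', and you have simply unpacked that combination in detail, including the comparison $k^{14}2^{-k+1}\gg O_k(k^{13}4^{-k})$ that kills the first summand and the observation that $\beta$ depends only on $r$ so the two \whp\ events over $\vec d$ can be intersected. Your parenthetical reading of the exponent in \Lem~\ref{Lemma_betaExists} is also the intended one.
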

\begin{proof}
This follows directly from~(\ref{eqsecondTrick}), \Lem~\ref{Lemma_betaExists}, and \Lem~\ref{Lemma_centreTerms}.
\qed\end{proof}

\noindent{\em Proof of \Thm~\ref{Thm_NAE} (lower bound).}
By \Cor~\ref{Cor_centreTerms} and the Paley-Zygmund inequality, for any $r<r^*$ for a random
$\vec d$ chosen from the distribution $\vec D$ we have \whp
	\begin{equation}\label{eqlowerboundfinal1}
	\pr\brk{\PHId\mbox{ has an NAE-solution}}\geq\pr\brk{\cZ_\beta>0}\geq1/C.
	\end{equation}
Since $\vec D$ is precisely the distribution of the degree sequence of the uniformly random formula $\PHI$,
we have
	$$\Erw_{\vec d}\brk{\pr\brk{\PHId\mbox{ has an NAE-solution}}}=\pr\brk{\PHI\mbox{ has an NAE-solution}},$$
where the expectation on the left hand side ranges over $\vec d$ chosen from $\vec D$.
Therefore, (\ref{eqlowerboundfinal1}) implies that
	\begin{equation}\label{eqlowerboundfinal2}
	\pr\brk{\PHI\mbox{ has an NAE-solution}}\geq\frac1C-o(1),
	\end{equation}
which remains bounded away from $0$ as $n\ra\infty$.
Hence, (\ref{eqlowerboundfinal2}) implies that $\rkNAE\geq r^*$,
	as the $k$-NAESAT threshold is sharp.
\qed

\subsection{Proof of \Prop~\ref{Prop_realFirst}}\label{Sec_realFirst}

We begin with the following simple observation.

\begin{lemma}\label{Lemma_NAEsolution}
For any $\vec d$ and any $\sigma\cbc{0,1}^V$ we have
	$\pr\brk{\sigma\in\cS(\PHId)}=(1-2^{1-k})^m.$
\end{lemma}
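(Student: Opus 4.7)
The plan is to reduce to the case $\sigma = \vecone$ by a symmetry argument, and then use the independence of the signs across clauses in the configuration-model representation of $\PHI_{\vec d}$.

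First, I would recall the natural way to generate $\PHI_{\vec d}$: pair up the $km$ clause-positions with the $km$ variable-copies (where variable $x$ contributes $d_x$ copies) by a uniformly random bijection, and then \emph{independently} of this pairing draw the $km$ literal signs $\vec s \in \{\pm 1\}^{[m]\times[k]}$ uniformly at random. The degree sequence constrains only the bijection, so conditional on any choice of underlying variables, the signs remain independent and uniform. In particular, the event that $\sigma$ is a NAE-solution factorizes over clauses once the underlying variables are fixed.

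Second, I would fix an arbitrary $\sigma \in \{0,1\}^V$ and invoke the following sign-flip symmetry: the map that replaces every literal whose underlying variable $x$ satisfies $\sigma(x)=0$ by its negation is a measure-preserving bijection on the probability space (it acts on $\vec s$ as a deterministic coordinate-wise flip, which preserves the uniform distribution). Under this bijection, $\sigma$ is a NAE-solution of the original formula iff $\vecone$ is a NAE-solution of the image formula. Hence it suffices to compute $\pr\brk{\vecone \in \cS(\PHI_{\vec d})}$.

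Third, $\vecone$ NAE-satisfies a clause iff that clause contains both a positive and a negative literal, i.e., the clause's $k$ signs are not all equal. For a single clause the $k$ signs are i.i.d.\ uniform on $\{\pm 1\}$, so the probability of being monochromatic is $2 \cdot 2^{-k} = 2^{1-k}$. Since the sign-vectors of the $m$ clauses are mutually independent (and independent of the bijection), the events that the individual clauses are NAE-satisfied by $\vecone$ are mutually independent, giving
\[
\pr\brk{\vecone\in\cS(\PHId)} = (1-2^{1-k})^m,
\]
as desired. There is no real obstacle here: the only point requiring care is articulating the configuration-model picture cleanly so that the sign-flip symmetry and the per-clause independence are both immediate.
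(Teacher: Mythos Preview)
Your proof is correct and follows essentially the same approach as the paper's: reduce to $\sigma=\vecone$ by the sign-flip symmetry, then use that the literal signs are independent and uniform so each clause is non-monochromatic with probability $1-2^{1-k}$ independently. The paper's version is just a terser rendering of the same argument.
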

\begin{proof}
We may assume without loss that $\sigma=\vecone$.
Then $\sigma$ is a solution iff each clause has both a positive and a negative literal.
Since the \emph{signs} of the literals are chosen uniformly and independently, the assertion follows.
\qed\end{proof}
We defer the proof of the following result to \Sec~\ref{Sec_occ}.

\begin{proposition}\label{Prop_occ}
Let $\vec d$ be a chosen from $\vec D$.
Then \whp\ we have
	$$\frac1n\ln\pr\brk{\sigma\mbox{ has Properties 1.\ and 2.\ from Definition~\ref{Def_good}}~|~\sigma\in\cS(\PHId)}=
		f(\beta)+O_k(k 4^{-k}).$$
\end{proposition}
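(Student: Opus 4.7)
By the symmetry of $\PHI_{\vec d}$ under permutation of sign patterns, I may fix $\sigma=\vecone$. Let $X$ be the number of critical clauses. The plan is to decompose
\[
\pr\brk{\text{1.\ and 2.}\mid\sigma\in\cS(\PHId)}=\pr\brk{X=\lambda n\mid\sigma\in\cS(\PHId)}\cdot\pr\brk{\text{$\beta$-heavy and 2.}\mid X=\lambda n,\sigma\in\cS(\PHId)},
\]
analyze the two factors separately, and show that the product has the claimed exponential rate $f(\beta)+O_k(k4^{-k})$ \whp\ over $\vec d\sim\vec D$.

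\textbf{The factor $\pr[X=\lambda n\mid\sigma\in\cS(\PHId)]$.} Conditional on $\sigma\in\cS(\PHId)$, the sign vectors of the $m$ clauses are independent and uniform over the $2^k-2$ non-monochromatic patterns. A clause is critical iff it has exactly one minority literal, which holds for $2k$ of those patterns, so $X\sim\Bin(m,k/(2^{k-1}-1))$ with mean $\lambda n$. Applying \Lem~\ref{lem:locallimit} with $\delta=0$ gives $\pr[X=\lambda n\mid\sigma\in\cS(\PHId)]=\Theta(n^{-1/2})$, which contributes $o(1)$ to $\frac1n\ln(\cdot)$ and may be ignored.

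\textbf{The occupancy factor.} Conditional on $X=\lambda n$, in each critical clause the minority position is uniform in $[k]$, independently across clauses; so the $\lambda n$ support positions are a uniformly random $\lambda n$-subset of $[m]\times[k]$. Under the configuration-model representation of $\PHId$ (uniform bijection $\pi:[m]\times[k]\to B$, $|B|=km$, variable $x$ owning $d_x$ balls), their images form a uniformly random $\lambda n$-subset of $B$. Hence the number of supports of $x$ is hypergeometric with parameters $(km,d_x,\lambda n)$, and the capacity $d_x$ enforces Property~2 automatically whenever $d_x\le 3k$. For a random $\vec d\sim\vec D$, standard concentration gives that \whp\ the empirical distribution of $\vec d$ matches $\Po(kr)$ up to an error $O_k(2^{-k})$ in total variation, no degree exceeds $4kr$, and $d_x\ge 3k$ for all but $o(n)$ variables. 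Mirroring the ratio-of-probabilities argument of the balls-into-bins lemma in Appendix~\ref{Apx_upper}, I would bound the capacity-constrained hypergeometric occupancy by the unconstrained independent-Poisson model: writing the event as a conditioning on a sum of truncated Poissons and invoking \Lem~\ref{lem:locallimit} with $\delta=O_k(\sqrt{k}\cdot 2^{-k})$, one gets
\[
\pr\brk{\text{free}=(1-\beta)e^{-\lambda}n\mid X=\lambda n,\sigma\in\cS(\PHId)}=\pr\brk{\Bin(n,e^{-\lambda})=(1-\beta)e^{-\lambda}n}\cdot e^{O_k(k4^{-k})n}.
\]
Then \Lem~\ref{lem:auxf} evaluates the right-hand side to $\exp(f(\beta)n+O_k(4^{-k})n)$. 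Finally, the probability that some variable still receives more than $3k$ supports is bounded by $n\cdot\pr[\Po(\lambda)>3k]\le n\cdot e^{-\Omega(k)}$ (using $\lambda\sim k\ln 2$), which is $1-o(1)$ and so affects neither bound.

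\textbf{Main obstacle.} The hard part is the quantitative comparison in paragraph three: proving that the hypergeometric occupancy with heterogeneous random capacities $d_x$ matches the plain $\Bin(n,e^{-\lambda})$ model up to an exponential slack of only $O_k(k4^{-k})$. Two subtleties must be controlled simultaneously. First, the typical fluctuations of $\vec d\sim\vec D$ are of order $\sqrt{kr}\sim\sqrt{k}\,2^{(k-1)/2}$ per coordinate, and one must verify that these shift the per-variable ``free'' probability only by a multiplicative $1+O_k(k2^{-k})$, so that after exponentiation over $n$ variables the error stays within $O_k(k4^{-k})n$. Second, one must show that the capacity constraints $d_x\le 3k$ are almost never binding---i.e., $d_x\gg\lambda$ for $(1-o_k(1))n$ variables---so that truncation contributes negligibly to the occupancy law. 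These are exactly the places where the delicate $\vec D$-typicality of $\vec d$ enters, and where the $\whp$ in the statement is used.
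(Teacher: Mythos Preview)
Your overall decomposition matches the paper's: fix $\sigma=\vecone$, factor out $\pr[X=\lambda n\mid\sigma\in\cS(\PHId)]=\Theta(n^{-1/2})$, then analyze an occupancy problem with $\lambda n$ balls in $n$ bins of capacities $d_x$. But there are two genuine problems.

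\textbf{The treatment of Property 2 is wrong.} Your union bound $n\cdot\pr[\Po(\lambda)>3k]\le n\cdot e^{-\Omega(k)}$ tends to $\infty$ as $n\to\infty$ for fixed $k$, so it does \emph{not} show this event has probability $o(1)$. In fact it has probability $\Theta(1)$: for fixed $k$, \whp\ some variable \emph{does} support more than $3k$ clauses, because a positive fraction $\Theta_k(4^{-k})$ of bins would receive $>3k$ balls under the Poisson approximation. The correct accounting is that $\pr[\text{no bin exceeds }3k]\approx (1-\Theta_k(4^{-k}))^n=\exp(-\Theta_k(4^{-k})n)$, which is exponentially small in $n$ but with a rate that is swallowed by the target error $O_k(k4^{-k})$. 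So Property~2 is \emph{not} a $1-o(1)$ correction; it genuinely shifts the exponent, just by an acceptable amount.

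\textbf{The occupancy comparison is the real work, and your sketch skips it.} Your proposal to ``mirror the ratio-of-probabilities argument'' of the unconstrained balls-into-bins lemma does not transfer directly, because now the bins have heterogeneous finite capacities $d_x$ and you must simultaneously hit the $3k$-cap everywhere. The paper's route is different and more concrete: replace the hypergeometric allocation by an independent-Bernoulli one (each of the $km$ slots receives a ball with probability $\lambda/(kr)$ independently, so $B_x\sim\Bin(d_x,\lambda/(kr))$ are \emph{independent}), then condition on the total $T=\lambda n$. For the lower bound, the paper does not use a ratio argument at all; it \emph{constructs} an explicit family of ``balanced'' outcomes in which (i) bins with atypical capacity $|d_x-kr|\ge k\sqrt{kr}$ are forced empty, (ii) within each capacity class $\cD_i$ the fraction of empty bins is exactly $(1-\beta)e^{-\lambda}$ up to a tiny correction, and (iii) the ball totals $T_i$ per class are set to their near-typical values. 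Bounding the probability of each balanced outcome and counting them yields the factors $\binom{n}{(1-\beta)e^{-\lambda}n}$, $(e^{-\lambda})^{(1-\beta)e^{-\lambda}n}$, $(1-e^{-\lambda})^{(1-(1-\beta)e^{-\lambda})n}$ of $\pr[\Bin(n,e^{-\lambda})=(1-\beta)e^{-\lambda}n]$ times an $e^{O_k(k4^{-k})n}$ error. This construction is where the \whp\ typicality of $\vec d$ from \Lem~\ref{lem:propD} is actually used. Your ``Main obstacle'' paragraph correctly identifies the difficulty but does not supply the mechanism; the balanced-configuration construction is the missing idea.
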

To continue, we need the following basic fact about the random degree distribution $\vec d$.
For a set $S\subset V$ we let $\Vol(S)=\sum_{x\in S}d_x$.

\begin{lemma}\label{Lemma_degs}
Let $\vec d$ be chosen from $\vec D$.
Then \whp\ the following is true.
	\begin{equation}\label{eqdegs}
	\parbox[t]{13cm}{For any set $S\subset V$ we have $\Vol(S)\leq10\max\cbc{kr|S|,|S|\ln(n/|S|)}.$}
	\end{equation}
\end{lemma}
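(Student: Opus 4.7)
The plan is to exploit the fact that $\vec D$ is, by definition, the marginal distribution of the degree sequence of the uniformly random formula $\PHI$. Hence the statement is equivalent to claiming that~(\ref{eqdegs}) holds \whp\ for the degree sequence of $\PHI$ itself. In $\PHI$ each of the $km$ literal-positions picks its underlying variable independently and uniformly from $V$, so for any fixed $S\subset V$ with $|S|=s$ we have $\Vol(S)\sim\Bin(km,s/n)$ with mean $\mu_s=krs$. Therefore the entire proof reduces to a Chernoff-plus-union-bound estimate over all subsets $S$.

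The key tool will be the standard multiplicative Chernoff tail
\[
\pr\brk{\Vol(S)\geq t}\leq\bc{\frac{\eul\mu_s}{t}}^{t}\quad\text{for }t\geq\mu_s,
\]
together with $\bink n s\leq(\eul n/s)^s$. I would then split the range $1\leq s\leq n$ into two regimes according to which of $krs$ or $s\ln(n/s)$ dominates $\max\cbc{krs,s\ln(n/s)}$, with the breakpoint at $s\asymp n\eul^{-kr}$.

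In the \emph{large-set regime} $s\geq n\eul^{-kr}$ (so $\ln(n/s)\leq kr$) we take $t=10krs$. The Chernoff estimate gives $(\eul/10)^{10krs}$ and the union bound contributes $\bink ns\leq\exp(s(1+\ln(n/s)))\leq\exp(s(1+kr))$. The product is $\exp\brk{s(1+kr(11-10\ln10))}$, which is $\exp(-\Omega(krs))\leq\exp(-\Omega(s))$ because $10\ln10>23$. In the \emph{small-set regime} $s<n\eul^{-kr}$ (so $\ln(n/s)>kr$) we take $t=10s\ln(n/s)\geq10\mu_s$. Now
\[
\bc{\frac{\eul\mu_s}{t}}^{t}=\bc{\frac{\eul kr}{10\ln(n/s)}}^{10s\ln(n/s)}\leq 2^{-10s\ln(n/s)},
\]
since $\eul kr/(10\ln(n/s))<\eul/10<1/2$. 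Combining with $\bink ns\leq\exp(s+s\ln(n/s))$ yields a bound $\exp(-\Omega(s\ln(n/s)))$. Summing these tails over $s=1,\dots,n$ is easily seen to produce an $o(1)$ total, so the union bound over all $S$ succeeds \whp.

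I do not anticipate any real obstacle here: the bound being asked for is generous (with the constant $10$ providing ample slack), and everything reduces to a standard large-deviations estimate for a binomial random variable. The only mildly delicate point is keeping the constants consistent at the boundary $s\asymp n\eul^{-kr}$ between the two regimes, but both bounds degrade smoothly there, so no separate boundary analysis is needed.
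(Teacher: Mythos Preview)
Your proof is correct and follows essentially the same route as the paper: a large-deviations tail bound on $\Vol(S)$ for each fixed $S$, combined with a union bound over all subsets (what the paper calls ``a straight first moment argument''). If anything, your treatment is more careful than the paper's one-line sketch: you work with the exact distribution $\Vol(S)\sim\Bin(km,|S|/n)$, whereas the paper asserts $\Vol(S)=\Po(|S|kr)$ as if the degrees were independent Poissons, which is only an approximation (justifiable via the usual Poissonization/conditioning trick, but not literally true).
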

\begin{proof}
For any fixed $S\subset V$ the volume $\Vol(S)=\sum_{x\in S}d_x$ is a sum of independent Poisson variables $\Po(kr)$.
Hence, $\Vol(S)=\Po(|S|kr)$, and the lemma follows from a straight first moment argument.
\qed\end{proof}
%


Let us call $S\subset V$ \emph{dense} if each variable in $S$ supports at least two clauses
that each feature another variable from $S$.

\begin{lemma}\label{Lemma_dense}
Let $\vec d$ be chosen from $\vec D$ and let $\sigma\in\cbc{0,1}^V$.
Let $\cA$ be the event that $\sigma\in\cS(\PHId)$ and that $\sigma$ satisfies
	conditions 1.--2.\ in Definition~\ref{Def_good}.
Then \whp
	\begin{eqnarray*}
	\pr\brk{\mbox{there is a dense $S\subset V$, $|S|\leq n/k^5$}~|~\cA}=o(1).
	\end{eqnarray*}
\end{lemma}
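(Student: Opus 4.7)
The plan is a union bound over candidate dense sets $S$, combined with a first moment estimate of $\pr[S\text{ is dense}\mid\cA]$ using the configuration-model representation of $\PHI_{\vec d}$. Note first that a set of size $1$ is vacuously never dense (it has no ``other'' variable from $S$), so we may restrict to $|S|=s\ge 2$.

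For fixed $S$ with $|S|=s$, I would bound the probability that a single $x\in S$ supports at least two critical clauses each featuring another variable from $S$. In the configuration model, once the supporting-variable assignment is fixed, the remaining $(k-1)\lambda n$ non-supporting positions are filled by a uniformly random completion of the bijection $\vec\pi$, and the signs are independent of $\vec\pi$; hence any single non-supporting position lies in $S$ with probability at most $\Vol(S)/((1-o_k(1))km)$. By Lemma~\ref{Lemma_degs}, whp over $\vec d$ we have $\Vol(S)\le 10\max\{krs,\,s\ln(n/s)\}$, giving $\Vol(S)/(km)\le 10s/n$ in the dominant regime $s\ge ne^{-kr}$ (and strictly less in the tiny regime $s\le ne^{-kr}$). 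Thus a single critical clause supported by $x$ hits $S$ via one of its other $k-1$ positions with probability at most $10ks/n$, and since by condition~2 of Definition~\ref{Def_good} $x$ supports at most $3k$ clauses, the probability that at least two of them are $S$-hitting is at most $\binom{3k}{2}(10ks/n)^2=O(k^4 s^2/n^2)$.

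Next I would combine this bound across $x\in S$. The events ``$x$ supports at least $2$ $S$-hitting critical clauses'' for different $x\in S$ depend on the fillings of disjoint collections of non-supporting positions (the clauses supported by distinct variables are distinct), so up to the mild negative correlation coming from the bijection constraint they are essentially independent. Therefore $\pr[S\text{ is dense}\mid\cA]\le (Ck^4 s^2/n^2)^s$ for some absolute constant $C$. With $\binom{n}{s}\le(en/s)^s$, the expected number of dense sets of size $s$ is at most $(Cek^4 s/n)^s$, which for $s\le n/k^5$ is $\le (Ce/k)^s$; summing over $s\ge 2$ gives $O(1/k^2)=o_k(1)$ uniformly in $n$.

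The principal subtlety, and the main technical obstacle, is handling the conditioning on $\cA$ rigorously. Condition~1 of Definition~\ref{Def_good} biases the supporting-variable assignment (it is an atypical outcome of the occupancy problem when $\beta>0$), and condition~2 imposes the $3k$-cap; conditioning on $\sigma\in\cS(\PHI_{\vec d})$ merely restricts the signs, which are independent of the positional bijection and so play no role here. Once the supporting positions are fixed, the non-supporting positions form a uniformly random bijection onto the leftover balls, and the product-of-probabilities estimate above then follows from standard configuration-model calculations, controlling $\Vol(S)$ via Lemma~\ref{Lemma_degs}. The $3k$-cap can only decrease the effective $S$-mass available among non-supporting positions and so does not weaken the bound.
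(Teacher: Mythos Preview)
Your approach is essentially the paper's: union bound over candidate sets $S$, use condition~2 of Definition~\ref{Def_good} to cap the number of supported clauses per variable at $3k$, bound the probability that a fixed non-supporting position lands in $S$ by $\Vol(S)/\Theta(km)$ via Lemma~\ref{Lemma_degs}, and then multiply across the $2|S|$ required hits. The paper presents this slightly more tersely (writing $\pr_{\PHId}[\cD(S)]\le(2k^2\Vol(S)/(krn))^{2|S|}$ without spelling out the conditioning), but the substance is identical.

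Two small points. First, your parenthetical about the ``tiny regime'' $s\le ne^{-kr}$ is backwards: there $\Vol(S)/(km)\le 10s\ln(n/s)/(krn)$, which is \emph{larger} than $10s/n$ since $\ln(n/s)>kr$; this is harmless because the extra $\ln^2(n/s)$ factor is easily absorbed (and the paper's bound $(4ek^4 s\ln^2(1/s))^{sn}$ carries exactly this factor). Second, your final summation yields $O(1/k^2)=o_k(1)$, but the lemma asks for $o(1)$ as $n\to\infty$ with $k$ fixed; your intermediate bound $(Cek^4|S|/n)^{|S|}$ already gives this if you sum a little more carefully (split at $|S|=S_0(n)\to\infty$ slowly: the finitely many small-$|S|$ terms each vanish, and the tail is bounded by a geometric series with ratio $Ce/k<1$ starting at exponent $S_0$). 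The paper's summation step is equally terse on this point.
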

\begin{proof}
We may assume that $\vec d$ satisfies~(\ref{eqdegs}).
Let $\cD(S)$ be the event that $S\subset V$ is dense.
We claim that
	\begin{eqnarray*}
	\pr_{\PHId}\brk{\cD(S)}&\leq&k^{2|S|}\cdot\frac{(k-1)^{2|S|}\Vol(S)^{2|S|}}{(krn/2)^{2|S|}}
		\leq\bcfr{2k^2\Vol(S)}{krn}^{2|S|}.
	\end{eqnarray*}
Indeed, the factor $k^{2|S|}$ accounts for the number of ways to choose the two relevant
clauses supported by each variable, and the second factor bounds the probability that each of these
clauses contains another occurrence of a variable from $S$.
Now, (\ref{eqdegs}) yields
	\begin{eqnarray*}
	\pr_{\PHId}\brk{\cD(S)}&\leq&\bcfr{2k^2|S|\ln(n/|S|)}{n}^{2|S|}.
	\end{eqnarray*}
For $0<s\leq 1/k^5$ let $X_s$ be the number of sets $S$ of size $|S|=sn$ for which $\cD(S)$ occurs.
Then
	\begin{eqnarray*}
	\Erw\brk{X_s}&\leq&\bink{n}{sn}\brk{2k^2s\ln(1/s)}^{2sn}
		\leq\brk{\frac{\eul}s\cdot\bc{2k^2s\ln(1/s)}^2}^{sn}
			\leq\bc{4\eul k^4 s\ln^2(1/s) }^{sn}=o(1).
	\end{eqnarray*}
Summing over all possible $s$ and using Markov's inequality completes the proof.
\qed\end{proof}

\begin{lemma}\label{Lem_supporting}
The expected number of solutions $\sigma\in\cS(\PHI)$ in which
more than $k^42^{-k}n$ variables support at most four clauses is 
	$\leq\exp(-nk^3/2^k)$.
\end{lemma}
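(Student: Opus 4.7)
The plan is to combine symmetry of $\PHI$ with a Chernoff-type concentration bound for the vector of supporting counts. Writing $N_4(\sigma)$ for the number of variables supporting at most four clauses under $\sigma$, linearity of expectation and symmetry over assignments give
\[
\Erw\brk{\abs{\cbc{\sigma\in\cS(\PHI):N_4(\sigma)>k^4 2^{-k}n}}}=2^n(1-2^{1-k})^m\cdot\pr\brk{N_4(\vecone)>k^4 2^{-k}n\,\big|\,\vecone\in\cS(\PHI)}.
\]
The prefactor $2^n(1-2^{1-k})^m$ equals $\exp(O_k(n/2^k))$ by the computation at the start of the proof of \Prop~\ref{Prop_roughFirst}, so it suffices to bound the conditional probability by $\exp(-\Omega(k^4n/2^k))$.

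Conditional on $\vecone\in\cS(\PHI)$, each clause $\PHI_i$ is independently classified into one of $n+1$ outcomes: either $\PHI_i$ is not critical, or it is critical with supporter equal to some specific $y\in V$. The latter occurs with probability $q=k/(n(2^{k-1}-1))=2^{1-k}k/n\cdot(1+o_k(1))$ for each $y$. Hence the supporter counts $W_y=\abs{\cbc{i:\PHI_i\text{ is critical with supporter }y}}$ form the cell counts of a multinomial experiment, so $W_y\sim\Bin(m,q)$ marginally with mean $\lambda(1+o_k(1))$, $\lambda=k\ln2+O_k(k/2^k)$, and the collection $(W_y)_{y\in V}$ is \emph{negatively associated}. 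Using $\pr[\Bin(m,q)=j]\le\lambda^j/j!\cdot e^{-\lambda}\cdot(1+o_k(1))$ for $j\le 4$, we get
\[
\Erw\brk{N_4(\vecone)\,\big|\,\vecone\in\cS(\PHI)}\le n\cdot e^{-\lambda}\sum_{j=0}^{4}\frac{\lambda^j}{j!}\cdot(1+o_k(1))\le c_0\cdot k^4 2^{-k}n
\]
for some absolute constant $c_0<1/e$, since the $j=4$ term dominates and $e^{-\lambda}\cdot\lambda^4/24=(k\ln2)^4/24\cdot2^{-k}(1+o_k(1))$.

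To finish, observe that $\vecone_{\{W_y\le 4\}}$ is a non-increasing function of the single coordinate $W_y$, so the family $\{\vecone_{\{W_y\le 4\}}\}_{y\in V}$ inherits the negative association of $(W_y)_{y\in V}$. The multiplicative Chernoff upper-tail bound for sums of NA indicators, $\pr[S\ge a]\le(e\mu/a)^a$ whenever $a\ge e\mu$, therefore applies with $\mu\le c_0 k^4 2^{-k}n$ and $a=k^4 2^{-k}n$, yielding
\[
\pr\brk{N_4(\vecone)\ge k^4 2^{-k}n\,\big|\,\vecone\in\cS(\PHI)}\le(ec_0)^{k^4 2^{-k}n}=\exp(-\Omega(k^4 2^{-k}n)).
\]
Combined with the $\exp(O_k(n/2^k))$ prefactor this gives an overall bound of $\exp(-\Omega(k^4n/2^k))$, which dominates the claimed $\exp(-nk^3/2^k)$ for all sufficiently large $k$. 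The only delicate step is the invocation of negative association to justify the Chernoff estimate across the dependent coordinates $(W_y)$; everything else is routine book-keeping around the Poisson approximation of $\Bin(m,q)$.
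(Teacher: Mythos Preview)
Your proof is correct and follows essentially the same approach as the paper: reduce by symmetry to $\sigma=\vecone$, observe that the supporter counts $(W_y)_{y\in V}$ form multinomial cell counts (hence are negatively associated), bound the mean number of variables with $W_y\le 4$ by roughly $(\ln 2)^4 k^4 2^{-k} n/24$, and apply a Chernoff bound for negatively associated indicators. The paper's version is terser---it asserts negative correlation and then stochastic domination by $\Bin(n,k^4 2^{-k-1})$ without spelling out the NA/MGF step or the $\exp(O_k(n/2^k))$ prefactor---so your write-up is, if anything, more careful on exactly the point you flag as delicate.
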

\begin{proof}
Fix an assignment $\sigma\in\cbc{0,1}^V$, say $\sigma=\vecone$.
Then number of clauses supported by each $x\in V$ is asymptotically Poisson with mean $\lambda$.
Let $\cE_x$ be the event that $x$ supports no more than three clauses.
Then
	$$\pr\brk{\cE_x}\leq\lambda^3\exp(-\lambda)\leq k^42^{-k-1}.$$
The events $(\cE_x)_{x\in V}$ are negatively correlated.
Therefore, the total number $X$ of variables $x\in V$ for which $\cE_x$ occurs is stochastically
dominated by a binomial variable $\Bin(n,k^42^{-k-1})$.
Hence, the assertion follows from Chernoff bounds.
\qed\end{proof}


Let us call a set $S\subset V$ \emph{self-contained} if each variable in $S$
supports at least two clauses that consist of variables in $S$ only.
There is a simple process that yields a (possibly empty) self-contained set $S$.
\begin{enumerate}
\item[$\bullet$] For each variable $x$ that supports at least one clause, choose such a clause $C_x$ randomly.
\item[$\bullet$] Let $R$ be the set of all variables that support at least four clauses.
\item[$\bullet$] While there is a variable $x\in R$ that supports fewer than two clauses $\PHI_i\neq C_x$ 
		that consist of variables of $R$ only, remove $x$ from $R$.
\end{enumerate}
The clauses $C_x$ will play a special role later.

\begin{lemma}\label{Lemma_selfcontained}
The expected number of solutions $\sigma\in\cS(\PHI)$ for which the above process
yields a set $R$ of size $|R|\leq(1-k^5/2^k)n$ is bounded by $\exp(-\Omega(n))$.
\end{lemma}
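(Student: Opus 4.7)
The plan is a first-moment argument that combines Lemma~\ref{Lem_supporting} with a peeling analysis and a union bound in the spirit of the proof of Lemma~\ref{Lemma_dense}.

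First, I would reduce to ``nice'' solutions. Lemma~\ref{Lem_supporting} already shows that the expected number of solutions with more than $k^4 2^{-k} n$ variables supporting at most three clauses is $\le\exp(-nk^3/2^k)$. A symmetric Chernoff-style bound applied to the Poisson tail $\Po(\lambda)$ with $\lambda=(1+o_k(1))k\ln2$ shows that only $\exp(-\Omega(n))$ solutions have more than, say, $n\eul^{-\Omega(k)}$ variables supporting more than $3k$ clauses. Hence it suffices to bound the expected number of solutions which satisfy (i) $|V\setminus T|\le k^4 2^{-k}n$, where $T$ is the set of variables supporting at least four clauses, (ii) every variable supports at most $3k$ clauses, and (iii) $|R|\le(1-k^5/2^k)n$ after the peeling.

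Second, I would extract from the peeling a compact ``witness''. Writing $U=T\setminus R$, condition (iii) combined with (i) forces $|U|\ge k^5/2^{k+1}\cdot n$. Order $U$ by removal time and let $U_0$ consist of the first $\min(|U|,n/k^6)$ peelings, so that $k^5/2^{k+1}\cdot n\le|U_0|\le n/k^6$. By the peeling rule, at the moment $x\in U_0$ is removed every prior removal lies in $U_0$, and therefore at least $\deg_{\mathrm{sup}}(x)-2\ge 2$ of the supported clauses of $x$ different from $C_x$ each contain another variable from $S_0:=U_0\cup(V\setminus T)$. Note that $|S_0|\le n/k^6+k^4 2^{-k}n\le n/k^5$ for large~$k$, so $S_0$ is small enough for union-bound arguments similar to those of Lemma~\ref{Lemma_dense} to be effective.

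Third, I would execute a first-moment bound. By symmetry fix $\sigma=\vecone$; then $\pr[\sigma\in\cS(\PHI)]=(1-2^{1-k})^m$ by Lemma~\ref{Lemma_NAEsolution}. Conditional on $\sigma\in\cS(\PHI)$ the other $k-1$ variables in any supported clause of $x$ are essentially uniform over $V$, so writing $s_0=|S_0|/n$ one gets
\[
	\pr\bigl[x\text{ supports }\geq 2\text{ clauses with another var in }S_0\mid \sigma\in\cS(\PHI)\bigr]\le\bink{3k}{2}((k-1)s_0)^2=O(k^4 s_0^2).
\]
Since these events are essentially independent across $x$, a union bound over the choice of $V\setminus T$ and $U_0$ gives, schematically,
\[
	\Erw[\#\text{bad sols}]\le 2^n(1-2^{1-k})^m\cdot\bink{n}{|V\setminus T|}\bink{n}{|U_0|}\bigl(O(k^4 s_0^2)\bigr)^{|U_0|}.
\]
Writing $u_0=|U_0|/n$ and using $s_0=\Theta(u_0)$ over the relevant range $u_0\in[k^5/2^{k+1},1/k^6]$, a short calculation with $\bink{n}{\alpha n}\le(\eul/\alpha)^{\alpha n}$ shows that the dominant contribution is $(O(k^4 s_0^2/u_0))^{u_0 n}=(O(k^4 u_0))^{u_0 n}$, whose logarithm is $\sim-u_0 n\cdot k\ln 2$; this easily absorbs the $2^n(1-2^{1-k})^m=\exp(O(n/2^k))$ and $\bink{n}{|V\setminus T|}=\exp(O(nk^5/2^k))$ factors, yielding a total bound of $\exp(-\Omega(nk^6/2^k))=\exp(-\Omega(n))$ for fixed~$k$.

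The main obstacle is the ``essentially independent/approximately uniform'' phrasing above: once we condition on $\sigma\in\cS(\PHI)$, signs and positions within clauses exhibit weak correlations, so the heuristic per-$x$ bound has to be replaced by an explicit counting argument that enumerates, for each $x\in U_0$, ordered pairs of clause-positions together with sign patterns making those clauses critical with $x$ as the supporting variable and with a prescribed other variable in $S_0$, entirely analogously to the combinatorial bound in the proof of Lemma~\ref{Lemma_dense}. Once this bookkeeping is carried out, the exponent computed above goes through unchanged and the desired $\exp(-\Omega(n))$ bound follows.
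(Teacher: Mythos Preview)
Your approach is essentially the paper's: extract from the peeling a witness set each of whose members supports two critical clauses meeting a small ``bad'' set, then run a first-moment/union-bound calculation whose exponent beats the $\exp(O_k(2^{-k})n)$ coming from $\Erw[Z(\PHI)]$. The paper picks the witness $S$ of a fixed size $\Theta(k^5 2^{-k})n$ and bounds the probability by first choosing the $2|S|$ relevant \emph{clauses} out of $[m]$ (the $\binom{m}{2sn}$ factor), rather than using a per-variable $\binom{3k}{2}$ factor as you do.

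There is one slip. Your first paragraph only establishes that at most $n\eul^{-\Omega(k)}$ variables support more than $3k$ clauses, yet condition~(ii) then asserts that \emph{every} variable supports at most $3k$ clauses; this does not follow, and indeed for a typical solution there \emph{are} variables supporting more than $3k$ clauses. Your per-$x$ bound $\binom{3k}{2}((k-1)s_0)^2$ therefore does not apply to every $x\in U_0$ as stated. The paper's clause-first counting sidesteps this entirely; alternatively, in your setup the easy repair is to move the at most $n\eul^{-\Omega(k)}$ high-support variables from $U_0$ into $S_0$ (since $\Pr[\Po(\lambda)>3k]=o_k(k^5 2^{-k})$ one can take this exceptional set of size $o_k(k^5 2^{-k})n$, so neither $|U_0|$ nor $|S_0|$ changes appreciably), after which every remaining $x\in U_0$ genuinely supports at most $3k$ clauses and your calculation goes through.
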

\begin{proof}
Let $\sigma\in\cbc{0,1}^V$ be an assignment, say $\sigma=\vecone$.
Let $Q$ be the set of all variables that support fewer than three clauses.
By \Lem~\ref{Lem_supporting} we may condition on $|Q|\leq k^42^{-k}n$.
Assume that its size is $|R|\leq(1-k^5/2^k)n$.
Then there exists a set $S\subset V\setminus(R\cup Q)$ of size $\frac12k^5 n/2^k\leq S\leq k^5 n/2^k$
such that each variable in $S$ supports two clauses that contain another variable from $S\cup Q$.
With $s=|S|/n$ the probability of this event is bounded by
	\begin{eqnarray*}
	\bink{m}{2sn}\brk{\frac{2^{1-k}}{1-2^{1-k}}\cdot\frac{k^2|S\cup Q|^2}{n^2}}^{2sn}
		&\leq&\brk{4\eul k^2s}^{2sn}.
	\end{eqnarray*}
Hence, the expected number of set $S$ for which the aforementioned event occurs is bounded by
	\begin{eqnarray*}
	\bink ns\brk{4\eul k^2s}^{2sn}&\leq&\brk{\frac\eul s\cdot(4\eul k^2s)^2}^{sn}
		\leq\exp(-sn).
	\end{eqnarray*}
Since $\Erw\brk{Z(\PHI)}\leq\exp(O_k(2^{-k}n))$, the assertion follows.
\qed\end{proof}

\begin{corollary}\label{Cor_selfcontained}
Let $\vec d$ be chosen from $\vec D$.
Then the expected number of solutions $\sigma\in\cS(\PHId)$ 
for which the above process
yields a set $R$ of size $|R|\leq(1-k^5/2^k)n$ is bounded by $\exp(-\Omega(n))$.
\end{corollary}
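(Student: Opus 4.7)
The plan is to reduce directly to Lemma~\ref{Lemma_selfcontained} via the coupling that defines $\vec D$. Recall that $\vec D$ is by definition the distribution of the degree sequence of the unconditioned random formula $\PHI$, so one can generate $\PHI$ in two steps: first draw $\vec d\sim\vec D$, and then draw $\PHId$ uniformly among all $k$-CNFs with that degree sequence. Hence if we let $B(\Phi)$ denote the number of solutions $\sigma\in\cS(\Phi)$ for which the self-contained extraction process outputs a set $R$ of size at most $(1-k^5/2^k)n$, then $B$ is a function of the formula alone and
\begin{equation*}
\Erw\brk{B(\PHI)}
=\Erw_{\vec d\sim\vec D}\brk{\Erw\brk{B(\PHId)\,\big|\,\vec d}}.
\end{equation*}

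Lemma~\ref{Lemma_selfcontained} bounds the left-hand side by $\exp(-\Omega(n))$, and this immediately yields the ``averaged'' form of the corollary. To upgrade to a \whp\ statement in $\vec d$, I would apply Markov's inequality once on the outer expectation: if $\Erw\brk{B(\PHI)}\leq\exp(-cn)$ for some $c>0$, then
\begin{equation*}
\pr_{\vec d\sim\vec D}\brk{\Erw\brk{B(\PHId)\,|\,\vec d}>\exp(-cn/2)}\leq\exp(-cn/2),
\end{equation*}
so for $\vec d$ drawn from $\vec D$ we have $\Erw\brk{B(\PHId)}\leq\exp(-\Omega(n))$ \whp

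There is essentially no obstacle here: the whole argument is a single application of the tower property (or the law of total expectation) followed, if needed, by a single Markov step. The only point worth checking is that the process and the event ``$|R|\leq(1-k^5/2^k)n$'' are genuinely formula-level notions so that ``$B(\Phi)$'' is unambiguously defined under either distribution; this is clear from the description of the process, which inspects only the clauses of $\Phi$ and the supporting relation of $\sigma$, never the degree sequence directly. Consequently the same exponential bound transfers verbatim to $\PHId$ for a typical $\vec d\sim\vec D$, which is what the corollary asserts.
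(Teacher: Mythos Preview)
Your proof is correct and matches the paper's approach exactly: the paper's entire proof is the single observation that $\PHI$ can be generated by first drawing $\vec d\sim\vec D$ and then $\PHId$, so Lemma~\ref{Lemma_selfcontained} transfers via the tower property. Your additional Markov step to upgrade to a \whp\ statement in $\vec d$ is a sensible (and correct) elaboration that the paper leaves implicit.
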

\begin{proof}
Since the random formula $\PHI$ can be generated by first choosing $\vec d$ from $\vec D$ and
then generating $\PHId$, the assertion follows from \Lem~\ref{Lemma_selfcontained}.
\qed\end{proof}

Let us call a variable $x$ is \emph{attached} if $x$ supports a clause whose other $k-1$ variables belong to $R$.

\begin{corollary}\label{Cor_attached}
\Whp\ a degree sequence $\vec d$ chosen from $\vec D$ has the following property.
Let $\sigma\in\cbc{0,1}^V$ and  let $\cA$ be the event that $\sigma\in\cS(\PHId)$ and that $\sigma$ satisfies
	Conditions 1.\ and 2.\ in Definition~\ref{Def_good}.
Moreover, let $Y$ be the number variables that support a clause but that are not attached.
Then
	$$\pr_{\PHId}\brk{Y\leq n k^{13}4^{-k}~|~\cA}=1-o(1).$$
\end{corollary}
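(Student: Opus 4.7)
The plan is a conditional first-moment argument plus Markov's inequality. Fix $\sigma=\vecone$ and set $\cE:=\cA\cap\{|V\setminus R|\leq k^5n/2^k\}$. First I reduce to $\cE$: by Lemma~\ref{Lemma_NAEsolution} and Proposition~\ref{Prop_occ}, \whp\ over $\vec d$ we have $\pr[\cA]\geq 2^{-n}\exp(-O_k(n/2^k))$, while by symmetry Corollary~\ref{Cor_selfcontained} bounds $\pr[\vecone\in\cS(\PHId),\,|V\setminus R|>k^5n/2^k]$ by $2^{-n}\exp(-\Omega(n))$. Bayes' rule therefore gives $\pr[|V\setminus R|>k^5n/2^k\mid\cA]=\exp(-\Omega(n))$, so it suffices to show $\pr[Y>k^{13}4^{-k}n\mid\cE]=o(1)$.

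The heart of the argument is a per-variable estimate. Reveal the ``skeleton'' of $\PHId$: the set of critical clauses, the variable supporting each, and the set $S:=V\setminus R$. What remains random are the variables filling the $k-1$ other positions of each critical clause, which in the configuration-model picture are uniformly random balls from the residual pool. Lemma~\ref{Lemma_degs} together with $|S|\leq k^5n/2^k$ yields $\Vol(S)=O_k(k^6 n)$, so each such position lies in $S$ with probability at most $2\Vol(S)/(km)=O_k(k^5/2^k)$; a union bound over the $k-1$ positions bounds the probability that a single supported clause meets $S$ by $q:=O_k(k^6/2^k)$. Across the $s_x\leq 3k$ distinct supported clauses of a fixed $x$ the ``meets $S$'' events are negatively associated (distinct clauses involve disjoint balls of the configuration model), so
\[
	\pr\bigl[x\text{ is not attached}\bigm|s_x,\cE\bigr]\;\leq\;q^{s_x}.
\]

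Finally, by Condition~1 of Definition~\ref{Def_good} there are exactly $\lambda n$ critical clauses, and the occupancy analysis underlying Proposition~\ref{Prop_occ} shows the support count $s_x$ is distributed essentially as $\Po(\lambda)$ with $\lambda=k\ln 2+O_k(k/2^k)$. Summing over $x$,
\[
	\Erw[Y\mid\cE]\;\leq\;n\,\Erw\brk{q^{s_x}\vecone_{s_x\geq 1}}\;\leq\;n\bigl(e^{\lambda q}-1\bigr)e^{-\lambda}\;=\;O_k(k^7/4^k)\,n,
\]
and Markov's inequality gives $\pr[Y>k^{13}4^{-k}n\mid\cE]\leq O_k(k^{-6})=o(1)$, as required.

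The main obstacle is the second paragraph: $S=V\setminus R$ is a complicated function of the entire random formula, so one cannot pretend that the other $k-1$ positions of the clauses supported by $x$ are independent uniform samples from $V$. The two-round revelation above (skeleton first, residual configuration-model positions second), combined with Condition~2's uniform cap $s_x\leq 3k$, is what keeps $\Vol(S)$ controlled and what justifies the product bound $q^{s_x}$ via negative association. Any small correction terms from the configuration-model conditioning are absorbed into the $O_k(\cdot)$ in $q$, and the factor $k^6$ slack between $k^7/4^k$ (expectation) and $k^{13}/4^k$ (the target threshold) gives ample room for the Markov step.
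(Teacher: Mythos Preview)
Your two-round revelation is circular: the set $S=V\setminus R$ is \emph{not} determined by the supporting structure alone. The process defining $R$ asks, for each $x$, whether $x$ supports at least two clauses $\PHI_i\neq C_x$ all of whose variables lie in $R$, and answering this requires knowing the other $k-1$ variables of every critical clause except the reserved ones. Hence once you condition on $S$, those positions are no longer uniform, and the bound $\pr[x\text{ unattached}\mid s_x,\cE]\leq q^{s_x}$ is unjustified. You name this obstacle in your last paragraph, but neither ``negative association'' nor the cap $s_x\leq 3k$ addresses it: the problem is not correlation among clauses but that each individual ``clause hits $S$'' event is being conditioned on information that depends on that very clause.

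The paper's fix is precisely the reserved clause $C_x$: since the process explicitly excludes $C_x$ from $x$'s check, the set $R$ is measurable with respect to everything \emph{except} the $k-1$ non-supporting positions of the $C_x$'s, so those positions remain conditionally uniform given $F=V\setminus R$. The paper then uses that every $x\in R$ is automatically attached (because $R$ is self-contained), so $Y\leq|\{x\in F:C_x\text{ hits }F\}|$; with $|F|\leq k^5n/2^k$ and $\pr[C_x\text{ hits }F\mid F]\leq 3k^7/2^k$ this has conditional expectation $O_k(k^{12}n/4^k)$, and Chernoff via negative correlation of the disjoint $C_x$-events gives $1-o(1)$ in $n$. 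Note also that your Markov step yields only $O_k(k^{-6})$, a constant in $n$, so a concentration inequality would be needed even if the expectation bound were valid.
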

\begin{proof}
We may assume that $\vec d$ satisfies~(\ref{eqdegs}).
Let $F=V\setminus R$.
Then~(\ref{eqdegs}) ensures that
	$\frac{\Vol(F)}{krn}\leq\frac{2k^6}{2^k}.$
Therefore, for each of the ``special'' clause $\cC_x$ that we reserved for each $x$ that supports at least one clause
the probability of containing a variable from $F\setminus\cbc x$ is bounded by
	$$(1+o_k(1))k\cdot\frac{\Vol(F)}{krn}\leq\frac{3k^7}{2^k}.$$
Furthermore, these events are negatively correlated (due to the bound on $\Vol(F)$).
Since $|V\setminus R|\leq k^5n/2^k$ \whp\ by \Cor~\ref{Cor_selfcontained}, the assertion thus follows from Chernoff bounds.
\qed\end{proof}

Let us call a variable $x\in V$ \emph{$\xi$-rigid} in a solution $\sigma\in\cS(\PHId)$
if for any solution $\tau\in\cS(\PHId)$ with $\tau(x)\neq\sigma(x)$ we have $\dist(\sigma,\tau)\geq\xi n$.

\begin{corollary}\label{Cor_rigid}
\Whp\ a degree sequence $\vec d$ chosen from $\vec D$ has the following property.
Let $\sigma\in\cbc{0,1}^V$ and  let $\cA$ be the event that $\sigma\in\cS(\PHId)$ and that $\sigma$ satisfies
	Conditions 1.\ and 2.\ in Definition~\ref{Def_good}.
Moreover, let $Y$ be the number of variables that support a clause but that are $k^{-5}$-rigid.
%
%
Then 
	$$\pr_{\PHId}\brk{Y\leq n k^{13}4^{-k}~|~\cA}=1-o(1).$$
\end{corollary}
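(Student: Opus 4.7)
The plan is to deduce Corollary~\ref{Cor_rigid} from Corollary~\ref{Cor_attached} by establishing the deterministic implication \emph{attached $\Rightarrow$ $k^{-5}$-rigid}, once $\vec d$ has been conditioned on the high-probability properties from Lemma~\ref{Lemma_degs}, Lemma~\ref{Lemma_dense}, Corollary~\ref{Cor_selfcontained} and Corollary~\ref{Cor_attached}. Since Corollary~\ref{Cor_attached} already guarantees that at most $nk^{13}4^{-k}$ supporting variables are unattached (conditional on $\cA$, whp), the desired bound on $Y$ follows immediately: only unattached supporting variables can possibly fail to be $k^{-5}$-rigid.

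To prove the implication, fix an attached variable $x$ and any $\tau\in\cS(\PHId)$ with $\tau(x)\neq\sigma(x)$, and let $D=\{y\in V:\tau(y)\neq\sigma(y)\}$. I will show $|D|\geq n/k^5$, which delivers $\dist(\sigma,\tau)\geq k^{-5}n$. Because $x$ is attached, it supports a clause whose remaining $k-1$ variables all lie in $R$, and flipping $x$ alone would NAE-violate that clause, so at least one of those $k-1$ variables also lies in $D$; in particular $D\cap R\neq\emptyset$. Now iterate the argument: for every $y\in D\cap R$, the defining property of $R$ (produced by the process preceding Lemma~\ref{Lemma_selfcontained}) guarantees at least two supporting clauses of $y$ whose variables all lie in $R$. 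Since $y$ supports each such clause under $\sigma$ and $\tau(y)\neq\sigma(y)$, preserving NAE-satisfiability forces an additional variable of that clause into $D$, and because the entire clause lives in $R$, this additional variable lies in $D\cap R$.

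Thus $D\cap R$ is a \emph{dense} set in the sense of Lemma~\ref{Lemma_dense}: every element supports at least two clauses, each featuring another element of the set. Applied conditional on $\cA$ (which is precisely the content of the lemma's statement), Lemma~\ref{Lemma_dense} rules out dense sets of size $\leq n/k^5$ whp. Hence $|D|\geq|D\cap R|\geq n/k^5$, so $x$ is $k^{-5}$-rigid, and the corollary follows. The main technical point to watch is the bookkeeping behind ``every $y\in D\cap R$ forces two further flips in $D\cap R$'': one needs the \emph{two} clauses provided by the definition of $R$ (a single clause would only yield a matching-type structure, not the density condition required by Lemma~\ref{Lemma_dense}), and one should verify that the forced extra variables indeed stay inside $R$ — both of which are built into the construction of $R$ and the definition of ``supports''. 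Everything else is routine, as the probabilistic content has been absorbed into Corollary~\ref{Cor_attached} and Lemma~\ref{Lemma_dense}.
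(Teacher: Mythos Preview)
Your proposal is correct and follows essentially the same route as the paper: establish that $D\cap R$ is dense in the sense of Lemma~\ref{Lemma_dense}, invoke that lemma to force $|D\cap R|\geq n/k^5$, and then fall back on Corollary~\ref{Cor_attached} to bound the non-attached supporting variables. The only cosmetic difference is that the paper first proves every variable in $R$ is $k^{-5}$-rigid and then separately observes that attachment transfers rigidity from $R$, whereas you fold both steps into a single argument for attached variables; the substance is identical.
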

\begin{proof}
We condition on the event $\cA$.
By \Cor~\ref{Cor_selfcontained}, we may assume that the self-contained set $R$  has size $|R|\geq(1-k^5/2^k)n$.
Assume that there is $\tau\in\cS(\PHId)$, $\dist(\sigma,\tau)< n/k^5$, such that
	$$\Delta=\cbc{x\in R:\tau(x)\neq\sigma(x)}$$
is non-empty.
Then $\Delta$ is dense.
Indeed, every $x\in \Delta$ supports at least two clauses, and thus $\Delta$ must contain another variable from each of them.
Thus, \Lem~\ref{Lemma_dense} shows that $\abs{\Delta}\geq n/k^5$, which is a contradiction.

Hence, \whp\ all variables $x\in R$ are $k^{-5}$-rigid.
Furthermore, if a variable $y$ is attached, then for any solution $\tau$ with $\tau(y)\neq\sigma(y)$ there is $x\in R$
such that $\tau(x)\neq\sigma(x)$.
Consequently, all attached variables are $k^{-5}$-rigid \whp\
Therefore, the assertion follows from \Cor~\ref{Cor_attached}.
\qed\end{proof}

To complete the proof, we need the following fairly simple lemma.

\begin{lemma}\label{Lemma_psi}
The expected number of pairs of solutions $\sigma,\tau\in\cS(\PHI)$ such that
	$\frac{n}{k^6}\leq\dist(\sigma,\tau)\leq(\frac12-2^{-k/2})n$
is $\leq\exp(-\Omega(n))$.
\end{lemma}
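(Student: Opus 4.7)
The plan is a direct first-moment calculation: compute the expected number of NAE-solution pairs at each integer distance $t$ in the stated range, and show that the exponential rate is strictly negative uniformly. By linearity of expectation and the independence of the $m$ clauses, for fixed assignments $\sigma,\tau$ at Hamming distance $\alpha n$,
\[
\Erw\brk{|\{(\sigma,\tau)\in\cS(\PHI)^2:\dist(\sigma,\tau)=t\}|} \;=\; 2^n\bink{n}{t}p_k(\alpha)^m,
\]
where a quick inclusion--exclusion on the ``all-true / all-false under $\sigma$ (resp.\ $\tau$)'' events yields
\[
p_k(\alpha) \;=\; 1 - 2^{2-k} + 2^{1-k}\bc{\alpha^k + (1-\alpha)^k}.
\]
Combining with the entropy bound $\bink{n}{\alpha n}\le\exp(nH(\alpha))$ from Proposition~\ref{prop:binomials}, this expectation is at most $\mathrm{poly}(n)\cdot\exp(n\phi(\alpha))$ with $\phi(\alpha) = \ln 2 + H(\alpha) + r\ln p_k(\alpha)$. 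It thus suffices to show that $\phi(\alpha)\le -c_k$ for some constant $c_k=c_k(r)>0$ uniformly on $[k^{-6},\tfrac12-2^{-k/2}]$, since summing over the $O(n)$ integer values of $t$ then gives $\exp(-\Omega(n))$.

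I would verify the uniform negativity of $\phi$ by splitting the range into two pieces at some fixed small constant $\gamma>0$. Near $\alpha=1/2$, writing $\eps=1/2-\alpha$, Taylor expansion at $\alpha=1/2$ gives $H(\alpha)=\ln 2 - 2\eps^2 + O(\eps^4)$ and $\alpha^k+(1-\alpha)^k = 2^{1-k}\bc{1+2k(k-1)\eps^2+O(k^4\eps^4)}$. Since $r=\Theta(2^k)$ and $p_k(1/2)=(1-2^{1-k})^2$, one computes $r[\ln p_k(\alpha)-\ln p_k(1/2)] = O_k(k^2\,2^{-k})\,\eps^2 + O_k(k^4\,2^{-k}\eps^4)$, which for $k$ large is negligible against the entropy contribution $-2\eps^2$. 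A direct evaluation using $r=2^{k-1}\ln 2-\rho$ with $\rho=O_k(1)$ gives $\phi(1/2)=O_k(2^{-k})$, hence for $|\eps|\le\gamma$,
\[
\phi(\alpha)\;\le\; O_k(2^{-k}) - (2-o_k(1))\eps^2 + O(\eps^4),
\]
which is $\le -\Omega(2^{-k})$ exactly when $|\eps|\ge 2^{-k/2}$. For $\alpha$ bounded away from $1/2$, i.e.\ $\alpha\in[k^{-6},\tfrac12-\gamma]$, I would instead use $\ln p_k(\alpha)\le -2^{1-k}\bc{2-\alpha^k-(1-\alpha)^k}$ together with $r\cdot 2^{1-k}=\ln 2 - O_k(2^{-k})$ to obtain
\[
\phi(\alpha)\;\le\; H(\alpha)-(\ln 2)\bc{1-\alpha^k-(1-\alpha)^k} + O_k(2^{-k}).
\]
At the left endpoint $\alpha=k^{-6}$, $H(\alpha)=O(k^{-6}\ln k)$ while $1-(1-\alpha)^k-\alpha^k \sim k^{-5}$, giving $\phi\le -(\ln 2+o_k(1))/k^5$. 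For $\alpha\in[1/k,\tfrac12-\gamma]$ both $(1-\alpha)^k$ and $\alpha^k$ are bounded away from $1$ uniformly, and $H(\alpha)\le H(\tfrac12-\gamma) < \ln 2$, so $\phi$ is bounded by a negative $\gamma$-dependent constant that dominates the $O_k(2^{-k})$ error. Straightforward interpolation covers $\alpha\in[k^{-6},1/k]$.

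The main obstacle is the delicate matching near $\alpha=1/2$: because $\phi(1/2)$ is a \emph{positive} $O_k(2^{-k})$ at the densities we care about, the quadratic $-2\eps^2$ only dominates once $|\eps|\ge 2^{-k/2}$, which is precisely the cutoff in the statement---any weakening on the upper limit of the distance range would break the argument. Everything else is routine first-moment bookkeeping: the sum over the at most $n$ integer values of $t$ in $[n/k^6,(\tfrac12-2^{-k/2})n]$ contributes a polynomial factor absorbed into the final $\exp(-\Omega(n))$ estimate, with the rate constant depending on $k$ (as permitted, $k$ being fixed).
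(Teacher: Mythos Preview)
Your approach is identical to the paper's: both compute $\frac1n\ln\Erw[P_\alpha]\le\phi(\alpha)=\ln 2+H(\alpha)+r\ln p_k(\alpha)$ with the same $p_k(\alpha)=1-2^{2-k}+2^{1-k}(\alpha^k+(1-\alpha)^k)$, and then assert $\phi<0$ on the stated interval. The paper in fact offers \emph{no} details on the calculus (it cites \cite{nae} for the formula and says the rest is ``a mere exercise''), so your analysis near $\alpha=1/2$---correctly identifying that $\phi(1/2)=\Theta_k(2^{-k})>0$ and that the entropy loss $-2\eps^2$ only dominates once $\eps\ge 2^{-k/2}$---is the right observation and more than the paper provides.

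There is, however, a genuine gap in your treatment of the middle range $\alpha\in[1/k,\tfrac12-\gamma]$. You invoke $H(\alpha)\le H(\tfrac12-\gamma)$ and ``$(1-\alpha)^k,\alpha^k$ bounded away from $1$'' to conclude $\phi<0$, but these two bounds are slack at opposite ends of the interval and combining them fails: at $\alpha=1/k$ one has $(1-\alpha)^k\to e^{-1}$, so your bound reads
\[
\phi(\alpha)\;\le\; H(\tfrac12-\gamma)-(\ln 2)(1-e^{-1})+O_k(2^{-k})\;\approx\;(\ln 2-2\gamma^2)-0.438,
\]
which is \emph{positive} for any small $\gamma$. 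The fix is one more split at a small absolute constant $c$ (e.g.\ $c=0.1$): on $[1/k,c]$ use $H(\alpha)\le H(c)<(\ln 2)(1-e^{-1})$ together with $(1-\alpha)^k\le e^{-1}$; on $[c,\tfrac12-\gamma]$ use $(1-\alpha)^k\le(1-c)^k=o_k(1)$ so that the bound becomes $H(\tfrac12-\gamma)-\ln 2+o_k(1)<0$. With this additional cut your argument is complete.
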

\begin{proof}
For a given $0\leq\alpha\leq 1$ let $P_\alpha$ denote the
 number of pairs $\sigma,\tau\in\cS(\PHI)$ with $\dist(\sigma,\tau)=\alpha n$
As worked out in~\cite{nae}, we have
	\begin{eqnarray*}
	\frac1n\ln\Erw\brk{P_\alpha}&\leq&
		\ln2-\alpha\ln\alpha-(1-\alpha)\ln(1-\alpha)+r\ln\bc{1-2^{2-k}+2^{1-k}(\alpha^k+(1-\alpha)^k)}.
	\end{eqnarray*}
It is a mere exercise in calculus to verify that the r.h.s.\ is strictly negative for all 
	$k^{-6}\leq\alpha\leq \frac12-2^{-k/2}$.
\qed\end{proof}

\begin{corollary}\label{Cor_psi}
\Whp\ a degree sequence $\vec d$ chosen from $\vec D$ has the following property.
The expected number of pairs of solutions $\sigma,\tau\in\cS(\PHId)$ such that
	$\frac{n}{k^6}\leq\dist(\sigma,\tau)\leq(\frac12-2^{-k/2})n$
is $\leq\exp(-\Omega(n))$.
\end{corollary}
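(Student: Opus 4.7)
The plan is to transfer the bound of \Lem~\ref{Lemma_psi} from the unconditioned random formula $\PHI$ to the quenched model $\PHId$ by exploiting the two-stage description of $\PHI$: first draw $\vec d$ from $\vec D$, then draw $\PHId$ given $\vec d$. Concretely, let $P$ denote the total number of pairs $(\sigma,\tau)\in\cS(\PHI)^2$ with $n/k^6\le\dist(\sigma,\tau)\le(\frac12-2^{-k/2})n$, and let $P_t$ denote the number of such pairs at exact distance $t$. Then
$$\Erw_\PHI\brk P=\sum_{n/k^6\le t\le(\frac12-2^{-k/2})n}\Erw_\PHI\brk{P_t}.$$

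First I would verify that \Lem~\ref{Lemma_psi} gives an exponential bound uniform in $t$. Inspecting its proof, the exponent
$$\psi(\alpha)=\ln 2-\alpha\ln\alpha-(1-\alpha)\ln(1-\alpha)+r\ln\bc{1-2^{2-k}+2^{1-k}(\alpha^k+(1-\alpha)^k)}$$
is continuous on $[k^{-6},\frac12-2^{-k/2}]$ and strictly negative throughout, hence attains a maximum $-c(k)<0$. Thus $\Erw_\PHI[P_t]\le\exp(-c(k)n)$ for every integer $t$ in this range, and summing over the at most $n+1$ such values yields $\Erw_\PHI[P]\le n\exp(-c(k)n)\le\exp(-c'(k)n)$ for a slightly smaller $c'(k)>0$ and $n$ large.

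Next I would pass to $\PHId$ via the two-stage identity $\Erw_\PHI[P]=\Erw_{\vec d\sim\vec D}\brk{\Erw_{\PHId}[P\mid\vec d]}$. Applying Markov's inequality to the nonnegative random variable $\vec d\mapsto\Erw_{\PHId}[P\mid\vec d]$ gives
$$\pr_{\vec d\sim\vec D}\bigl[\Erw_{\PHId}[P\mid\vec d]>\exp(-c'(k)n/2)\bigr]\le\exp(-c'(k)n/2)=o(1).$$
Therefore \whp\ over $\vec d$ one has $\Erw_{\PHId}[P\mid\vec d]\le\exp(-c'(k)n/2)=\exp(-\Omega(n))$, which is precisely the conclusion of \Cor~\ref{Cor_psi}.

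There is no serious obstacle here: the only point that deserves a line is the uniform negativity of $\psi(\alpha)$ on the closed interval, and the combinatorial content is already contained in \Lem~\ref{Lemma_psi}. The rest is a standard Markov/Fubini transfer from the Erd\H{o}s--R\'enyi-style model $\PHI$ to the configuration-model-style $\PHId$, mirroring exactly the transfer used earlier in \Cor~\ref{Cor_selfcontained}.
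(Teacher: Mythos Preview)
Your proposal is correct and matches the paper's (implicit) argument: the paper states \Cor~\ref{Cor_psi} without proof, relying on exactly the two-stage description $\Erw_\PHI[\,\cdot\,]=\Erw_{\vec d}\brk{\Erw_{\PHId}[\,\cdot\,\mid\vec d]}$ combined with Markov, precisely as in the transfer from \Lem~\ref{Lemma_selfcontained} to \Cor~\ref{Cor_selfcontained} that you cite. Your intermediate step about the uniform negativity of $\psi(\alpha)$ is not strictly needed, since \Lem~\ref{Lemma_psi} already bounds the \emph{total} expected number of pairs by $\exp(-\Omega(n))$, but it does no harm.
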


Combining 
	\Lem~\ref{Lemma_NAEsolution}, 
	\Prop~\ref{Prop_occ},
	\Cor~\ref{Cor_rigid}, and
	\Cor~\ref{Cor_psi},
we obtain

\begin{corollary}\label{Cor_nochange}
\Whp\ a degree sequence $\vec d$ chosen from $\vec D$ has the following property.
Let $\sigma\in\cbc{0,1}^V$ and  let $\cA$ be the event that $\sigma\in\cS(\PHId)$ and that $\sigma$ satisfies
	Conditions 1.\ and 2.\ in Definition~\ref{Def_good}.
Then
	$$\pr_{\PHId}\brk{\mbox{3.\ in Definition~\ref{Def_good} is satisfied}~|~\cA}=1-o(1).$$
\end{corollary}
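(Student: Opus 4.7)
Fix $\sigma\in\cbc{0,1}^V$ and condition on the event $\cA$ that $\sigma\in\cS(\PHId)$ and Conditions~1--2 of Definition~\ref{Def_good} are satisfied. The target is to prove that Condition~3, which bounds
	$$N(\sigma):=\abs{\cbc{\tau\in\cS(\PHId):\dist(\sigma,\tau)/n\leq\tfrac12-2^{-k/3}}},$$
holds with probability $1-o(1)$. I split the contribution to $N(\sigma)$ according to whether $\dist(\sigma,\tau)<n/k^{6}$ (the ``close'' range) or $n/k^{6}\leq\dist(\sigma,\tau)\leq(\tfrac12-2^{-k/3})n$ (the ``intermediate'' range); note that $\tfrac12-2^{-k/3}<\tfrac12-2^{-k/2}$, so the intermediate range sits inside the range to which \Cor~\ref{Cor_psi} applies. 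The idea is that the close range is controlled by rigidity (\Cor~\ref{Cor_rigid}), while the intermediate range is empty \whp\ by \Cor~\ref{Cor_psi}.

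For the close range, by \Cor~\ref{Cor_rigid} we may assume (with probability $1-o(1)$ given $\cA$) that the number of blocked, non-$k^{-5}$-rigid variables is at most $nk^{13}4^{-k}$. Condition~1 of Definition~\ref{Def_good} fixes the number of free variables to $(1-\beta)\exp(-\lambda)n$. For any $\tau$ at distance $<n/k^{6}<n/k^{5}$ from $\sigma$, every $k^{-5}$-rigid variable must receive the same value under $\tau$ as under $\sigma$; hence $\tau$ is determined by its values on the at most $(1-\beta)\exp(-\lambda)n+nk^{13}4^{-k}$ remaining variables. Thus the number of close $\tau$ is at most $2^{(1-\beta)\exp(-\lambda)n+nk^{13}4^{-k}}$. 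Taking natural logs and dividing by $n$ yields the bound $(1-\beta)\exp(-\lambda)\ln2+O_k(k^{13}4^{-k})$ demanded by Condition~3.

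For the intermediate range, \Cor~\ref{Cor_psi} tells us that, \whp\ over $\vec d$, the expected number of pairs of NAE-solutions at distance in $[n/k^{6},(\tfrac12-2^{-k/2})n]$ is $\exp(-\Omega(n))$. By Markov's inequality the unconditional probability that such a pair exists is $\exp(-\Omega(n))$. To conclude that \emph{conditional on $\cA$} the intermediate range is empty \whp, I combine this with the lower bound $\pr[\cA]=\exp(-n\ln2+O_k(n\cdot\mathrm{poly}(k)/2^{k}))$, which follows at once from \Lem~\ref{Lemma_NAEsolution} and \Prop~\ref{Prop_occ}: then
	$$\pr[\text{intermediate $\tau$ exists}\mid\cA]\leq\frac{\Erw[\text{pairs at intermediate distance}]}{\pr[\cA]}=o(1),$$
provided the exponent $-\Omega(n)$ of \Cor~\ref{Cor_psi} strictly exceeds $-n\ln2$. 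This is the main obstacle of the argument: one has to inspect the calculus proof of \Lem~\ref{Lemma_psi} and verify that the upper bound on $\tfrac1n\ln\Erw[P_\alpha]$ is strictly less than $-\ln2-\Omega_k(1)$ uniformly across $\alpha\in[k^{-6},\tfrac12-2^{-k/2}]$. A direct inspection at $\alpha=\tfrac12$ gives $-2\ln2+o_k(1)$ (the term $-H(\alpha)+\ln2$ vanishes while $r\ln(1-2^{2-k}+2^{1-k}(\alpha^{k}+(1-\alpha)^{k}))\sim-2\ln2$), and one checks by convexity that this bound is essentially monotone in $|\alpha-\tfrac12|$ throughout the intermediate range. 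Once this is established, combining the two range bounds yields Condition~3 with probability $1-o(1)$ conditional on $\cA$, completing the proof.
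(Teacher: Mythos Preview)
Your treatment of the close range (distance $<n/k^{6}$) via \Cor~\ref{Cor_rigid} is fine and is exactly what the paper intends: once all but $nk^{13}4^{-k}$ blocked variables are $k^{-5}$-rigid, any $\tau$ within distance $n/k^{5}$ can differ from $\sigma$ only on the free variables and the exceptional set, giving the required entropy bound.

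The intermediate range argument, however, has a genuine gap. You bound
\[
\pr\bigl[\exists\,\tau\text{ intermediate}\mid\cA\bigr]\;\leq\;\frac{\Erw[\text{pairs at intermediate distance}]}{\pr[\cA]},
\]
and then try to show that the numerator decays faster than $\exp(-n\ln2)$. That quantitative claim is false. Your inspection at $\alpha=\tfrac12$ contains a sign error: the formula from \Lem~\ref{Lemma_psi} is $\ln2+H(\alpha)+r\ln(\cdots)$, so at $\alpha=\tfrac12$ it equals $2\ln2-2\ln2+o_k(1)\approx0$, not $-2\ln2$. More to the point, near the right edge $\alpha=\tfrac12-2^{-k/3}$ of the intermediate range the expression is only $-\Theta(2^{-2k/3})$, which is nowhere close to $-\ln2$. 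So the exponent in \Cor~\ref{Cor_psi} does \emph{not} beat $-n\ln2$ uniformly, and your ratio blows up.

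The fix is a one-line symmetry observation that you are missing. With $\sigma$ fixed, let $Q_\sigma$ be the number of $\tau\in\cS(\PHId)$ at intermediate distance with $\sigma\in\cS(\PHId)$. Because the signs in $\PHId$ are uniform and independent, the law of $\PHId$ is invariant under flipping $\sigma$ to any other assignment; hence $\Erw[Q_\sigma]$ does not depend on $\sigma$, and summing over all $2^{n}$ choices gives $\Erw[Q_\sigma]=2^{-n}\Erw[P]$, where $P$ is the pair count of \Cor~\ref{Cor_psi}. Now
\[
\pr\bigl[\exists\,\tau\text{ intermediate}\mid\cA\bigr]
\;\leq\;\frac{\Erw[Q_\sigma]}{\pr[\cA]}
\;=\;\frac{2^{-n}\exp(-\Omega(n))}{2^{-n}\exp\!\bigl(O_k(n/2^{k})\bigr)}
\;=\;\exp(-\Omega(n)),
\]
using \Lem~\ref{Lemma_NAEsolution} and \Prop~\ref{Prop_occ} for the denominator. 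The $2^{-n}$ factors cancel, so you only need the exponent in \Cor~\ref{Cor_psi} to be negative, which is precisely what that corollary provides. This is the combination the paper has in mind.
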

Finally, \Prop~\ref{Prop_realFirst} 
is a direct consequence of 
	\Lem~\ref{Lemma_NAEsolution}, 
	\Prop~\ref{Prop_occ}, and
	\Cor~\ref{Cor_nochange}.

\subsection{Proof of \Prop~\ref{Prop_occ}}\label{Sec_occ}

Let us begin with establishing the probable properties of $\mathbf{d}$ that we will need.
\begin{lemma}
\label{lem:propD}
Let $\mathbf{d} = (d_1, \dots, d_n)$ be from the distribution $\mathbf{D}=\mathbf{D}(k, r, n)$. Then, with high probability, for any $0 \le \alpha \le (kr)^{1/2}$, the sequence $\mathbf d$ has the following properties. First, for all $i$ such that $|i - kr| \le \alpha \sqrt{kr}$
\begin{equation}
\label{eq:DmiddleDegs}
	D_{i} = \left|\left\{j:~ d_j = i\right\}\right| = (1+o(1)) \Pr[\Po(kr) = i] \, n.
\end{equation}
Moreover, the remaining variables satisfy
\begin{equation}
\label{eq:DExceptional}
	D^{\ge \alpha} = \left|\left\{j:~ |d_j - kr| \ge \alpha \sqrt{kr}\right\}\right| \le 2e^{-\alpha^2/2}n
	\quad\text{and}\quad
	\sum_{j \in D^{\ge \alpha}} d_j \le 2e^{-\alpha^2/2}(kr)^2n.
\end{equation}
\end{lemma}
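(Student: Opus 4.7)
The plan is to represent $\vec D$ via Poisson randomization and then apply standard Chernoff and bounded differences arguments. In the random formula $\PHI$, each of the $km$ literal slots independently chooses its underlying variable uniformly from $V$, so $\vec d$ has the multinomial distribution $\mathrm{Mult}(km;1/n,\ldots,1/n)$; equivalently $\vec d$ is distributed as $(N_1,\ldots,N_n)$ with $N_j$ iid $\Po(kr)$ conditioned on the event $\cS=\{\sum_j N_j=km\}$. A local central limit theorem for the Poisson distribution yields $\Pr[\cS]=\Theta(n^{-1/2})$, so any event of probability at most $n^{-C}$ for a large constant $C$ under iid Poissons has probability $o(1)$ under $\vec D$, and expectations of quantities that depend only mildly on $\sum_j N_j$ transfer with a $1+o(1)$ factor.

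For~\eqref{eq:DmiddleDegs}, the conditioning gives $\Erw_{\vec D}[D_i]=(1+o(1))\Pr[\Po(kr)=i]\cdot n$ uniformly in the admissible window, since $|i-kr|\le\alpha\sqrt{kr}\le kr$ stays inside the Gaussian regime of the local limit theorem. To concentrate $D_i$ I apply McDiarmid's bounded differences inequality to the $km$ independent uniform slot choices in the multinomial representation: altering one slot shifts $D_i$ by at most $2$, so $\Pr[|D_i-\Erw D_i|\ge t]\le 2\exp(-t^2/(2km))$, and taking $t=\sqrt{Ckm\log n}$ for a sufficiently large constant $C$ makes the failure probability polynomially small, absorbable by union bounds over the $O(kr)$ integer values of $i$ and over a suitable net of $\alpha$'s. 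When $\Pr[\Po(kr)=i]\cdot n\gg\sqrt{n\log n}$ this deviation is $o(\Erw D_i)$ and yields the desired $(1+o(1))$ factor; in the complementary regime $\Erw D_i$ is $n^{o(1)}$ and the statement reduces to $D_i=0$ whp, which follows via Markov.

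For~\eqref{eq:DExceptional}, the Chernoff bound $\Pr[|\Po(kr)-kr|\ge\alpha\sqrt{kr}]\le 2e^{-\alpha^2/(2+o_k(1))}$ valid on the entire range $\alpha\le\sqrt{kr}$ gives $\Erw_{\vec D}[D^{\ge\alpha}]\le(2-\Omega(1))e^{-\alpha^2/2}n$, and a second McDiarmid application (each slot changes $D^{\ge\alpha}$ by at most $2$) controls fluctuations on the scale $\sqrt{n\log n}$, absorbed into the slack in the constant whenever $e^{-\alpha^2/2}n\gg\sqrt{n\log n}$; in the complementary regime the expectation itself is $o(1)$ and a first-moment argument suffices. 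For the volume bound I decompose $\sum_{j\in D^{\ge\alpha}}d_j$ into the contribution of variables with $d_j\le 2kr$, trivially at most $2kr\cdot D^{\ge\alpha}\le 4kr\cdot e^{-\alpha^2/2}n$ and hence comfortably inside the target $2e^{-\alpha^2/2}(kr)^2 n$, and the contribution of variables with $d_j>2kr$. For the latter the identity $i\Pr[\Po(kr)=i]=kr\Pr[\Po(kr)=i-1]$ bounds the per-variable expected contribution by $kr\cdot e^{-\Omega(kr)}$, while the maximum degree is $O(kr+\log n)$ whp, so standard concentration disposes of this term.

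The main obstacle is extracting a statement that is \emph{uniform} in the continuous parameter $\alpha\in[0,\sqrt{kr}]$. I exploit that $D^{\ge\alpha}$ and $\sum_{j\in D^{\ge\alpha}}d_j$ are monotone non-increasing in $\alpha$ and that the targets $2e^{-\alpha^2/2}n$ and $2e^{-\alpha^2/2}(kr)^2 n$ are Lipschitz on scale $1/\sqrt n$, so it suffices to control the claims on a $1/\mathrm{poly}(n)$-net of $\alpha$; the integer $i$ ranges over only $O(kr)$ values, so a direct union bound handles it. Combined with the $\Theta(n^{-1/2})$ loss from conditioning on $\cS$, the polynomially small failure probabilities coming from the Chernoff and McDiarmid bounds comfortably absorb all the union bounds, yielding all three conclusions whp under $\vec D$.
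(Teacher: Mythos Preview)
Your approach is correct and shares its backbone with the paper's proof: both invoke the Poisson representation of $\vec D$ (iid $\Po(kr)$ conditioned on $\sum_j N_j=km$) together with the local limit estimate $\Pr[\cS]=\Theta(n^{-1/2})$ to transfer events from the iid model to $\vec D$. The difference lies in how the concentration is obtained. The paper stays entirely in the iid-Poisson world: once unconditioned, $D_i\sim\Bin(n,p_i)$, $D^{\ge\alpha}\sim\Bin(n,q_\alpha)$, and $\sum_{j\in D^{\ge\alpha}}d_j$ is a sum of $n$ iid copies of $P\,\mathbf{1}[|P-kr|\ge\alpha\sqrt{kr}]$; ordinary Chernoff/Bernstein bounds on these iid sums then yield $e^{-\Theta(n)}$ tails, which comfortably absorb both the $O(\sqrt n)$ conditioning cost and the $O(kr)$-fold union bound over $i$. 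Your route---McDiarmid on the $km$ independent slot choices in the multinomial model---is a legitimate alternative that works directly under $\vec D$ without any conditioning loss, at the price of extra bookkeeping (computing conditional means, the separate decomposition for the volume, and the explicit monotonicity/net argument in $\alpha$, which the paper simply omits). One simplification you can make: your ``complementary regime'' case for $D_i$ is vacuous here, because for every integer $i$ in the window $|i-kr|\le kr$ one has $\Pr[\Po(kr)=i]\ge\Pr[\Po(kr)=0]=e^{-kr}$, a fixed positive $k$-dependent constant, so $\Erw D_i=\Theta_k(n)\gg\sqrt{n\log n}$ always holds and the main-regime argument already covers everything.
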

\begin{proof}
Let $P_1, \dots, P_n$ be independent $\Po(kr)$ random variables, and note that the joint distribution of $(d_1, \dots, d_n)$ and $(P_1, \dots, P_n)$, conditional on $\sum_{1 \le i \le n}P_i = krn$, coincide. Since the expectation of the sum of the $P_i$'s equals $krn$, Lemma~\ref{lem:locallimit} applied with $\delta = 0$ implies that for any event $\cal E$ we have that
\[
	\Pr[\mathbf{d} \in {\cal E}] = \Pr\left[(P_1, \dots, P_n) \in {\cal E} ~\big|~ \sum_{1 \le i \le n}P_i = krn \right] = O(n^{1/2})\,\Pr[(P_1, \dots, P_n) \in {\cal E}].
\]
In other words, it sufficient to show that the statements in the lemma hold with probability $1 - o(n^{-1/2})$ for a sequence of \emph{independent} Poisson random variables. The statements the follow from the Chernoff bounds and the fact that for any $\lambda = kr$ and $\alpha$ as assumed
\[
	\Pr[\Po(\lambda) \ge \alpha \sqrt{\lambda}] \le 2e^{-\alpha^2}
	~~\text{ and }~~
	\sum_{j :~ |j - \lambda| \ge \alpha\sqrt{\lambda}} j \Pr[\Po(\lambda) = j] \le 2e^{-\alpha^2/2} \lambda^2.
\]
\qed
\end{proof}
The aim of this section is to show that for any $\mathbf d$ satisfying the conclusions of Lemma~\ref{lem:propD}
\begin{equation}
\label{eq:expDegSeq}
\frac1n\ln \Pr[\sigma\mbox{ has Properties 1.\ and 2.\ from Definition~\ref{Def_good}} ~|~ \sigma \in {\cal S}(\Phi_{\mathbf d})] = f(\beta) + O_k(4^{-k}),
\end{equation}
i.e., \Prop~\ref{Prop_occ} holds. We will assume that $\sigma = \mathbf{1}$ throughout.

First of all, let $C$ denote the number of critical clauses. Given that $\mathbf{1}$ is a NAE-satisfying assignement, then there are for each clause in total $2^k - 2$ ways to choose the signs of the variables, each one of them being equally likely. Since the number of ways to choose the signs so as to obtain a critical clause is $2k$, the probability that a given clause is critical is $k / (2^{k-1}-1)$. Moreover, the events that different clauses are critical are independent, implying that $C$ is distributed like $\Bin(m, k/(2^{k-1}-1))$.

Note that
$
	\Exp[C ~|~ {\mathbf 1} \in {\cal S}(\Phi_{\mathbf{d}})] = m \cdot {k}/({2^{k-1}-1}) = \lambda n
$. By applying Lemma~\ref{lem:locallimit} with $\delta = 0$ we thus obtain that
\[
	\Pr[C = \lambda n ~|~ {\mathbf 1} \in {\cal S}(\Phi_{\mathbf{d}})] = \Theta(n^{-1/2}).
\]
It follows that the probability in~\eqref{eq:expDegSeq} equals
\begin{equation}
\label{eq:condoncriticalclauses}
	\Theta(n^{-1/2}) \cdot \Pr[{\mathbf 1} \text{ is $\beta$-heavy and no variable supports $\ge 3k$ clauses} ~|~ C = \lambda n \text{ and } {\mathbf 1} \in {\cal S}(\Phi_{\mathbf{d}})].
\end{equation}
In the sequel we adopt a different formulation of this probabilistic question that is based on the classical occupancy problem. Let us think of the variables as bins, such that the i$th$ bin has capacity $d_i$, where ${\mathbf{d}} = (d_1, \dots, d_n)$. In other words, we assume that the $i$th bin contains $d_i$ distinguished ``slots''. Then we throw randomly $\lambda n$ balls into the bins, i.e., the $j$th ball chooses uniformly at random one of the remaining $\sum_{1 \le i \le n} d_i - (j-1) = krn - j + 1$ available slots, for each $1 \le j \le \lambda n$. In this setting, the probability in~\eqref{eq:condoncriticalclauses} is equal to the probability that in the balls-into-bins game with the given capacity constraints the number of empty bins equals $(1-\beta)e^{-\lambda} n$, and no bin contains more than $2k$ balls. More precisely, let $R_i$, where $1 \le i \le n$, denote the number of balls selected from the $i$th bin. Then, the probability in~\eqref{eq:condoncriticalclauses} equals
\begin{equation*}
\label{eq:bbprob}
	\Pr[{\cal A} \text{ and } {\cal B}],
	\quad 
	\text{where}
	\quad
	{\cal A} = \text{``}|\{i~:~ R_i = 0\}| = (1 - \beta)e^{-\lambda}n\text{''}
	~\text{ and }~
	{\cal B} = \text{``}\forall 1\le i \le n:~R_i \le 3k\text{''}.
\end{equation*}
We will show that the probability above is $\exp\{(f(\beta) + O_k(4^{-k}))n\}$, which together with~\eqref{eq:condoncriticalclauses} completes the proof of~\eqref{eq:expDegSeq}.

In order to compute the probability of the event ``$\cal A$ and $\cal B$'' we resort to the following experiment. Instead of throwing $\lambda n$ balls into the available slots, we decide for each slot \emph{independently} with probability $\lambda / kr$ whether if receives a ball or not. Let $T$ be the total number of balls that are thrown in this setting, and let $B_i \sim \Bin(d_i, \lambda/kr)$ be the number of balls that the $i$th bin received. Since the total number of slots is $krn$, we have that $\Exp[T] = \lambda n$. Moreover, conditional on any value of $T$, the $T$ slots that receive a ball are a random subset of size $T$ of all available slots. Thus, conditional on ``$T = \lambda n$'' the joint distributions of $(R_1, \dots, R_n)$ and $(B_1, \dots, B_n)$ coincide, and by abbreviating $X_i = |\{j : B_j = i\}|$ we obtain that
\begin{equation}
\label{eq:tbd}
	\Pr[{\cal A} \text{ and } {\cal B}] = \Pr\left[X_0 = (1 - \beta)e^{-\lambda}n \text{ and } X_{>3k} = 0 ~|~ T = \lambda n\right].
\end{equation}
Before we estimate the latter probability, let us give some intuitive explanation why this should be equal to $e^{(f(\beta) + O_k(4^{-k}))n}$, i.e., why the conclusion of the proposition is true. Our assumption on the bin capacities~\eqref{eq:DmiddleDegs} guarantees that most bins have a capacity very close to $kr \approx k2^{k-1}\ln2$. Recall also that the probability that any slot receives a ball is $\lambda/kr \approx 2^{-k+1}$. This means that the expected number of balls that a typical bin receives is $\approx k$, which is far smaller than the capacity of that bin. But we can say even more: since the  number of balls that are received by a typical bin is $ \approx\Bin(kr, \lambda/kr)$, and the expected value is far less than $kr$, it is reasonable to assume that this number can be approximated well by a $\Po(\lambda)$ distribution. So, the probability that a bin remains empty is close to $e^{-\lambda}$, and then the probability that the number of empty bins is exactly $(1-\beta)e^{-\lambda}n$ should be close to $\Pr[\Bin(n, e^{-\lambda}) = (1-\beta)e^{-\lambda}n]$.
The argument then completes by applying Lemma~\ref{lem:auxf}.

Let us now put the above intuitive reasoning on a rigorous ground. First of all, note that in the right-hand side of~\eqref{eq:tbd} the condition ``$T = \lambda n$'' is \emph{global}, in the sense that it binds the values of all variables $B_1, \dots, B_n$. We can get rid of this global restriction by applying the law of total probability. We obtain that
\begin{equation}
\label{eq:binomapprox}
\begin{split}
	\Pr[{\cal A} \text{ and } {\cal B}]
	&= \frac{\Pr\left[T = \lambda n \text{ and } X_0 = (1 - \beta)e^{-\lambda}n \text{ and } X_{>3k} = 0\right]}{\Pr[T = \lambda n]} \\
	&= \Pr\left[T = \lambda n \text{ and } X_0 = (1 - \beta)e^{-\lambda}n ~|~ X_{>3k} = 0\right] \, \frac{\Pr[X_{>3k} = 0]}{\Pr[T = \lambda n]}.
\end{split}
\end{equation}
The remainder of the proof is devoted to showing the following bounds.
\begin{eqnarray}
 \Pr[T = \lambda n] &  = & \Theta(n^{-1/2}), \label{eq:upperbound3} \\
\label{eq:lowerbound2}
 \Pr[X_{>3k} = 0] & \ge & e^{-O_k(4^{-k})n},  \\
\label{eq:lowerbound1}
 \Pr\left[T = \lambda n \text{ and } X_0 = (1 - \beta)e^{-\lambda}n ~|~ X_{>3k} = 0\right] & \ge & \Pr[\Bin(n,e^{-\lambda}) = (1 - \beta)e^{-\lambda}n] \cdot e^{-O_k(k4^{-k})n}. ~~~
\end{eqnarray}
The three inequalities together with~\eqref{eq:binomapprox} imply that
\[
	\Pr\left[{\cal A} \text{ and } {\cal B}\right] 
	\ge \Pr[\Bin(n, e^{-\lambda}) = (1 - \beta)e^{-\lambda}n] \cdot e^{-O_k(k4^{-k})},
\]
and the proof of the proposition is completed after applying Lemma~\ref{lem:auxf}.

In the remainder of the proof we will write $\mathcal{D}_i$ for the set of bins with capacity $i$ and $\mathcal{D}^{\ge \alpha}$ for the set of bins with capacity smaller than $kr - \alpha\sqrt{kr}$ or larger than $kr + \alpha\sqrt{kr}$, and note that $|\mathcal{D}_i| = D_i$ and $|\mathcal{D}^{\ge \alpha}| = D^{\ge \alpha}$.

~\\
\noindent
\emph{Proof of~\eqref{eq:upperbound3}.}
Since $T$ is distributed like $\Bin(krn, \lambda/kr)$ we have that $\Exp[T] = \lambda n$. The result then follows by applying Lemma~\ref{lem:locallimit} with $\delta = 0$ to $T$.

~\\
\noindent
\emph{Proof of~\eqref{eq:lowerbound2}.}
Recall that the number of bins with capacity $i$ is denoted by $D_i$. Since the number of balls in a bin with capacity $i$ is distributed like $\Bin(i, \lambda/kr)$, and these variables are all independent, we obtain that
\begin{equation}
\label{eq:prallsmaller3k}
\begin{split}
	\Pr[X_{>3k} = 0] &= \prod_{i\ge 0} \Pr\left[\Bin\left(i, {\lambda}/{kr}\right) \le 3k\right]^{D_i} \\
	&\ge \prod_{i:~ |i-kr|< k\sqrt{kr}} \Pr\left[\Bin\left(i, {\lambda}/{kr}\right) = 0\right]^{D_i} \cdot \prod_{i:~ |i-kr|\ge k\sqrt{kr}}\Pr\left[\Bin\left(i, {\lambda}/{kr}\right) \le 3k\right]^{D_i}.
\end{split}
\end{equation}
Our assumption~\eqref{eq:DExceptional} guarantees that $\mathbf d$ is such that $$\sum_{i:~|i-kr| \ge k\sqrt{kr}} iD_i = \sum_{j \in \mathcal{D}^{\ge k}} d_j \le 2e^{-k^2/2}(kr)^2n.$$ Thus, if $k$ is sufficiently large, the last term in~\eqref{eq:prallsmaller3k} can be bounded with
\begin{equation}
\label{eq:exceptionalinegligible}
	\prod_{i:~|i-kr| \ge k\sqrt{kr}} \Pr\left[\Bin\left(i, {\lambda}/{kr}\right) = 0\right]^{D_i}
	= \prod_{j \in\mathcal{D}^{\ge k}} \left(1- \frac{\lambda}{kr}\right)^{d_j}
	= \left(1- \frac{\lambda}{kr}\right)^{\sum_{j \in \mathcal{D}^{\ge k}} d_j}
	\ge e^{-e^{-k^2/3}n}.
\end{equation}
Let us now consider the terms involving all $i$ such that $|i-kr|< k\sqrt{kr}$ in~\eqref{eq:exceptionalinegligible}. By using the estimate $\binom{a}{b} \le (ea/b)^b$ we infer that for any such $i$ and sufficiently large $k$ we have
\begin{equation}
\label{eq:tailPo}
	\Pr\left[\Bin\left(i, \frac{\lambda}{kr}\right) > 3k\right] \le \binom{i}{3k}\left(\frac{\lambda}{kr}\right)^{3k} \le \left(\frac{e i }{3k} \frac{\lambda}{kr}\right)^{3k} \le \left(\frac{e kr(1+o_k(1)) }{3k} \frac{k \ln 2 (1+o_k(1))}{kr}\right)^{3k} \le 4^{-k}.
\end{equation}
Thus, since $\sum_{i \ge 0} D_i = n$, by using the fact $1 - x = e^{-x - \Theta(x^2)}$, valid for all $|x| \le 1$,
\begin{equation*}
	\prod_{i:~|i-kr| < k\sqrt{kr}} \Pr\left[\Bin\left(i, {\lambda}/{kr}\right) \le 3k \right]^{D_i}
	\ge \prod_{i:~|i-kr| < k\sqrt{kr}} (1-4^{-k})^{D_i} = e^{-4^{-k}n - \Theta_k(4^{-2k})n}.
\end{equation*}
This result, together with~\eqref{eq:exceptionalinegligible} and~\eqref{eq:prallsmaller3k} finally prove~\eqref{eq:lowerbound2}.

~\\
\noindent
\emph{Proof of~\eqref{eq:lowerbound1}.} Note that
\begin{equation}
\label{eq:refIntuition}
	\Pr[\Bin(n, e^{-\lambda}) = (1-\beta)e^{-\lambda}n] = \binom{n}{(1-\beta)e^{-\lambda}n} (e^{-\lambda})^{(1-\beta)e^{-\lambda}n} (1 - e^{-\lambda})^{(1 -(1-\beta)e^{-\lambda})n}.
\end{equation}
In the following proof we will approximate the probability of the event ``$T = \lambda n \text{ and } X_0 = (1 - \beta)e^{-\lambda}n$'', conditional on $X_{>3k} = 0$, by the right-hand side of the above equation times an error term, which is of order $\exp\{-O_k(k4^{-k})n\}$. In particular, we will identify the most relevant objects that contribute precisely these terms to the desired probability.

In order to prove a lower bound for the probability of the event ``$T = \lambda n \text{ and } X_0 = (1 - \beta)e^{-\lambda}n$'' we will consider only specific configurations of balls that lead to the desired outcome. More precisely, let $\mathbf{b} = (b_1, \dots, b_n)$ denote a possible outcome of the random experiment that we study, where $b_i$ denotes the number of balls in the $i$th bin. We will call $\mathbf{b}$ \emph{balanced} if it has the following properties:
\begin{enumerate}
	\item Let $j \in \mathcal{D}_i$, where $|i - kr| \ge k\sqrt{kr}$. Then $b_j = 0$. Informally, the $D^{\ge k}$ bins with ``too small'' or ``too big'' capacities are empty.
	\item Let $\mathcal{D}_i'$ denote the set of bins in $\mathcal{D}_i$ that do not receive a ball. For all $i$ such that $|i - kr| < k\sqrt{kr}$
	\[
		D_i' = |\mathcal{D}_i'| = \frac{D_i\big((1-\beta)e^{-\lambda} - D^{\ge k}/n\big)}{1 - D^{\ge k}/n}.
	\]
	Informally, the fraction of empty bins among those in $\mathcal{D}_i$ is the same (and approximately equal to $(1-\beta) e^{-\lambda}$) for all relevant $i$.
	\item Let $T_i$ denote the total number of balls in all bins in $\mathcal{D}_i$. Then, for all $i$ such that $|i - kr| < k\sqrt{kr}$
	\[
		T_i = t_i = \frac{D_i-D_i'}{1 - (1-\beta)e^{-\lambda}} \, \frac{\lambda i}{kr} \cdot x,
	\]
	where $x$ is chosen such that the sum of all $t_i$ is $\lambda n$. As we shall see later, see~\eqref{eq:xapprox}, $x$ is very close to 1. Then again, informally this requires that the fraction of balls in the bins in $\mathcal{D}_i$ is approximately $\lambda$ for all relevant $i$.
	\item For all $1 \le i \le n$ we have $b_i \le 3k$, i.e., $X_{>3k}(\mathbf{b}) = 0$.
\end{enumerate}
By our construction, note that if $\mathbf{b}$ is balanced, then $X_0(\mathbf{b}) = (1-\beta)e^{-\lambda} n$ and $T(\mathbf{b}) = \lambda n$. Thus,
\begin{equation}
\label{eq:redBalanced}
\Pr[T = \lambda n \text{ and } X_0 = (1 - \beta)e^{-\lambda}n ~|~ X_{>3k} = 0] \ge 
\Pr[(B_1, \dots, B_n) \text{ is balanced}].
\end{equation}
In the sequel we will estimate the latter probability. First of all, note that the number of ways to choose the empty bins in a balanced $\mathbf b$ is
\begin{equation}
\label{eq:choicesbalanced}
	\prod_{i:~|i-kr| < k\sqrt{kr}} \binom{D_i}{D_i'}.
\end{equation}
Note that bins contained in $\mathcal{D}^{\ge k}$ do not have to be counted explicitly, since they are contained in the set of empty bins per definition.  Let us write $\Bin_{i,j}(N,p)$ for a binomially distributed random variable that is conditioned on being in the interval $[i,j]$. Then,
after having fixed the locations of the empty bins, the probability that $(B_1, \dots, B_n)$ is balanced with precisely the chosen set of empty bins is
\begin{equation}
\label{eq:probbalanced}
	\prod_{i: |i-kr| \ge k\sqrt{kr}} \Pr\left[\Bin_{0,3k}\Big(i, \frac{\lambda}{kr}\Big) = 0\right]^{D_i}
	\cdot
	\prod_{i: |i-kr| < k\sqrt{kr}} \Pr\left[\Bin_{0,3k}\Big(i, \frac{\lambda}{kr}\Big) = 0\right]^{D_i'}
	\Pr\left[\mathcal{T}_i ~|~ X_{>3k} = 0\right],
\end{equation}
where $\mathcal{T}_i$ is the event ``$T_i = t_i$ and $\forall j\in \mathcal{D}\setminus \mathcal{D}_i':~ B_j \ge 1$''. Let $T_i'$ be a sum of $D_i - D_i'$ independent variables, which are distributed like $\Bin_{1,3k}(i, \lambda/rk)$. Then
\begin{equation}
\label{eq:ti'}
	\Pr[\mathcal{T}_i ~|~ X_{>3k} = 0] = \Pr\left[T_i' = \frac{D_i-D_i'}{1 - (1-\beta)e^{-\lambda}} \, \frac{\lambda i}{kr} \cdot x\right] \, \Pr[\Bin_{0,3k}(i, \lambda/rk) \ge 1]^{D_i - D_i'}.
\end{equation}
The probability that $(B_1, \dots, B_n)$ is balanced is then the product of the terms in~\eqref{eq:choicesbalanced} and~\eqref{eq:probbalanced}. In the remaining proof we will estimate the five terms in~\eqref{eq:choicesbalanced}--\eqref{eq:ti'}.

We begin with estimating the product in~\eqref{eq:choicesbalanced}. Let $\alpha$ be such that $D_i' = \alpha D_i$, and note that $\alpha$ is independent of $i$. Since $0 \le D^{\ge k} \le 2e^{-k^2/2} n$, see~\eqref{eq:DExceptional}, we obtain that
\begin{equation}
\label{eq:boundAlpha}
	\alpha = \frac{(1-\beta)e^{-\lambda} - D^{\ge k}/n}{1 - D^{\ge k}/n} = (1 - \beta)e^{-\lambda} + \Theta(1) \, e^{-k^2/2}.
\end{equation}
By applying Proposition~\ref{prop:binomials} with $\alpha = (1-\beta)e^{-\lambda}$ and $\eps = \Theta(1) \, e^{-k^2/2}$ we infer that
\[
	\prod_{i: |i-kr| < k\sqrt{kr}} \binom{D_i}{D_i'}
	= \prod_{i: |i-kr| < k\sqrt{kr}} \frac{\Theta(1)}{\sqrt{\alpha(1-\alpha) D_i}}e^{(H(\alpha) + O_k(ke^{-k^2/2}))D_i} = e^{H(\alpha)(n - D^{\ge k}) + O_k(ke^{-k^2/2})n} .
\]
By using once more the fact $0 \le D^{\ge k} \le 2e^{-k^2/2} n$ and by applying Proposition~\ref{prop:binomials} we infer that
\begin{equation}
\label{eq:choicesBalFinal}
	\prod_{i: |i-kr| < k\sqrt{kr}} \binom{D_i}{D_i'} = \binom{n}{(1-\beta)e^{-\lambda} n} \cdot e^{O_k(e^{-k^2/3})n}.
\end{equation}
This estimate contributes the binomial coefficient in~\eqref{eq:refIntuition} to our lower bound for the probability in~\eqref{eq:redBalanced}. It remains to bound the expression in~\eqref{eq:probbalanced}. Let us begin with considering the first product, which accounts for all $i$ that deviate significantly from $kr$. Since $\Pr[\Bin_{i,j}(N,p) = \ell] \ge \Pr[\Bin(N,p) = \ell]$ for all $N,p,i,j$ and $i \le \ell \le j$
we have
\begin{equation}
\label{eq:exceptionalpfinal}
	\prod_{i:~|i-kr| \ge k\sqrt{kr}} \Pr\left[\Bin_{0,3k}\left(i, \frac{\lambda}{kr}\right) = 0\right]^{D_i}
	\stackrel{\eqref{eq:exceptionalinegligible}}{\ge}
	e^{-e^{-k^2/3}n}.
\end{equation}
Let us consider the middle term in~\eqref{eq:probbalanced}. Using again the property $\Pr[\Bin_{i,j}(N,p) = \ell] \ge \Pr[\Bin(N,p) = \ell]$ and the facts $1-x = e^{-x - \Theta(x^2)}$ and $\lambda = k \ln 2 + O_k(k2^{-k})$ and $r = 2^{k-1}\ln 2 - c$ we obtain
\[
	\prod_{i: |i-kr| < k\sqrt{kr}} \Pr\left[\Bin_{0,3k}\left(i, \frac{\lambda}{kr}\right) = 0\right]^{D_i'}
	\ge \exp\left\{-\left(\frac{\lambda}{kr} + O_k(4^{-k})\right)\alpha \sum_{i:~|i-kr| < k\sqrt{kr}} i D_i\right\}.
\]
By using again the property of $\mathbf d$ in~\eqref{eq:DExceptional} we infer that
\[
	\sum_{i: |i-kr| < k\sqrt{kr}} i D'_i = krn - \sum_{j \in \mathcal{D}^{\ge k}} d_j
	= krn - O_k(e^{-k^2/2})n.
\]
Recall that $\alpha = (1-\beta)e^{-\lambda} + \Theta(1) \, e^{-k^2/2}$. Thus the middle term in~\eqref{eq:probbalanced} is at least
\begin{equation}
\label{eq:pFinal}
	\prod_{i: |i-kr| < k\sqrt{kr}} \Pr\left[\Bin_{0,3k}\left(i, \frac{\lambda}{kr}\right) = 0\right]^{D_i'}
	\ge (e^{-\lambda})^{(1-\beta)e^{-\lambda}n} \cdot e^{-O_k(k4^{-k})}.
\end{equation}
This estimate contributes the $(e^{-\lambda})^{(1-\beta)e^{-\lambda}n}$ term in~\eqref{eq:refIntuition} to our lower bound for the probability in~\eqref{eq:redBalanced}. We finally consider the probability of the event $\mathcal{T}_i$ in~\eqref{eq:probbalanced}, c.f.\ also~\eqref{eq:ti'}. The last term in~\eqref{eq:ti'} can be bounded as follows. First, note that
\[
\prod_{i: |i-kr| < k\sqrt{kr}} \Pr[\Bin_{0,3k}(i, \lambda/rk) \ge 1]^{D_i - D_i'}
\ge
\prod_{i: |i-kr| < k\sqrt{kr}} \Pr[1 \le \Bin(i, \lambda/rk) \le 3k]^{D_i - D_i'}
\]
By using~\eqref{eq:tailPo} and the fact $1 - x = e^{-x - \Theta(x^2)}$, where $0 \le x \le 1$, we obtain
\[
\begin{split}
	\Pr[1 \le \Bin(i, \lambda/kr) \le 3k] \ge 1 - (1-\lambda/kr)^i - 4^{-k}
	=\exp\{-(1 - \lambda/kr)^i + O_k(4^{-k})\}.
\end{split}
\]
With this estimate at hand we can bound the last term in~\eqref{eq:ti'}. We get that
\[
	\prod_{i: |i-kr| < k\sqrt{kr}} \Pr\left[\Bin_{0,3k}\Big(i, \frac{\lambda}{kr}\Big) \ge 1\right]^{D_i - D_i'}
	\ge \exp\left\{-(1-\alpha) \sum_{i: |i-kr| < k\sqrt{kr}} (1 - \lambda/kr)^i D_i + O_k(4^{-k})n\right\}.
\]
Our assumption~\eqref{eq:DmiddleDegs} on $\mathbf{d}$ guarantees that $D_i = (1+o(1))\Pr[\Po(kr)=i]n$. Thus, the sum in the previous equation is at most
\[
	(1+o(1)) n \, \sum_{i\ge 0} (1 - \lambda/kr)^i \, \Pr[\Po(kr) = i] = (1+o(1)) e^{-\lambda} n,
\]
from which we get that, by applying again the fact $1 - x = e^{-x - \Theta(x^2)}$,
\begin{equation}
\label{eq:1-pfinal}
	\prod_{i: |i-kr| < k\sqrt{kr}} \Pr\left[\Bin_{0,3k}\Big(i, \frac{\lambda}{kr}\Big) \ge 1\right]^{D_i - D_i'}
	\ge (1 - e^{-\lambda})^{(1 - (1-\beta)e^{-\lambda})n} \cdot e^{-O_k(k4^{-k})}.
\end{equation}
This estimate contributes the last missing term in~\eqref{eq:refIntuition} to our lower bound for the probability in~\eqref{eq:redBalanced}.

It remains to bound the probability for the event ``$T_i' = t_i$'' in~\eqref{eq:ti'}, for all $i$ with the property $|i - kr|<k\sqrt{kr}$. Recall that $t_i = \frac{D_i-D_i'}{1 - (1-\beta)e^{-\lambda}} \, \frac{\lambda i}{kr} \cdot x$, where $x$ is such that the sum of the $t_i$'s is $\lambda n$. Let us begin with estimating the value of $x$. Note that
\[
	\lambda n = x \sum_{i:~|i - kr|<k\sqrt{kr}}t_i
	= \frac{x\lambda(1-\alpha)}{kr(1-(1-\beta)e^{-\lambda})}\sum_{i:~|i - kr|<k\sqrt{kr}}iD_i.
\]
Recall~\eqref{eq:boundAlpha}, which guarantees that $\alpha = (1-\beta)e^{-\lambda} + \Theta(1)e^{-k^2/2}$. Moreover, the property~\eqref{eq:DExceptional} allows us to assume for large $k$ that $\sum_{j \in \mathcal{D}^{\ge k}} d_j \le e^{-k^2/3}n$. Thus, the above equation simplifies to
\begin{equation}
\label{eq:xapprox}
	\lambda n
	= \frac{x\lambda(1-(1-\beta)e^{-\lambda} + O_k(e^{-k^2/2}))}{kr(1-(1-\beta)e^{-\lambda})} (1 - O_k(e^{-k^2/3}))krn
	\implies
	x = 1 + O_k(e^{-k^2/3}).
\end{equation}
Let us now return to our original goal of estimating the probability for the event ``$T_i' = t_i$'' in~\eqref{eq:ti'}. Recall that $T_i'$ is the sum of $D_i-D_i'$ independent variables, all distributed like $\Bin_{1,3k}(i,\lambda/kr)$. We will apply Lemma~\ref{lem:locallimit}. First of all, note that
\[
	\Exp[\Bin_{1,3k}(i,\lambda/kr)] = \frac{\frac{i\lambda}{kr} - \sum_{j > 3k}j\Pr[ \Bin(i,\lambda/kr) =j]}{\Pr[1\le \Bin(i,\lambda/kr) \le 3k]} = \frac{i\lambda}{kr} + \Theta_k(k2^{-k}),
\]
and similarly, since $i = \Theta(1) kr$, that
\[
	\sigma^2 = \Var[\Bin_{1,3k}(i,\lambda/kr)] = \Theta(1) \, \frac{i\lambda}{kr} = \Theta(\lambda).
\]
Thus, the event ``$T_i' = t_i$'' is equivalent to $\text{``} T_i' = (D_i - D_i')(\Exp[\Bin_{1,3k}(i,\lambda/kr)] + \Theta_k(k^{1/2}2^{-k})\sigma) \text{''}$. By applying Lemma~\ref{lem:locallimit} we arrive at
\[
	\prod_{i:~|i - kr|<k\sqrt{kr}} \Pr[T_i' = t_i]
	= \exp\left\{\sum_{i:~|i - kr|<k\sqrt{kr}}(-\delta^2/2 + O(c\delta^3))(D_i - D_i')\right\} = \exp\{-O_k(k4^{-k})n\}.
\]
Combining this result with Equations~\eqref{eq:redBalanced}--\eqref{eq:ti'} and~\eqref{eq:choicesBalFinal}--\eqref{eq:1-pfinal} yields~\eqref{eq:lowerbound1}, as desired.

\section{Proof of \Lem~\ref{Lemma_centreTerms}}\label{Apx_centreTerms}

\subsection{Outline}

Let $\sigma=\vecone$ be the all-true assignment and let $\vec d$ be a degree sequence chosen from the distribution $\vec D$.
Let $\Sigma$ be the event that $\sigma$ is a $\beta$-good solution.
Furthermore, let $\Sigma'$ be the event that $\sigma$ is a solution that satisfies conditions 1.\ and 2.\ in Definition~\ref{Def_good}.

\begin{fact}\label{Fact_noBudge}
Let $\vec d$ be a degree sequence chosen from the distribution $\vec D$.
Then 
	$\pr\brk{\Sigma}\sim\pr\brk{\Sigma'}$ \whp
\end{fact}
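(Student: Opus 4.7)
My plan is to exhibit Fact~\ref{Fact_noBudge} as an immediate consequence of \Cor~\ref{Cor_nochange}. The first step is a purely definitional unpacking: by Definition~\ref{Def_good}, the event $\Sigma$ that $\sigma$ is $\beta$-good is the conjunction of ``$\sigma\in\cS(\PHId)$'' together with Conditions~1, 2 and 3 of that definition. The event $\Sigma'$, on the other hand, drops Condition~3. Hence $\Sigma=\Sigma'\cap\cbc{\text{Condition 3 holds}}$, so that, for every fixed $\vec d$,
\[
\pr_{\PHId}\brk{\Sigma}=\pr_{\PHId}\brk{\Sigma'}\cdot\pr_{\PHId}\brk{\text{Condition 3 holds}~|~\Sigma'}.
\]

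The second step is to match the conditioning event $\Sigma'$ with the event $\cA$ appearing in the statement of \Cor~\ref{Cor_nochange}: both are precisely the event ``$\sigma\in\cS(\PHId)$ and $\sigma$ satisfies Conditions~1 and 2''. Applying \Cor~\ref{Cor_nochange} therefore tells us that, whp\ over a degree sequence $\vec d$ drawn from $\vec D$, we have $\pr_{\PHId}\brk{\text{Condition 3 holds}~|~\Sigma'}=1-o(1)$. Plugging this into the chain-rule identity of the previous paragraph yields $\pr_{\PHId}\brk{\Sigma}=(1-o(1))\pr_{\PHId}\brk{\Sigma'}$, which is exactly the assertion $\pr\brk{\Sigma}\sim\pr\brk{\Sigma'}$ that the Fact demands.

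There is essentially no obstacle here, since the real work has already been performed in Section~\ref{Sec_realFirst}, where \Cor~\ref{Cor_nochange} was obtained by combining the rigidity estimate \Cor~\ref{Cor_rigid} with the pair-distance estimate \Cor~\ref{Cor_psi}. The only bookkeeping point worth verifying while writing out the argument is that the several ``whp over $\vec d$'' statements feeding into \Cor~\ref{Cor_nochange} refer to a common good event for the degree sequence, so that their intersection still has probability $1-o(1)$ under $\vec D$; this is immediate by a union bound over finitely many asymptotically almost sure properties of $\vec d$.
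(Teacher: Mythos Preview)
Your proposal is correct and follows exactly the same approach as the paper, which simply states that the fact is a direct consequence of \Cor~\ref{Cor_nochange}. You have merely spelled out the chain-rule identity $\pr\brk{\Sigma}=\pr\brk{\Sigma'}\cdot\pr\brk{\text{Condition 3}\mid\Sigma'}$ and the identification $\cA=\Sigma'$ that make this immediate.
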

\begin{proof}
This is a direct consequence of \Cor~\ref{Cor_nochange}.
\qed\end{proof}

Let $\cZ_\beta'(t)$ be the number of  solutions $\tau$ such that $\dist(\sigma,\tau)=t$ that satisfy conditions 1.\ and 2.\ in Definition~\ref{Def_good}.
Moreover' let $\cZ_\beta'$ be the number of all solutions $\tau$ that satisfy conditions 1.\ and 2.\ in Definition~\ref{Def_good}.
For $0\leq t\leq n/2$ we let
	$$\mu(t)=\Erw\brk{\cZ_\beta'(t)~|~\Sigma'}.$$
The main step of the proof lies in establishing the following proposition. 

\begin{proposition}\label{Prop_half}
There is a constant $c=c(k)>0$ such that
for $\vec d$ chosen from $\vec D$  the following two statements hold \whp
\begin{enumerate}
\item We have $\mu\bc{n/2}\leq \frac{c}{\sqrt n}\cdot \Erw\brk{\cZ_\beta'}$.
\item For any $\alpha\in\brk{\frac12-2^{-k/3},\frac12}$ we have
		$\mu\bc{\alpha n}\leq\exp\brk{-c\bc{\alpha-\frac12}^2n}\mu\bc{n/2}.$
\end{enumerate}
\end{proposition}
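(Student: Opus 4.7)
The plan is to execute the bijection/coloring scheme outlined for \Lem~\ref{Lemma_central} and exploit a Gaussian profile in $\alpha=t/n$ around $\alpha=\tfrac12$. Sample $\PHId$ via a uniformly random bijection $\vec\pi:\brk m\times\brk k\ra B$, where $B=\bigcup_{x\in V}\cbc x\times\cbc{1,\dots,d_x}$, together with an independent uniform sign vector $\vec s:\brk m\times\brk k\ra\cbc{\pm1}$. Fix $\sigma=\vecone$ and, for each $0\le t\le n/2$, a reference $\tau_t\in\cbc{0,1}^V$ with $\dist(\sigma,\tau_t)=t$. Encode the supporting slots of $\sigma$ and $\tau_t$ by colorings $g_\sigma,g_\tau:B\ra\cbc{\red,\blue}$, and let $p(\alpha,g_\sigma,g_\tau)$ denote the probability that $(\vec\pi,\vec s)$ realise these colorings and that both $\sigma,\tau_t$ satisfy Conditions~1.\ and 2.\ of Definition~\ref{Def_good}. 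Symmetry then gives
$$
\mu(\alpha n)=\binom n{\alpha n}\,\pr\brk{\Sigma'}^{-1}\sum_{g_\sigma,g_\tau}p(\alpha,g_\sigma,g_\tau).
$$

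The central step is the uniform Laplace estimate~(\ref{XeqLaplace1}): for every $g_\sigma,g_\tau$ and every $\alpha\in\brk{\tfrac12-2^{-k/3},\tfrac12}$,
$$
p(\alpha,g_\sigma,g_\tau)/p(\tfrac12,g_\sigma,g_\tau)\le\exp\brk{O_k(k^4/2^k)(\alpha-\tfrac12)^2 n}.
$$
To obtain this, I stratify the slots $(i,j)$ according to the quadruple $(g_\sigma\circ\vec\pi(i,j),g_\tau\circ\vec\pi(i,j),\sigma(\abs{\vec\pi(i,j)}),\tau_t(\abs{\vec\pi(i,j)}))$, count the bijections realising a prescribed per-clause class profile via bin-wise multinomial coefficients, and optimise the Stirling expansion by a saddle point. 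The log-probability is analytic in $\alpha$ on the interval above and symmetric under $\tau_t\mapsto\bar\tau_t$ (which is an NAE-solution iff $\tau_t$ is), so it is extremal at $\alpha=\tfrac12$; the $O_k(k^4/2^k)$ bound on its second derivative at that point comes from the fact that the $\sigma$--$\tau$ coupling only enters through the $(k-1)$-fold in-clause collision probabilities of supported slots, each of order $2^{-k}$.

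Part~2 is then immediate: multiplying the pointwise estimate by the sharp binomial ratio~(\ref{XeqLaplace2}), $\binom n{\alpha n}/\binom n{n/2}=\exp\brk{-(4-o_k(1))(\alpha-\tfrac12)^2 n}$, and summing over $g_\sigma,g_\tau$ (which preserves a pointwise inequality) yields $\mu(\alpha n)/\mu(n/2)\le\exp\brk{-c(\alpha-\tfrac12)^2 n}$ with $c=c(k)>0$ once $k$ is large enough that $4-O_k(k^4/2^k)>0$. For Part~1, evaluate the display for $\mu$ at $\alpha=\tfrac12$ and use $\binom n{n/2}/2^n=\Theta(1/\sqrt n)$ together with $\Erw\brk{\cZ_\beta'}=2^n\pr\brk{\Sigma'}$ to reduce the claim to the decorrelation estimate $\sum_{g_\sigma,g_\tau}p(\tfrac12,g_\sigma,g_\tau)\le C\cdot\pr\brk{\Sigma'}^2$. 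This estimate drops out of the same saddle-point analysis at $\alpha=\tfrac12$: there the four joint $(g_\sigma,g_\tau)$ support patterns per clause decouple up to $O_k(2^{-k})$ corrections, and the joint probability factorises to leading order.

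The main obstacle is the uniform-in-$g_\sigma,g_\tau$ quadratic bound~(\ref{XeqLaplace1}). Its verification requires a careful second-order expansion of the multinomial saddle point, tracking simultaneously the bin-capacity fluctuations of $\vec d\sim\vec D$ (controlled via Lemma~\ref{lem:propD}, contributing only $O_k(e^{-k^2/3})n$ to the log-probability) and the $3k$-cap from Condition~2 of Definition~\ref{Def_good} (truncating the occupancy tail at a cost of only $O_k(4^{-k})n$). Both corrections are comfortably dominated by the $\Theta((\alpha-\tfrac12)^2 n)$ Gaussian gain coming from the binomial, so $c(k)>0$ persists for all sufficiently large $k$ and both assertions of the proposition follow.
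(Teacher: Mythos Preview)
Your plan follows the paper's high-level strategy, but two genuine gaps prevent it from going through.

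First, your decomposition $\mu(\alpha n)=\binom n{\alpha n}\,\pr\brk{\Sigma'}^{-1}\sum_{g_\sigma,g_\tau}p(\alpha,g_\sigma,g_\tau)$ is not correct as written. For fixed $g_\sigma,g_\tau$, the probability that $(\vec\pi,\vec s)$ realise these colourings with both $\sigma,\tau$ valid depends on the full overlap vector $\vec\alpha=(\alpha_{\red,\red},\alpha_{\red,\blue},\alpha_{\blue,\red},\alpha_{\blue,\blue},\alpha_\Gamma)$, not merely on the scalar $\alpha=\dist(\sigma,\tau)/n$. Since $\vec d$ is fixed, variables are not exchangeable, so two assignments $\tau,\tau'$ at the same Hamming distance from $\sigma$ can give different probabilities for the same $(g_\sigma,g_\tau)$. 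The paper therefore parametrises by $\vec\alpha$, proves Azuma-type concentration of each component of $\vec\alpha$ around $\alpha$ for a uniformly random $\tau$ at distance $\alpha n$ (this is where the ``$\leq 3k$ supports'' condition is actually used, to bound Lipschitz constants), and disposes of ``wild'' overlap vectors by a separate first-moment bound. Your $\tau\mapsto\bar\tau$ symmetry likewise only yields $p(\vec\alpha)=p(\vecone-\vec\alpha)$; this does not control the behaviour of $p$ along the one-dimensional slice you need, nor does it by itself show that $\tfrac12\vecone$ is a maximum rather than a saddle.

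Second, the uniform-in-$(g_\sigma,g_\tau)$ quadratic bound you invoke does not hold for all colourings. The explicit Taylor expansion of $\ln p_{\cC}(\vec\alpha)$ around $\tfrac12\vecone$ produces terms such as $O_k(k)\cdot g_{\red,\red}\cdot\delta_{\red,\red}\delta_{\blue,\blue}$ and $O_k(k)\cdot\gamma\cdot\delta_\Gamma\delta_{\blue,\blue}$, where $\gamma=|\Gamma|/n$ comes from an additional profile component $\Gamma\subset g_\sigma^{-1}(\blue)\cap g_\tau^{-1}(\red)$ (the $\tau$-supporting slots lying in $\sigma$-critical clauses) that your setup omits entirely. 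For ``good'' profiles with $g_{\red,\red}=\Theta_k(k2^{-k})$ and $\gamma=\Theta_k(k^22^{-k})$ these terms are indeed $O_k(k^42^{-k})$, but for atypical colourings the coefficient blows up and the comparison with~(\ref{XeqLaplace2}) fails. The paper accordingly splits profiles into good and bad and eliminates the bad ones by a direct first-moment computation in the \emph{unconditioned} model $\PHI$ (showing the expected number of pairs $(\sigma,\tau)$ with atypical $g_{\red,\red}$ or $\gamma$ is $\exp(-\Omega(n))$), not via the saddle-point analysis. Without the good/bad dichotomy and the $\Gamma$ refinement, neither part of the proposition can be established as you describe.
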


\noindent{\em Proof of \Lem~\ref{Lemma_centreTerms} (assuming \Prop~\ref{Prop_half}).}
By Fact~\ref{Fact_noBudge} we have \whp
	\begin{eqnarray*}
	\sum_{(\frac12-2^{-k/3})n\leq t\leq n/2}\Erw\brk{\cZ_\beta(t)|\Sigma}&\sim&
			\sum_{(\frac12-2^{-k/3})n\leq t\leq n/2}\Erw\brk{\cZ_\beta(t)|\Sigma'}\\
		&\leq&\sum_{(\frac12-2^{-k/3})n\leq t\leq n/2}\Erw\brk{\cZ_\beta'(t)|\Sigma'}\\
		&=&\sum_{(\frac12-2^{-k/3})n\leq t\leq n/2}\mu(t)\\
		&\leq&c'\sqrt n\cdot\mu(n/2)\qquad\mbox{[by \Prop~\ref{Prop_half}, part~2, with $c'=c'(k)>1$]}\\
		&\leq&cc'\cdot\Erw\brk{\cZ_\beta'}\qquad\quad\mbox{[by \Prop~\ref{Prop_half}, part~1]}\\
		&\leq&(1+o(1))cc'\Erw\brk{\cZ_\beta}\qquad\quad\mbox{[by Fact~\ref{Fact_noBudge}]},
	\end{eqnarray*}
as desired.
\qed

\smallskip\noindent
The following subsections are devoted to the proof of \Prop~\ref{Prop_half}.

\subsection{The probabilistic framework}


Recall that we denote the clauses of a $k$-CNF formula $\Phi$ by $\Phi_1,\ldots,\Phi_m$, i.e., $\Phi=\Phi_1\wedge\cdots\wedge\Phi_m$.
Furthermore, for each clause $\Phi_i$ we let $\Phi_{i1},\ldots,\Phi_{ik}$ signify the literals that the clause consists of, i.e.,
	$\Phi_i=\Phi_{i1}\vee\cdots\vee\Phi_{ik}$.

We are going to break down $\mu(t)$ into a sum of different terms of various types.
This requires a few definitions and a bit of notation.
Given the sequence $\vec d=(d_x)_{x\in V}$ chosen from the distribution $\vec D$, we let
	$$B=\bigcup_{x\in V}\cbc{x}\times\brk{d_x},$$
where $\brk{d_v}=\cbc{1,2,\ldots,d_v}$.
We think of the elements of $B$ as ``balls'', so that $B$ contains $d_x$ balls $(x,j)$, $j\in\brk{d_x}$, associated with each variable $x$.
A \emph{configuration} is a bijection $\pi:B\ra\brk m\times\brk k$.
Furthermore, a \emph{signature} is a map $s:\brk m\times\brk k\ra\cbc{\pm1}$.

A configuration $\pi$ and a signature $s$ give rise to a formula $\Phi(\pi,s)$ as follows: for each $(i,j)\in\brk m\times\brk k$
\begin{itemize}
\item $\Phi(s,\pi)_{ij}$ is a positive literal if $s(i,j)=1$ and a negative literal if $s(i,j)=-1$,
\item the variable underlying $\Phi(s,\pi)_{ij}$ is the variable $x$ such that $(i,j)\in\pi(x,\brk{d_x})$.
\end{itemize}
We let $\vec \pi$ denote a configuration chosen uniformly at random, and we let $\vec s$ denote a signature chosen uniformly at random
	and independently of $\pi$.

\begin{fact}\label{Fact_pis}
For any event $\cE$ we have
	$\pr\brk{\PHId\in\cE}=\pr\brk{\Phi(\vec\pi,\vec s)\in\cE}$.
\end{fact}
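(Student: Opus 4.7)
The plan is to show directly that $\Phi(\vec\pi,\vec s)$ is uniformly distributed on the set $\cF(\vec d)$ of all $k$-CNFs with $m$ clauses and degree sequence $\vec d$. Since by definition $\PHId$ has precisely this uniform law, the identity $\pr[\PHId\in\cE]=\pr[\Phi(\vec\pi,\vec s)\in\cE]$ will then follow for any event $\cE$.

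First I would enumerate $\cF(\vec d)$. A formula in $\cF(\vec d)$ is specified by, for each of the $km$ ordered literal positions $(i,j)\in\brk m\times\brk k$, (a) the underlying variable and (b) the sign, subject to variable $x$ appearing in exactly $d_x$ positions. The number of variable assignments equals the multinomial $(km)!/\prod_{x\in V}d_x!$, and the signs contribute an independent factor of $2^{km}$; hence $|\cF(\vec d)|=2^{km}(km)!/\prod_{x\in V}d_x!$.

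Next I would count, for a fixed $\Phi\in\cF(\vec d)$, the number of pairs $(\pi,s)$ with $\Phi(\pi,s)=\Phi$. The signature is forced: $s(i,j)$ must equal the sign of the literal $\Phi_{ij}$, uniquely determining $s$ irrespective of $\pi$. The configuration $\pi$ must send each block $\cbc{x}\times\brk{d_x}$ bijectively onto the set of positions $(i,j)$ at which $x$ is the underlying variable of $\Phi_{ij}$; since that set has size exactly $d_x$, there are $\prod_{x\in V}d_x!$ valid choices of $\pi$. Thus every $\Phi\in\cF(\vec d)$ has the \emph{same} number $\prod_{x\in V}d_x!$ of preimages under $(\pi,s)\mapsto\Phi(\pi,s)$.

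Since $(\vec\pi,\vec s)$ is uniform on its $(km)!\cdot 2^{km}$ possible values, the pushforward is uniform on $\cF(\vec d)$, matching the law of $\PHId$. There is no genuine obstacle here: this is a routine ``configuration model'' bookkeeping argument, and the only point to watch is that clauses and the literal positions inside each clause are treated as ordered, so the product of factorials comes out symmetrically over $\cF(\vec d)$.
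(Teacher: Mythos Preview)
Your proposal is correct and follows essentially the same approach as the paper: the paper's one-line proof simply observes that each formula $\Phi$ with degree sequence $\vec d$ has exactly $\prod_{x\in V}d_x!$ preimage pairs $(s,\pi)$, which is precisely the counting you carry out in more detail. Your additional explicit enumeration of $|\cF(\vec d)|$ and the verification that the signature is uniquely determined are fine elaborations but not logically necessary beyond the equal-fiber-size observation.
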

\begin{proof}
For each formula $\Phi$ with degree sequence $\vec d$ there are precisely
$\prod_{x\in V}d_x!$ pairs $(s,\pi)$ such that $\Phi=\Phi(s,\pi)$.
\qed\end{proof}

Thus, from now on we may work with the random formula $\Phi(\vec\pi,\vec s)$ that emerges from
choosing a random configuration and independently a signature.
This will be useful because some properties depend only on the signature, and thus we will
be able to treat them independently of the choice of the configuration.

Let $g:B\ra\cbc{\red,\blue}$ be a map that assigns a color to each ball.
For each variable $x$ we let
	$$\red_x(g)=\abs{\cbc{j\in\brk{d_x}:g(x,j)=\red}},\qquad\blue_x(g)=\abs{\cbc{j\in\brk{d_x}:g(x,j)=\blue}}.$$
Furthermore, for a pair $(g_\sigma,g_\tau)$ of maps $B\ra\cbc{\red,\blue}$
and $\tau\in\cbc{0,1}^V$ we say that $(\sigma,\tau)$ is  \emph{$(g_\sigma,g_\tau)$-valid} for a formula $\Phi$ if the following conditions
are satisfied.
\begin{enumerate}
\item[$\bullet$] Under $\sigma$ each variable $x$ supports precisely $\red_x(g_\sigma)$ clauses.
\item[$\bullet$] Under $\tau$ each variable $x$ supports precisely $\red_x(g_\tau)$ clauses.
\item[$\bullet$] The number of clauses that any $x$ supports under \emph{both} $\sigma,\tau$
	is $\abs{\cbc{j\in\brk{d_x}:g_\sigma(x,j)=g_\tau(x,j)=\red}}.$
\end{enumerate}


Let $s$ be a signature and let $\pi$ be a configuration.
We call an assignment $\tau\in\cbc{0,1}^V$ \emph{$g$-valid for $(s,\pi)$} if the following two conditions are satisfied.
\begin{enumerate}
\item[$\bullet$] $\tau\in\cS(\Phi(s,\pi))$.
\item[$\bullet$] For any $(i,j)\in\brk m\times\brk k$ the following is true.
		Let $(u,v)=\pi(i,j)$.
			Then $g(i,j)=\red$ iff $|\Phi(s,\pi)_{uv}|$ supports $|\Phi(s,\pi)_{u}|$.
\end{enumerate}
In words, $\tau$ is $g$-valid for $(s,\pi)$ if $\tau$ is a solution of the formula $\Phi(s,\pi)$ induced by $s,\pi$,
and if each ball $(i,j)$ that is colored red under $g$ supports the clause that it is mapped to under $\pi$, and vice versa.

\begin{fact}\label{Fact_valid}
Let $g_\sigma,g_\tau:B\ra\cbc{\blue,\red}$.
Then
	$$\pr\brk{\mbox{$(\sigma,\tau)$ is $(g_\sigma,g_\tau)$-valid  for $\Phi(\vec s,\vec \pi)$}}=
			\pr\brk{\mbox{$\sigma$ is $g_\sigma$-valid and $\tau$ is $g_\tau$-valid for $(\vec s,\vec\pi)$}}.$$
\end{fact}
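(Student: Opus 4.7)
The plan is to establish the equality by verifying that the two events coincide pathwise: for each realization of $(\vec s, \vec \pi)$, the event on the left-hand side holds if and only if the event on the right-hand side does. Once this equivalence is proved, the equality of probabilities is automatic since both are computed against the same underlying product distribution on $(\vec s, \vec \pi)$.

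I would first unpack the right-hand side event. By the definition of $g$-validity applied separately to $\sigma$ and $\tau$, this event asserts that $\sigma, \tau \in \cS(\Phi(\vec s, \vec \pi))$ and, for every ball $(x, v) \in B$ with $\vec \pi(x, v) = (i, j)$, the color $g_\sigma(x, v)$ equals $\red$ precisely when variable $x$ supports clause $i$ under $\sigma$, and analogously for $g_\tau$ and $\tau$. These are per-ball statements, one for each element of $B$. Summing them over $v \in [d_x]$ for each fixed variable $x$, the number of clauses supported by $x$ under $\sigma$ equals $\red_x(g_\sigma)$, giving the first bullet of the $(g_\sigma, g_\tau)$-validity definition. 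The second bullet follows by the same accounting for $\tau$ and $g_\tau$, and the third (joint-support) bullet follows by intersecting the two per-ball conditions at balls where both $g_\sigma(x, v)$ and $g_\tau(x, v)$ equal $\red$. This gives the direction RHS implies LHS.

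For the converse direction LHS implies RHS, I would use that $g_\sigma, g_\tau$ are specific fixed functions on $B$ (not merely count profiles): together with $\vec \pi$ they prescribe a specific ball-to-position correspondence, and the partition of each variable's balls into the four joint categories (both-$\red$, $\red$-$\blue$, $\blue$-$\red$, both-$\blue$) is determined. The count conditions of the $(g_\sigma, g_\tau)$-validity definition, imposed per variable, fix the sizes of these four categories to agree with the actual support structure induced by $(\vec s, \vec \pi)$, and pairing balls with positions through $\vec \pi$ then recovers the per-ball agreement required by $g$-validity.

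The main step requiring care is this converse direction: the bookkeeping needs to show that the joint count condition of the $(g_\sigma, g_\tau)$-validity definition, combined with the specification of $g_\sigma, g_\tau$ as functions on $B$ and the role of $\vec \pi$, suffices to force ball-level agreement and not merely count-level agreement. Once this is handled, the equivalence of events, and hence of probabilities, is immediate.
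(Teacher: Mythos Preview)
Your approach has a genuine gap: the two events are \emph{not} equal pathwise, so the converse direction you sketch cannot work. The right-hand event is strictly stronger than the left-hand one. To see this, take a variable $x$ with $d_x=2$, set $g_\sigma(x,1)=\red$, $g_\sigma(x,2)=\blue$, and suppose that in $\Phi(\vec s,\vec\pi)$ the variable $x$ supports exactly one clause under $\sigma$, namely the one sitting at position $\vec\pi(x,2)$ rather than $\vec\pi(x,1)$. Then the per-variable count condition $\red_x(g_\sigma)=1$ is met, but the per-ball condition fails: ball $(x,1)$ is coloured $\red$ yet $\vec\pi$ sends it to a non-supporting slot. Your last paragraph asserts that ``pairing balls with positions through $\vec\pi$ then recovers the per-ball agreement'', but matching the four category \emph{sizes} for each $x$ in no way forces $\vec\pi$ to send each ball into the matching category of positions; $\vec\pi$ is free to permute the $d_x$ balls of $x$ arbitrarily.

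The paper does not argue pathwise at all. It fixes a formula $\Phi$ for which $(\sigma,\tau)$ is $(g_\sigma,g_\tau)$-valid and \emph{counts} the pairs $(s,\pi)$ with $\Phi(s,\pi)=\Phi$ for which $\sigma$ is $g_\sigma$-valid and $\tau$ is $g_\tau$-valid. For each variable $x$ this amounts to choosing, independently within each of the four joint colour classes, a bijection between the balls of that class and the positions of $x$ with the corresponding joint support type; the number of such bijections is
\[
\prod_{x\in V}\ \prod_{c,c'\in\{\red,\blue\}}\bigl|(\{x\}\times[d_x])\cap g_\sigma^{-1}(c)\cap g_\tau^{-1}(c')\bigr|!\,,
\]
which depends only on $(g_\sigma,g_\tau)$ and not on the particular $\Phi$. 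It is this constant-multiplicity (``uniform fibre'') statement that yields the equality of probabilities, not any identification of the two events. Your write-up needs to be reworked along these counting lines.
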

\begin{proof}
Let $\Phi$ be a formula such that $(\sigma,\tau)$ is $(g_{\sigma},g_\tau)$-valid for $\Phi$.
Then the total number of pairs $(s,\pi)$ with $\Phi=\Phi(s,\pi)$ such that $\sigma$ is $g_{\sigma}$-valid
and $\tau$ is $g_{\tau}$-valid for $(s,\pi)$ equals
	$$\prod_{x\in V}\prod_{c,c'\in\cbc{\red,\blue}}\abs{(\cbc{x}\times\brk{d_x})\cap g_\sigma^{-1}(c)\cap g_\tau^{-1}(c'))}!,$$
a term that is independent of $\Phi$.
\qed\end{proof}

A \emph{profile} $\cC$ consists of two maps $g_{\sigma},g_{\tau}:B\ra\cbc{\blue,\red}$
and a set $\Gamma\subset g_\sigma^{-1}(\blue)\cap g_\tau^{-1}(\red)$ 
such that 
$\abs{g_\sigma^{-1}(\red)}=\abs{g_\tau^{-1}(\red)}=\lambda n$ and
such that $\red_x(g_\sigma),\red_x(g_\tau)\leq 3k$ for all $x\in V$.

Let $\cC$ be a profile.
Moreover, let $\tau\in\cbc{0,1}^V$, let $s$ be a signature, and let $\pi$ be a configuration.
We say that $(\sigma,\tau,s,\pi)$ is \emph{$\cC$-valid} if the following conditions are satisfied.
\begin{enumerate}
\item $\sigma,\tau$ are $g_\sigma,g_\tau$-valid for $(s,\pi)$.
\item Let $(x,l)\in  g_\sigma^{-1}(\blue)\cap g_\tau^{-1}(\red)$.
	Let $(i,j)=\pi(x,l)$.
	Then $(x,l)\in\Gamma$ iff $\Phi(s,\pi)_i$ is $\sigma$-critical.
\end{enumerate}

In words, this means that $(\sigma,\tau,s,\pi)$ is $\cC$-valid
if $\sigma,\tau$ are solutions of the formula $\Phi(s,\pi)$  under which the
colors assigned to the literals by $g_{\sigma}$,$g_{\tau}$ ``work out'' (i.e., a ball is red iff $\pi$ puts it in a place such that
it supports the clause it occurs in), and if a ball $(x,j)$ belongs to $\Gamma$ if it supports a clause under $\tau$
that is supported by another ball under $\sigma$.

Let $\cP$ be the set of all profiles.
For any $\cC\in\cP$ and any $t$ let
	$$\mu_{\cC}(t)=\Erw\left[\abs{\cbc{\tau\in\cbc{0,1}^V:
		\dist(\sigma,\tau)=t\mbox{ and $(\sigma,\tau,\vec s,\vec\pi)$ is $\cC$-valid}}}\right],$$
where the expectation is taken over $\vec s,\vec\pi$.

\begin{fact}\label{Fact_mudecomp}
We have
	\begin{equation}\label{eqmudecomp}
	\mu(t)=\frac{\sum_{\cC\in\cP}\mu_{\cC}(t)}{2^{-n}\Erw\brk{\cZ_\beta'}}.
	\end{equation}
\end{fact}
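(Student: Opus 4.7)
The plan is to unfold $\mu(t)$ by the definition of conditional expectation and match each factor combinatorially. Writing
$$\mu(t) = \Erw\brk{\cZ_\beta'(t)\mid\Sigma'} = \frac{\Erw\brk{\cZ_\beta'(t)\cdot\mathbf{1}_{\Sigma'}}}{\pr\brk{\Sigma'}},$$
I would establish the two identities $\pr\brk{\Sigma'} = 2^{-n}\Erw\brk{\cZ_\beta'}$ and $\Erw\brk{\cZ_\beta'(t)\cdot\mathbf{1}_{\Sigma'}} = \sum_{\cC\in\cP}\mu_\cC(t)$ separately.

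For the denominator I would use the following symmetry. By Fact~\ref{Fact_pis}, we may sample $\PHId$ as $\Phi(\vec s,\vec\pi)$ with $\vec s$ uniform over $\cbc{\pm1}^{m\times k}$. For any $X\subseteq V$, the involution that negates $\vec s(i,j)$ whenever the underlying variable of position $(i,j)$ lies in $X$ is measure-preserving. Choosing $X = \cbc{x\in V:\tau(x)=0}$, this involution transports the event ``$\vecone$ satisfies conditions 1.\ and 2.\ of Definition~\ref{Def_good}'' to the same event with $\vecone$ replaced by $\tau$. Hence $\pr\brk{\tau\text{ satisfies 1.\ and 2.}}=\pr\brk{\Sigma'}$ for every $\tau$, and summing over $\tau\in\cbc{0,1}^V$ by linearity gives $\Erw\brk{\cZ_\beta'}=2^n\pr\brk{\Sigma'}$.

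For the numerator I would set up a bijection between contributing triples $(\tau,\vec s,\vec\pi)$ (with $\sigma=\vecone$ fixed, $\dist(\sigma,\tau)=t$, and both $\sigma,\tau$ satisfying conditions 1.\ and 2.) and pairs consisting of such a triple together with a profile $\cC\in\cP$ for which $(\sigma,\tau,\vec s,\vec\pi)$ is $\cC$-valid. Given a contributing triple, a unique profile is read off by setting $g_\sigma(x,l)=\red$ iff the ball $(x,l)$, placed by $\vec\pi$ at position $(i,j)$, supports clause $\Phi(\vec s,\vec\pi)_i$ under $\sigma$, defining $g_\tau$ analogously with $\tau$, and letting $\Gamma$ record those $(x,l)\in g_\sigma^{-1}(\blue)\cap g_\tau^{-1}(\red)$ whose clause is still $\sigma$-critical, i.e.\ supported by another ball. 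Conditions 1.\ and 2.\ on both $\sigma$ and $\tau$ translate precisely into the defining constraints of $\cP$ ($\lambda n$ red balls on each side, $\red_x\le 3k$, and the $(1-\beta)e^{-\lambda}n$ free-variable counts encoding $\beta$-heaviness), so the extracted $\cC$ lies in $\cP$. By construction of the map and the definition of $\cC$-validity, each contributing triple corresponds to exactly one $\cC\in\cP$, and conversely $\mu_\cC(t)$ is exactly the expected number of contributing triples whose associated profile is $\cC$; summing gives $\Erw\brk{\cZ_\beta'(t)\mathbf{1}_{\Sigma'}}=\sum_{\cC\in\cP}\mu_\cC(t)$.

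Combining the two identities yields the claim. The only point that needs care is verifying that the extracted maps $(g_\sigma,g_\tau,\Gamma)$ really do range over the set $\cP$ as $(\tau,\vec s,\vec\pi)$ varies, i.e.\ that the profile constraints are exactly conditions 1.\ and 2.\ of Definition~\ref{Def_good}; this is immediate from the definitions of $g$-valid and $\cC$-valid together with Fact~\ref{Fact_valid}. No genuine estimates are required: Fact~\ref{Fact_mudecomp} is purely a bookkeeping reformulation that prepares the ground for the central-term estimates that follow.
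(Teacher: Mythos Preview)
Your argument is correct and mirrors the paper's: identify the denominator $2^{-n}\Erw[\cZ_\beta']$ with $\pr[\Sigma']$ by symmetry, and decompose the numerator via the uniquely determined profile attached to each valid $(\tau,s,\pi)$. You flesh out both steps (the sign-flipping involution, the explicit profile-extraction map) where the paper gives only a three-line sketch invoking Facts~\ref{Fact_pis} and~\ref{Fact_valid}. One caveat worth flagging: you list the $(1-\beta)e^{-\lambda}n$ free-variable count among ``the defining constraints of $\cP$'', but the paper's written definition of a profile does not include it; that constraint is indeed needed for the exact identity~(\ref{eqmudecomp}) to hold, so you have quietly patched what appears to be a small omission in the paper's definition.
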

\begin{proof}
The denominator equals the probability that $\sigma$ is a NAE-solution that
satisfies the first two conditions in Definition~\ref{Def_good}.
Furthermore, $\mu_{\cC}(t)$ accounts for the probability that the \emph{pair} $(\sigma,\tau)$ is $\cC$-valid,
	because for any $s,\pi$ and any $\tau$ there is no more than one profile $\cC\in\cP$ such that $(\sigma,\tau,s,\pi)$ is $\cC$-valid.
Hence, (\ref{eqmudecomp}) follows from Facts~\ref{Fact_pis} and~\ref{Fact_valid}.
\qed
\end{proof}
We call a profile $\cC=(g_\sigma,g_\tau,\Gamma)$ \emph{good} if
	\begin{eqnarray*}
	\frac1n \abs{g_\sigma^{-1}(\red)\cap g_\tau^{-1}(\red)}&\in&\brk{\frac{k}{3\cdot2^k},\frac{3k}{2^k}}
	~~\mbox{ and }~~
	\frac1n \abs{\Gamma}\in\brk{\frac{k^2}{3\cdot2^k},\frac{3k^2}{2^k}}.
	\end{eqnarray*}
Let $\cP_g$ be the set of all good profiles, and let $\cP_b=\cP\setminus\cP_g$.
Furthermore, let
	$$\mu_b(t)=\sum_{\cC\in\cP_b}\mu_{\cC}(t).$$
In Appendix~\ref{Sec_badProfiles} we are going to show the following.

\begin{proposition}\label{Prop_badProfiles}
\Whp\ the degree sequence $\vec d$ chosen from $\vec D$ is such that
	$$\sum_{(\frac12-2^{-k/3})n\leq t\leq\frac n2}\mu_b(t)=o(1).$$
\end{proposition}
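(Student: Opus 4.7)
Plan. I would establish \Prop~\ref{Prop_badProfiles} by showing that $\Erw_{\vec d\sim\vec D}[S]$, where $S(\vec d):=\sum_{(\frac12-2^{-k/3})n\le t\le n/2}\mu_b(t)$, is exponentially small in $n$, and then applying Markov's inequality over $\vec d$. The key observation is that even the \emph{unconditional} expected number of pairs $(\vecone,\tau)$ of NAE-solutions of the uniform random $\PHI$ at Hamming distance in $[(\frac12-2^{-k/3})n,n/2]$ is already $2^{-\Omega(n)}$ at any density $r\le r^*$, so the bad-profile restriction can be dropped for free. Concretely, Fact~\ref{Fact_pis} identifies the law of $\PHI$ with that obtained by first drawing $\vec d\sim\vec D$ and then $\PHId$, so discarding the bad-profile restriction and applying linearity of expectation yields
$$\Erw_{\vec d}[S]\le\Erw_{\PHI}\bigl[\bigl|\cbc{\tau:\dist(\vecone,\tau)\in[(\tfrac12-2^{-k/3})n,\tfrac n2],\ \vecone,\tau\in\cS(\PHI)}\bigr|\bigr]=\sum_{t}\binom{n}{t}\,p(\alpha)^m,$$
where $\alpha=t/n$ and $p(\alpha)=1-2^{2-k}+2^{1-k}(\alpha^k+(1-\alpha)^k)$ is the standard per-clause pair probability derived by inclusion–exclusion (cf.~\cite{nae} and \Lem~\ref{Lemma_psi}).

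A routine Taylor expansion around $\alpha=1/2$ gives $\frac1n\ln[\binom{n}{\alpha n}p(\alpha)^m]=-\ln 2-2(\alpha-\tfrac12)^2+O_k(k/2^k)$ uniformly for $\alpha$ in the range of interest; the $-\ln 2$ arises from $m\ln p(1/2)=2rn\ln(1-2^{1-k})=-2n\ln2+O_k(n/2^k)$ (using $r\le r^*=2^{k-1}\ln 2-O_k(1)$) combined with $H(1/2)=\ln 2$, while the quadratic term comes from $H''(1/2)=-4$. Since $(\alpha-\tfrac12)^2\le 4^{-k/3}=o_k(1)$ in the range, the entire exponent is at most $-(1-o_k(1))\ln 2$. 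Summing over the at most $n$ terms gives $\Erw_{\vec d}[S]\le 2^{-(1-o_k(1))n}$, and Markov's inequality then yields $\pr_{\vec d}[S\ge n^{-1}]\le n\cdot 2^{-(1-o_k(1))n}=o(1)$, so $S=o(1)$ \whp\ as desired.

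Main obstacle. Conceptually this is the easy proposition of the appendix: the bad profiles are disposed of by the crude observation that the entire unconditional pair count at the central distance is already exponentially tiny (reflecting the fact that $\Erw[Z(\PHI)]^2/2^n$ decays exponentially at the densities under consideration, since $\Erw[Z(\PHI)]=\exp(O_k(n/2^k))$ while the pair count contributes an extra $2^{-n}$). The genuine battle—separating the dominant \emph{good}-profile contribution and matching it with $\Erw[\cZ_\beta']^2$ down to the right constant—is fought in \Prop~\ref{Prop_half}, not here. The one technical point worth noting is that the product formula $p(\alpha)^m$ relies on the independence of clauses, which holds in the uniform $\PHI$ but fails for $\PHId$ with $\vec d$ fixed; passing to $\PHI$ via Fact~\ref{Fact_pis} \emph{before} computing expectations is therefore essential and makes the whole argument go through.
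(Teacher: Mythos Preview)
Your argument is correct, and it is genuinely simpler than the paper's. The paper proves \Lem~\ref{Lemma_firstBadProfile} by actually exploiting the bad-profile restriction: conditioning on $\sigma,\tau\in\cS(\PHI)$, it identifies the red--red overlap $R$ and the $\Gamma$-count $G$ as binomials, checks their means lie in the ``good'' windows, and then uses Chernoff bounds to show $\pr[\text{bad}\mid\cS]\le\exp(-\Omega_k(k/2^k)n)$; combined with $\Erw[Z^2]\le\exp(O_k(2^{-k})n)$ this gives $\Erw[B]=\exp(-\Omega(n))$. You instead discard the bad-profile restriction entirely and observe that the \emph{unconditional} pair count $\sum_t\binom{n}{t}p(\alpha)^m$, restricted to $\alpha\in[\tfrac12-2^{-k/3},\tfrac12]$, already decays like $2^{-(1-o_k(1))n}$ because the leading term is $2^{-n}\Erw[Z(\PHI)^2]$ and $\Erw[Z(\PHI)]=\exp(O_k(n/2^k))$. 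Both routes land the proposition; yours avoids the conditional analysis of $R$ and $G$, while the paper's retains the extra information that bad profiles are atypical even \emph{among} solution pairs (which is morally why one expects good profiles to dominate later).

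One small citation slip: Fact~\ref{Fact_pis} is the $(\vec s,\vec\pi)$--representation of $\PHId$ for a \emph{fixed} $\vec d$, not the statement that sampling $\vec d\sim\vec D$ and then $\PHId$ recovers $\PHI$. The latter is the (trivial) observation that $\vec D$ is by definition the degree-sequence law of $\PHI$; the paper invokes it verbatim right after \Lem~\ref{Lemma_firstBadProfile}. Your inequality $\Erw_{\vec d}[S]\le\sum_t\binom{n}{t}p(\alpha)^m$ is justified by combining this with the uniqueness clause in the proof of Fact~\ref{Fact_mudecomp} (at most one profile per $(\sigma,\tau,s,\pi)$), which lets you drop the sum over $\cP_b$ in favour of the raw solution-pair indicator.
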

Furthermore, in Appendix~\ref{Sec_goodProfiles} we are going to prove

\begin{proposition}\label{Prop_goodProfiles}
\Whp\ the degree sequence $\vec d$ chosen from $\vec D$ has the following property.
Let  $\cC\in\cP_g$ and let $\frac12-2^{-k/3}\leq\alpha\leq\frac12$.
Then
	$$\mu_{\cC}(\alpha n)\leq\exp\brk{-c\bc{\alpha-\frac12}^2n}\mu_{\cC}(n/2)+\exp(-\Omega(n)).$$
for a certain  $c=c(k)>0$.
\end{proposition}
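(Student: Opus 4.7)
The plan is to instantiate the Laplace-type argument sketched for \Lem~\ref{Lemma_central} at the level of a single good profile $\cC = (g_\sigma,g_\tau,\Gamma)\in\cP_g$. By definition
\[
\mu_\cC(\alpha n) \;=\; \sum_{\tau:\,\dist(\sigma,\tau)=\alpha n}\pr\brk{(\sigma,\tau,\vec s,\vec\pi)\text{ is $\cC$-valid}},
\]
and, exploiting the symmetry of the uniform random bijection $\vec\pi$ and of the independent uniform signature $\vec s$, I would average over permutations of $V$ acting jointly on $B$ and on the profile to obtain a representation of the form $\mu_\cC(\alpha n)=\bink{n}{\alpha n}\cdot P(\alpha,\cC)$, where $P(\alpha,\cC)$ is an averaged $\cC$-validity probability that depends on the profile only through its equivalence class under variable-permutations. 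The proof then balances the $\alpha$-dependence of the combinatorial factor (maximised at $\alpha=1/2$) against that of $P(\alpha,\cC)$ (nearly minimised at $\alpha=1/2$).

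For the combinatorial factor, \Prop~\ref{prop:binomials} provides
\[
\bink{n}{\alpha n}\big/\bink{n}{n/2}\le\exp\brk{-(2-o_k(1))(\alpha-1/2)^2 n}
\]
uniformly for $\alpha\in[1/2-2^{-k/3},1/2]$. The main technical step is the matching upper bound
\[
P(\alpha,\cC)\big/P(1/2,\cC)\le\exp\brk{O_k(k^4/2^k)(\alpha-1/2)^2 n},
\]
which I would establish by analysing the coupled occupancy problem induced by $g_\sigma,g_\tau$ and $\Gamma$ on the random configuration. Conditioning first on the clause-wise type vector --- how many of each of the four colour-pair types $\red\red,\red\blue,\blue\red,\blue\blue$ land in each clause --- the constraints decouple into a pair of occupancy experiments amenable to the local limit theorem \Lem~\ref{lem:locallimit}, along the lines of the proof of \Prop~\ref{Prop_occ}. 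The $O_k(k^4/2^k)$ coefficient reflects the fact that the total critical volume $2\lambda n+|\Gamma|$ is an $O_k(2^{-k})$ fraction of the available volume $krn$; the goodness of $\cC$ is what ensures that $|g_\sigma^{-1}(\red)\cap g_\tau^{-1}(\red)|=\Theta(nk/2^k)$ and $|\Gamma|=\Theta(nk^2/2^k)$ both fall into the regime in which \Lem~\ref{lem:locallimit} applies uniformly in $\alpha$.

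Combining the two estimates yields
\[
\mu_\cC(\alpha n)\le\mu_\cC(n/2)\cdot\exp\brk{-\bc{2-o_k(1)-O_k(k^4/2^k)}(\alpha-1/2)^2 n},
\]
which for sufficiently large $k$ is of the required form with some $c=c(k)>0$. The additive $\exp(-\Omega(n))$ term in the conclusion is there to absorb degenerate profiles for which $P(1/2,\cC)$ is itself already exponentially small; in that regime I would fall back on a crude \emph{a priori} bound on $P(\alpha,\cC)$ using Conditions 1--2 of Definition~\ref{Def_good} and the degree regularity guaranteed by \Lem~\ref{lem:propD}. The main obstacle will be the coupled local limit estimate: because $\vec\pi$ is a bijection rather than a product measure, and because the three data $g_\sigma,g_\tau,\Gamma$ impose correlated constraints on the ball-to-clause assignment, I cannot factorise over clauses or over variables directly. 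The clause-wise conditioning step is precisely designed to restore just enough conditional independence for the local limit theorem to deliver the Gaussian rate with the explicit $O_k(k^4/2^k)$ constant.
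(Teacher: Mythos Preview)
Your high-level picture is right --- balance the entropy factor $\bink n{\alpha n}$ against the validity probability --- but the factorisation $\mu_{\cC}(\alpha n)=\bink n{\alpha n}\cdot P(\alpha,\cC)$ does not hold, and this is not a cosmetic point. The probability that $(\sigma,\tau,\vec s,\vec\pi)$ is $\cC$-valid does \emph{not} depend on $\tau$ through $\dist(\sigma,\tau)$ alone: the profile $\cC=(g_\sigma,g_\tau,\Gamma)$ lives on the ball side $B=\bigcup_x\{x\}\times[d_x]$, so what matters is how $\tau$ overlaps with $\sigma$ \emph{separately} on each of the four ball classes $g_\sigma^{-1}(c)\cap g_\tau^{-1}(c')$ and on $\Gamma$. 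This forces a five-component overlap vector $\vec\alpha=(\alpha_{\red,\red},\alpha_{\red,\blue},\alpha_{\blue,\red},\alpha_{\blue,\blue},\alpha_\Gamma)$, and the validity probability $p_\cC(\vec\alpha)$ is a genuine function of all five coordinates. Your symmetrisation over permutations of $V$ cannot collapse this: permuting $V$ moves the profile as well, so you would be comparing different profiles rather than computing $\mu_\cC$.

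The paper's route is accordingly more layered than a local-limit-theorem computation. It (i) decomposes $\mu_\cC(\alpha n)$ over $\vec\alpha$; (ii) shows via a first-moment argument that ``wild'' overlaps contribute only $\exp(-\Omega(n))$ (this is where the additive term comes from); (iii) transfers the ball-side colourings $g_\sigma,g_\tau$ to clause-side colourings $f_\sigma,f_\tau$ compatible with $\cC$, which makes the clauses genuinely independent once the $\blue\blue$ positions are replaced by i.i.d.\ Bernoulli$(\alpha_{\blue,\blue})$ variables (a local limit theorem appears only here, to show this replacement costs just a constant factor); (iv) writes down the resulting explicit product formula over clauses and Taylor-expands it around $\frac12\vecone$; and (v) uses Azuma to show that for a random $\tau$ at distance $\alpha n$, each component $\alpha_{c,c'},\alpha_\Gamma$ concentrates around $\alpha$ with rates governed by $g_{c,c'}$ and $\gamma$ --- this is precisely where goodness of $\cC$ is used, since it fixes $g_{\red,\red}\asymp k2^{-k}$ and $\gamma\asymp k^22^{-k}$, and it is the interplay between these Azuma rates and the Taylor coefficients that produces the $O_k(k^4/2^k)$ factor. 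Your ``clause-wise conditioning then local limit'' plan does not reproduce steps (iii)--(v), and without the overlap-vector decomposition there is no mechanism by which the goodness hypotheses on $|g_\sigma^{-1}(\red)\cap g_\tau^{-1}(\red)|$ and $|\Gamma|$ enter the estimate.
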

%
We will also need the following fact.
\begin{proposition}\label{Prop_exactHalf}
\Whp\ the degree sequence $\vec d$ chosen from $\vec D$ is such that
	$\mu(n/2)\leq\frac c{\sqrt n}\Erw[\cZ_\beta']$
for a certain $c=c(k)>0$.
\end{proposition}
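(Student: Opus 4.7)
The plan is to reduce $\mu(n/2)$ to a direct computation via the profile framework of \Sec~\ref{Apx_centreTerms}. By Fact~\ref{Fact_mudecomp} and $\pr[\Sigma']=2^{-n}\Erw[\cZ_\beta']$, proving $\mu(n/2)\leq c\,\Erw[\cZ_\beta']/\sqrt n$ is equivalent to establishing
\[
\textstyle \sum_{\cC\in\cP}\mu_\cC(n/2)\leq (c/\sqrt n)\cdot 2^{-n}\Erw[\cZ_\beta']^2.
\]
\Prop~\ref{Prop_badProfiles} already supplies, as the single $t=n/2$ summand of its left-hand side, the bound $\sum_{\cC\in\cP_b}\mu_\cC(n/2)=o(1)$ \whp; since the target right-hand side is exponentially large in $n$, the bad profiles are negligible.

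For the main term I would exploit the symmetry of $\Phi_{\vec d}$: for each good profile $\cC=(g_\sigma,g_\tau,\Gamma)\in\cP_g$ and each $\tau$ at distance $n/2$ from $\vecone$, the probability that $(\vecone,\tau,\vec s,\vec\pi)$ is $\cC$-valid depends on $\cC$ alone. Call this $p_\cC$, so that $\mu_\cC(n/2)=\binom n{n/2}\,p_\cC$ by counting $\tau$'s. Since a single quadruple $(\vecone,\tau,\vec s,\vec\pi)$ is $\cC$-valid for at most one profile $\cC\in\cP$,
\[
\textstyle \sum_{\cC\in\cP_g}p_\cC\leq \pr\brk{\vecone\mbox{ and }\tau\mbox{ both satisfy conditions 1--2 of Definition~\ref{Def_good}}}.
\]
Combined with $\binom n{n/2}=\Theta(2^n/\sqrt n)$ and $\pr[\Sigma']=2^{-n}\Erw[\cZ_\beta']$, the proposition reduces to the decorrelation estimate that this joint probability is at most $(1+o(1))\pr[\Sigma']^2$. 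Plugging this back, the $2^n/\sqrt n$ from the central binomial coefficient is precisely what produces the $1/\sqrt n$ loss demanded by the statement (the implicit constant $c$ being essentially $\sqrt{\pi/2}$).

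The hard part is the decorrelation estimate itself at $\alpha=1/2$. Heuristically it is immediate: $\vecone$ and $\tau$ agree on exactly $n/2$ coordinates and disagree on the other $n/2$, so the two critical-clause structures are constructed out of essentially independent halves of the variable population and the corresponding events should asymptotically factorize. Rigorously, however, one has to analyze the joint occupancy problem on the balls $B$ when each ball carries a pair of colors $(g_\sigma(b),g_\tau(b))$, control the joint bin-capacity fluctuations via \Lem~\ref{lem:propD}, and apply the local limit theorem (\Lem~\ref{lem:locallimit}) to the resulting vector-valued sum, subject to the consistency constraints imposed by $\Gamma$ and by $|g_\sigma^{-1}(\red)|=|g_\tau^{-1}(\red)|=\lambda n$. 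This is the same joint-occupancy decoupling that underlies the Laplace-type estimate~(\ref{XeqLaplace1}) in \Lem~\ref{Lemma_central}, specialized at $\alpha=1/2$, where the quadratic Gaussian factor degenerates to $1$; here one only needs the leading-order factorization, but the bookkeeping over the constraint set $\Gamma$ and the bin-capacity corrections from the random degree sequence $\vec d$ remains the delicate step.
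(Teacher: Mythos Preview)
Your reduction via Fact~\ref{Fact_mudecomp} to
\[
\sum_{\cC\in\cP}\mu_\cC(n/2)\leq \frac{c}{\sqrt n}\cdot 2^{-n}\,\Erw[\cZ_\beta']^2
\]
is exactly the starting point the paper uses, and the paper's own proof is only a sketch: it expands $\Erw[\cZ_\beta']^2$ as a sum indexed by profiles and does a term-by-term comparison. So at the level of strategy your outline and the paper's coincide, including the identification that the $1/\sqrt n$ is supplied by $\binom{n}{n/2}\sim c\,2^n/\sqrt n$ and that the substance is a decorrelation at $\alpha=1/2$.

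There is, however, a concrete slip in your middle step. You assert that for a fixed good profile $\cC$ and any $\tau$ with $\dist(\vecone,\tau)=n/2$, the probability that $(\vecone,\tau,\vec s,\vec\pi)$ is $\cC$-valid ``depends on $\cC$ alone''. This is false: by Fact~\ref{Fact_overlap} that probability equals $p_\cC(\vec\alpha)$ with $\vec\alpha=\vec\alpha(\sigma,\tau,\cC)$, and two assignments $\tau,\tau'$ at the same Hamming distance $n/2$ from $\vecone$ will in general distribute their $n/2$ ``agree'' coordinates differently across the colour classes $g_\sigma^{-1}(c)\cap g_\tau^{-1}(c')$ and across $\Gamma$, yielding different overlap vectors and hence different probabilities. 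Consequently $\mu_\cC(n/2)$ is not $\binom{n}{n/2}\,p_\cC$ for a single number $p_\cC$, and the displayed inequality $\sum_{\cC\in\cP_g}p_\cC\le\pr[\,\vecone,\tau\mbox{ both satisfy 1.--2.}\,]$ is not well-posed as written. A related issue: since the degree sequence $\vec d$ is fixed and variables are not exchangeable in $\PHId$, even the profile-summed quantity $\sum_{\cC}\pr[(\vecone,\tau,\vec s,\vec\pi)\mbox{ is }\cC\mbox{-valid}]$ need not be constant over all $\tau$ at distance $n/2$.

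The fix is to organise the comparison the way the paper indicates: write $2^{-n}\Erw[\cZ_\beta']^2=2^n\pr[\Sigma']^2$, decompose $\pr[\Sigma']$ as a sum over single colourings $g$, and then for each pair $(g_\sigma,g_\tau)$ compare $\sum_\Gamma\mu_{(g_\sigma,g_\tau,\Gamma)}(n/2)$ against the product $\pr[\,\sigma\mbox{ is }g_\sigma\mbox{-valid}\,]\cdot\pr[\,\sigma\mbox{ is }g_\tau\mbox{-valid}\,]$ times the factor $2^n/\sqrt n$. That is exactly the ``joint-occupancy decoupling at $\alpha=1/2$'' you describe in your last paragraph, just arranged so that no incorrect per-$\tau$ symmetry is invoked along the way.
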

\begin{proof}
Note that by~\eqref{eqmudecomp} the claim is equivalent to showing
\[
	\sum_{\cC\in\cP} \mu_\cC(n/2) \le cn^{-1/2}2^{-n} \Exp[\cZ_\beta']^2.
\]
However, since $\Exp[\cZ_\beta']$ is the sum of the expectations of indicator random variables over all possible assignments, by expanding $\Exp[\cZ_\beta']^2$ we arrive at an expression that is a sum over all profiles $\cC\in\cP$. Then the results follows essentially by performing a term-by-term comparison with the left-hand side of the above inequality.
\qed
\end{proof}
\Prop~\ref{Prop_half} is an immediate consequence of~(\ref{eqmudecomp}) and \Prop s~\ref{Prop_badProfiles}, \ref{Prop_goodProfiles}, and~\ref{Prop_exactHalf}.

\subsection{Proof of \Prop~\ref{Prop_badProfiles}}\label{Sec_badProfiles}

Let $\Phi$ be a $k$-CNF and let $\sigma,\tau\in\cbc{0,1}^V$.
We say that $(i,j)\in\brk m\times\brk k$ is \emph{$\sigma$-red} if $\Phi_{ij}$ supports $\Phi_i$ under $\sigma$.
Let $\red(\sigma,\Phi)$ be the set of all $\sigma$-red pairs $(i,j)$.
We define the term \emph{$\sigma$-blue} and the set $\blue(\sigma,\Phi)$ analogously.
Furthermore, let $\Gamma(\sigma,\tau,\Phi)$ be the set of all $(i,j)$ such
that $(i,j)\in\blue(\sigma,\Phi)\cap\red(\sigma,\Phi)$ while $\Phi_i$ is critical under $\sigma$.

Finally, we call the pair $(\sigma,\tau)\in\cS(\Phi)^2$ \emph{bad} if $(\frac12-2^{-k/3})n\leq\dist(\sigma,\tau)\leq n/2$ and one of the following conditions holds:
\begin{enumerate}
\item[$\bullet$] $|\red(\sigma,\Phi)\cap\red(\tau,\Phi)|\not\in\brk{\frac{kn}{3\cdot 2^k},\frac{3\cdot kn}{2^k}}$, or
\item[$\bullet$] $|\Gamma(\sigma,\tau,\Phi)|\not\in\brk{\frac{k^2n}{3\cdot 2^k},\frac{3\cdot k^2n}{2^k}}$.
\end{enumerate}

\begin{lemma}\label{Lemma_firstBadProfile}
Let $B$ be the number of bad pairs $(\sigma,\tau)\in\cS(\PHI)^2$.
Then $\Erw\brk B=\exp(-\Omega(n))$.
\end{lemma}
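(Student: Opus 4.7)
The plan is to bound $\Erw[B]$ via linearity of expectation combined with a clause-by-clause concentration argument for the two statistics $X = \abs{\red(\sigma, \PHI) \cap \red(\tau, \PHI)}$ and $W = \abs{\Gamma(\sigma, \tau, \PHI)}$, conditional on $\sigma, \tau \in \cS(\PHI)$. The key structural fact is that the $m$ clauses of $\PHI = \PHI_k(n,m)$ are drawn i.i.d.\ uniformly, and since the supporting position within any critical clause is unique, each clause contributes $0$ or $1$ to each of $X$ and $W$. Moreover, any clause contributing to $X$ or $W$ is automatically NAE-satisfied by both $\sigma$ and $\tau$, since the defining conditions entail criticality under both assignments.

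First, I would fix $\alpha \in \brk{\frac{1}{2} - 2^{-k/3}, \frac{1}{2}}$ and a pair $(\sigma, \tau)$ with $\dist(\sigma, \tau) = \alpha n$, assuming WLOG $\sigma = \vec 1$. Let $p_X, p_W, q$ be the single-clause probabilities that a random clause contributes to $X$, contributes to $W$, and is NAE-satisfied by both $\sigma, \tau$, respectively. The independence of clauses together with the containments $\cbc{\mbox{contribute to } X} \subseteq \cbc{\mbox{NAE for both}}$ and $\cbc{\mbox{contribute to } W} \subseteq \cbc{\mbox{NAE for both}}$ yield for every $\ell$
\[
	\pr\brk{X = \ell \mbox{ and } \sigma, \tau \in \cS(\PHI)} = \bink{m}{\ell} p_X^\ell (q - p_X)^{m - \ell},
\]
and analogously for $W$, so that conditional on $\sigma, \tau \in \cS(\PHI)$ we have $X \sim \Bin(m, p_X/q)$ and $W \sim \Bin(m, p_W/q)$. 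A direct enumeration over the sign pattern and the membership of the $k$ clause variables in $V_{=} = \cbc{x: \sigma(x) = \tau(x)}$ versus $V_{\neq} = V \setminus V_{=}$ yields
\[
	p_X = 2k\cdot 2^{-k} \bc{(1-\alpha)^k + \alpha^k}, \quad p_W = 2k(k-1)\cdot 2^{-k} \bc{(1-\alpha)^2\alpha^{k-2} + \alpha^2(1-\alpha)^{k-2}},
\]
while $q = 1 - 2^{2-k} + 2^{1-k}((1-\alpha)^k + \alpha^k)$ is furnished by \Lem~\ref{Lemma_psi}. Plugging in $r \sim 2^{k-1}\ln 2$ places the conditional means $mp_X/q = (2\ln 2 + o_k(1))\cdot kn/2^k$ and $mp_W/q = (2\ln 2 + o_k(1))\cdot k^2 n/2^k$ strictly inside the intervals $\brk{\frac{kn}{3\cdot 2^k}, \frac{3kn}{2^k}}$ and $\brk{\frac{k^2n}{3\cdot 2^k}, \frac{3k^2n}{2^k}}$ defining the bad events. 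These events thus correspond to constant-factor deviations from the binomial mean, and Chernoff bounds give $\pr\brk{(\sigma, \tau) \mbox{ is bad} \mid \sigma, \tau \in \cS(\PHI)} \leq \exp(-\Omega_k(kn/2^k))$.

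Finally, I would sum over all pairs at distance in $\brk{(\frac{1}{2} - 2^{-k/3})n, n/2}$. By the standard first-moment computation in~\cite{nae} (equivalently, the Laplace-type estimates~\eqref{XeqLaplace1}--\eqref{XeqLaplace2}), the expected number of NAE pairs at distance $n/2$ is at most $\exp(O_k(n/2^k))$, and the contribution at distance $(\frac{1}{2} - \delta)n$ decays like $\exp(-(4 - o_k(1))\delta^2 n)$, so a Gaussian summation bounds the total expected number of NAE-satisfying pairs in the central window by $\exp(O_k(n/2^k))$. Multiplying this first-moment estimate by the Chernoff bound $\exp(-\Omega_k(kn/2^k))$ yields $\Erw[B] \leq \exp(-\Omega(n))$ for any fixed $k \geq k_0$, with the hidden constant depending on $k$. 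The hard part will be the clean factorization of the joint probability over clauses that underlies the conditional binomial laws; this hinges on the crucial observation that for every clause the contribution-indicators for $X$ and $W$ are each dominated by the NAE-for-both indicator, allowing the conditioning to collapse into a single per-clause Bernoulli parameter $p_\bullet/q$.
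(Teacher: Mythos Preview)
Your proposal is correct and follows essentially the same approach as the paper's proof. Both arguments observe that, conditional on $\sigma,\tau\in\cS(\PHI)$, the statistics $|\red(\sigma,\PHI)\cap\red(\tau,\PHI)|$ and $|\Gamma(\sigma,\tau,\PHI)|$ are binomial with means of order $kn/2^k$ and $k^2n/2^k$ respectively (lying strictly inside the target intervals), apply Chernoff to get a tail of $\exp(-\Omega_k(kn/2^k))$, and conclude by multiplying against the first-moment bound $\Erw[Z(\PHI)^2]\leq\exp(O_k(n/2^k))$ on the total number of solution pairs. Your explicit observation that the events ``contributes to $X$'' and ``contributes to $W$'' are contained in ``NAE for both'' is exactly what makes the conditional law factor as $\Bin(m,p_\bullet/q)$; the paper leaves this implicit but computes the same binomial parameters (up to algebraically equivalent forms). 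The only cosmetic difference is that the paper bounds $\Erw[Z^2]$ directly, whereas you split the sum over distances and invoke the Gaussian profile from~\eqref{XeqLaplace2}; both yield the same $\exp(O_k(n/2^k))$ envelope.
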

\begin{proof}
Let $\sigma=\vecone$ and let $\alpha\in\brk{\frac12-2^{-k/3},\frac12}$.
Let $S(\alpha)$ be the event that $\sigma,\tau\in\cS(\PHI)$.
As shown in~\cite{nae}, we have
	$$\pr\brk{\cS}=(1-2^{2-k}+2^{1-k}(\alpha^k+(1-\alpha)^k))^m.$$
Let $R=\abs{\red(\sigma,\PHI)\cap\red(\tau,\PHI)}$.
Given that $\cS$ occurs, $R$ has a binomial distribution
	$$\Bin\bc{m,\frac{k(\alpha^k+(1-\alpha)^k)}{(2^{k-1}-1)(1-\alpha(1-\alpha)^{k-1}-(1-\alpha)\alpha^{k-1})}}.$$
For given that $\sigma$ is a solution, there are a total of $2^{k}-2$ ways to choose the signs of the $k$ literals in any clause,
	and precisely $2k$ ways to choose the signs so that the clause is critical under $\sigma$.
Given that it is, there are $n^k(1-\alpha(1-\alpha)^{k-1}-(1-\alpha)\alpha^{k-1})$ ways to choose the actual variables
that occur in the clause so as to ensure that $\tau$ is a solution, too.
(Namely, we have to avoid that either $\tau$ and $\sigma$ differ on the $\sigma$-supporting variable only,
	or that they agree on the $\sigma$-supporting variable only;
		furthermore, the probability that $\sigma$, $\tau$ differ on a randomly chosen variable is equal to $\alpha$.)
Finally, given that a given clause is $\sigma$-critical,
the probability that the clause is critical under $\tau$ and supported by the same variable as under $\sigma$ is equal to $\alpha^k+(1-\alpha)^k$
	(for $\sigma,\tau$ would either have to agree or disagree on all the $k$ variables).
	
Further, let $G=\abs{\Gamma(\sigma,\tau,\PHI)}$.
Given that $\cS$ occurs, $G$ is a binomial variable
	$$\Bin\bc{m,\frac{k(k-1)(\alpha^2(1-\alpha)^{k-2}+\alpha^{k-2}(1-\alpha)^2)}{(2^{k-1}-1)(1-\alpha(1-\alpha)^{k-1}-(1-\alpha)\alpha^{k-1})}}.$$
For in each $\sigma$-critical clause there are $k-1$ ways to choose another literal $j$ to support that clause under $\tau$,
	and to materialize this choice, $\tau$ has to either disagree with $\sigma$ on the $\sigma$-supporting literal and on literal $j$
		and agree on all other literals, or the inverse configuration must occur.

It is easily verified that for any $\alpha\in\brk{\frac12-2^{-k/3},\frac12}$ we have
	\begin{eqnarray*}
	\Erw\brk{R|\cS}&=&(1+o_k(1))\frac{krn}{2^{2k-2}}\in\brk{\frac{kn}{2^{k}},\frac{kn}{2^{k-1}}},\\
	\Erw\brk{G|\cS}&=&(1+o_k(1))\frac{k^2rn}{2^{2k-2}}\in\brk{\frac{k^2n}{2^{k}},\frac{k^2n}{2^{k-1}}}.
	\end{eqnarray*}
As $R,G|\cS$ are binomially distributed, Chernoff bounds yield
	\begin{eqnarray}\label{eqtailsR}
	\Pr\brk{R\not\in \brk{\frac{kn}{3\cdot2^{{k-1}}},\frac{3kn}{2^{k-1}}}}\leq\exp\brk{-\Omega_k\bcfr{k}{2^k}n},\\
	\Pr\brk{G\not\in \brk{\frac{k^2n}{3\cdot2^{{k-1}}},\frac{3k^2n}{2^{k-1}}}}\leq\exp\brk{-\Omega_k\bcfr{k^2}{2^k}n}.
		\label{eqtailsG}
	\end{eqnarray}
Since the \emph{total} expected number of pairs of solutions is
	$$\Erw\brk{Z^2}\leq\exp\brk{O_k(2^{-k})n},$$
the bounds (\ref{eqtailsR}) and~(\ref{eqtailsG}) imply that $\Erw\brk{B}\leq\exp(-\Omega(n))$, as claimed.
\qed\end{proof}

\Prop~\ref{Prop_badProfiles} is an immediate consequence of \Lem~\ref{Lemma_firstBadProfile},
	because the experiment of first choosing $\vec d$ from the distribution $\vec D$ and then generating $\PHId$
		yields precisely the uniform distribution $\PHI$.

\subsection{Proof of \Prop~\ref{Prop_goodProfiles}}\label{Sec_goodProfiles}

Let $\cC=(g_\sigma,g_\tau,\Gamma)\in\cP_g$. 
For $c,c'\in\cbc{\red,\blue}$ let
	\begin{eqnarray*}
	g_{c,c'}&=&g_{c,c'}(\cC)=\abs{g_\tau^{-1}(c)\cap g_\sigma^{-1}(c')}/n,\mbox{ and let}\\
	\gamma&=&\gamma(\cC)=|\Gamma|/n.
	\end{eqnarray*}
Furthermore, for any $\sigma,\tau\in\cbc{0,1}^V$ we define
	\begin{eqnarray*}
	\alpha_{c,c'}&=&\alpha_{c,c'}(\sigma,\tau,\cC)=\frac{\abs{\cbc{x\in g_\tau^{-1}(c)\cap g_\sigma^{-1}(c'):\sigma(x)=\tau(x)}}}{g_{c,c'}n},\\
	\alpha_{\Gamma}&=&\alpha_{\Gamma}(\sigma,\tau,\cC)=\abs{\cbc{(x,i)\in\Gamma:\tau(x)=\sigma(x)}}{\abs\Gamma},\\
	\vec\alpha&=&\vec\alpha(\sigma,\tau,\cC)=(\alpha_{\red,\red},\alpha_{\red,\blue},\alpha_{\blue,\red},\alpha_{\blue,\blue},\alpha_\Gamma)\in\brk{0,1}^5.
	\end{eqnarray*}
An important observation is that by symmetry, the probability for a pair $(\sigma,\tau)$ to be $\cC$-valid is governed by their
``overlap vector'' $\vec\alpha$.
More precisely, we have

\begin{fact}\label{Fact_overlap}
Let $\cC=(g_\sigma,g_\tau,\Gamma)\in\cP_g$. 
Let $\sigma,\tau,\tau'\in\cbc{0,1}^V$ be such that $\vec\alpha(\sigma,\tau,\cC)=\vec\alpha(\sigma,\tau',\cC)$.
Then
	$$\pr\brk{\mbox{$(\sigma,\tau,\vec s,\vec\pi)$ is $\cC$-valid}}=\pr\brk{\mbox{$(\sigma,\tau',\vec s,\vec\pi)$ is $\cC$-valid}}.$$
\end{fact}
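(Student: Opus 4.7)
The plan is to establish Fact~\ref{Fact_overlap} by constructing a measure-preserving bijection between the sets of $\cC$-valid pairs $(\vec s,\vec\pi)$ for $(\sigma,\tau)$ and for $(\sigma,\tau')$. The argument proceeds in two steps: a sign-flip reduction to the case $\sigma\equiv\vecone$, followed by a ball-level rearrangement whose existence is furnished by the overlap vector equality.

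First I would observe that $\cC$-validity of $(\sigma,\tau,s,\pi)$ depends on the signature only through the ``effective signs'' $\tilde s_\sigma(i,j)=s(i,j)(2\sigma(\mathrm{var})-1)$ and $\tilde s_\tau(i,j)=s(i,j)(2\tau(\mathrm{var})-1)$ at each clause-position: the NAE condition, the condition that a red-coloured ball supports its clause, and the $\sigma$-criticality underlying the $\Gamma$-constraint are all functions of these effective signs together with $(g_\sigma,g_\tau,\Gamma)$. Hence for each variable $x$ the involution that flips $\sigma(x),\tau(x)$ and the sign $s(i,j)$ at every position with underlying variable $x$ preserves the validity event and the uniform law on $\vec s$. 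Applying this involution at every $x$ with $\sigma(x)=0$, I reduce to the case $\sigma\equiv\vecone$, in which the agreement relation $[\sigma=\tau]$ at a variable simply becomes $\tau=1$. This reduction leaves the overlap vector unchanged, and applies in the same manner to $\tau$ and $\tau'$.

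Second, partition $B$ into the five cells determined by $(g_\sigma,g_\tau,\mathbf{1}\{\cdot\in\Gamma\})$ (recalling $\Gamma\subset g_\sigma^{-1}(\blue)\cap g_\tau^{-1}(\red)$). The overlap vector $\vec\alpha$ together with the data of $\cC$ records precisely, for each such cell, the number of balls $b=(x,l)$ with $\tau(x)=1$; the hypothesis $\vec\alpha(\sigma,\tau,\cC)=\vec\alpha(\sigma,\tau',\cC)$ means these five numbers coincide for $\tau$ and $\tau'$ cell by cell. Consequently there exists a bijection $\phi:B\to B$ preserving every cell (in particular $g_\sigma$, $g_\tau$ and $\Gamma$) such that $\tau(\mathrm{var}(b))=\tau'(\mathrm{var}(\phi(b)))$ for every $b\in B$.

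Finally, the map $(s,\pi)\mapsto(s,\pi\circ\phi^{-1})$ is a measure-preserving bijection on configuration-signature pairs. Because $\phi$ preserves the profile colouring pointwise, the colour-pattern at each clause is unchanged under $\pi\mapsto\pi\circ\phi^{-1}$; in particular the set of $\sigma$-critical clauses coincides, so the $\Gamma$-constraint for $(\sigma,\tau')$ holds after the transport exactly when it held for $(\sigma,\tau)$ before. Combined with $\sigma\equiv\vecone$, which makes $\tilde s_\sigma$ independent of the underlying variable, and with the defining property of $\phi$, which ensures $\tilde s_{\tau'}$ under $\pi\circ\phi^{-1}$ equals $\tilde s_\tau$ under $\pi$ at every position, the entire effective-sign pattern is preserved. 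Hence $(s,\pi)$ is $\cC$-valid for $(\sigma,\tau)$ iff $(s,\pi\circ\phi^{-1})$ is $\cC$-valid for $(\sigma,\tau')$, and the two probabilities agree. The main conceptual obstacle is that $\phi$ cannot in general be realised as a variable permutation: the overlap vector is strictly coarser than the partition of $V$ by variable type, so variables of differing degrees or distinct ball-colour multisets may mix inside a single cell. The resolution is to resist the temptation to preserve variable identity and work purely at the ball level, where the overlap vector condition is exactly the balance required for $\phi$ to exist.
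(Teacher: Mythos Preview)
Your argument is correct. The paper states Fact~\ref{Fact_overlap} without proof, merely remarking beforehand that ``by symmetry, the probability for a pair $(\sigma,\tau)$ to be $\cC$-valid is governed by their overlap vector''; your construction of a measure-preserving ball-level bijection $\phi$ is precisely the explicit realisation of that symmetry, so the approaches coincide in spirit. Two minor remarks: in the paper $\sigma=\vecone$ is fixed from the outset of Appendix~D, so your sign-flip reduction is redundant in context (though harmless, and it does make the Fact hold for general $\sigma$ as stated); and the five components of $\vec\alpha$ do not match your five cells one-to-one, since $\alpha_{\red,\blue}$ refers to the entire cell $g_\tau^{-1}(\red)\cap g_\sigma^{-1}(\blue)\supset\Gamma$ rather than to its complement of $\Gamma$, so the count on that complementary sub-cell must be recovered by subtraction---this is implicit in your argument and causes no difficulty.
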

Fact~\ref{Fact_overlap} motivates the following definition:
	for $\vec\alpha=\vec\alpha(\sigma,\tau,\cC)$ we let
		$$p_{\cC}(\vec\alpha)=\pr\brk{\mbox{$(\sigma,\tau,\vec s,\vec\pi)$ is $\cC$-valid}}.$$

For a real $\alpha\in(0,1)$ we  call a vector $\vec\alpha=(\alpha_{\red,\red},\ldots)$ \emph{$\alpha$-tame} if
	\begin{eqnarray*}
	\abs{\alpha_{\red,\red}-\alpha}&\leq&10/\sqrt k,\\
	\abs{\alpha_{\red,\blue}-\alpha}&\leq&2^{-k/3},\\
	\abs{\alpha_{\blue,\red}-\alpha}&\leq&2^{-k/3},\\
	\abs{\alpha_{\blue,\blue}-\alpha}&\leq&2^{-k/2},\qquad\mbox{and}\\
	\abs{\alpha_{\Gamma}-\alpha}&\leq&100/k.
	\end{eqnarray*}
Let $\cT(\alpha)$ be the set of all $\alpha$-tame vectors.
The following lemma shows that we can neglect ``overlap vectors'' $\vec\alpha$ that are not tame.

\begin{lemma}\label{Lemma_wild}
Let $\cC=(g_\sigma,g_\tau,\Gamma)\in\cP_g$. 
Let $W$ be the number of pairs $(\sigma,\tau)\in\cS(\PHI)^2$ with
	$1-\alpha=\dist(\sigma,\tau)/n\in\brk{\frac12-2^{-k/3},\frac12}$ and such that there is a profile $\cC$
		such that $\vec\alpha(\sigma,\tau,\cC)\not\in\cT(\alpha)$.
Then
	$\Erw\brk W=\exp(-\Omega(n))$.
\end{lemma}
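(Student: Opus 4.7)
The plan is a first-moment argument over pairs of NAE-solutions. Pairs whose induced profile is bad are already bounded by \Lem~\ref{Lemma_firstBadProfile} (giving contribution $\exp(-\Omega(n))$), so it suffices to treat pairs with good profile. By Fact~\ref{Fact_pis} we may work with $\Phi(\vec\pi,\vec s)$, and by symmetry we fix $\sigma=\vecone$. For each $\tau$ with $\dist(\sigma,\tau)=(1-\alpha)n$ and $\alpha\in[\frac12-2^{-k/3},\frac12]$ the goal is to bound, conditional on $\sigma,\tau\in\cS(\Phi(\vec\pi,\vec s))$ and the induced profile $\cC=\cC(\sigma,\tau,\Phi(\vec\pi,\vec s))$ being good, the probability that $\vec\alpha(\sigma,\tau,\cC)\notin\cT(\alpha)$; summing over $\tau$ and $\alpha$ then finishes the proof.

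First I would perform a case analysis for the clauses of $\Phi$. There are five relevant clause-types under $(\sigma,\tau)$: both-critical with the same supporter $v$ (contributing a $(\red,\red)$ ball at $v$); $\sigma$-critical only (contributing $(\red,\blue)$ at its $\sigma$-supporter); $\tau$-critical only (contributing $(\blue,\red)$, not in $\Gamma$); both-critical with distinct supporters $v,v'$ (contributing $(\red,\blue)$ at $v$ and a $(\blue,\red)\cap\Gamma$-ball at $v'$); and neither-critical (only $(\blue,\blue)$-balls). The conditional probability of each type given $\sigma,\tau\in\cS$ is a polynomial in $\alpha$ and $2^{-k}$ computable as in the proof of \Lem~\ref{Lemma_firstBadProfile}, and within each type the supporter's membership in $A=\cbc{v:\sigma(v)=\tau(v)}$ has an explicit conditional distribution. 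For instance, in the ``both-critical same supporter'' type the supporter lies in $A$ iff all $k$ variables of the clause lie in $A$, giving $\Erw\brk{\alpha_{\red,\red}}=\alpha^k/(\alpha^k+(1-\alpha)^k)$; the analogous calculation for $\Gamma$ yields $\Erw\brk{\alpha_{\Gamma}}=\alpha^2(1-\alpha)^{k-2}/(\alpha^{k-2}(1-\alpha)^2+\alpha^2(1-\alpha)^{k-2})$. A Taylor expansion around $\alpha=1/2$ shows that for $\alpha\in[\frac12-2^{-k/3},\frac12]$ each conditional mean $\alpha^*_{c,c'}$ lies within $O(k\cdot 2^{-k/3})$ of $\alpha$, well inside the corresponding tame threshold.

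Next I would apply Chernoff. Since the $m$ clauses are independent and each clause's contribution to the profile is a deterministic function of its sign pattern and its variables' $A$-memberships, the counts of balls in each of the five profile blocks, and within each block the counts of balls whose underlying variable lies in $A$, are sums of independent bounded indicators. For the $(\red,\red)$ block (size $\Theta_k(kn/2^k)$ by goodness) a deviation of $10/\sqrt k$ has probability $\exp(-\Omega(n/2^k))$; for $\Gamma$ (size $\Theta_k(k^2n/2^k)$) a deviation of $100/k$ has probability $\exp(-\Omega(n/2^k))$; the $(\red,\blue)$, $(\blue,\red)\setminus\Gamma$ and $(\blue,\blue)$ coordinates yield strictly stronger bounds. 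A union bound over the five coordinates gives a per-pair conditional bound of $\exp(-\Omega_k(n))$. Combined with the estimate $\Erw\brk{Z(\PHI)^2}\le\exp(O_k(n/2^k))$ on pairs of solutions at distance close to $n/2$ (as used in the proof of \Lem~\ref{Lemma_firstBadProfile}), this gives $\Erw\brk{W}\le\exp(-\Omega(n))$ for all sufficiently large $k$.

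The main obstacle will be the $\Gamma$-bookkeeping: $\Gamma$ sits inside the $(\blue,\red)$ block and requires a secondary conditioning on the $\sigma$-criticality of its clause in order to separate the ``both-critical with distinct supporters'' contributions from the ``$\tau$-critical only'' ones. A similar subtlety arises in the $(\red,\blue)$ block, which receives balls from two distinct clause-types. Once the per-type bookkeeping is sorted out, the Chernoff step is routine; what actually makes the whole argument work is the margin between the tame thresholds and the exponentially-small-in-$k$ biases $|\alpha^*_{c,c'}-\alpha|$, which is precisely what dictated the choice of thresholds $10/\sqrt k$, $2^{-k/3}$, $2^{-k/2}$, $100/k$ in the definition of $\cT(\alpha)$.
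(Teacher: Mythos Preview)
Your approach is essentially the one the paper has in mind: the paper's entire proof is the single sentence ``based on a similar first moment argument as in the proof of \Lem~\ref{Lemma_firstBadProfile}'', and your clause-type decomposition together with Chernoff bounds on the per-block agreement counts is exactly how that argument goes. Your computation of the conditional means for $\alpha_{\red,\red}$ and $\alpha_{\Gamma}$ is correct, and the identification of the five clause types is right.

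There is, however, one genuine slip. You assert that ``each conditional mean $\alpha^*_{c,c'}$ lies within $O(k\cdot 2^{-k/3})$ of $\alpha$, well inside the corresponding tame threshold.'' This is fine for $\alpha_{\red,\red}$ (threshold $10/\sqrt{k}$) and $\alpha_{\Gamma}$ (threshold $100/k$), but it is \emph{not} inside the threshold for $\alpha_{\blue,\blue}$, whose tame window is only $2^{-k/2}\ll k\cdot 2^{-k/3}$, nor is it obviously inside the $2^{-k/3}$ window for $\alpha_{\red,\blue}$ and $\alpha_{\blue,\red}$. What actually saves these coordinates is that their conditional biases are much smaller than your blanket bound: for the $(\red,\blue)$ and $(\blue,\red)$ blocks the dominant sub-case (``$\sigma$-critical only'' resp.\ ``$\tau$-critical only'') has bias $O(k2^{-k})$, because conditioning on ``the other $k-1$ literals are not all equal under $\tau$'' barely moves the marginal of the supporter; the case-(a) contribution has the larger $O(k\cdot 2^{-k/3})$ bias but weight only $\gamma/g_{\red,\blue}=O(k/2^{k})$. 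For $\alpha_{\blue,\blue}$ the bias is similarly $O(\mathrm{poly}(k)\cdot 2^{-k})$, either by a direct computation for ``neither-critical'' clauses or by the complement argument (the total signed bias mass in the non-$(\blue,\blue)$ blocks is $O(k^2 n\cdot 2^{-k/3})$ spread over $g_{\blue,\blue}n=\Theta(krn)$ balls). Once you record these sharper bias estimates for the three remaining coordinates, the Chernoff-plus-second-moment comparison goes through exactly as you describe.
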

The proof of \Lem~\ref{Lemma_wild} is based on a similar first moment argument as in the proof of \Lem~\ref{Lemma_firstBadProfile}.
Furthermore, in \Sec~\ref{Sec_Taylor} we will establish the following.

\begin{lemma}\label{Lemma_Taylor}
Let $\cC=(g_\sigma,g_\tau,\Gamma)\in\cP_g$. 
Let $\vec\alpha\in\cT(\alpha)$ for some $\alpha\in\brk{\frac12-2^{-k/3},\frac12}$.
Letting $\vec\delta=\vec\alpha-\frac12\vecone$, we have
	\begin{eqnarray*}
	\frac1n\ln\bcfr{p_{\cC}(\vec\alpha)}{p_{\cC}(\frac12\vecone)}&\leq&
		O_k\bc{k}\cdot\brk{g_{\red,\red}(\delta_{\red,\red}\delta_{\blue,\blue}+\delta_{\blue,\blue}^2)
				+\gamma(\delta_{\Gamma}\delta_{\blue,\blue}+\delta_{\blue,\blue}^2)}\nonumber\\
		&&+O_k\bcfr{k^4}{2^k}\brk{\delta_{\red,\blue}\delta_{\blue,\blue}+\delta_{\blue,\red}\delta_{\blue,\blue}+\delta_{\blue,\blue}^2}.
	\end{eqnarray*}
\end{lemma}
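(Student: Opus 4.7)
The plan is to obtain a closed-form expression for $p_\cC(\vec\alpha)$ by factoring over clauses, then perform a second-order Taylor expansion of $\ln p_\cC$ about the symmetric point $\vec\alpha_0=\tfrac12\vecone$. For a fixed configuration $\vec\pi$, the signature $\vec s$ is independent and the $\cC$-validity constraints factor clause-by-clause: the signs of the $k$ literals in each clause are constrained so that (i) both $\sigma$ and $\tau$ are NAE, (ii) exactly the balls marked red under $g_\sigma$ support their clause under $\sigma$ (and the analogous condition for $\tau$), and (iii) the $\Gamma$-marking is consistent. Averaging over $\vec\pi$, I would group the $k$ positions of each clause by their type (i.e.\ the pattern of $g_\sigma$-color, $g_\tau$-color, $\Gamma$-membership, and agreement of $\sigma$ and $\tau$ on the underlying variable). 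Under $\vec\alpha$, the joint distribution of these types on a random ball is determined, so each clause contributes a factor that is an explicit polynomial in the coordinates of $\vec\alpha$. Multiplying these $m$ factors yields an explicit product formula for $p_\cC(\vec\alpha)$.

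Next I would exploit the NAE inversion symmetry. The map $\tau\mapsto\bar\tau$ sends $\alpha_{c,c'}\mapsto 1-\alpha_{c,c'}$ and $\alpha_\Gamma\mapsto 1-\alpha_\Gamma$, and it preserves $\cC$-validity (since complementing all values of $\tau$ preserves the set of NAE-solutions and all support/critical relations, which depend only on patterns of sign equalities). Consequently $p_\cC(\vec\alpha)=p_\cC(\vec 1-\vec\alpha)$, so $\Psi(\vec\alpha):=\ln p_\cC(\vec\alpha)$ is symmetric about $\vec\alpha_0$ under $\vec\delta\mapsto-\vec\delta$. In particular $\nabla\Psi(\vec\alpha_0)=0$, so the Taylor expansion reduces to
\[
\Psi(\vec\alpha)-\Psi(\vec\alpha_0)=\tfrac12\vec\delta^{\!\top}\! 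H\vec\delta+R(\vec\delta),
\]
where $H$ is the Hessian at $\vec\alpha_0$ and $R(\vec\delta)=O(\|\vec\delta\|^4 n)$ by $\vec\alpha_0\mapsto -\vec\alpha_0$ symmetry (odd-order terms vanish), which is dominated by the quadratic term for $\vec\alpha\in\cT(\alpha)$.

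The third step is to bound each entry of $H$. I would compute the Hessian directly from the per-clause factors. Two coordinates $\alpha_{c_1c_1'}$ and $\alpha_{c_2c_2'}$ interact only through clauses that contain at least one ball in each corresponding class, so the mixed partial derivative at $\vec\alpha_0$ is controlled by the product of the densities $g_{c_1c_1'}\,g_{c_2c_2'}$ times a per-clause derivative factor that I would evaluate explicitly. Since the profile is good, $g_{\red,\red}=\Theta_k(k/2^k)$ and $\gamma=\Theta_k(k^2/2^k)$, whereas $g_{\blue,\blue}=1-O_k(k/2^k)$. This asymmetry immediately produces the structure of the claimed bound: entries mixing the small classes ($\red\red$ and $\Gamma$) with themselves or with $\red\blue$/$\blue\red$ are suppressed by higher powers of $2^{-k}$, while the surviving contributions all involve the dominant blue-blue coordinate at least once.

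The main technical obstacle will be that $\delta_{\blue,\blue}^2$ appears to have a Hessian coefficient of order $n$ if one naively differentiates twice in $\alpha_{\blue,\blue}$, which would destroy the bound. The save comes from the fact that clauses that are neither $\sigma$- nor $\tau$-critical (the vast majority) contribute per-clause factors that are \emph{linear} to leading order in $\alpha_{\blue,\blue}$ once the NAE constraint is imposed, so their second derivatives cancel at $\vec\alpha_0$. The effective contribution to $\partial^2\Psi/\partial\alpha_{\blue,\blue}^2$ comes only from clauses with at least one red ball, whose density is $O_k(k/2^k)$, and from the $\Gamma$-constraint, whose density is $O_k(k^2/2^k)$. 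Tracking this cancellation — and verifying that the analogous suppression occurs for the $\delta_{\red,\blue}\delta_{\blue,\blue}$ and $\delta_{\blue,\red}\delta_{\blue,\blue}$ entries (producing the $O_k(k^4/2^k)$ coefficient rather than $O_k(k)$) — is the delicate computational core of the proof. Combining these bounds and using $\|\vec\delta\|\le O_k(1)$ to absorb the cubic-in-$\vec\delta$ remainder then yields the claimed inequality.
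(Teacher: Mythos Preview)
Your high-level plan --- obtain a clause-wise product formula, Taylor expand about $\tfrac12\vecone$, kill the linear (indeed all odd-order) terms via the inversion symmetry $\vec\alpha\mapsto\vecone-\vec\alpha$, and then read off the Hessian entries with their density prefactors --- is exactly what the paper does, and your diagnosis of which second-order terms survive and why is correct. There are, however, two real gaps.

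The first is in obtaining the product formula. For fixed $\vec\pi$ the probability over $\vec s$ does factor over clauses, but ``averaging over $\vec\pi$'' does not preserve this: $\vec\pi$ is a \emph{bijection}, so the type patterns in different clauses are globally correlated, and knowing the marginal type distribution of a single ball does not make the $m$ clause factors independent. The paper resolves this in two moves. Fact~\ref{Fact_q} replaces the pair $(\vec\pi,\tau)$ by a fixed coloring $f=(f_\sigma,f_\tau)$ of the clause positions compatible with $\cC$ together with a uniformly random $\vec t:[m]\times[k]\to\{0,1\}$ conditioned on matching the type counts dictated by $\vec\alpha$; this is a pure counting argument. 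Then Lemma~\ref{Lemma_magic} removes the global count constraint on the $\blue\blue$ positions by making those entries i.i.d.\ Bernoulli$(\alpha_{\blue,\blue})$ and invoking a local limit theorem to show the conditioning only costs a bounded factor. Only after this de-conditioning are the clauses genuinely independent, and Proposition~\ref{Prop_psis} then writes $\frac1n\ln\Pr[\cV]$ as the explicit sum $\psi_\sigma+\psi_\Gamma+\sum_{c,c'}\psi_{c,c'}$ that one Taylor-expands.

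The second gap is the remainder estimate. Writing $R(\vec\delta)=O(\|\vec\delta\|^4 n)$ with a generic $O_k(1)$ constant is fatal: on $\cT(\alpha)$ one has $|\delta_{\red,\red}|\le 10/\sqrt k$, so this bound would allow a term of order $n/k^2$, which completely swamps the target quadratic bound. The actual reason the remainder is harmless is structural and visible only once the explicit $\psi$-formulas are in hand: every $\psi$-term is \emph{affine} in each of $\alpha_{\red,\red},\alpha_\Gamma,\alpha_{\red,\blue},\alpha_{\blue,\red}$, so all second-and-higher partials in those directions vanish, and the only surviving higher-order terms involve at least two derivatives in $a=\alpha_{\blue,\blue}$ and therefore inherit the same small prefactors ($g_{\red,\red}$, $\gamma$, or $O_k(k^4/2^k)$) as the quadratic terms you already bounded.
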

For a number $\alpha\in\brk{\frac12-2^{-k/3},\frac12}$ 
let $p_{\cC}(\alpha)$ be the probability that for a random $\tau\in\cbc{0,1}^V$ with $\dist(\sigma,\tau)=\alpha n$
we have $\alpha(\sigma,\tau,\cC)\in\cT(\alpha)$ and $(\sigma,\tau,\vec s,\vec\pi)$ is $\cC$-valid.
We will derive the following consequence of \Lem~\ref{Lemma_Taylor} in \Sec~\ref{Sec_CorTaylor}.

\begin{corollary}\label{Cor_Taylor}
Suppose that $\alpha\in\brk{\frac12-2^{-k/3},\frac12}$ and let $\cC$ be a good profile.
Then
	$$p_{\cC}(\alpha)\leq p_{\cC}(1/2)\cdot\exp\brk{O_k(k^4/2^k)\cdot\bc{\alpha-\frac12}^2\cdot n}+\exp(-\Omega(n)).$$
\end{corollary}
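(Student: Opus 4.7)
The approach is to lift the pointwise bound of \Lem~\ref{Lemma_Taylor} to an average bound via a Laplace-type sum over overlap vectors. By Fact~\ref{Fact_overlap} I would decompose
\[
p_\cC(\alpha) \;=\; \binom{n}{\alpha n}^{-1}\sum_{\vec\alpha\in\cT(\alpha)} N_\alpha(\vec\alpha)\,p_\cC(\vec\alpha) \;+\; p_\cC^{\mathrm{wild}}(\alpha),
\]
where $N_\alpha(\vec\alpha)$ counts $\tau\in\cbc{0,1}^V$ with $\dist(\sigma,\tau)=\alpha n$ and overlap vector $\vec\alpha$, and $p_\cC^{\mathrm{wild}}(\alpha)$ collects the contribution of $\vec\alpha\notin\cT(\alpha)$. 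By \Lem~\ref{Lemma_wild} the expected number of $\cC$-valid pairs with non-tame overlap is $\exp(-\Omega(n))$, which supplies the additive error term in the corollary.

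For the tame part, I would invoke \Lem~\ref{Lemma_Taylor} pointwise. Since $\cC\in\cP_g$, $g_{\red,\red}=O_k(k/2^k)$ and $\gamma=O_k(k^2/2^k)$, so each prefactor $O_k(k)g_{\red,\red}$, $O_k(k)\gamma$ is at most $O_k(k^4/2^k)$. Bounding every cross term via $|ab|\leq(a^2+b^2)/2$ yields
\[
p_\cC(\vec\alpha)\;\leq\;p_\cC(\tfrac12\vecone)\cdot\exp\!\bc{O_k(k^4/2^k)\,|\vec\delta|^2\,n},\qquad \vec\delta=\vec\alpha-\tfrac12\vecone.
\]

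Next I would estimate the conditional density $N_\alpha(\vec\alpha)/\binom{n}{\alpha n}$. The overlap $\vec\alpha(\sigma,\tau,\cC)$ is a linear functional of the independent indicators $\mathbf{1}\cbc{\sigma(v)=\tau(v)}$, and for $\vec d$ drawn from $\vec D$ a multidimensional local CLT shows that this density is well-approximated by a Gaussian centered at $\vec\alpha^\ast\approx(1-\alpha)\vecone$ with covariances of order $1/(g_{c,c'}n)$. A saddle-point evaluation of the tame sum is then dominated by $\vec\alpha^\ast$, where $|\vec\delta^\ast|^2 = O((\alpha-\tfrac12)^2)$, producing exactly the target factor $\exp\!\brk{O_k(k^4/2^k)(\alpha-\tfrac12)^2 n}$. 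Applying the same Laplace analysis to $p_\cC(1/2)$ shows $p_\cC(1/2)\geq n^{-O(1)}\,p_\cC(\tfrac12\vecone)$, so the polynomial prefactors arising from the Gaussian integral cancel when dividing; any remaining sub-exponential slack is absorbed into the constants hidden in $O_k(k^4/2^k)$ or into the $\exp(-\Omega(n))$ error.

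The main obstacle is precisely the counting estimate for $N_\alpha(\vec\alpha)$: because a variable of degree $d_v$ contributes all $d_v$ of its balls jointly to whichever overlap classes its $(g_\sigma,g_\tau)$-coloring dictates, the overlap vector is \emph{not} a sum of independent binomials, and its effective covariance matrix depends delicately on the degree sequence. Verifying the multidimensional local CLT for this correlated linear functional, and showing that the effective covariance concentrates around the expected diagonal form of entries $\Theta(g_{c,c'})$ for $\vec d\sim\vec D$, is the technical heart of the argument; once this is in hand, the remaining saddle-point computation is routine.
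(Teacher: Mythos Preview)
Your overall plan (split off the non-tame overlaps via \Lem~\ref{Lemma_wild}, apply \Lem~\ref{Lemma_Taylor} pointwise on the tame part, then average over $\vec\tau$) is exactly the paper's strategy. However, the step where you apply $|ab|\le(a^2+b^2)/2$ to every cross term is fatal for the Laplace argument you outline afterwards.

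After AM--GM your bound reads $p_\cC(\vec\alpha)\le p_\cC(\tfrac12\vecone)\exp\bc{O_k(k^4/2^k)|\vec\delta|^2 n}$, so in particular the coefficient of $\delta_{\red,\red}^2$ is of order $k^4/2^k$. But the entropy of $\alpha_{\red,\red}$ under a uniformly random $\tau$ at distance $\alpha n$ has curvature only of order $g_{\red,\red}/k^2=\Theta_k(1/(k2^k))$: there are merely $g_{\red,\red}n$ balls in that colour class, and a single variable can carry up to $3k$ of them, so the effective sample size is $g_{\red,\red}n/k^2$ (this is precisely the content of \Prop~\ref{Prop_goodAzuma}). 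Since $k^4/2^k\gg 1/(k2^k)$, the combined exponent $\frac1n\ln\bc{N_\alpha(\vec\alpha)/\bink n{\alpha n}}+O_k(k^4/2^k)|\vec\delta|^2$ is \emph{convex} in $\delta_{\red,\red}$, not concave; the putative saddle $\vec\alpha^\ast$ is not a maximum, and the tame sum is dominated by the boundary of $\cT(\alpha)$, where $|\delta_{\red,\red}|\asymp k^{-1/2}$ is far larger than $|\alpha-\tfrac12|$. The same problem occurs for $\delta_\Gamma$.

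The paper's fix is to \emph{not} symmetrise. The whole point of the form of \Lem~\ref{Lemma_Taylor} is that every term carries a factor of $\delta_{\blue,\blue}$ (or $\delta_{\blue,\blue}^2$), and $\alpha_{\blue,\blue}$ is the one component that concentrates at rate $\exp(-c\delta^2 n)$ with an \emph{absolute} constant $c$ (first line of \Prop~\ref{Prop_goodAzuma}). One bounds the other factors $\delta_{\red,\red},\delta_\Gamma,\delta_{\red,\blue},\delta_{\blue,\red}$ crudely by their tameness thresholds and keeps $\delta_{\blue,\blue}$ live; then the Azuma tail for $\alpha_{\blue,\blue}$ does the work of your Laplace integral. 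No multidimensional local CLT is needed: the ``good'' hypothesis (at most $3k$ supported clauses per variable) was put there precisely so that Azuma's bounded-differences inequality applies directly, which is what \Prop~\ref{Prop_goodAzuma} records. If you drop the AM--GM step and instead combine \Lem~\ref{Lemma_Taylor} with \Prop~\ref{Prop_goodAzuma} and the tameness bounds, the corollary follows without the saddle-point machinery.
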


\noindent{\em Proof of \Prop~\ref{Prop_goodProfiles}.}
By \Prop~\ref{Prop_badProfiles} and \Lem~\ref{Lemma_wild}, 
for a random $\vec d$ chosen from $\vec D$ we have  \whp\
	$$\mu_{\cC}(\alpha)\leq \bink{n}{\alpha n}p_{\cC}(\alpha)+o(1).$$
Thus, it suffices to estimate $\bink{n}{\alpha n}p_{\cC}(\alpha)$.
By Stirling's formula and \Cor~\ref{Cor_Taylor},
	\begin{eqnarray*}
	\frac1n\ln\bcfr{\bink{n}{\alpha n}p_{\cC}(\alpha)}{\mu_{\cC}(1/2)}&\leq&
		\frac1n\ln\bcfr{\bink{n}{\alpha n}p_{\cC}(\alpha)}{\bink{n}{n/2}p_{\cC}(1/2)}\\
			&\leq&-(4-o_k(1))(\alpha-1/2)^2+\frac1n\ln\bcfr{p_{\cC}(\alpha)}{p_{\cC}(1/2)}\\
		&\leq&-\bc{4-O_k(\alpha-1/2)-O_k(k^4/2^k)}\cdot\bc{\alpha-\frac12}^2\\
		&=&-\bc{4-o_k(1)}\bc{\alpha-\frac12}^2,
	\end{eqnarray*}
whence the assertion follows for $k\geq k_0$ sufficiently large.
\qed


\subsection{Proof of \Lem~\ref{Lemma_Taylor}}\label{Sec_Taylor}

A map $f:\brk m\times\brk k\ra\cbc{\red,\blue}$ is called a \emph{coloring}
	if for each $i\in\brk m$ there is at most one $j\in\brk k$ such that $f(i,j)=\red$.
Let $f_\sigma,f_\tau$ be colorings.
We say that the pair $f=(f_\sigma,f_\tau)$ is \emph{compatible} with a profile $\cC=(g_\sigma,g_\tau,\Gamma)$ if
	\begin{eqnarray*}
	\abs{g_\sigma^{-1}(c)\cap g_\tau^{-1}(c')}&=&\abs{f_\sigma^{-1}(c)\cap f_\tau^{-1}(c')}
		\qquad\mbox{ for any }c,c'\in\cbc{\red,\blue},\\
	\abs\Gamma&=&\abs{\cbc{i\in\brk m:\exists j\neq l:f_\sigma(i,j)=\red\wedge f_\tau(i,l)=\red}}.
	\end{eqnarray*}

Let $f$ be a coloring and let $t:\brk m\times\brk k\ra\cbc{0,1}$ be a map.
We call $(f,t)$ \emph{valid} for a signature $s$ if the following two conditions are satisfied:
\begin{enumerate}
\item[$\bullet$] for any $i\in\brk m$ there exist $j,l\in\brk k$ such that $s(i,j)(-1)^{t(i,j)}\neq s(i,l)(-1)^{t(i,l)}$.
\item[$\bullet$] if $f(i,j)=\red$, then for all $l\in\brk k\setminus\cbc j$ we have $s(i,j)(-1)^{t(i,j)}\neq s(i,l)(-1)^{t(i,l)}$.
\end{enumerate}
Intuitively, this means that any formula in which the signs are given by $s$ is NAE-satisfied if
for all $(i,j)\in\brk m\times\brk k$ the literal in position $(i,j)$ takes the value $t(i,j)$.
Furthermore, for each $(i,j)$ with $f(i,j)=\red$ the literal in position $(i,j)$ supports clause $i$
if the truth values are given by $t$.

Let $\vec\alpha\in\brk{0,1}^5$ be a vector.
Let $f=(f_\sigma,f_\tau)$ be a pair of colorings.
Let $t:\brk m\times\brk k\ra\cbc{0,1}$.
We call $(f,t)$ \emph{compatible} with $\vec\alpha$ if
	\begin{eqnarray*}
	\alpha_{c,c'}&=&\frac{\abs{t^{-1}(1)\cap f_\sigma^{-1}(c)\cap f_\tau^{-1}(c')}}{\abs{f_\sigma^{-1}(c)\cap f_\tau^{-1}(c')}}
		\qquad\mbox{for all $c,c'\in\cbc{\red,\blue}$, and}\\
	\alpha_\Gamma&=&\frac{
		\abs{t^{-1}(1)\cap\cbc{(i,l)\in\brk m\times\brk k:\exists j\neq l:f_\sigma(i,j)=\red\wedge f_\tau(i,l)=\red}}
		}{\abs{\cbc{i\in\brk m:\exists j\neq l:f_\sigma(i,j)=\red\wedge f_\tau(i,l)=\red}}}.
	\end{eqnarray*}
Let $\vec t:\brk m\times\brk k\ra\cbc{0,1}$ be uniformly distributed, and let
	$$q_{f}(\vec\alpha)=\pr_{\vec s,\vec t}\brk{(f,\vec t)\mbox{ is valid for }\vec s\,|\,\mbox{$(f,\vec t)$ is compatible with $\vec\alpha$}}.$$

\begin{fact}\label{Fact_q}
Suppose that $f$ is compatible with a profile $\cC$.
Then for any $\vec\alpha$ we have
	$p_\cC(\vec\alpha)=q_f(\vec\alpha).$
\end{fact}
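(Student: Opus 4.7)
The plan is to express both $p_\cC(\vec\alpha)$ and $q_f(\vec\alpha)$ as averages of the same per-signature quantity $\pr_{\vec s}\brk{(f',t')\text{ valid for }\vec s}$ over uniform distributions on pairs $(f',t')$ compatible with $(\cC,\vec\alpha)$, and to identify the two averages via a direct combinatorial correspondence between $\vec\pi$ on the one hand and the pair $(f,\vec t)$ on the other.

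First I would observe that any configuration $\vec\pi$ canonically induces a pair $(f^{\vec\pi},t^{\vec\pi})$ by $f^{\vec\pi}_c(i,j) := g_c(\vec\pi^{-1}(i,j))$ for $c \in \cbc{\sigma,\tau}$ and $t^{\vec\pi}(i,j) := \vecone\brk{\sigma(x)=\tau(x)}$ with $x$ the variable at $\vec\pi^{-1}(i,j)$. Unwinding the two conditions in the definition of $\cC$-validity then yields: $(\sigma,\tau,\vec s,\vec\pi)$ is $\cC$-valid iff $(f^{\vec\pi},t^{\vec\pi})$ is compatible with $(\cC,\vec\alpha)$ and valid for $\vec s$ in the coloring/truth-value sense, \emph{and} $\vec\pi$ sends $\Gamma$-balls exactly onto the ``special'' positions of $f^{\vec\pi}$---those $(i,l)$ with $f^{\vec\pi}_\sigma(i,l)=\blue$, $f^{\vec\pi}_\tau(i,l)=\red$, and $f^{\vec\pi}_\sigma(i,j)=\red$ for some $j\neq l$.

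Next I would count, for any fixed $(f,t)$ compatible with $(\cC,\vec\alpha)$, the number of $\vec\pi$ inducing $(f,t)$ and satisfying the $\Gamma$-to-special condition. By partitioning balls and positions into matching type-classes---indexed by the quadruple ($g$/$f_\sigma$-color, $g$/$f_\tau$-color, $\vecone\brk{\sigma=\tau}$/$t$-value, $\Gamma$-membership/special-status within the $(\blue,\red)$-cell)---the compatibility conditions (including $\abs\Gamma=\abs{\text{special positions}}$ from $f$-compatibility and the $\alpha_\Gamma$-clause of $\vec\alpha$-compatibility) force the ball-type and position-type sizes to match. The count is therefore $\prod_T \abs T!$, independent of the particular compatible $(f,t)$. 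This yields $p_\cC(\vec\alpha)$ as a universal constant times $\Erw_{(f',t')\,\mathrm{unif\,compat}}\brk{\pr_{\vec s}\brk{(f',t')\text{ valid}}}$.

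For $q_f(\vec\alpha)$, compatibility with $\vec\alpha$ depends on $\vec t$ alone (given $f$), so conditioning on it leaves $\vec t$ uniform on the set of compatible $t$'s, yielding $q_f(\vec\alpha)=\Erw_{\vec t\,\mathrm{unif\,compat}}\brk{\pr_{\vec s}\brk{(f,\vec t)\text{ valid}}}$. The equality $p_\cC(\vec\alpha)=q_f(\vec\alpha)$ is closed by exploiting the $S_m\times S_k^m$-symmetry: this group acts transitively on the compatible $(f',t')$-pairs with any given clause-profile multiset while preserving both compatibility and the per-signature validity probability, so fixing $f$ and averaging over its compatible $t$'s gives the same answer as averaging over all compatible $(f',t')$. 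The hardest part will be the $\Gamma$-bookkeeping: because $\Gamma$ is a specific ball-subset rather than just a color, the counting in the second step must split the $(\blue,\red)$-cell further by $\Gamma$-membership, and this extra refinement must line up consistently with the special-position and $\alpha_\Gamma$ clauses in the two compatibility definitions.
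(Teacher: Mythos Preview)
Your approach is essentially the paper's: both count, for a fixed compatible $(f,t)$, the configurations $\pi$ that push the ball-types $(g_\sigma,g_\tau,\Gamma,\tau)$ onto the position-types $(f_\sigma,f_\tau,\text{special},t)$, and observe that this count---the product of type-class factorials---does not depend on the particular $(f,t)$. You are considerably more explicit than the paper (which simply writes down the factorial product for $|\Pi|$ and asserts ``which implies the assertion''); in particular, your $S_m\times S_k^m$-symmetry step is a clean way to pass from the uniform average over all compatible $(f',t')$ to the uniform average over compatible $t$'s for the fixed $f$, a step the paper leaves implicit.
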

\begin{proof}
Let $t:\brk m\times\brk k$ be be such that $(f,t)$ is compatible with $\vec\alpha$.
Let $\tau\in\cbc{0,1}^V$ be such that $\vec\alpha=\vec\alpha(\sigma,\tau,\cC)$.
Let $\Pi$ be the set of all $\pi:B\ra\brk m\times\brk k$ such that
	$t(\pi(x,i))=\tau(x)$ for all $x\in V$, $i\in\brk{d_x}$.
Then $\Pi$ consists of all $\pi$ that map the right ``type'' of ``ball'' to each position $(i,j)$.
Therefore,
	\begin{eqnarray*}
	\abs\Pi&=&((\alpha_\Gamma g_\Gamma n)!((1-\alpha_\Gamma)g_\Gamma n)!
			\cdot((\alpha_{\red,\red}g_{\red,\red} n)!((1-\alpha_{\red,\red})g_{\red,\red} n)!\\
			&&\quad\cdot
				((\alpha_{\red,\blue}g_{\red,\blue} n-\alpha_\Gamma g_\Gamma n)!((1-\alpha_{\red,\blue})g_{\red,\blue} n
						-(1-\alpha_\Gamma) g_\Gamma n)!\\
			&&\quad\cdot
				((\alpha_{\blue,\red}g_{\blue,\red} n-\alpha_\Gamma g_\Gamma n)!((1-\alpha_{\blue,\red})g_{\blue,\red} n
						-(1-\alpha_\Gamma) g_\Gamma n)!\\
			&&\quad\cdot
				((\alpha_{\blue,\blue}g_{\blue,\blue} n-\alpha_\Gamma g_\Gamma n)!((1-\alpha_{\blue,\blue})g_{\blue,\blue} n
						-(1-\alpha_\Gamma) g_\Gamma n)!\enspace.
	\end{eqnarray*}
Hence, $\abs\Pi$ is independent of the actual map $t$, which implies the assertion.
\qed\end{proof}

Thus, we are left to compute $q_{f}(\vec\alpha)$ for a fixed pair $f=(f_\sigma,f_\tau)$ of colorings that is compatible with the good profile $\cC$.
To facilitate this computation, we simplify the random experiment further.
Namely, let
	$$\cR=\cbc{(i,j)\in\brk m\times\brk k:f(i,j)\neq(\blue,\blue)},\qquad\cB=\brk m\times\brk k\setminus\cR.$$
For maps $t_{\red}:\cR\ra\cbc{0,1}$ and $t_{\blue}:\cB\ra\cbc{0,1}$ we let $t_{\red}\cup t_{\blue}:\brk m\times\brk k$ be the map defined by
	$$(i,j)\mapsto\left\{\begin{array}{cl}
		t_{\red}(i,j)&\mbox{ if }(i,j)\in\cR,\\
		t_{\blue}(i,j)&\mbox{ if }(i,j)\in\cB.
		\end{array}\right.$$
Furthermore, we say that $(f,t_{\red})$ is \emph{compatible with $\vec\alpha$} if there exists $t_{\blue}$
such that $(f,t_\red\cup t_\blue)$ is compatible with $\vec\alpha$.

Suppose that $(f,t_{\red})$ is compatible with $\vec\alpha$.
Let $\vec t_{\blue}:\cB\ra\cbc{0,1}$ be obtained by setting $\vec t_{\blue}(i,j)=1$ with probability $\alpha_{\blue,\blue}$
and $\vec t_{\blue}(i,j)=0$ with probability $1-\alpha_{\blue,\blue}$ independently for all $(i,j)\in\cB$.
Furthermore, let
	$$q_f(\vec\alpha,t_{\red})=\pr\brk{(f,t_\red\cup\vec t_{\blue})\mbox{ is valid for }\vec s|(f,t_\red\cup\vec t_{\blue})\mbox{ is compatible with }\vec \alpha}.$$

\begin{fact}\label{Fact_tred}
Suppose that $(f,t_{\red})$ is compatible with $\vec\alpha$.
Then
	$q_f(\vec\alpha)=q_f(\vec\alpha,t_{\red}).$
\end{fact}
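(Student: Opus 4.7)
The plan is to show that both $q_f(\vec\alpha)$ and $q_f(\vec\alpha,t_\red)$ equal a common value depending only on $f$, as a consequence of a simple sign-flipping invariance. The key observation is that the event ``$(f,t)$ is valid for $s$'' depends on the pair $(s,t)$ solely through the effective literal values $\ell(i,j):=s(i,j)(-1)^{t(i,j)}$; indeed, every clause of the definition of validity---NAE-satisfaction at each clause and the ``unique supporter'' requirement at each $\red$-colored position---is phrased purely in terms of these products. For any fixed $t$, the map $s\mapsto s\cdot(-1)^t$ is a bijection on $\cbc{\pm1}^{\brk m\times\brk k}$, so when $\vec s$ is uniform the induced $\vec\ell$ is uniform as well. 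Consequently,
\[
V(f) \;:=\; \pr_{\vec s}\brk{(f,t)\text{ valid for }\vec s}
\]
does not depend on $t$; it is a well-defined function of $f$ alone.

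Given this invariance, the fact follows by conditioning on $\vec t$. For $q_f(\vec\alpha)$, the conditioning event ``$(f,\vec t)$ is compatible with $\vec\alpha$'' is measurable with respect to $\vec t$ only, and $\vec s$ is independent of $\vec t$; the tower property therefore yields
\[
q_f(\vec\alpha) \;=\; \Erw\brk{\pr_{\vec s}\brk{(f,\vec t)\text{ valid for }\vec s}\mid\text{compat.\ with }\vec\alpha} \;=\; V(f).
\]
For $q_f(\vec\alpha,t_\red)$ the argument is identical, the ambient randomness now being $\vec t_\blue$ (with $t_\red$ fixed): the conditioning event depends only on $\vec t_\blue$, and $\vec s$ remains independent, so the same tower-property calculation gives $q_f(\vec\alpha,t_\red)=V(f)$. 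Equating the two expressions yields the claim.

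The only routine bookkeeping is to verify that both conditioning events have positive probability so the conditional expectations are legitimate. For $q_f(\vec\alpha)$ this follows because $\vec\alpha$ arises from at least one compatible pair; for $q_f(\vec\alpha,t_\red)$ it follows from the hypothesis that $(f,t_\red)$ is compatible with $\vec\alpha$, together with $0<\alpha_{\blue,\blue}<1$, ensuring that the iid Bernoulli$(\alpha_{\blue,\blue})$ variable $\vec t_\blue$ attains the required number of ones with positive probability. No deeper obstacle arises---once one notices that $s\cdot(-1)^t$ is uniformly distributed for any fixed $t$, the statement is essentially a restatement of this invariance.
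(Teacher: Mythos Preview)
Your argument contains a genuine gap. You assert that the event ``$(f,t)$ is valid for $\vec s$'' depends on $(s,t)$ only through $\ell=s(-1)^t$, but here $f=(f_\sigma,f_\tau)$ is a \emph{pair} of colorings and validity encodes the requirements for \emph{both} assignments $\sigma=\vecone$ and $\tau$ (this is what makes Fact~\ref{Fact_q} true, and it is how Proposition~\ref{Prop_psis} is computed). The $\sigma$-part, namely that $(f_\sigma,\vecone)$ is valid for $s$, is a condition on the signs $s$ alone, not on $\ell$. Consequently the change of variables $s\mapsto s(-1)^t$ does not leave the joint event invariant, and $\pr_{\vec s}[(f,t)\text{ valid for }\vec s]$ genuinely depends on $t$. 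For a concrete counterexample, take a clause $i$ with $f_\sigma(i,j)=f_\tau(i,j)=\red$: the $\sigma$-criticality forces $s(i,l)=-s(i,j)$ for all $l\neq j$, after which the $\tau$-criticality at $(i,j)$ reduces to $t(i,l)=t(i,j)$ for all $l$. Hence the per-clause probability over $\vec s$ equals $2^{1-k}$ if all $t$-values agree at that clause and $0$ otherwise---directly contradicting your claim that this probability is a fixed $V(f)$ independent of $t$.

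The reason the fact holds is a different, clause-level symmetry. With $\vec t_\blue$ i.i.d.\ Bernoulli and $\vec s$ uniform, clauses are independent, and the contribution of clause $i$ depends only on its color type and on the value(s) $t_\red$ takes at its (at most two) $\cR$-positions. Compatibility with $\vec\alpha$ fixes, for every clause type, the number of clauses carrying each possible $t_\red$-pattern; this is exactly what the formulas in Proposition~\ref{Prop_psis} use. Any two compatible choices of $t_\red$ are therefore related by swapping clauses of the same type, and such a swap (together with the induced relabeling of $\cB$-positions) preserves the distribution of $(\vec s,\vec t_\blue)$, the validity event, and the conditioning event $\abs{\vec t_\blue^{-1}(1)}=\alpha_{\blue,\blue}\abs\cB$. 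Thus $q_f(\vec\alpha,t_\red)$ is the same for every compatible $t_\red$, and averaging over the red part of a uniform compatible $\vec t$ yields $q_f(\vec\alpha)=q_f(\vec\alpha,t_\red)$.
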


\begin{lemma}\label{Lemma_magic}
Suppose that $(f,t_{\red})$ is compatible with $\vec\alpha$.
There is a number $C=C(k)>0$ such that
	$$q_f(\vec\alpha,t_{\red})\leq C\cdot \pr\brk{(f,t_\red\cup\vec t_{\blue})\mbox{ is valid for }\vec s}.$$
\end{lemma}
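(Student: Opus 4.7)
The proof plan is to reduce the statement, via Bayes' rule, to an anti-concentration estimate for a sum of independent integer-valued random variables. By definition,
\[
q_f(\vec\alpha, t_\red)
 \;=\; \frac{\pr\brk{\text{valid and compatible}}}{\pr\brk{\text{compatible}}},
\]
so the lemma is equivalent to $\pr\brk{\text{compatible}\mid\text{valid}} \le C\cdot\pr\brk{\text{compatible}}$. Crucially, since $\Gamma\cup f_\sigma^{-1}(\red)\cup f_\tau^{-1}(\red) \subseteq \cR$, the only component of ``compatible with $\vec\alpha$'' that is not pinned down by the fixed $t_\red$ is $S := \abs{\vec t_\blue^{-1}(1)}$, and compatibility is simply the event $\cbc{S = s^\star}$ with $s^\star = \alpha_{\blue,\blue} g_{\blue,\blue} n$. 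Because $\vec t_\blue$ is a vector of $\abs{\cB} = \Theta(n)$ independent Bernoulli$(\alpha_{\blue,\blue})$ variables with $\alpha_{\blue,\blue}$ bounded away from $0$ and $1$, \Lem~\ref{lem:locallimit} (or a direct Stirling computation) yields $\pr\brk{\text{compatible}} = \Theta(n^{-1/2})$. It therefore suffices to prove $\pr\brk{S = s^\star\mid\text{valid}} = O_k(n^{-1/2})$.

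The main step exploits the product structure of validity. Writing $\cbc{\text{valid}} = \bigcap_{i=1}^m V_i$ with $V_i$ depending only on $(\vec s_i, \vec t_i)$ and integrating out the independent signs $\vec s$, the conditional distribution of $\vec t_\blue$ given $\cbc{\text{valid}}$ factorises over clauses, so the per-clause counts $S_i := \abs{\cbc{j:(i,j)\in\cB,\;\vec t_\blue(i,j)=1}}$ are \emph{mutually independent} under $\pr\brk{\,\cdot\,\mid\text{valid}\,}$, and $S = \sum_i S_i$. Goodness of $\cC$ together with $\abs{g_\sigma^{-1}(\red)} = \abs{g_\tau^{-1}(\red)} = \lambda n$ forces $g_{\blue,\blue} = 1 - O_k(k/2^k)$, so all but an $O_k(k/2^k)$-fraction of the $m = \Theta(n)$ clauses are ``all-blue'' (contain no position of $\cR$). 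For each all-blue clause, $V_i$ is merely the NAE condition, which excludes only a $2^{1-k}$-fraction of the joint $(\vec s_i,\vec t_i)$ configurations; a short computation exploiting the symmetry between the ``all-literals-true'' and ``all-literals-false'' excluded events gives
\[
\Erw\brk{S_i\mid V_i} = k\alpha_{\blue,\blue}, \qquad
\Var(S_i\mid V_i) = k\alpha_{\blue,\blue}(1-\alpha_{\blue,\blue}),
\]
matching the unconditional binomial mean and variance. In particular, the law of $S_i\mid V_i$ differs from $\Bin(k,\alpha_{\blue,\blue})$ by at most $2^{1-k}$ in total variation, so each such $S_i$ assigns probability bounded below by some $c(k)>0$ to each of two consecutive integers. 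Summing over the $\Theta(n)$ all-blue clauses and invoking a Kolmogorov--Rogozin anti-concentration estimate (or a local CLT for sums of independent integer variables) yields $\pr\brk{S = s^\star\mid\text{valid}} = O_k(n^{-1/2})$, as required.

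The main technical nuisance is handling the clauses with one or more red positions. For such a clause, $V_i$ additionally encodes the support requirements that tie $\vec s_i$ to the fixed $t_\red$-values on its red positions, and in principle the conditional mean $\Erw\brk{S_i\mid V_i}$ can shift by $O(k)$ from the unconditional binomial mean. However, the total number of such ``mixed'' clauses is $O_k(nk/2^k)$, so their aggregate contribution to both $\Erw\brk{S\mid\text{valid}}$ and $\Var(S\mid\text{valid})$ is a lower-order $o(n)$ perturbation; the all-blue clauses dominate the bulk statistics and the anti-concentration conclusion is unaffected. This is the only place where care is needed, but no genuinely hard step is involved: the argument is a clause-wise decomposition followed by a routine local-limit/anti-concentration estimate.
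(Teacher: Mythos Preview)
Your proof is correct and follows essentially the same route as the paper's: both arguments rewrite $q_f(\vec\alpha,t_\red)$ via Bayes' rule as $\pr[\text{valid}]\cdot\pr[\text{compatible}\mid\text{valid}]/\pr[\text{compatible}]$, observe that compatibility reduces to the single event $\{|\vec t_\blue^{-1}(1)|=\alpha_{\blue,\blue}|\cB|\}$, use the binomial local limit theorem for the denominator, and exploit the clause-wise factorisation of validity together with the non-degeneracy of the all-blue clause contributions to get an $O(n^{-1/2})$ local limit bound for the numerator. Your treatment is in fact somewhat more explicit than the paper's (you compute the conditional mean and variance for all-blue clauses, invoke Kolmogorov--Rogozin as an alternative to the local CLT, and explicitly bound the effect of the mixed clauses), but the underlying idea is identical.
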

\begin{proof}
We have
	\begin{eqnarray}\nonumber
	q_f(\vec\alpha,t_{\red})&=&\pr\brk{(f,t_\red\cup\vec t_{\blue})\mbox{ is valid for }\vec s|(f,t_\red\cup\vec t_{\blue})\mbox{ is compatible with }\vec \alpha}\\
		&=&\pr\brk{(f,t_\red\cup\vec t_{\blue})\mbox{ is valid for }\vec s| \abs{t_{\blue}^{-1}(1)}=\alpha_{\blue,\blue}\abs\cB}\nonumber\\
		&=&\frac{\pr\brk{(f,t_\red\cup\vec t_{\blue})\mbox{ is valid for }\vec s\wedge\abs{t_{\blue}^{-1}(1)}=\alpha_{\blue,\blue}\abs\cB}}
				{\pr\brk{\abs{t_{\blue}^{-1}(1)}=\alpha_{\blue,\blue}\abs\cB}}\nonumber\\
		&=&\pr\brk{(f,t_\red\cup\vec t_{\blue})\mbox{ is valid for }\vec s}\cdot\nonumber\\
		&&\qquad		\frac{\pr\brk{\abs{t_{\blue}^{-1}(1)}=\alpha_{\blue,\blue}\abs\cB|(f,t_\red\cup\vec t_{\blue})\mbox{ is valid for }\vec s}}
						{\pr\brk{\abs{t_{\blue}^{-1}(1)}=\alpha_{\blue,\blue}\abs\cB}}.
							\label{eqmagic1}
	\end{eqnarray}

We claim that
	\begin{equation}\label{eqLLT}
	\pr\brk{\abs{t_{\blue}^{-1}(1)}=\alpha_{\blue,\blue}\abs\cB|(f,t_\red\cup\vec t_{\blue})\mbox{ is valid for }\vec s}=O(n^{-1/2}).
	\end{equation}
For given that $(f,t_\red\cup\vec t_{\blue})\mbox{ is valid for }\vec s$, $\abs{t_{\blue}^{-1}(1)}=\alpha_{\blue,\blue}\abs\cB$
is the sum of $m$ independent contributions, as the $t_\blue(i,j)$ are independent Bernoulli variables for all $(i,j)\in\cB$.
Furthermore, 
given $(f,t_\red\cup\vec t_{\blue})\mbox{ is valid for }\vec s$
for all $i$ such that $\red\not\in f_\sigma(i\times\brk k)\cup f_\tau(i\times\brk k)$ the random variable
	$\sum_{j\in\brk k}t_\blue(i,j)$
takes any value between $1$ and $k$ with non-zero probability.
Therefore, the conditional random variable $\abs{t_{\blue}^{-1}(1)}$ has a local limit theorem, see Lemma~\ref{lem:locallimit},
and~(\ref{eqLLT}) follows.

As the \emph{unconditional} distribution of $\abs{t_{\blue}^{-1}(1)}$ is just a binomial distribution with mean $\alpha_{\blue,\blue}\abs\cB$, we have
	$$\pr\brk{\abs{t_{\blue}^{-1}(1)}=\alpha_{\blue,\blue}\abs\cB}=\Omega(n^{-1/2}).$$
Combining this with~(\ref{eqmagic1}) and~(\ref{eqLLT}) yields the assertion.
\qed\end{proof}
Combining Facts~\ref{Fact_q} and~\ref{Fact_tred} with \Lem~\ref{Lemma_magic}, we obtain

\begin{corollary}\label{Cor_magic}
Suppose that $(f,t_{\red})$ is compatible with $\vec\alpha$.
Then
	$$p_\cC(\vec\alpha)\leq C\cdot \pr\brk{(f,t_\red\cup\vec t_{\blue})\mbox{ is valid for }\vec s}.$$
\end{corollary}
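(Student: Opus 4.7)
The plan is to obtain the corollary by concatenating the three results that immediately precede it, namely Facts~\ref{Fact_q} and~\ref{Fact_tred} together with \Lem~\ref{Lemma_magic}. All the genuine technical content has been distilled into \Lem~\ref{Lemma_magic}, so at the level of Corollary~\ref{Cor_magic} itself no new idea is required; the only task is to verify that the hypotheses of the three ingredients chain together.

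First I would pick, alongside the profile $\cC=(g_\sigma,g_\tau,\Gamma)$ and the overlap vector $\vec\alpha$ furnished by the corollary, a pair of colorings $f=(f_\sigma,f_\tau)$ that is compatible with $\cC$. Such an $f$ exists by direct construction: because $\cC$ is good, we have $\abs{g_\sigma^{-1}(\red)}=\abs{g_\tau^{-1}(\red)}=\lambda n$ (which is far smaller than $m$) and $\red_x(g_\sigma),\red_x(g_\tau)\leq 3k$ for every $x$, so one can distribute the required red positions across the $m$ clauses without placing more than one red per clause in either $f_\sigma$ or $f_\tau$, and with the precise overlap prescribed by $\Gamma$. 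With this $f$ in hand, Fact~\ref{Fact_q} yields $p_\cC(\vec\alpha)=q_f(\vec\alpha)$.

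By the hypothesis of the corollary, the given $t_\red$ makes $(f,t_\red)$ compatible with $\vec\alpha$, and so Fact~\ref{Fact_tred} converts $q_f(\vec\alpha)$ into $q_f(\vec\alpha,t_\red)$. Applying \Lem~\ref{Lemma_magic} then bounds this quantity by $C\cdot\pr\brk{(f,t_\red\cup\vec t_\blue)\mbox{ is valid for }\vec s}$, which is precisely the asserted inequality.

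The main obstacle therefore does not lie in the corollary itself but inside \Lem~\ref{Lemma_magic}, whose proof rests on a local limit theorem estimate for the binomial-like variable $\abs{\vec t_\blue^{-1}(1)}$: conditional on $(f,t_\red\cup\vec t_\blue)$ being valid for $\vec s$, this variable still decomposes into enough independent per-clause contributions (each supported on $\cbc{1,\ldots,k}$) that it attains the specific value $\alpha_{\blue,\blue}\abs\cB$ with probability $O(n^{-1/2})$, matching the unconditional binomial probability $\Omega(n^{-1/2})$ up to a factor $C=C(k)$. Once that ingredient is in place, the passage from the conditional probability $p_\cC(\vec\alpha)$ to the unconditional success probability at the level of Corollary~\ref{Cor_magic} is essentially cosmetic.
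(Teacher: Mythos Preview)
Your approach is exactly the paper's: chain Fact~\ref{Fact_q}, Fact~\ref{Fact_tred}, and \Lem~\ref{Lemma_magic}. One small point: the pair $f$ is already part of the hypothesis (it is the $f$ fixed, compatible with $\cC$, in the paragraph preceding Fact~\ref{Fact_tred}), so your separate construction of $f$ is unnecessary---and if you did pick a different $f$, you could not invoke the hypothesis that the \emph{given} $(f,t_\red)$ is compatible with $\vec\alpha$.
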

The crucial feature of the term
	$$\pr\brk{(f,t_\red\cup\vec t_{\blue})\mbox{ is valid for }\vec s}$$
is that in the underlying random experiment, the clauses are \emph{independent} objects,
although there are different ``types'' of clauses.
This independence property allows us to derive the following estimate.

\begin{figure}
	\begin{eqnarray*}
	\psi_{\sigma}&=&(1-k)\lambda\ln2+(r-\lambda)\ln(1-(k+1)2^{1-k}),\\
	\psi_{\red,\red}&=&g_{\red,\red}(k-1)\brk{\alpha_{\red,\red}\ln(a)+(1-\alpha_{\red,\red})\ln(1-a)},\\
	\psi_\Gamma&=&\gamma(k-2)\brk{\alpha_\gamma\ln(1-a)+(1-\alpha_\gamma)\ln a},\\
	\xi&=&g_{\red,\blue}-\gamma,\\
	\alpha_\xi&=&\frac{g_{\red,\blue}(\alpha_{\red,\blue}-\alpha_\Gamma\gamma)}{\xi},\\
	\psi_{\red,\blue}&=&-\xi\ln(2^{k-1}-k-1)+\alpha_\xi\xi\ln\bc{1-a^{k-1}-(1-a)^{k-1}-(k-1)a(1-a)^{k-2}}\\
		&&\qquad+(1-\alpha_\xi)\xi\ln\bc{1-a^{k-1}-(1-a)^{k-1}-(k-1)a^{k-2}(1-a)},\\
	\zeta&=&g_{\blue,\red}-\gamma,\\
	\alpha_\zeta&=&\frac{g_{\blue,\red}(\alpha_{\blue,\red}-\alpha_\Gamma\gamma)}{\zeta},\\
	\psi_{\blue,\red}&=&\alpha_\zeta\zeta\ln\bc{1-a^{k-1}-(1-a)^{k-1}-(k-1)a(1-a)^{k-2}}\\
		&&\qquad+(1-\alpha_\zeta)\zeta\ln\bc{1-a^{k-1}-(1-a)^{k-1}-(k-1)a^{k-2}(1-a)},\\
	\psi_{\blue,\blue}&=&(r-2\lambda+g_{\red,\red})\ln\brk{1-\frac{1+k-\eta(a)}{2^k-1}-k-1},\qquad\mbox{where}\\
	\eta(a)&=&a^k+(1-a)^k+ka(1-a)^{k-1}+ka^{k-1}(1-a)+\\
		&&\quad
			k(a(1-a)^{k-1}+(1-a)a^{k-1}+a^k+(1-a)^k+\\
		&&\qquad\qquad(k-1)a^{k-2}(1-a)^2+(k-1)a^2(1-a)^{k-2}).
	\end{eqnarray*}
\caption{The explicit expressions for \Prop~\ref{Prop_psis}.}\label{Fig_psis}
\end{figure}

\begin{proposition}\label{Prop_psis}
Suppose that $(f,t_{\red})$ is compatible with $\vec\alpha$.
Let $\cV$ be the event that $(f,t_\red\cup\vec t_{\blue})$ is valid for $\vec s$.
Let $a=\alpha_{\blue,\blue}$.
Then
	\begin{eqnarray}\label{eqpsis}
	\frac1n\ln\pr\brk\cV&=&\psi_\sigma+\psi_\Gamma+\sum_{c,c'\in\cbc{\red,\blue}}\psi_{c,c'},
	\end{eqnarray}
with the $\psi$s as shown in Figure~\ref{Fig_psis}.
\end{proposition}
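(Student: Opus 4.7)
Since $\vec s$ and $\vec t_\blue$ factor as products of independent random variables across clauses, and $\cV$ is the conjunction of clause-local conditions, I first decompose
\[
	\pr[\cV] = \prod_{i=1}^m \pr[\cV_i].
\]
Each clause $i$ is then classified by the pattern $(f_\sigma(i,\cdot), f_\tau(i,\cdot))$. Five types arise: (i) red at the same position in both ($g_{\red,\red}n$ clauses), (ii) red at different positions in both, i.e., $\Gamma$-clauses ($\gamma n$ clauses), (iii) red only in $f_\sigma$ ($\zeta n$ clauses, where $\zeta = g_{\blue,\red}-\gamma$), (iv) red only in $f_\tau$ ($\xi n$ clauses, where $\xi = g_{\red,\blue}-\gamma$), and (v) no red in either. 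For every clause I factor further as $\pr[\cV_i] = \pr[\cV_i^\sigma] \cdot \pr[\cV_i^\tau \mid \cV_i^\sigma]$, where $\cV_i^\sigma$ isolates the $\sigma$-conditions (NAE plus the support pattern dictated by $f_\sigma$) and $\cV_i^\tau$ the analogous $\tau$-conditions.

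The $\sigma$-part depends only on $\vec s$ (as $\sigma=\vecone$), and a direct count yields $\pr[\cV_i^\sigma]=2^{1-k}$ for each of the $\lambda n$ $\sigma$-critical clauses (two admissible sign patterns, obtained by choosing the majority sign) and $1-(k+1)2^{1-k}$ for each of the $(r-\lambda)n$ non-critical clauses. Taking logs and summing reproduces $\psi_\sigma$ exactly. For the conditional $\tau$-probability, the $t$-values at blue-blue positions are i.i.d.\ Bernoulli$(a)$ with $a=\alpha_{\blue,\blue}$, while the $t$-values at positions in $\cR$ are prescribed by $t_\red$. In the same-support case (i), a short calculation shows that $\tau$-criticality at the shared position $j$ holds iff $t(i,l)=t_\red(i,j)$ for every $l\neq j$, giving a factor $a^{k-1}$ or $(1-a)^{k-1}$ according to the value of $t_\red(i,j)$; since an $\alpha_{\red,\red}$-fraction of the type-(i) clauses have $t_\red(i,j)=1$, aggregating produces $\psi_{\red,\red}$. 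The $\Gamma$-case (ii) is analogous but enforces a two-position compatibility on $t_\red$, producing the $a^{k-2}/(1-a)^{k-2}$ dichotomy that yields $\psi_\Gamma$.

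The main obstacle is the bookkeeping for the remaining cases (iii), (iv), (v), where the $\tau$-side is not pinned to a specific supporting position and the conditional probability must sum over all sign-and-flip patterns that make $\tau$ NAE-satisfy clause $i$ with the prescribed critical or non-critical behaviour. This yields polynomial expressions in $a$ that can be written compactly via the auxiliary function $\eta(a)$ of Figure~\ref{Fig_psis}; matching the result to $\psi_{\blue,\red}, \psi_{\red,\blue}, \psi_{\blue,\blue}$ requires careful tracking of how the profile parameters $g_{c,c'},\gamma,\xi,\zeta$ and the conditional flip-fractions $\alpha_\xi,\alpha_\zeta$ enter each per-clause factor. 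No new probabilistic idea is needed once the independence structure and the clause-type decomposition are in place; all remaining work is mechanical algebra that, summed over the clauses of each type and combined with $\psi_\sigma$, produces the identity~\eqref{eqpsis}.
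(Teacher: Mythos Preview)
Your proposal is correct and follows essentially the same route as the paper: factor $\pr[\cV]$ over clauses using the independence of $\vec s$ and $\vec t_\blue$, separate the $\sigma$-side contribution (yielding $\psi_\sigma$) from the conditional $\tau$-side, and then classify clauses by the pattern of $f_\sigma,f_\tau$-red positions to obtain $\psi_{\red,\red},\psi_\Gamma,\psi_{\red,\blue},\psi_{\blue,\red},\psi_{\blue,\blue}$ one type at a time. One small imprecision: the auxiliary function $\eta(a)$ enters only in the all-blue case~(v) via $\psi_{\blue,\blue}$, whereas cases~(iii) and~(iv) produce the distinct polynomial factors $1-a^{k-1}-(1-a)^{k-1}-(k-1)a(1-a)^{k-2}$ etc.\ that appear in $\psi_{\red,\blue}$ and $\psi_{\blue,\red}$.
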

\begin{proof}
The first summand $\psi_\sigma$ accounts for the probability that $\sigma=\vecone$
is a NAE-solution and that preicsely the clauses $i$ such that $f(i,j)=\red$ for some $j\in\brk k$ are $\vecone$-critical.
There are precisely $\lambda n$ such clauses, and for each of them the probability of being critical with supporting
literal $(i,j)$ equals $2^{1-k}$.
Furthermore, for the $(r-\lambda)n$ other clauses the probability of being non-critical but NAE-satisfied equals $1-(k+1)2^{1-k}$.
Since these events depend on the signs of the literals only, they occur independently for all clauses, which explains $\psi_\sigma$.

The $\psi_{\red,\red}$ term is derived quite easily as well.
The number of positions $(i,j)$ such that $f_\sigma(i,j)=f_\tau(i,j)=\red$ equals $g_{\red,\red}n$.
There are precisely $\alpha_{\red,\red}g_{\red,\red} n$ among these such that $t_\red(i,j)=1$.
Each such position $(i,j)$ supports its clause under $\vec t$ iff $\vec t(i,l)=1$ for all $l\in\brk k\setminus\cbc j$.
By the construction of $\vec t$, the probability of this event is $a^{k-1}$.
Similarly, the ``success probability'' is $(1-a)^{k-1}$ for all $(i,j)$ with $t_\red(i,j)=0$.

The next factor $\psi_{\Gamma}$ accounts for the number of $(i,j)\in f_\tau^{-1}(\red)\cap f_\sigma^{-1}(\blue)$
such that clause $i$ is $\sigma$-critical but supported by another literal $l\neq i$ under $\sigma$.
Each such clause contains precisely $k-2$ literals $h\in\brk k\setminus\cbc{j,l}$ such that $f_\tau(i,h)=f_\sigma(i,h)=\blue$.
If $t_\red(i,j)=1$, then $\vec t(i,h)=0$ for all $h$, which occurs with probability $(1-a)^{k-2}$.
Similarly, if $t_\red(i,j)=0$, then $\vec t(i,h)=1$ for all $h$,  the probability of which equals $a^{k-2}$.

The term $\psi_{\red,\blue}$ deals with clauses $i$ such that $(i,j)\in f_\tau^{-1}(\red)\cap f_\sigma^{-1}(\blue)\setminus\Gamma$ for some $j$.
The total number of such clauses is $\xi n$.
For each of these $\xi n$ indices $i$ we have $f_\sigma(i,l)=\blue$ for all $l\in\brk k$ (because $(i,j)\not\in\Gamma$).
Suppose that $t_\red(i,j)=1$.
Since clause $i$ is non-critical under $\sigma=\vecone$, it contains a total of $h\geq2$ literals whose signs agree with that of literal $j$.
In order for clause $i$ to be supported by literal $j$ under $\vec t$, the $h-1$ other literals $l$ whose signs agree with that of literal $j$
must take the value $\vec t(i,l)=0$, while the $k-h$ remaining literals $l$ must take value $t(i,l)=1$.
Summing over $h$ and taking into account the distribution of the signs, we obtain the overall probability in the case $t_\red(i,j)=1$:
	\begin{eqnarray*}
	\sum_{j=2}^{k-2}\frac{2\bink{k-1}{j-1}}{2^k-2k-2}(1-a)^{j-1}a^{k-j}=1-a^{k-1}-(1-a)^{k-1}-(k-1)a(1-a)^{k-2}.
	\end{eqnarray*}
The case $t_\red(i,j)=0$ is analogous to the above, and a similar argument yields $\psi_{\blue,\red}$.

Finally, 
$\psi_{\blue,\blue}$ accounts for all clauses $i$ such that $f_\sigma(i,j)=f_\tau(i,j)=\blue$ for all $j\in\brk k$.
There are precisely $(r-2\lambda+g_{\red,\red})n$ such clauses.
Each of them is supposed to be assigned such that under both $\sigma=\vecone$ and $\vec t$
at least two literals evaluate to ``true'' and at least two evaluate to ``false''.
Given the distribution of the signature $\vec s$ and of $\vec t$, the probability of this event equals $\eta(a)$.
However, we are already conditioning on the event that each clause contains at least one literal of either sign
	(this probability is accounted for by $\psi_\sigma$).
Hence, the conditional probability of the desired outcome equals
	$\frac{\eta(a)}{1-(k+1)2^{1-k}}$.
Since the clauses are independent, the overall probability is given by $\psi_{\blue,\blue}$.
%
%
\qed\end{proof}

\noindent\emph{Proof of \Lem~\ref{Lemma_Taylor}.}
The assertion simply follows from \Prop~\ref{Prop_psis} by Taylor expanding 
the right hand side of~(\ref{eqpsis}) around $\frac12\vecone$.
\qed

\subsection{Proof of \Cor~\ref{Cor_Taylor}}\label{Sec_CorTaylor}

We begin with the following observation, which hinges upon the assumption that we work with a good profile.

\begin{proposition}\label{Prop_goodAzuma}
There is an absolute constant $c>0$ such that for a random $\vec d$ chosen from $\vec D$ the following is true \whp\
Let $\cC$ be a good profile, let $(\frac12-2^{-k/3})\leq\alpha\leq\frac12$, and let $\vec\tau$ be chosen uniformly
at random from all assignments such that $\dist(\sigma,\tau)=\alpha n$.
Then for any $\delta>0$ we have
	\begin{eqnarray*}
	\pr\brk{|\alpha_{\blue,\blue}-\alpha|>\delta}&\leq&\exp(-c\delta^2n),\\
	\pr\brk{|\alpha_{\red,\blue}-\alpha|>\delta}&\leq&\exp(-c\delta^2n),\\
	\pr\brk{|\alpha_{\blue,\red}-\alpha|>\delta}&\leq&\exp(-c\delta^2n),\\
	\pr\brk{|\alpha_{\red,\red}-\alpha|>\delta}&\leq&\exp(-g_{\red,\red}\delta^2 n/k^2),\\
	\pr\brk{|\alpha_{\Gamma}-\alpha|>\delta}&\leq&\exp(-\gamma\delta^2 n/k^2).
	\end{eqnarray*}
\end{proposition}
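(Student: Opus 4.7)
\medskip

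\noindent\emph{Proof plan.} The random assignment $\vec\tau$, being uniform over $\{\tau\in\{0,1\}^V:\dist(\sigma,\tau)=\alpha n\}$ with $\sigma=\vecone$, can be re-parametrised via the \emph{agreement set} $\vec T=\vec\tau^{-1}(1)$, which is then uniform over subsets of $V$ of size $(1-\alpha)n$. For a fixed good profile $\cC=(g_\sigma,g_\tau,\Gamma)$ define the per-variable ``multiplicities''
\[
	d_x^{c,c'}=\abs{\cbc{i\in[d_x]:g_\sigma(x,i)=c',\,g_\tau(x,i)=c}},
	\qquad
	d_x^{\Gamma}=\abs{\cbc{i:(x,i)\in\Gamma}}.
\]
Each of the five quantities in the proposition is then a \emph{linear statistic} of $\vec T$:
\[
	g_{c,c'}n\cdot\alpha_{c,c'}=\sum_{x\in\vec T}d_x^{c,c'},\qquad \gamma n\cdot\alpha_\Gamma=\sum_{x\in\vec T}d_x^\Gamma,
\]
whose common conditional expectation over $\vec T$ equals $1-\alpha$. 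Since $|1-2\alpha|\leq 2\cdot 2^{-k/3}$, showing concentration around $1-\alpha$ at scale $\delta$ is the same, up to an absolute constant in $c$, as showing concentration around $\alpha$, provided $\delta$ is not so small that the right-hand side of the claimed bound is vacuous (which is all we need for the applications).

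\medskip

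\noindent The engine of the proof is Bernstein's inequality for sampling without replacement (equivalently, for the negatively-associated indicators $(\vecone_{x\in\vec T})_{x\in V}$). Writing $Y=\sum_{x\in\vec T}w_x$ with $w_x\in[0,b]$ and $V=\frac14\sum_x w_x^2\ge\Var(Y)$, it gives
\[
	\pr\brk{\abs{Y-\Erw Y}\geq s}\leq 2\exp\bc{-\frac{s^2/2}{V+bs/3}}.
\]
I plan to apply this with $w_x\in\{d_x^{c,c'},d_x^\Gamma\}$ and $s=\delta g_{c,c'}n$ (resp.\ $s=\delta\gamma n$), harvesting the two parameters $b$ and $V$ from the profile and the degree sequence.

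\medskip

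\noindent For any class $(c,c')\neq(\blue,\blue)$ and for $\Gamma$, the profile constraint $\red_x(g_\sigma),\red_x(g_\tau)\leq 3k$ gives $d_x^{c,c'},d_x^\Gamma\leq 3k$ \emph{deterministically}, hence $\sum_x(d_x^{c,c'})^2\leq 3k\,g_{c,c'}n$ and $\sum_x(d_x^\Gamma)^2\leq 3k\,\gamma n$. Plugging $b=3k$ into Bernstein yields, after a short calculation, exponents of order $-\delta^2 g_{c,c'}n/k$ and $-\delta^2\gamma n/k$ respectively. For $(\red,\blue)$ and $(\blue,\red)$ this already gives $\exp(-\Omega(\delta^2 n))$, since $g_{\red,\blue},g_{\blue,\red}=\lambda-g_{\red,\red}\sim k\ln 2$. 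For $(\red,\red)$ and $\Gamma$ it gives $\exp(-\Omega(g_{\red,\red}\delta^2n/k))$ and $\exp(-\Omega(\gamma\delta^2n/k))$, which are at least as strong as the claimed $k^2$ bounds.

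\medskip

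\noindent The genuine obstacle is the $(\blue,\blue)$ case: now $d_x^{\blue,\blue}\leq d_x$ can be as large as the variable degree, so no profile-level bound of size $O(k)$ is available. Here I plan to invoke the probabilistic structure of $\vec D$: since $\vec d$ is distributed as independent Poissons with mean $kr$ conditioned on $\sum_x d_x=km$, the arguments underlying Lemma~\ref{lem:propD} yield, with high probability,
\[
	\sum_{x\in V}d_x^2=(1+o(1))(kr+(kr)^2)\,n=(1+o(1))(kr)^2 n,
	\qquad\max_{x\in V}d_x=O(\log n/\log\log n).
\]
This bounds $V\leq\tfrac14\sum_x d_x^2=O((kr)^2n)$ and controls the linear term $bs/3$. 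With $g_{\blue,\blue}=kr-2\lambda+g_{\red,\red}\sim kr$, Bernstein produces
\[
	\frac{s^2/2}{V+bs/3}=\frac{\Theta(\delta^2(kr)^2 n^2)}{O((kr)^2 n)+o(\delta\,(kr)^2 n)}=\Omega(\delta^2 n),
\]
i.e.\ $\exp(-\Omega(\delta^2 n))$, as required. The final step is to accumulate the whp event that $\vec d$ satisfies the degree-sequence bounds above, together with the mean-shift comparison $|{\alpha}-(1-\alpha)|\le 2\cdot 2^{-k/3}$, which together reduce concentration of $\alpha_{c,c'}$ around $\alpha$ to Bernstein's inequality applied as described.
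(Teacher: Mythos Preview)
Your approach closely parallels the paper's: both prove concentration via Lipschitz-type inequalities, using the profile constraint $\red_x(g_\sigma),\red_x(g_\tau)\leq 3k$ for the classes involving a red colour and the second-moment bound $\sum_x d_x^2=O((kr)^2 n)$ for the $(\blue,\blue)$ class. The paper differs in two cosmetic ways: it first passes (via a monotonicity remark) from the fixed-distance model to the i.i.d.\ Bernoulli model in which $\vec\tau(x)=1$ with probability $1-\alpha$ independently, and it then applies Azuma/McDiarmid rather than Bernstein. For $\alpha_{\red,\red}$ and $\alpha_\Gamma$ the paper also uses that at most $g_{\red,\red}n$ (resp.\ $\gamma n$) variables carry any weight at all, which is exactly how your $\sum_x(d_x^{\red,\red})^2\leq 3k\,g_{\red,\red}n$ bound arises, so those cases are equivalent. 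Both the paper and you leave the same loose end regarding the centring at $\alpha$ versus $1-\alpha$.

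There is, however, a genuine gap in your treatment of $\alpha_{\blue,\blue}$. You claim $bs/3=o(\delta(kr)^2 n)$, but with $b=\max_x d_x=\Theta(\log n/\log\log n)$ and $s\sim\delta\, kr\,n$ this term is $\Theta(\delta\, kr\, n\log n/\log\log n)$, which for fixed $k$ \emph{grows} relative to $V=O((kr)^2 n)$ as $n\to\infty$. For constant $\delta$ the Bernstein exponent is therefore only $\Theta(s/b)=\Theta(n\log\log n/\log n)$, not $\Theta(\delta^2 n)$, so you do not obtain $\exp(-c\delta^2 n)$ with an absolute $c$. The fix is easy: swap Bernstein for Hoeffding's inequality for sampling without replacement, which gives $\exp\bigl(-2s^2/\sum_x w_x^2\bigr)$ with no linear term; since $\sum_x(d_x^{\blue,\blue})^2\leq\sum_x d_x^2=O((kr)^2 n)$ and $g_{\blue,\blue}\sim kr$, this yields $\exp(-\Omega(\delta^2 n))$ directly. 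This is precisely the computation the paper performs via Azuma after its i.i.d.\ reduction.
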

\begin{proof}
Recall that $\sigma=\vecone$.
By standard monotonicity arguments, we may assume that $\vec\tau$ is obtained by letting $\vec\tau(x)=0$ with probability $\alpha$ and $\vec\tau(x)=1$ with probability $1-\alpha$ for all $x\in V$ independently.	
Furthermore, since by standard arguments the degrees $d_x$ are asymptotically independently Poisson, \whp\ the degree sequence $\vec d$ is such that
	\begin{equation}\label{eqDegSquares}
	\sum_{x\in V}d_x^2\leq10\bc{\frac1n\sum_{x\in V}d_x}^2n\leq10(kr)^2n.
	\end{equation}
Hence, we are going to assume that~(\ref{eqDegSquares}) is satisfied.

We begin by analyzing $\alpha_{\blue,\blue}$.
Switching the value $\vec\tau(x)$ of a single variable $x\in V$ can only alter the random variable $\alpha_{\blue,\blue}$ by $d_v/(g_{\blue,\blue}n)$.
Therefore, by Azuma's inequality and~(\ref{eqDegSquares}), for any $t>0$
	\begin{eqnarray}\label{eqgoodAzuma1}
	\pr\brk{\abs{\alpha_{\blue,\blue}-\Erw\brk{\alpha_{\blue,\blue}}}>t/(g_{\blue,\blue}n)}\leq\exp\brk{-\frac{t^2}{\sum_{x\in V}d_x^2}}
		\leq\exp\brk{-\frac{t^2}{10n(kr)^2}}.
	\end{eqnarray}
Since $g_{\blue,\blue}\leq\frac12 krn$ for any good profile, (\ref{eqgoodAzuma1}) yields the first inequality.

With respect to $\alpha_{\red,\blue}$, recall that in a good profile each $x\in V$ satisfies $\red_\tau(x)\leq k$
	(recall that $\red_\tau$ depends on the profile $\cC$ only).
Therefore, Azuma's inequality yields
	\begin{eqnarray}\label{eqgoodAzuma2}
	\pr\brk{\abs{\alpha_{\red,\blue}-\Erw\brk{\alpha_{\red,\blue}}}>t/(g_{\red,\blue}n)}\leq\exp\brk{-\frac{t^2}{k^2n}}.
	\end{eqnarray}
Since $g_{\red,\blue}\geq ckn$ for a certain constant $c>0$, the second claim follows from~(\ref{eqgoodAzuma2}).
A similar argument yields the third inequality.

Regarding $\alpha_{\red,\red}$, we recall that given $\cC$ we know how many ``red/red balls'' each variable has.
Since $\cC$ is good, their total number is $g_{\red,\red}n\leq k^22^{-k}n$.
In particular, there are no more than $g_{\red,\red}n$ variables that have a ``red/red ball'' in the first place.
Furthermore, switching $\vec\tau(x)$ for a single variable $x$ can alter $\alpha_{\red,\red}$ by at most
$k/(g_{\red,\red}n)$, because $\red_\tau(x),\red_\sigma(x)\leq k$ for all $x$ as $\cC$ is good.
Therefore, by Azuma's inequality
	\begin{eqnarray}\label{eqgoodAzuma3}
	\pr\brk{\abs{\alpha_{\red,\red}-\Erw\brk{\alpha_{\red,\red}}}>t/(g_{\red,\red}n)}\leq\exp\brk{-\frac{t^2}{k^2 g_{\red,\red}n}}.
	\end{eqnarray}
(The $g_{\red,\red}$ in the denominator mirrors the fact that no more than $g_{\red,\red}n$ variables have a ``red/red ball''.)
Setting $t=\delta g_{\red,\red}n$ yields the fourth inequality.
The last inequality follows from a similar argument.
\qed\end{proof}


Finally, \Cor~\ref{Cor_Taylor} follows by comparing the bounds on the deviations of the individual components
of $\vec\alpha$ from \Prop~\ref{Prop_goodAzuma} with \Lem~\ref{Lemma_Taylor} and \Lem~\ref{Lemma_wild}.
\qed



\end{appendix}

\end{document}